\numberwithin{equation}{section}
\theoremstyle{plain}
\newtheorem {hypo}{\bf\hspace{-\parindent}Hypothesis}[section]
\newtheorem {prop}[hypo]{Proposition}
\newtheorem {lemma}[hypo]{Lemma}
\newtheorem {theo}[hypo]{Theorem}
\newtheorem {defin}[hypo]{Definition}
\newtheorem {cor}[hypo]{Corollary}
\newtheorem {conj}[hypo]{Conjecture} 
\theoremstyle{remark}
\newtheorem {rmk}[hypo]{Remark}
\newtheorem {eg}[hypo]{Example} 
\newcommand{\pf}{\begin{bpf}}
	\newcommand{\pfms}{\begin{bpfms}}
		\newcommand{\epf}{\end{bpf}\hfill$\square$\vspace{0.1cm}}
	\newcommand{\epfms}{\end{bpfms}\hfill$\square$\\ }
\newcommand\ben{\begin{equation*}}
	\newcommand\ebn{\end{equation*}}
\newcommand\beq{\begin{equation}}
	\newcommand\eeq{\end{equation}}
\newcommand\ds{\displaystyle}
\newcommand\lb{\left(}
\newcommand\rb{\right)}
\newcommand\PIeq{P$_\text{I}$ }
\newcommand\PVIeq{P$_\text{VI}$ }
\newcommand{\Res}{\mathop{\rm res}}
\begin{document}
\LARGE
	\noindent
	\textbf{Many-faced Painlev\'e I \,:\,   irregular conformal blocks, \\[+.3em] topological recursion, and holomorphic anomaly approaches
    }
	\normalsize
	\vspace{0.5cm}\\
	\textit{ 
	    N. Iorgov$\,^{a,b}$\footnote{iorgov@bitp.kyiv.ua},
	    K. Iwaki$\,^{c}$\footnote{k-iwaki@g.ecc.u-tokyo.ac.jp},
		O. Lisovyy$\,^{d}$\footnote{oleg.lisovyi@univ-tours.fr},
		Yu. Zhuravlov$\,^{a,b}$\footnote{ujpake@gmail.com}
		}
	\vspace{0.2cm}\\
	$^a$  Bogolyubov Institute for Theoretical Physics, 14B Metrolohichna st., 03143 Kyiv, Ukraine \\
        $^b$  Kyiv Academic University, 36 Vernadsky Ave., 03142 Kyiv, Ukraine          \\
	$^c$  The Graduate School of Mathematical Sciences, The University of Tokyo, 
	3-8-1 Komaba Meguro-ku Tokyo 153-8914, Japan  \\
	$^d$  Institut Denis-Poisson, Universit\'e de Tours, CNRS, Parc de Grandmont,
	37200 Tours, France


    \begin{abstract}
    In recent years, the Fourier series (Zak transform) structure of the Painlev\'e I tau function has emerged in multiple contexts.
		Its main building block admits several conjectural interpretations, 
		such as the partition function of an Argyres-Douglas gauge theory, 
		the topological recursion partition function for the Weierstrass elliptic curve, 
		and a 1-point conformal block on the Riemann sphere with an irregular insertion of rank $\frac52$. We review and further develop a mathematical framework for these constructions, 
		and formulate conjectures on their equivalence. In particular, we give a simple explanation of the Fourier series representation of the tau function based on the Jimbo-Miwa-Ueno differential extended to the space of Stokes data. 
        We provide an algebraic construction of the rank $\frac52$ Whittaker state for the Virasoro algebra  embedded into a rank $2$ Whittaker module, prove its existence and uniqueness, and fix its descendant structure. We also prove the conifold gap property of the relevant topological recursion partition function, which, on one hand, enables its efficient computation within the holomorphic anomaly approach and, on the other, establishes the existence of solution for the latter.
 	\end{abstract}

    \tableofcontents

	\section{Introduction}
	
	\subsection{Motivation}
	
	In the last decades, a number of relations has been discovered between integrable systems, 
	2D conformal field theory (CFT), 4D supersymmetric gauge theory, and topological strings. 
	An important class of these correspondences deals with isomonodromic systems, the simplest instances of which 
	are given by Painlev\'e equations \cite{FIKN}. The basic example is the sixth Painlev\'e equation \PVIeq 
	that describes monodromy preserving deformations of $\mathfrak{sl}_2$ Fuchsian systems 
	with four simple poles on the Riemann sphere. It was observed in \cite{GIL12} that the \PVIeq tau function 
	admits a Fourier series representation
	\beq\label{Kyiv}
	\tau\bigl( t\,\bigl|\,\nu,\rho\bigr)=\sum_{n\in \mathbb Z}\mathcal Z\bigl( t\,\bigl|\,\nu+n\bigr)\, e^{2\pi i n\rho},
	\eeq
	where $t$ denotes the isomonodromic time and $\nu,\rho$ are Fenchel-Nielsen type coordinates on 
	the two-dimensional space of \PVIeq monodromy data for fixed generic local monodromy. A series of the form \eqref{Kyiv} is called \textit{Zak transform}. Its coefficients are expressed in terms of a  function 
	$\mathcal Z=\mathcal Z\lb t\,|\,\nu\rb$ that has been identified with a $c=1$ Virasoro conformal block 
	and partition function of $\mathcal N=2$, $N_f=4$ $\mathrm{SU}\lb 2\rb$ gauge theory 
	in the self-dual $\Omega$-background by the Alday-Gaiotto-Tachikawa (AGT) correspondence \cite{AGT09}. 
	Importantly, it has an explicit convergent $t$-series representation in terms of Nekrasov functions.
	By now, there exist several derivations 
	 of the formula \eqref{Kyiv} from both CFT and gauge theory perspective, as well as rigorous proofs 
	\cite{ILT14,BS14,GL16,CGL17,Nek20,JN20}.
	A description of $\mathcal Z$ as (the 4D limit of) the topological string partition function 
	was also proposed in \cite{CPT23, CLT20}.
	These results connect several topics of modern mathematical physics  via Painlevé equations.  
	They have been extended in different directions, 
	including higher number of poles \cite{ILT14}, higher rank \cite{GIL18}, genus 1 \cite{BDGT19a,BDGT19b}
	and $q$-difference case \cite{BS16,JNS17}.
	
	%
	
    More recently, several representations similar to \eqref{Kyiv} have been proposed for the tau function of the first Painlevé equation,
	\begin{equation} \label{eq:PI}
    {\rm P}_{\rm I} ~:~ \frac{d^2q}{dt^2} = 6q^2 + t.
    \end{equation}
    In the \PIeq case, the building blocks $\mathcal Z$ are given by formal 
    (or asymptotic) power series in an appropriate scaling parameter. 
    They are known to arise in at least three different contexts, as shown in the Figure \ref{fig:duality-diagram}:
	\begin{itemize}
	\item  It was conjectured in \cite{BLMST}  from
		that $\mathcal Z$ coincides with the partition function of 
		a  particular 4D $\mathcal N=2$ SUSY gauge theory, called Argyres-Douglas (AD) theory $H_0$, in a self-dual background. 
		It is related to topological string theory, and hence, the AD partition functions 
		can be efficiently studied using the method of holomorphic anomaly equation (HAE).
		
		\item In \cite{Iwaki19}, a Fourier series representation of the \PIeq tau function was discovered in the framework of the topological recursion (TR)
		run on the simplest genus 1 curve --- the Weierstrass elliptic curve.
		The counterpart of $\mathcal Z$ is constructed as the perturbative TR partition function.
		
		\item Most recently, Poghosyan and Poghossian \cite{PP23} have introduced 
		a new class of  conformal blocks of the Virasoro algebra involving irregular states of rank $\frac52$, 
		and discovered that in the case of central charge $c=1$ the corresponding expansions match with  $\mathcal Z$ computed directly from \PIeq.
	\end{itemize}

	This raises a number of questions. 
	First of all, one should be able to give a simple proof of Zak representation
	\eqref{Kyiv} on the Painlevé side. 
	Second, it is important to compare the tau function construction within  the CFT, gauge theory and TR approach. Such comparison would be a key step towards the exploration of  (yet to be mathematically formulated) AGT correspondence for AD gauge theories. Moreover, it is necessary to address several important issues in the construction of the counterparts of $\mathcal Z$ in each of the three setups. 
	On the CFT side, one needs to prove the existence of Whittaker states of half-integer rank and the relevant irregular conformal blocks. 
	In the TR framework, one has to carry out concrete calculations of the free energy and 
	develop tools for extracting the asymptotics with respect to isomonodromic time $t$ from a formal series in the TR counting parameter $\hbar$. 
	Additionally, from a mathematical perspective, when comparing approaches based on TR
	and HAE, it is necessary to rigorously verify the conifold gap property of the corresponding partition function. 
	
	The present work aims to study these questions and formulate a precise correspondence between 
	the different faces of the \PIeq tau function.
	The aforementioned issues and obtained results are explained in further detail below.

	\vspace{+1.em}
	\begin{figure}[t]
		\[
	\xymatrix@C=40pt@R=50pt{
	  & \text{~~2D conformal field theory~~} \ar@/_20pt/[ldd] \ar[d]^{\small \cite{PP23}}  \ar@/^20pt/[rdd] &  \\
	 &\text{P$_{\rm I}$ tau functions}  &  \\ 
	 \text{~~4D gauge theory~~~} \ar[ur]_{\small ~~~~~ \cite{BLMST}} \ar@/^20pt/[ruu]  \ar@/_15pt/[rr] 
	 &  & \text{~~~~~topological recursion} \ar[ul]^{\small \cite{Iwaki19} ~~~~~ } \ar@/_20pt/[luu] \ar@/^15pt/[ll]
	}
	\]
	\vspace{+1.em}
	\caption{Painlevé I from different perspectives}
	\label{fig:duality-diagram}
	\end{figure}
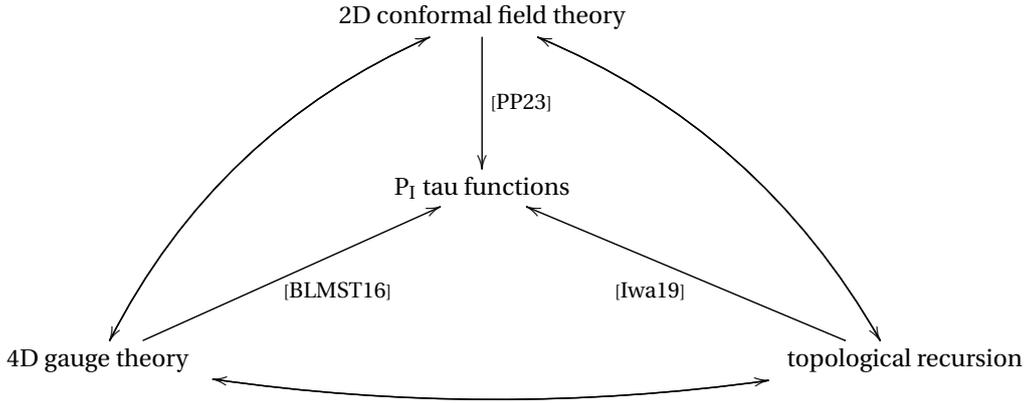

	\subsection{Zak transform from monodromy}
	It was shown in \cite{ILT14} that Zak representation of the \PVIeq tau function naturally follows from the relationship between the isomonodromic deformations and CFT. In the case of \PIeq, a related idea was explored in the TR framework \cite{Iwaki19}. 	In fact, the monodromy of the TR wave function
	(which can be considered as a counterpart of conformal block with degenerate field insertion) 
	is described using difference operators with respect to the parameter $\nu$. 
	The Fourier series arises 
	when this operator-valued monodromy is translated into the usual Stokes data of the linear system associated to \PIeq. On the gauge theory side, the structure \eqref{Kyiv}  turns out to be closely related to the so-called blow-up equations satisfied by the Nekrasov instanton partition functions.

    In the case of \PVIeq (as well as its degenerations to 
    P$_\text{V}$ and P$_\text{III}$), such a representation  can be proved rigorously using the explicit combinatorial series for the tau function deduced from its Fredholm determinant representation \cite{GL16,CGL17,LNR18}. What is missing, however, is a simple explanation based solely on the theory of monodromy preserving deformations.

	In Subsection \ref{subsec:Fourier-series-structure}, 
	we will show that Zak transform structure of \PIeq tau function is a straightforward consequence 
	of the existence of global log-canonical coordinates on the space $\mathcal S$ of \PIeq monodromy data 
	(cf. Proposition \ref{taufou}). 
	The key fact here is a one-to-one correspondence between the Stokes data  and solutions of \PIeq.
	Indeed, two relevant Stokes multipliers are expressed as $e^{2 \pi i \nu}$ and $e^{2 \pi i \rho}$, 
	which trivially makes the corresponding  \PIeq function $q=q\lb t\,|\,\nu,\rho\rb$ periodic in $\nu$ and $\rho$. We will use an extension of the Jimbo-Miwa-Ueno differential to $\mathcal S$ to promote this periodicity of $q$ to quasi-periodicity of the \PIeq tau function.
	It may be expected that the Fourier series expression for all other Painlevé tau functions
	can be proved in a similar manner.

		\subsection{Ramified irregular conformal blocks}
	First examples of irregular conformal blocks were introduced in \cite{G09} shortly after the discovery of the AGT correspondence \cite{AGT09}. On the gauge theory side, these correspond to theories with a lower number of flavors  $N_f<4$ in the weak coupling regime. In the isomonodromic context, such conformal blocks appear in the  description of tau functions of P$_\text{V}$ and P$_{\text{III}}$ equations \cite{GIL13} in the asymptotic regime where the conventional isomonodromic time parameter is small. Their construction involves, in addition to highest weight representations of the Virasoro algebra 
    \beq\label{Virasoro}
	 \left[L_m,L_n\right]=\lb m-n\rb L_{m+n}+\frac{c}{12}m\lb m^2-1\rb \delta_{m+n,0},
	\eeq
    to be denoted by $\mathsf{Vir}$, the so-called Whittaker modules $\mathcal V^{[r]}_{\Lambda}$ that are induced from the joint eigenstates $\bigl|I^{(r)}\lb\Lambda\rb\bigr\rangle$ of a suitably chosen subalgebra \cite{G09,BMT11}. 
		Here, $\Lambda$ denotes a tuple of complex parameters and $r\in \mathbb{N}/2$ is the rank of the Whittaker module. The latter quantity is the CFT counterpart of the Poincaré rank of irregular singularity. For instance, $\mathcal V^{[0]}_{\Delta}$ stands for the usual highest weight representation of $\mathsf{Vir}$ with conformal dimension $\Delta$, generated from the primary state $\bigl|I^{(0)}\lb\Delta\rb\bigr\rangle\equiv \bigl| \Delta\bigr\rangle$.

	The extension of isomonodromy/CFT correspondence to other Painlevé equations has been done in several other cases/asymptotic regimes, cf \cite{Nagoya15,Nagoya18,NU19}. However, it lacks systematic description and remains far from complete:
	\begin{itemize}
		\item First of all, it is not  easy to guess the algebraic construction of conformal block describing a given asymptotic regime.  Even in the P$_\text{V}$ case, where the associated linear system has only two regular singularities and one simplest irregular singular point of rank $1$, an algebraic definition of the conformal block corresponding to one of the two  strong coupling regimes (expansion (A.45) in \cite{BLMST}) is not known.
		\item For higher (integer) rank $r>1$, there appears an extra complication due to the absence of a canonical pairing between $\mathcal V^{[r]}_{\Lambda}$ and generic highest weight representation. This implies that the embedding of the irregular state $\bigl|I^{(r)}\lb\Lambda\rb\bigr\rangle$ into a Whittaker module of lower rank is not unique and in general involves an infinite number of free parameters, subsequently inherited by conformal blocks. A possible way to fix this ambiguity was proposed by Gaiotto and Teschner in \cite{GT12} using Feigin-Fuchs bosonization. In this approach, additional constraints on the Whittaker vectors come from a realization of the positive part $\mathsf{Vir}_+=\bigoplus_{n=0}^\infty \mathbb C L_n$ of the Virasoro algebra in terms of differential operators in $\Lambda$. It is conjectured that these constraints uniquely fix a chain of embeddings $\mathcal V^{[0]}_{\Delta}\supset \ldots \supset\mathcal V^{[r-1]}_{\Lambda'}\supset \mathcal V^{[r]}_{\Lambda}\supset\ldots$ involving only a finite number of new parameters at each step.
		\item Finally, for half-integer rank $r\ge \frac32$ corresponding to ramified irregular singularities, the analog of the latter setup was not available until very recent examples \cite{PP23,HNNT24}. One of the results of the present work is to provide, in Subsection~\ref{subsec_HIR}, an explicit 1st order differential operator realization of $\mathsf{Vir}_+$ for arbitrary half-integer rank, which turns out to be unique up to gauge equivalence.
	\end{itemize}

 The irregular conformal block describing the Painlevé I tau function was defined in \cite{PP23} as a projection of the rank $\frac52$ irregular state $\bigl|I^{(5/2)}\bigr\rangle$ onto the vacuum. Given that the  \PIeq linear system has a single ramified irregular singularity of Poincaré rank $\frac52$, this is totally in line with the above heuristics. The state $\bigl|I^{(5/2)}\bigr\rangle$ depends on a triple  $\Lambda=\lb \Lambda_3,\Lambda_4,\Lambda_5\rb$ of eigenvalues of $L_{3},L_4,L_5$ entering 
    the differential operator realization of $\mathsf{Vir}_+$. However, the irregular conformal block itself depends essentially only on a certain algebraic combination of these parameters. This follows from the Ward identities for global conformal symmetry and is analogous to the well-known fact that the regular 4-point conformal block  nontrivially depends only on the cross-ratio of fields positions. On the Painlevé side, the corresponding invariant plays the role of isomonodromic time. 

   In \cite{PP23}, the irregular state $\bigl|I^{(5/2)}\bigr\rangle$ is assumed  to be a series in $\Lambda_5$ with coefficients in a Whittaker module of rank~2. They are sought in the form of generalized descendants of the corresponding rank~2 irregular state, involving the action of Virasoro generators $L_k$ with $k<0$ and derivatives with respect to $\Lambda_3,\Lambda_4$. Neither the existence nor uniqueness of $\bigl|I^{(5/2)}\bigr\rangle$ has been proven. The general structure of descendants is also unclear; moreover, the ansatz employed in \cite{PP23} to determine them order by order turns out to be largely suboptimal since many of the corresponding coefficients vanish. At the technical level, this makes the computation of rank~$\frac52$ irregular state extremely tedious.

   To address these issues, we develop  an alternative algebraic approach described in Subsection~\ref{SubsecAlgCon}. We introduce a realization of the Virasoro algebra depending on  a single invariant parameter  $\varepsilon$ from the very beginning.
 The rank~$\frac 52$ irregular state is constructed by embedding it into a rank $2$ Whittaker module ${\mathcal V}^{[2]}_{\nu}$ depending on a  parameter $\nu\in\mathbb C$. 
This embedding has a form of a power  series in $\varepsilon$ that can be found from the action of $L_k$'s with $k\ge 3$, and a special linear combination $L_\varepsilon=L_2-2\varepsilon L_1+6\varepsilon^2 L_0$.
The equations following from the action of  $L_{k\ge3}$ do not contain differential operators at all and fix the irregular state 
uniquely up to an overall prefactor depending on $\varepsilon$. The precise statement is given in Theorem \ref{conjWh52} that constitutes the main result of Section \ref{sec_IrrCB}. 
The remaining unknown prefactor, in turn, is uniquely fixed by the action of  $L_\varepsilon$.
This two-step procedure clarifies and drastically simplifies the approach of  \cite{PP23}
allowing for an efficient computation of the corresponding irregular conformal block. 
The results turn out to be consistent with the expansions obtained from Painlevé I and the (refined) holomorphic anomaly equation.

	\subsection{Series expansion of TR partition function} 
	
	TR is a recursive algorithm that constructs analogs of correlation functions 
	and partition functions of matrix models for a given spectral curve, 
	which is a Riemann surface with some additional data. 
	In fact, through TR, deformations of the spectral curve are closely related to isomonodromic deformations. 
	Consequently, as mentioned above, the partition function of TR serves as a building block of the tau function. 
	See \cite{Iwaki19, EGF, MO19, EGFMO} for more details. 
	
	In general, the TR partition function is defined as a formal power series in a parameter $\hbar$, 
	and serves as the generating function of  free energies. 
	While $\hbar$ does not explicitly appear in the CFT approach nor in \PIeq equation,
	it can be introduced through a suitable rescaling of parameters. 
	Therefore, in comparing the TR partition function with
	the irregular conformal block or \PIeq tau function, which both admit asymptotic expansions in the isomonodromic time $t$, 
	it becomes necessary to determine the corresponding large $t$ expansion of the TR partition function. 
	
	In fact, in Subsections \ref{subsec_TRPI} and \ref{sec:conifold-gap}, 
	we will establish an algorithm that, in principle, allows us to compute the series expansion 
	of the TR partition function in $t$ to arbitrary order. 
	As a method for obtaining such an expansion, we provide two approaches: 
	one based on the explicit representation of the free energy in terms of quasi-modular forms 
	(Proposition \ref{prop_FgStructure} and Example \ref{eg:P2-P3}), 
	and the other based on the direct analysis of TR itself (Proposition \ref{prop:expansion-Wgn}). 
	The latter approach, in particular, has the advantage of determining the large $t$ expansion 
	without requiring explicit evaluation of the free energy. 
	Importantly, as far as computationally feasible, we have confirmed that the coefficients of 
	the obtained asymptotic expansion coincide with those of the irregular conformal block of \cite{PP23}.
	Proving the all-order match  remains a challenge for future work.

	The large $t$ expansion of the relevant TR partition function is also useful for verifying the conjectural formulas 
	regarding the resurgence property (as an $\hbar$-series) of the topological string partition function 
	proposed in \cite{gm22, IM24}. 
	In fact, we can confirm  consistency  of these conjectures with the description of the nonlinear Stokes phenomenon for \PIeq solutions   obtained in \cite{Kapaev2004}. 
	This will be discussed in Appendix \ref{appendix:conjecture-resurgence}.

	\subsection{Holomorphic anomaly and conifold gap property}
	The HAE, introduced in \cite{BCOV93v2}, is a general method to compute 
	the free energy of topological string theory. 
	The free energy of TR is known to satisfy the holomorphic anomaly equation \cite{EO07, emo}, 
	and this fact establishes its close relation with the higher genus B-model of topological string theory. 
	In particular, it has been firmly established as the remodeling theorem 
	(the Bouchard-Klemm-Mari\~no-Pasquetti conjecture) 
	that the Gromov-Witten invariants of toric Calabi-Yau 3-folds can be obtained by 
	TR applied to the mirror curve \cite{BKMP08, EO15, FLZ20}. 
	For these examples, comparisons with TR can be made by describing Gromov-Witten invariants 
	via graph counting using methods such as localization, the Givental formula, and the topological vertex. 
    Furthermore, as shown in \cite{HK06, HK09}, 
	the HAE is also effective as an alternative method for calculating the gauge theory partition functions. 
	From this perspective, \cite{BLMST}  identifies the AD partition function as the one constructed through 
	the HAE and compares it with the P$_{\rm I}$ tau function.
	
	The HAE determines the free energy up to the so-called holomorphic ambiguity, 
	and it is common to impose the conifold gap property as a boundary condition 
	when specifying this ambiguity \cite{gv-conifold, HK06}. 
	To perform a comparison between this approach and TR for spectral curves
	which may not be related to toric Calabi-Yau 3-folds, it is necessary to prove 
	that the free energy constructed via TR satisfies the conifold gap property. 
	However, to the best of our knowledge, aside from certain specific cases (\cite{Nor09, IKoT1, IKoT2}), such a proof is not available.
	As seen in Subsection \ref{subsec_TRPI}, the conifold gap property also plays a crucial role in determining 
	the normalization constant when comparing the results of gauge theory \cite{BLMST} with TR.
	
	In Theorem \ref{thm:coni-gap}, 
	as another main result of this paper, 
	we  provide a rigorous proof of the conifold gap property for 
	the Weierstrass elliptic spectral curve, which gives the tau function of P$_{\rm I}$.
	Our strategy is summarized as follows.
	The spectral curve includes the independent variable $t$ of P$_{\rm I}$ as a parameter, and 
	$t = \infty$ can be regarded as the conifold point. 
	At this point, a coalescence of ramification points occurs, 
	which may introduce singularities in the TR correlators. 
	However, by employing an alternative formulation of the recursion using the residue theorem 
	(as discussed in the proof of Proposition \ref{prop:expansion-Wgn}), this difficulty can be circumvented. 
	Furthermore, after applying an appropriate symplectic transformation and taking 
	the large $t$ limit, it can be shown that the Weber curve emerges. 
	By demonstrating that this limit commutes with TR, the conifold gap property is thereby proven.
	Details of the proof are provided in Appendix \ref{appendix:constant-term}. 

    As seen in Subsection \ref{subsec:Fixing-holomorphic-ambiguity}, 
	the conifold gap condition actually provides an overdetermined system for holomorphic ambiguity. 
	Discussions of the existence of solution of this system in  \cite{HKR08, HKPK11} 
	are based on physical considerations; however, it is not clear to us whether these results are applicable to our setting.
	Our  Theorem \ref{thm:sol-HAE-weak} and Corollary \ref{cor:existence-strong-coniforld-gap-HAE}
	guarantee the existence of solution to the HAE that satisfies the conifold gap property, 
	at least for the free energy of the Weierstrass elliptic spectral curve.
	%

	We also discuss the $\beta$-deformed conifold gap property to construct the 
	$\beta$-deformed partition function. The latter is a one-parameter deformation of the original partition function. 
	We prove that the $\beta$-deformed partition function is uniquely determined by imposing HAE and 
	the ``weak'' conifold gap, which requires  vanishing of a smaller number of coefficients than the usual conifold gap. 
	This $\beta$-deformed partition function is expected to coincide with irregular conformal block 
	for generic central charge  $c\ne1$, and this has indeed been verified in \cite{PP23} 
	(see also \cite{FMP23,FGMS23}).
	For reader's convenience, we will briefly review these results in Subsection~\ref{subsec_betaHAE}.
	
	\subsection{Comments on related works}
	
	It should also be noted that, independently of the study of the Painlev\'e equations, 
	the relations between the topics such as gauge theory, CFT, topological string theory 
	in both $A$-model and $B$-model, have already been studied by many authors.
	For example, in \cite{KPW10, DGH11, AFKMY12}, 
	the consistency of the Nekrasov partition functions, conformal blocks, TR, 
	and the topological vertex has been conjecturally confirmed. 
	More recently, several advances have been made in \cite{BBCC24, BCU24}, 
	such as the construction of Whittaker vectors based on the Airy structure,
	which is an algebraic reformulation of TR \cite{KS18}. 
	Our present work extends such comparisons to 
	Argyres-Douglas theories, irregular conformal blocks, and TR. 
	The aforementioned prior research deals with quantities that have a 
	"regular singular-type" series expansions, while here we focus on comparing quantities that 
	exhibit a "ramified irregular singular-type" behavior.
	See also \cite{BE19, FGMS23} for other related results. 
	
	During the preparation of this paper, two related works \cite{BGMT24, BST25} were posted on arXiv. 
	The paper \cite{BGMT24} proposed a new method for directly deriving the HAE
	from the bilinear equations satisfied by the tau function, without going through TR. 
	The paper \cite{BST25}, as an extension of the discussion in \cite{GMS20} on Painlev\'e III$_{3}$, 
	constructs the quantum Painlev\'e tau function for all Painlev\'e equations using 
	the Nekrasov partition function with general $\Omega$-background parameters. 
	These works are consistent with our results and further strengthen the correspondences shown in Figure \ref{fig:duality-diagram}.

\subsection{Further questions} 

Here, we outline a few potential directions for future research.
\begin{itemize}
\item 
It is natural to ask whether our results can be extended 
to other Painlev\'e equations, from P$_{\rm II}$ to P$_{\rm V}$.
\begin{itemize}
\item 
On the CFT side, it is worth investigating whether the algebraic approach developed 
in this paper can be extended as to cover all possible large isomonodromic time behaviors  presented in \cite{BLMST}.
\item
On the side of TR and HAE, describing the free energy in terms of quasi-modular forms 
associated with the elliptic curve, which arises as the classical limit of the corresponding isomonodromic linear system, 
would lead to a deeper understanding of the analogy between Painlev\'e transcendents and elliptic functions. 
We also expect that the conifold gap property can also be demonstrated for spectral curves associated 
with other Painlev\'e equations (\cite{EGF, MO19, EGFMO}) using a similar approach.
\item Different asymptotic behaviors of Painlevé functions can be attributed, from the perspective of TR, 
to the choice of cycles in the definition of the spectral curve, 
and from the perspective of CFT, to the choice of ``irregular pants decomposition''. 
It would be an intriguing challenge to mathematically formalize the latter concept. 
\end{itemize}

\item 
To our best knowledge, a definition of the AD partition function based on methods such as 
instanton counting is not known. 
As mentioned above, in \cite{BLMST} and \cite{PP23}, this quantity
is considered as defined by the HAE and the conifold gap property, under duality assumption. 
Adopting this definition, our result (Theorem \ref{thm:sol-HAE-weak} and Corollary \ref{cor:existence-strong-coniforld-gap-HAE})
rigorously proves the well-definedness of the AD partition function and 
its complete agreement with the TR partition function (for self-dual background). 
Providing a geometric definition of the AD partition function  consistent with 
the above definition is  an important challenge.

\item 
It should be possible to generalize our results to the Painlev\'e I hierarchy containing P$_\text{I}$ equation \eqref{eq:PI} as the first member. The second member of the hierarchy, denoted P$_\text{I}^{\text{2}}$, is also well-known. It appears in the universal description of the gradient catastrophy for systems such as KdV equation \cite{dub08}. It would be interesting to study the asymptotics of the isomonodromic tau function for such systems, which might be related to 1-point conformal block of irregular vertex operator of half-integer rank $r=\lb 2g+3\rb\bigl/2$, where $g>1$ corresponds to the genus of underlying spectral curve.

\item 
A rigorous proof, and even a heuristic derivation, of the \PIeq/CFT correspondence (Conjecture \ref{CFTPI}) remains an open problem. 
 One may attempt to develop an approach similar to \cite{ILT14}, 
 constructing solutions of the \PIeq linear system from  irregular conformal blocks with degenerate fields.
 In order to implement this idea, one needs  to explicitly describe the analytic continuation 
 of such irregular conformal blocks, particularly their Stokes structure.
 In this regard, the work \cite[Section 5]{Iwaki19} where explicit calculation of Stokes multipliers of the isomonodoromic system has been performed
 using TR and exact WKB analysis under various technical assumptions, could be instrumental. 
The method used in \cite[Section 5]{Iwaki19} is based on the so-called Voros connection formula, 
so it can be thought as an ``abelianization of BPZ equation''. 
It would also be related to the recent work \cite{HN24}.
Another approach worth considering is extending to rank $\frac52$ irregular setting the method of \cite{BS14}, 
where blow-up equations equivalent to \PVIeq were derived directly from the representation theory of the Virasoro algebra.

\item 
In Section \ref{subsec_betaHAE}, we have constructed the $\beta$-deformed partition function based on HAE. 
This quantity could potentially be obtained from 
the refined topological recursion \cite{CEM09, CEM11, KO23a, O24}. 
Another approach to  $\beta$-deformed partition functions is proposed in \cite{BST25}
based on blow-up equations.
It would be interesting to compare these approaches and prove 
the $\beta$-deformed conifold gap property in a manner analogous to Section \ref{sec:conifold-gap}, using in particular the results of \cite{KO23b}.

\end{itemize}

\subsection{Outline of the paper}

The paper is organized as follows.
In Section~\ref{sec_FT}, we introduce the linear system associated to Painlev\'e~I, the space of its Stokes data and the (extended) \PIeq tau function. The Zak transform structure of the tau function is proved in Proposition~\ref{taufou}. Section~\ref{sec_IrrCB} develops an algebraic construction of irregular conformal blocks relevant to \PIeq. Its main result, Theorem~\ref{conjWh52}, establishes existence and uniqueness of the corresponding rank~$\frac52$ irregular vector and fixes its algebraic structure. The algebraic expression of conformal block itself is given in \eqref{Ugenericc}--\eqref{L2comms}, and its relation to \PIeq is formulated in Conjecture~\ref{CFTPI}. Proposition~\ref{propLs} provides a differential operator realization of the positive part of the Virasoro algebra for arbitrary half-integer rank. 
Section~\ref{sec_TR} is devoted to the TR approach. To help readers who are new to it, Subsection~\ref{subsec_warmup} introduces the TR formalism with a simpler example of rational spectral curve corresponding to tronqu\'ees solutions of \PIeq. Subsection~\ref{subsec_Weierstrass} implements the TR  on the Weierstrass elliptic curve and determines the general structure of TR correlators and free energies in this case, see Propositions~\ref{prop:pre-quasi-modular} and~\ref{prop_FgStructure}. Subsection~\ref{subsec_TRPI} explains the TR/\PIeq correspondence, namely, how the large-time asymptotic series of the generic \PIeq tau function on the canonical rays can be retrieved from the formal $\hbar$-expansion of the non-perturbative TR partition function. The conifold gap property of the TR free energies is proved in Theorem~\ref{thm:coni-gap}. 
This result is further used in Section~\ref{sec_HAE}, devoted to the holomorphic anomaly approach, to demonstrate existence of solutions of HAE. 
Uniqueness of HAE solutions is established in Theorem~\ref{thm:sol-HAE-weak}. The final Subsection~\ref{subsec_betaHAE} deals with a $\beta$-deformation of the HAE; in particular, Conjecture~\ref{CFTHAEcorr} brings the discussion full circle by relating the corresponding ($\beta$-deformed) partition function to irregular conformal block (with arbitrary central charge $c$) introduced in Section~\ref{sec_IrrCB}. 

\subsection*{Acknowledgements}
The authors are grateful to 
Vincent Bouchard, 
Pierrick Bousseau, 
Tom Bridgeland, 
Bertrand Eynard, 
Pavlo Gavrylenko,
Alessandro Giacchetto, 
Paolo Gregori, 
Qianyu Hao, 
Shinobu Hosono, 
Akishi Ikeda, 
Hiroshi Iritani, 
Reinier Kramer, 
Hajime Nagoya, 
Andrew Neitzke,
Kento Osuga,
Masa-Hiko Saito, 
Nobuo Sato, 
Piotr Su\l{}kowski, 
Atsushi Takahashi
and
Koji Tasaka, 
for valuable discussions. 
We also would like to thank organizers of 
OIST Theoretical Sciences Visiting Program ``Exact Asymptotics: From Fluid Dynamics to Quantum Geometry'', 
and SRS Workshop ``Quantisation of moduli spaces from different perspectives''. 
This work was also supported by the Research Institute for Mathematical Sciences, 
an International Joint Usage/Research Center located in Kyoto University.
The work of KI was supported by
JSPS KAKENHI Grand Numbers 
21H04994, 22H00094, 23K17654, 24K00525.
The work of NI and YZ was supported by Team grant ``Random matrix models: from biomolecules to topological recursions'' from the Foundation for Polish Science.
NI and YZ also acknowledge support from the Simons Foundation.

	\section{Tau function quasi-periodicity\label{sec_FT}}
	\subsection{Extended tau function}
	Recall that the first Painlev\'e equation describes isomonodromic deformations of the rank 2 linear system with a single irregular point of Poincar\'e rank $\frac52$ on the Riemann sphere. Such a system can be transformed  into a canonical form
	\beq\label{linsys}
	\frac{d\Phi}{dz}=A\lb z\rb\Phi, \qquad 
	A\lb z\rb=\lb\begin{array}{cc} -p & z^2+q z+q^2+\frac{t}{2} \\
	4\lb z-q\rb & p \end{array}\rb,
	\eeq
	which corresponds to placing the pole of $A\lb z\rb dz$ at $\infty$, and using the gauge freedom to set the non-diagonalizable coefficient of the most singular term $z^2dz$ equal to $\lb\begin{array}{cc}0 & 1 \\ 0 & 0\end{array}\rb$ and to make the coefficient of $zdz$ off-diagonal. 
	
	A shortcut connection between \eqref{linsys} and \PIeq can be made via the characteristic equation ${\operatorname{det}\lb A\lb z\rb-\lambda\rb=0}$. The latter defines an elliptic curve written in the Weierstrass form,
	\beq\label{spcurve}
	\lambda^2=4z^3+2tz+2H,
	\eeq
	where 
	\beq\label{hamPI}
	H=\frac{p^2}{2}-2q^3-tq.
	\eeq 
	The equations 
	\beq\label{hameqs}
	\ds\frac{dq}{dt}=\ds\frac{\partial H}{\partial p},\qquad  \ds\frac{dp}{dt}=-\ds\frac{\partial H}{\partial q}
	\eeq are equivalent to Painlev\'e~I.  The $g_2$ and $g_3$ invariants of the curve \eqref{spcurve} play the respective roles of \PIeq time and Hamiltonian. 
	
	Let $\mathcal S$ denote the space of P$_\text{I}$ initial conditions parameterized by a pair of complex local coordinates $\mu_a,\mu_b$. The symplectic 2-form
	\beq\label{sympform}
	\Omega=d_{\mathcal S}p\wedge d_{\mathcal S}q=\lb p_{\mu_a}q_{\mu_b}-q_{\mu_a}p_{\mu_b}\rb d\mu_a\wedge d \mu_b
	\eeq
	is obviously closed on $\mathcal S$ and does not depend on time. It will be convenient to choose $\mu_{a,b}$ as Darboux type coordinates so that 
	\beq
	\Omega=2\pi i \,d\mu_a\wedge d\mu_b.
	\eeq 
	\begin{defin}
		The extended \PIeq tau function $\tau\lb t\,|\,\mu_a,\mu_b\rb$ on $\mathbb C\times\mathcal S$ is defined by
		\beq\label{exttau}
		d\ln \tau= H dt+Q_ad\mu_a+Q_b d\mu_b-2\pi i \mu_a d\mu_b,
		\eeq
		with $H$ and $Q_{a,b}$ given respectively by \eqref{hamPI}  and
		\beq
		5Q_k=4tH_{\mu_k}+3pq_{\mu_k}-2qp_{\mu_k},\qquad k=a,b.
		\eeq
	\end{defin}
	
	The first contribution $Hdt$ in \eqref{exttau} is the classical Jimbo-Miwa-Ueno definition \cite{JMU81} of the \PIeq tau function for fixed initial conditions. Its extension to $\mathcal S$ was obtained in \cite{LR16} based on the ideas of \cite{Ber09,IP15} and generalized to arbitrary isomonodromic systems in \cite{ILP16}. Note, however, that the closedness of the 1-form  on the right of \eqref{exttau} can be verified by  plain differentiation. Adding to it a closed 1-form on $\mathcal S$ modifies the tau function by a time-independent prefactor.
	
	\subsection{Zak transform structure}
    \label{subsec:Fourier-series-structure}

	The relation of P$_\text{I}$ to isomonodromy, that will be further discussed in Subsection~\ref{subsec_monodromy}, identifies $\mathcal S$ with the two-dimensional space of Stokes data of the system \eqref{linsys}.
	For now, it suffices to note that there are natural choices of coordinates for which the P$_\text{I}$ function $q=q\lb t\,|\,\mu_a,\mu_b\rb$ is periodic in $\mu_{a,b}$:
	\beq
	q\lb t\,|\,\mu_a+1,\mu_b\rb=q\lb t\,|\,\mu_a,\mu_b+1\rb=q\lb t\,|\,\mu_a,\mu_b\rb.
	\eeq
    The underlying reason is that $p$ and $q$ are uniquely determined by the Stokes coefficients and the latter turn out to be log-canonical coordinates with respect to the sympectic structure on $\mathcal S$ defined by $\Omega$. One can thus expand $q$ into a double Fourier series,
    \beq
    q\lb t\,|\,\mu_a,\mu_b\rb=\sum_{k,l\in \mathbb Z}q_{kl}\lb t\rb e^{2\pi i\lb k\mu_a+l\mu_b\rb}.
    \eeq
    and also write similar expansions for $p$ and $H$. 
    
    
    On the other hand, the \PIeq  tau function $\tau\lb t\,|\,\mu_a,\mu_b\rb$ cannot be made periodic simultaneously in $\mu_a$ and~$\mu_b$. Indeed, while the coefficients $H$ and $Q_{a,b}$ on the right of \eqref{exttau} are periodic, the last term $2\pi i \mu_a d\mu_b$ does represent an obstruction. It implies that, redefining 
    \beq\label{normtau}
    \tau\lb t\,|\,\mu_a,\mu_b\rb\mapsto e^{i\lb\alpha \mu_a+\beta \mu_b\rb}\tau\lb t\,|\,\mu_a,\mu_b\rb
    \eeq 
    with some constants $\alpha,\beta\in\mathbb C$ if necessary, one has
    \begin{subequations}\label{perrelstau}
    \begin{align}
    	\tau\lb t\,|\,\mu_a+1,\mu_b\rb=&\,e^{-2\pi i  \mu_b}\tau\lb t\,|\,\mu_a,\mu_b\rb,\\
    	\qquad \tau\lb t\,|\,\mu_a,\mu_b+1\rb=&\,\tau\lb t\,|\,\mu_a,\mu_b\rb.
    \end{align}
    \end{subequations}
    It follows from the second of these equations that $\tau\lb t\,|\,\mu_a,\mu_b\rb=\sum\limits_{n\in\mathbb Z}\tau_n\lb t\,|\,\mu_a\rb e^{2\pi i n \mu_b}$. The first equation then implies that $\tau_{n+1}\lb t\,|\,\mu_a\rb=\tau_{n}\lb t\,|\,\mu_a+1\rb$. These considerations can be summarized into the following
    \begin{prop}\label{taufou}
    	Let $\mu_a,\mu_b$ be any pair of Darboux coordinates  on $\mathcal S$ with respect to which the \PIeq function $q\lb t\,|\,\mu_a,\mu_b\rb$ is periodic. Then, up to the change of normalization \eqref{normtau}, the tau function defined by \eqref{exttau} can be written as
    	\beq
    	\tau\lb t\,|\,\mu_a,\mu_b\rb=\sum_{n\in\mathbb Z}\mathcal T\lb t\,|\,\mu_a+n\rb e^{2\pi i n \mu_b}.
    	\eeq
    \end{prop}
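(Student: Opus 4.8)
The plan is to upgrade the manifest periodicity of $q$ to a quasi-periodicity of $\tau$ by integrating the extended Jimbo--Miwa--Ueno one-form \eqref{exttau} along the two elementary cycles $\mu_a\mapsto\mu_a+1$ and $\mu_b\mapsto\mu_b+1$ of $\mathcal S$, and then to read off the Fourier series by elementary harmonic analysis. First I would check that periodicity propagates to all the coefficients of \eqref{exttau}. Hamilton's equations \eqref{hameqs} together with \eqref{hamPI} give $dq/dt=\partial H/\partial p=p$, so $p$ is the $t$-derivative of $q$; since $q(t\,|\,\mu_a+1,\mu_b)=q(t\,|\,\mu_a,\mu_b)$ holds for all $t$, differentiation in $t$ shows that $p$ is periodic of period $1$ in each of $\mu_a,\mu_b$ as well. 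Consequently $H$ in \eqref{hamPI} and, through the expression $5Q_k=4tH_{\mu_k}+3pq_{\mu_k}-2qp_{\mu_k}$, the coefficients $Q_a$ and $Q_b$ are all periodic in $\mu_a$ and $\mu_b$.

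The core of the argument is the computation of the monodromy of $\sigma:=\ln\tau$ under the two unit shifts, using $\partial_t\sigma=H$, $\partial_{\mu_a}\sigma=Q_a$ and $\partial_{\mu_b}\sigma=Q_b-2\pi i\mu_a$. I would set $\Delta_a\sigma(t,\mu_a,\mu_b):=\sigma(t,\mu_a+1,\mu_b)-\sigma(t,\mu_a,\mu_b)$ and differentiate in each variable. Periodicity of $H$ and $Q_a$ gives $\partial_t\Delta_a\sigma=0$ and $\partial_{\mu_a}\Delta_a\sigma=0$, while the only surviving term comes from the non-periodic ``anomalous'' contribution $-2\pi i\mu_a\,d\mu_b$: periodicity of $Q_b$ in $\mu_a$ leaves $\partial_{\mu_b}\Delta_a\sigma=-2\pi i$. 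Hence the increment has vanishing gradient after subtracting $-2\pi i\mu_b$, so $\Delta_a\sigma=-2\pi i\mu_b+C_a$ with $C_a$ a genuine constant. The parallel computation for $\Delta_b\sigma:=\sigma(t,\mu_a,\mu_b+1)-\sigma(t,\mu_a,\mu_b)$ gives all three partial derivatives equal to zero, the anomalous term now contributing nothing, whence $\Delta_b\sigma=C_b$ is constant. Exponentiating and then absorbing the two constants by the permitted rescaling \eqref{normtau}, with $\alpha,\beta$ chosen so that $C_a+i\alpha=0$ and $C_b+i\beta=0$, yields precisely the quasi-periodicity relations \eqref{perrelstau}, the factor $e^{-2\pi i\mu_b}$ being exactly the residue of the single anomalous term.

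The remaining Fourier step is routine. The $\mu_b$-periodicity in \eqref{perrelstau} lets me expand $\tau(t\,|\,\mu_a,\mu_b)=\sum_{n\in\mathbb Z}\tau_n(t\,|\,\mu_a)\,e^{2\pi in\mu_b}$; inserting this into the first relation of \eqref{perrelstau} and matching coefficients of $e^{2\pi in\mu_b}$ gives the shift recursion $\tau_n(t\,|\,\mu_a+1)=\tau_{n+1}(t\,|\,\mu_a)$, solved by $\tau_n(t\,|\,\mu_a)=\mathcal T(t\,|\,\mu_a+n)$ with $\mathcal T:=\tau_0$, which is the asserted representation. I expect the only delicate point to be the monodromy computation of the preceding paragraph: one must verify that both shift increments are genuinely $t$-independent and affine in $(\mu_a,\mu_b)$ with exactly the slope fixed by the coefficient $2\pi i$ of the symplectic form $\Omega$, so that the lone term $-2\pi i\mu_a\,d\mu_b$ accounts for the entire obstruction to periodicity. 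The meromorphic character of $q,p$ on $\mathcal S$ should be immaterial here, since the increments are pinned down by their derivatives on the regular locus and then extend by analyticity, but it is the one place where care is warranted.
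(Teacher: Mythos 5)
Your proposal is correct and follows essentially the same route as the paper: periodicity of the coefficients $H,Q_{a,b}$ of the extended Jimbo--Miwa--Ueno form together with the single anomalous term $-2\pi i\,\mu_a\,d\mu_b$ yields the quasi-periodicity relations \eqref{perrelstau} after the normalization \eqref{normtau}, and Fourier expansion in $\mu_b$ plus coefficient matching gives the claimed series. Your only additions are welcome bookkeeping the paper leaves implicit --- deriving periodicity of $p$ from that of $q$ via $p=dq/dt$, and writing out the monodromy computation for $\Delta_a\ln\tau$ and $\Delta_b\ln\tau$ explicitly.
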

    
    The above result provides an explanation of the Fourier series structure of the expansions of Painlev\'e tau functions in different asymptotic regimes \cite{BLMST}. Moreover, it defines $\mathcal T\lb t\,|\,\mu_a\rb$ as a genuine function of~$t$ instead of an asymptotic series. Just as $\tau\lb t\,|\,\mu_a,\mu_b\rb$ itself, $\mathcal T\lb t\,|\,\mu_a\rb$ is holomorphic in the entire complex $t$-plane.
    
    \begin{rmk}
    	Given that $H$ is periodic in $\mu_a,\mu_b$, the lack of periodicity of its antiderivative with respect to $t$ may at first sight look surprising. Let us illustrate this phenomenon with a classical example of the theta function with characteristics. Denote
    	\beq
    	\theta\lb t\,|\,\mu_a,\mu_b\rb:
    	=\sum\limits_{n\in\mathbb Z}e^{i\pi \lb\mu_a+n\rb^2 t+2\pi i n \mu_b}=e^{i\pi\mu_a^2t}\vartheta_3\lb \mu_a t+\mu_b\,|\, t\rb,
    	\eeq
    	where $t$ takes values in the upper half-plane $\mathbb H$ and $\vartheta_3\lb z\,|\,t\rb$ is the Jacobi theta function. Obviously,  $\theta\lb t\,|\,\mu_a,\mu_b\rb$  satisfies the quasi-periodicity relations \eqref{perrelstau}, yet its logarithmic derivative $\partial_t\ln \,\theta\lb t\,|\,\mu_a,\mu_b\rb$ is periodic in both $\mu_a$ and $\mu_b$. Let us mention that, up to minor redefinitions, $\theta\lb t\,|\,\mu_a,\mu_b\rb$ coincides with the tau function associated to the Picard family of elliptic solutions of the Painlev\'e VI equation.
    \end{rmk}
	
	\subsection{Stokes data and  $\mathcal Z\lb s\, | \,\nu\rb$\label{subsec_monodromy}}
    The linear system \eqref{linsys} admits a formal fundamental matrix solution of the form
    \beq\label{sform}
    \Phi_{\text{form}}\lb z\rb=\lb\begin{array}{cc} \tfrac12 z^{1/4} & \tfrac12 z^{1/4} \\
    	z^{-1/4} & -z^{-1/4}
    \end{array}\rb\lb\mathbf 1+\sum_{k=1}^\infty g_k z^{-k/2}\rb\, \exp\lb
    \begin{array}{cc}\Theta\lb z\rb & 0 \\ 0 & -\Theta\lb z\rb\end{array}\rb,
    \eeq
	with $\Theta\lb z\rb=\frac45 z^{5/2}+tz^{1/2}$ and the matrix coefficients $g_k$ uniquely determined by \eqref{linsys}. The series in the above formula is divergent. The actual solutions can only be asymptotic to $ \Phi_{\text{form}}\lb z\rb$ as $z\to\infty$ while staying  inside the Stokes sectors
	\beq
	\Omega_k=\left\{z\in\mathbb C:\frac{\lb 2k-3\rb\pi}{5}< \arg z<\frac{\lb 2k+1\rb\pi}{5}\right\},\qquad k=1,\ldots,5.
	\eeq
	
	Introduce five such canonical solutions uniquely defined by the asymptotic condition $\Phi_k\lb z\rb\simeq \Phi_{\text{form}}\lb z\rb$ in~$\Omega_k$. They are related by  constant Stokes matrices $V_k=\Phi_k\lb z\rb^{-1}\Phi_{k+1}\lb z\rb$ that can be parameterized as
	\beq
	V_1=\lb\begin{array}{cc} 1 & iv_3 \\ 0 & 1\end{array}\rb,\quad 
	V_2=\lb\begin{array}{cc} 1 & 0 \\ iv_1 & 1\end{array}\rb,\quad
	V_3=\lb\begin{array}{cc} 1 & iv_4 \\ 0 & 1\end{array}\rb,\quad 
	V_4=\lb\begin{array}{cc} 1 & 0 \\ iv_2 & 1\end{array}\rb,\quad
	V_5=\lb\begin{array}{cc} 1 & iv_0 \\ 0 & 1\end{array}\rb.
	\eeq
	Their triangular form follows from the consistency of the asymptotics of $\Phi_k$ and $\Phi_{k+1}$ on the overlap ${\Omega_k\cap\Omega_{k+1}}$. Since the system \eqref{linsys} has no finite singularities in the $z$-plane, the monodromy of its fundamental solution  around~$\infty$ is trivial, implying the constraint
	\beq\label{nomon}
	V_1V_2V_3V_4V_5=\lb\begin{array}{cc}0 & i \\
	i & 0\end{array}\rb.
	\eeq 
	The matrix on the right  appears due to the presence of fractional powers of $z$ in $\Theta\lb z\rb$ and the leftmost prefactor in \eqref{sform}. The matrix identity \eqref{nomon} is equivalent to scalar relations
	\beq\label{stokesrec}
	v_k=1-v_{k-1}v_{k+1},\qquad k\in\mathbb Z/5\mathbb Z,
	\eeq
	which imply that there are at most two independent monodromy parameters. For example, for $v_0v_1\ne 0$ we can express the other parameters in terms of $v_0$, $v_1$ as
	\beq
	v_2=\frac{1-v_1}{v_0},\qquad v_3= \frac{v_0+v_1-1}{v_0v_1},\qquad v_4=\frac{1-v_0}{v_1}.
	\eeq
	On the other hand, if e.g. $v_1=0$, then the relations \eqref{stokesrec} imply that $v_0=v_2=v_3+v_4=1$ and we obtain a one-dimensional stratum of Stokes data on which $v_0$ ceases to be a suitable  coordinate.
	
	Any choice of a 5-tuple of Stokes coefficients $\mathbf v=\lb v_1,\ldots,v_5\rb\in\mathbb C^5$ satisfying \eqref{stokesrec} uniquely defines a system of the form \eqref{linsys}. In this way, $q$ and $p$ become functions of $t$ and $\mathbf v$ that can be shown to verify the deformation equations \eqref{hameqs}. The space of Stokes data may thus interpreted as the space $\mathcal S$ of \PIeq initial conditions.
	
	Assuming the genericity condition $v_0v_1\ne0$, let us choose a pair of local coordinates $\lb\nu,\rho\rb\in\mathbb C^2$ on $\mathcal S$ by
	\beq
	v_0=e^{2\pi i \nu},\qquad v_1=e^{2\pi i \rho}.
	\eeq
	By construction, the \PIeq function $q=q\lb t\,|\,\nu,\rho\rb$ remains invariant under the shifts $\nu\mapsto\nu+1$ and
	$\rho\mapsto\rho+1$.  A less obvious property proved in \cite[Proposition 3.4]{LR16} is that $\lb\nu,\rho\rb$ are canonical Darboux coordinates with respect to the symplectic form \eqref{sympform}, that is,  $\Omega=2\pi i d\nu\wedge d\rho$. Proposition~\ref{taufou} then implies that the \PIeq tau function normalized as in \eqref{exttau} can be written in the form
	\beq\label{taufourier2}
	\tau\lb t\,|\,\nu,\rho\rb=\sum_{n\in\mathbb Z}\mathcal T\lb t\,|\,\nu+n\rb e^{2\pi i n\rho}.
	\eeq
	
	Starting from the known leading asymptotics of $q=q\lb t\,|\,\nu,\rho\rb$ as $t\to-\infty$ along the negative real axis \cite{Kapaev88,Kapaev93,Takei},  one can compute the corresponding asymptotic expansion of the building block $\mathcal T\lb t\,|\,\nu\rb$. The result is given by \cite{BLMST}
	\begin{gather}\label{bigtau1}
		\mathcal T\lb t\,|\,\nu\rb\simeq \tilde{\mathcal Z}\lb s\,|\,\nu\rb,\qquad  
		\tilde{\mathcal Z}\lb s\,|\,\nu\rb=C\lb \nu\rb s^{-\frac{1}{60}-\frac{\nu^2}{2}}e^{\frac{s^2}{45}+\frac45 i\nu s}
		\mathcal Z\lb s\,|\,\nu\rb, \\
		\label{bigtau2} C\left(\nu\right)=48^{-\frac{\nu^2}{2}}
		\lb 2\pi \rb^{-\frac{\nu}{2}}e^{-\frac{i\pi \nu^2}{4}} G\lb 1+\nu\rb,\qquad s=24^{\frac14}\lb  -t\rb^{\frac54},\\
		\label{Zasseries}\mathcal Z\lb s\,|\,\nu\rb=1+\sum_{k=1}^\infty\frac{\mathcal Z_k\lb\nu\rb}{s^k},
	\end{gather}
	where $G\lb \nu \rb$ is the Barnes G-function and the coefficients $\mathcal Z_k\lb \nu\rb$ are polynomials in $\nu$  of degree $3k$ and parity $\lb-1\rb^k$ that can be systematically computed from \PIeq equation. For instance,
		\begin{subequations}
	 \begin{align}
		\mathcal Z_1\lb\nu\rb=&\,-\frac{i\nu\left(94\nu^2+17\right)}{96},\\
		\mathcal Z_2\lb\nu\rb=&\,-\frac{44180\nu^6+170320\nu^4+
			74985\nu^2+1344}{92160},\\
		\mathcal Z_3\lb\nu\rb=&\,\frac{i\nu\lb
			4152920 \nu^8+ 45777060 \nu^6
			+ 156847302 \nu^4 + 124622833 \nu^2+
			13059000   \rb}{26542080}.
	\end{align} 
\end{subequations}

	A surprising property of the asymptotic series $\mathcal Z\lb s\,|\,\nu\rb$ is a simplification of its logarithmic version. The coefficients $\mathcal E_k\lb\nu\rb$ of 
	\beq\label{PIFEnergy}
	\mathcal E\lb s\,|\,\nu\rb=\ln \mathcal Z\lb s\,|\,\nu\rb=\sum_{k=1}^\infty\frac{\mathcal E_k\lb\nu\rb}{s^k}
	\eeq
	are polynomials in $\nu$ of degree only $k+2$ instead of the naively expected $3k$. It will be useful to record here a few of these coefficients for subsequent comparison:
	\begin{subequations}
		\label{Epsilons}
	\begin{align}
	\label{Eps1}	\mathcal E_1\lb\nu\rb=&\,\mathcal Z_1\lb\nu\rb=-\frac{i\nu\left(94\nu^2+17\right)}{96},\\
		\mathcal E_2\lb\nu\rb=&\,-\frac{ 38585 \nu ^4+18385 \nu ^2+336}{23040},\\
		\mathcal E_3\lb\nu\rb=&\, \frac{i \nu  \left(5326258 \nu ^4+5019530 \nu ^2+541269\right)}{1105920 },\\
		\mathcal E_4\lb\nu\rb=&\, \frac{31386901 \nu ^6+50078328 \nu ^4+14438609 \nu ^2+188160}{1769472},\\
	\label{Eps5}	
		\mathcal E_5\lb\nu\rb=&\, -\frac{i \nu  \left(32160819372 \nu ^6+78779679122 \nu ^4+45102923992 \nu ^2+3752724735\right)}{424673280}.
	\end{align}
	\end{subequations}
    The formal series $\mathcal Z\lb s\,|\,\nu\rb$ and $\mathcal E\lb s\,|\,\nu\rb$ will be called \PIeq \textit{partition function} and \textit{free energy}.

	\begin{rmk}
		Since $G\lb -m\rb=0$ for $m\in\mathbb N$, setting $\nu=0$ eliminates all contributions to the sum \eqref{taufourier2} with negative $n$. Further setting $\rho=i\infty$ eliminates the contributions with $n>0$, so that we are left with
		\beq
		\tau\lb t\,|\,0,i\infty\rb\simeq s^{-\frac1{60}}e^{-\frac{s^2}{45}}\mathcal Z\lb s\,|\,0\rb,
		\eeq
		and the corresponding free energy is given by
		\beq\label{PIFE}
		\mathcal E\lb s\,|\,0\rb=\sum_{k=1}^{\infty}\mathcal E_{2k}\lb 0\rb s^{-2k}=-\frac{7 }{480}s^{-2}+\frac{245 }{2304}s^{-4}-\frac{259553 }{92160}s^{-6}+\ldots
		\eeq
		Somewhat counter-intuitively, $\mathcal Z\lb s\,|\,0\rb$  gives the asymptotic expansion of a \textit{one-parameter family} of \PIeq tau functions   consisting of the so-called \textit{tronqu\'ees} solutions. These correspond to the one-dimensional stratum of $\mathcal S$ with $v_1=0$ described above. The asymptotics of different tronqu\'ees tau functions  (characterized by different values of $v_3$) differ by exponentially small corrections.
	\end{rmk}

	\section{Irregular conformal blocks\label{sec_IrrCB}}
	The aim of this section is to elaborate on a remarkable conjecture of Poghosyan and Poghossian \cite{PP23} that relates the partition function $\mathcal Z\lb s\,|\,\nu\rb$ appearing in the previous sections to a special irregular conformal block of the Virasoro algebra with central charge $c=1$. We develop a drastically simplified algebraic construction of the rank $\frac52$ irregular state and the corresponding irregular conformal block. We also answer a question raised in \cite{HNNT24} on the explicit differential operator realization of the positive part of the Virasoro algebra relevant for constructing irregular states of arbitrary half-integer rank.
	
	\subsection{Gaiotto-Teschner setup}\label{subsec_IrrStates}
	Consider the Heisenberg algebra $\mathsf{H}$ with generators $a_n$ (${n\in\mathbb Z}$) and relations
	\beq
	\left[a_m,a_n\right]=\frac{m}{2}\delta_{m+n,0}.
	\eeq
	The Feigin-Fuchs bosonization of the Virasoro algebra \eqref{Virasoro} is the embedding $\varphi:U\lb\mathsf{Vir}\rb\xhookrightarrow{} U\lb \mathsf{H}\rb $ defined by
	\begin{subequations}
		\begin{align}
			&\varphi\lb L_n\rb=\sum_{k\in\mathbb Z}a_ka_{n-k}+i\lb n+1\rb Q a_n,\qquad n\ne 0,\\
			&\varphi\lb L_0\rb=2\sum_{k>0}a_{-k}a_k+a_0\lb a_0+iQ\rb,
		\end{align}
	\end{subequations}
	where the Virasoro central charge is expressed as  $c=1+6Q^2$. It was realized in \cite{GT12} that the antihomomorphism  	
	\beq
	a_0\mapsto -ic_0,\qquad a_{n}\mapsto -ic_n,\qquad a_{-n}\mapsto \frac{in}{2}\frac{\partial}{\partial c_n},\qquad n>0
	\eeq
	can be used to construct a family of antihomomorphisms of $\mathsf{Vir}_+=\bigoplus\limits_{n\ge0}\mathbb{C}L_n$, indexed by a non-negative integer~$r$, into the algebra of differential operators in $c_1,\ldots, c_r$:
	\beq\label{virdiff1}
	L_n\mapsto \mathcal{L}_n^{(r)}=\begin{cases}
		0,\qquad  & n> 2r,\\
		-\sum\limits_{k=n-r}^r c_k c_{n-k}+\lb r+1\rb Q\,\delta_{n,r} c_r ,\qquad & r\leq n\leq 2r, \\
		-\sum\limits_{k=0}^n c_k c_{n-k}+\lb n+1\rb Qc_n+\sum\limits_{k=1}^{r-n}kc_{k+n}\ds\frac{\partial}{\partial c_k},\qquad &0\leq n<r.
	\end{cases}
	\eeq
	
	The irregular state $\bigl|I^{\lb r\rb}\lb \mathbf{c}\rb\bigr\rangle$, $\mathbf c=\lb c_1,\ldots,c_r\rb$ is defined in \cite{GT12} by means of a recursive procedure in the rank~$r$. It is a solution of 
	\beq\label{Ward_eqs}
	L_n\left|I^{\lb r\rb}\lb \mathbf{c}\rb\right\rangle=\mathcal{L}_n^{(r)}\left|I^{\lb r\rb}\lb \mathbf{c}\rb\right\rangle,\qquad n\ge 0,
	\eeq
	which has an expansion 
	\beq\label{irr_state_dec}
	\left|I^{\lb r\rb}\lb \mathbf{c}\rb\right\rangle=f\lb \mathbf{c}\rb\lb\left|I^{\lb r-1\rb}\lb \mathbf{c'}\rb\right\rangle +\sum_{k=1}^\infty c_r^k \bigl|I^{\lb r-1\rb}_k\lb \mathbf{c'}\rb\bigr\rangle\rb,
	\eeq
	with $\mathbf{c'}=\lb c_1,\ldots,c_{r-1}\rb$. Note in particular that $\left|I^{\lb r\rb}\lb \mathbf{c}\rb\right\rangle$ is a joint eigenstate of $L_r,\ldots,L_{2r}$ and the corresponding eigenvalues are parameterized by $c_0,\ldots,c_r$. To lighten the notation, we omit the dependence of irregular states on parameters other than $\mathbf c$. 	 
	\begin{itemize}
	\item The generalized descendants $\bigl|I^{\lb r-1\rb}_k\lb \mathbf{c'}\rb\bigr\rangle$ appearing in \eqref{irr_state_dec} are linear combinations of states of the form
	\beq\label{notation_young}
	\mathbb{L}_{-\lambda}\frac{\partial^{m_1}}{\partial c_1^{m_1}}\ldots \frac{\partial^{m_{r-1}}}{\partial c_{r-1}^{m_{r-1}}}\left|I^{\lb r-1\rb}\lb \mathbf{c'}\rb\right\rangle,\qquad 
	\mathbb{L}_{-\lambda}=L_{-\lambda_1}\ldots L_{-\lambda_\ell},
	\eeq
	where $\lambda=\lb \lambda_1,\ldots,\lambda_\ell\rb$ is a partition (i.e. $\lambda_1,\ldots,\lambda_\ell\in\mathbb N$ with $\lambda_1\ge\ldots\ge\lambda_\ell$). We will denote by $\mathbb Y$ the set of all partitions identified with Young diagrams.
	\item
	The prefactor $f\lb\mathbf{c}\rb$ is in general singular at $c_r=0$. It is defined up to multiplication by a function analytic in $c_r$ in a neighborhood of $0$ whose value at $c_r=0$ is a non-zero constant independent of $\mathbf{c'}$. A typical ansatz for $f\lb \mathbf c\rb$ is 
	\beq
	f\lb \mathbf c\rb= c_r^{\alpha}\exp \sum_{k=0}^d f_k\lb\mathbf c'\rb c_r^{-k}.
	\eeq 
    \end{itemize}

	The proposal of Gaiotto-Teschner \cite{GT12} is that the equations \eqref{Ward_eqs} and the decomposition \eqref{irr_state_dec} fix both the form of non-analytic prefactor $f\lb \mathbf c\rb$ and the expressions of descendants $\bigl|I^{\lb r-1\rb}_k\lb \mathbf{c'}\rb\bigr\rangle$. For $r=1$, this was established earlier in \cite{G09}. For $r=2,3$, the proposal was confirmed in \cite{GT12,NU19}, yet a rigorous proof remains to be found.
	
	\begin{eg}
		The state $\left|I^{\lb 0\rb}\right\rangle$ is the usual Virasoro highest weight state $|\Delta\rangle$ such that 
		\beq
		L_0|\Delta\rangle=\Delta|\Delta\rangle,\qquad L_{n}|\Delta\rangle=0, \qquad n>0.
		\eeq 
		The Verma module $\mathcal V_{\Delta}$ is spanned by descendants $ \mathbb{L}_{-\lambda}|\Delta\rangle$, $\lambda\in\mathbb Y$. Similarly introduce the dual highest weight state satisfying $\langle \Delta|L_0=\Delta\langle \Delta|$, $\langle \Delta|L_{n<0}=0$ and the dual Verma module $\mathcal V_{\Delta}^*$. There exist a canonical non-degenerate bilinear pairing $\langle\cdot\rangle:\mathcal V_{\Delta}^*\times \mathcal V_{\Delta}\to\mathbb C$ uniquely defined by the condition that $\langle v|L_n\cdot |w\rangle=\langle v|\cdot L_n|w\rangle$ for all $\langle v|\in \mathcal V_{\Delta}^*,|w\rangle\in\mathcal V_{\Delta},n\in\mathbb Z $ and normalization $\langle\Delta|\cdot |\Delta\rangle=1$.
		
		The state $\bigl|I^{(1)}\bigr\rangle\in \mathcal V_{\Delta}$ can be reconstructed from
		the differential equation $\lb\mathcal L_0^{(1)}-\Delta\rb\langle \Delta|I^{(1)}\rangle=0$ solved by $\langle \Delta|I^{(1)}\rangle=\mathrm{const}\cdot c_1^{\Delta-c_0\lb Q-c_0\rb}$, and the only non-zero projections 
		\beq\label{projrank1}
		\langle \Delta|L_1^mL_2^n|I^{(1)}\rangle= 2^m\lb Q-c_0\rb^m \lb -1\rb^n c_1^{m+2n}\langle \Delta|I^{(1)}\rangle.
		\eeq
		Thus
		\beq
		|I^{(1)}\rangle=c_1^{\Delta-c_0\lb Q-c_0\rb}\left[|\Delta\rangle +\sum_{k=1}^{\infty}c_1^k\sum_{\lambda\in\mathbb Y,|\lambda|=k}G_{\lambda}\; \mathbb{L}_{-\lambda}|\Delta\rangle\right],
		\eeq
		where the coefficients $G_{\lambda}=G_{\lambda}\lb c_0,Q,\Delta\rb$ can be systematically computed for any $k$ from \eqref{projrank1} using the bilinear pairing.
		The non-analytic prefactor $c_1^{\Delta-c_0\lb Q-c_0\rb}$ is the simplest instance of $f\lb \mathbf c\rb$. 
	\end{eg}
	
		\subsection{Half-integer ranks\label{subsec_HIR}}
	It was recently realized  that the above construction can be adapted to half-integer ranks. The latter naming means that by analogy with \eqref{virdiff1} we require that $L_{n}\mapsto 0$ for $n>2r\in 2\mathbb N+1$ while keeping the eigenvalue of $L_{2r}$ non-vanishing. An appropriate differential operator realization of $\mathsf{Vir}_+$  was found in \cite{PP23} for $r=\frac52$. In \cite{HNNT24}, it has been suggested that the generalization to arbitrary half-integer ranks can be found by taking a suitable limit of the  integer rank realization. This has led to a conjectural recursive procedure for determining the operators $\mathcal L_{n}^{(r)}$ with $r\in\mathbb N+\frac12$. However,  finding their closed form for any $r$ stayed an open issue.
	
	It turns out that this problem admits a  concise solution that allows to consider the integer and half-integer ranks on equal footing. Loosely speaking, it suffices to set in \eqref{virdiff1} $Q=0$  (since we only care about the realization of $\mathsf{Vir}_+$, this does not imply any constraint on the central charge)  and let the coefficients $c_k$ to be labeled by positive half-integers (which does not alter the form of commutation relations).
	
	\begin{prop}\label{PropLs}
		For $s\in \mathbb N$, the mapping
		\beq\label{propLs}
		L_n\mapsto \mathcal L^{(s-\frac12)}_n=\begin{cases}
			0,\qquad & n\ge 2s,\\
			-\sum\limits_{k=n-s+\frac12}^{s-\frac12} c_{k} c_{n-k},\qquad & n=s,\ldots, 2s-1,\\
			-\sum\limits_{k=\frac12}^{s-\frac12} c_{k} c_{n-k}+
			\sum\limits_{k=\frac12}^{s-n-\frac12}kc_{n+k}\ds\frac{\partial}{\partial c_{k}},\qquad & n=0,\ldots, s-1,
			\end{cases}
		\eeq
		is consistent with the commutation relations of the subalgebra $\mathsf{Vir}_+$. Moreover, as an antihomomorphism into the subalgebra of 1st order differential operators in the eigenvalues of $L_s,\ldots,L_{2s-1}$, such a mapping is unique up to gauge transformations $\mathcal L^{(s-\frac12)}_n\mapsto e^{-F}\circ\mathcal L^{(s-\frac12)}_n\circ e^F$.
		\end{prop}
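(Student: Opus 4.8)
The plan is to split the statement into a \emph{consistency} (antihomomorphism) part and a \emph{uniqueness} part, and to treat both through the free-field picture motivating formula \eqref{propLs}.

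For consistency, I would realize $\mathcal{L}^{(s-\frac12)}_n$ as a truncation of a twisted-boson representation. Introduce the half-integer-moded Heisenberg algebra $[a_m,a_n]=\tfrac{m}{2}\delta_{m+n,0}$ with $m,n\in\mathbb{Z}+\tfrac12$ and the Sugawara generators $L_n=\sum_{k\in\mathbb{Z}+1/2}{:}a_ka_{n-k}{:}$, which satisfy the Virasoro relations; this is exactly the $Q=0$, half-integer-label specialization advertised before the proposition. The antihomomorphism $a_k\mapsto -ic_k$, $a_{-k}\mapsto \tfrac{ik}{2}\partial_{c_k}$ for $k>0$ sends each $L_n$, $n\ge0$, to a first-order operator on $\mathbb{C}[c_{1/2},c_{3/2},\dots]$, which a short computation identifies with the infinite-variable version of \eqref{propLs}. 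The key observation is then that the ideal $I=(c_k\,:\,k>s-\tfrac12)$ is preserved by every $\mathcal{L}_n$ with $n\ge0$: the multiplication part clearly maps $I$ into $I$, while the derivation part $\sum_k kc_{n+k}\partial_{c_k}$ sends a generator $c_m$ (with $m>s-\tfrac12$) to $mc_{n+m}$, and $n+m>s-\tfrac12$, so this stays in $I$. Consequently the operators descend to $\mathbb{C}[c_{1/2},\dots,c_{s-1/2}]\cong\mathbb{C}[c]/I$, where they coincide with \eqref{propLs} and automatically inherit the relations of $\mathsf{Vir}_+$ (the central term being invisible on $\mathsf{Vir}_+$). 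This yields the antihomomorphism property without any case-by-case bracket computation.

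For uniqueness, let $\widetilde{\mathcal L}$ be any second such antihomomorphism in the same variables $u_n$ (the eigenvalues of $L_s,\dots,L_{2s-1}$), so $\widetilde{\mathcal L}_n=u_n$ for $s\le n\le 2s-1$ and $\widetilde{\mathcal L}_n=0$ for $n\ge 2s$. First I would note that the derivation part is rigid: for $0\le m<s\le n\le 2s-1$ the relation $[\widetilde{\mathcal L}_m,\widetilde{\mathcal L}_n]=(n-m)\widetilde{\mathcal L}_{m+n}$ reads off the coefficient of $\partial_{u_n}$ in $\widetilde{\mathcal L}_m$ as $(n-m)u_{m+n}$ (with $u_{m+n}:=0$ once $m+n\ge 2s$). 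Since \eqref{propLs} satisfies the same relations, the two realizations agree in their derivation parts, and $\widetilde{\mathcal L}_m=\mathcal L_m+B_m$ with $B_m$ a pure multiplication operator, $B_m=0$ for $m\ge s$. Substituting into the remaining relations $[\widetilde{\mathcal L}_m,\widetilde{\mathcal L}_{m'}]=(m'-m)\widetilde{\mathcal L}_{m+m'}$ for $0\le m,m'<s$ and using that \eqref{propLs} already satisfies them, the functions $B_m$ obey the linear system $D_m(B_{m'})-D_{m'}(B_m)=(m'-m)B_{m+m'}$, where $D_n$ is the known derivation part of $\mathcal L_n$. This is precisely the $1$-cocycle condition for $\mathsf{Vir}_+$ acting on the function ring through the $D_n$; since a gauge transformation $e^{-F}\circ\mathcal L_n\circ e^F$ shifts $B_n$ by the coboundary $D_n(F)$, uniqueness up to gauge amounts to showing every such cocycle is a coboundary.

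To prove the latter I would exploit the grading $\deg c_k=k$ (equivalently $\deg u_n=n$), under which $D_0$ is the Euler operator acting on a homogeneous element by its degree. The cocycle splits into homogeneous sectors indexed by the shift $e$ with $\deg B_n=n+e$, each preserved by the system. For $e\ne 0$ the $m=0$ equation gives $e\,B_n=D_n(B_0)$, whence $B_n=D_n(F)$ with $F=B_0/e$; thus every non-resonant sector is a coboundary, absorbed into the gauge $F$. The resonant sector $e=0$ is what I expect to be the main obstacle: there the Euler argument degenerates, and one must show directly, using the explicit triangular derivations $D_1,\dots,D_{s-1}$ to integrate the closed ``gradient'' data $B_1,\dots,B_{s-1}$ to a single potential $F$, that the only surviving invariant is the constant term of $B_0$, i.e. the $L_0$-eigenvalue. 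Fixing that eigenvalue as in \eqref{propLs} (where $\mathcal L_0$ is the pure Euler operator with no constant) then gives $\widetilde{\mathcal L}_n=e^{-F}\circ\mathcal L_n\circ e^F$ and completes the proof. I anticipate that the resonant analysis, together with the verification that the required potential $F$ — which is generically rational rather than polynomial in the $u_n$ — is an admissible gauge, will be the most delicate point.
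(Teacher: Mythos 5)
Your consistency argument is correct and takes a genuinely different route from the paper's. The paper disposes of the antihomomorphism property by a ``straightforward calculation''; you instead promote the heuristic that motivates \eqref{propLs} (set $Q=0$ and label the modes by positive half-integers) into an actual proof: the $Q=0$ Sugawara operators in infinitely many variables preserve the ideal $I=(c_k : k>s-\tfrac12)$, hence descend to $\mathbb C[c_{1/2},\dots,c_{s-1/2}]$, and the descended operators are exactly \eqref{propLs} (for $n\ge 2s$ every monomial $c_kc_{n-k}$ has an index $\ge n/2\ge s$, for $s\le n\le 2s-1$ the quadratic sum truncates to the stated range, and for $n<s$ the derivation range is cut at $k\le s-n-\tfrac12$); the central term and normal-ordering constant are invisible inside $\mathsf{Vir}_+$, so the relations descend. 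This is a clean, computation-free proof of the first half of the proposition.

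Your uniqueness setup coincides with the paper's: the derivation parts of $\widetilde{\mathcal L}_0,\dots,\widetilde{\mathcal L}_{s-1}$ are fixed by bracketing against the multiplication operators $\widetilde{\mathcal L}_s,\dots,\widetilde{\mathcal L}_{2s-1}$, and the scalar discrepancies satisfy the linear system $D_m(B_n)-D_n(B_m)=(n-m)B_{m+n}$ with gauge acting by $B_n\mapsto B_n+D_n(F)$ --- this is precisely the paper's system \eqref{PDEs} (up to an inessential sign), recast as a cocycle condition. The genuine gap is in how you solve it. The Euler-grading trick correctly kills the non-resonant sectors $e\ne 0$, but you explicitly leave the resonant sector $e=0$ as ``the main obstacle'' with only a description of what ought to be true there. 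That sector is not a technicality: it is where all the content sits. For example $B_0=\mathrm{const}$, $B_{n\ge 1}=0$ is a resonant cocycle with no polynomial potential, and more generally any collection of $B_n$ homogeneous of degree $n$ survives your argument untouched. What closes the case is exactly the paper's triangular gauge-fixing iteration: use the single derivation $\mathcal D_{s-1}=\tfrac12 c_{s-1/2}\partial_{c_{1/2}}$ to gauge-fix $D_{s-1}=0$; the equations with $n=s-1$ then force all remaining $D_m$ to be independent of $c_{1/2}$; the residual gauge freedom (potentials independent of $c_{1/2}$) lets one set $D_{s-2}=0$, forcing independence of $c_{3/2}$; iterating, only a function of $c_{s-1/2}$ survives in $D_0$, which a final logarithmic gauge (conjugation by a power of $c_{s-1/2}$) removes. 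Your sentence ``integrate the closed gradient data using the triangular derivations'' names this argument but does not carry it out, so as written the uniqueness half is unproven.

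Two further points you should repair in any completed version. First, the homogeneous decomposition your sector argument relies on presupposes that the scalar parts live in a graded function space; this holds for polynomials in the $c_k$, but not for general functions of the eigenvalues $u_n$ --- compare \eqref{L52eq2v2}, whose scalar parts are rational in the $\Lambda$'s, and note that a non-homogeneous rational function need not split into finitely many homogeneous pieces. The paper avoids this by working in the $c$-parameterization throughout, and you should do the same before invoking the grading. Second, your concern about ``fixing the $L_0$-eigenvalue'' is unnecessary: a constant shift of $\mathcal L_0$ is itself a gauge transformation (conjugation by $c_{s-1/2}^{\alpha}$ shifts $\mathcal L_0$ by $(s-\tfrac12)\alpha$), so it is absorbed by the gauge freedom exactly as the proposition asserts.
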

		\pf Straightforward calculation shows that $\left[\mathcal L^{(s-\frac12)}_m,\mathcal L^{(s-\frac12)}_n\right]=\lb n-m\rb \mathcal L^{(s-\frac12)}_{m+n}$ for any $m,n\ge0$. To demonstrate uniqueness, parameterize $\mathcal L^{(s-\frac12)}_{s},\ldots, \mathcal L^{(s-\frac12)}_{2s-1}$ by $c_{_{1/2}},\ldots, c_{_{s-1/2}}$ as in \eqref{propLs} and note that the differential parts of $ \mathcal L^{(s-\frac12)}_{0},\ldots, \mathcal L^{(s-\frac12)}_{s-1}$ are immediately fixed by their commutation relations with $\mathcal L^{(s-\frac12)}_{s},\ldots, \mathcal L^{(s-\frac12)}_{2s-1}$ acting by multiplication. 
		
		The scalar parts   of $ \mathcal L^{(s-\frac12)}_{0},\ldots, \mathcal L^{(s-\frac12)}_{s-1}$ have to satisfy a system of linear PDEs coming from the commutators $ \mathcal L^{(s-\frac12)}_{0},\ldots, \mathcal L^{(s-\frac12)}_{s-1}$  between themselves. For $n=0,\ldots, s-1$, write $\mathcal L^{(s-\frac12)}_{n}=\tilde{\mathcal L}^{(s-\frac12)}_{n}+D_n$, where $\tilde{\mathcal L}^{(s-\frac12)}_{n}$ is given by the expression of $\mathcal L^{(s-\frac12)}_{n}$ in \eqref{propLs}. Then the compatibility with $\mathsf{Vir}_+$ commutation relations is equivalent to
		\beq\label{PDEs}
		\sum\limits_{k=\frac12}^{s-m-\frac12}kc_{m+k}\ds\frac{\partial D_n}{\partial c_{k}}-
		\sum\limits_{k=\frac12}^{s-n-\frac12}kc_{n+k}\ds\frac{\partial D_m}{\partial c_{k}}=\lb m-n\rb D_{m+n},\qquad 0\leq m< n\leq s-1,
		\eeq
		where $D_k=0$ for $k\ge s$. We can use the gauge freedom to set $D_{s-1}=0$. From \eqref{PDEs} with $m=0,\ldots, s-2$ and $n=s-1$, it follows that $\ds\frac{\partial D_{m}}{\partial c_{_{1/2}}}=0$, and thus the functions $D_0,\ldots,D_{s-2}$ are independent of $ c_{_{1/2}}$. Now we can use the residual gauge freedom to set $D_{s-2}=0$ and iterate the procedure: equations \eqref{PDEs} with $m=0,\ldots, s-3$ and $n=s-2$ shows that $D_0,\ldots, D_{s-3}$ are independent of $ c_{_{3/2}}$ and so on.\epf
	
	In the  most relevant to us rank $\frac52$ case, the nontrivial operators explicitly read
	\beq
	\label{52equationsV2}
	\begin{gathered}
	\mathcal L_5^{(5/2)}=-c_{_{5/2}}^2,\qquad 
	\mathcal L_4^{(5/2)}=-2c_{_{3/2}}c_{_{5/2}},
	\qquad 
	\mathcal L_3^{(5/2)}=-2c_{_{1/2}}c_{_{5/2}}-c_{_{3/2}}^2,	\\
		\mathcal L_2^{(5/2)}=-2c_{_{1/2}}c_{_{3/2}}+\frac{c_{_{5/2}}}{2}\frac{\partial}{\partial c_{_{1/2}}},\qquad
		\mathcal L_1^{(5/2)}=-c_{_{1/2}}^2+\frac{3c_{_{5/2}}}{2}\frac{\partial}{\partial c_{_{3/2}}}+\frac{c_{_{3/2}}}{2}\frac{\partial}{\partial c_{_{1/2}}},\\
	\mathcal L_0^{(5/2)}=\frac{5 c_{_{5/2}}}{2}\frac{\partial}{\partial c_{_{5/2}}}+
	\frac{3 c_{_{3/2}}}{2}\frac{\partial}{\partial c_{_{3/2}}}+
	 \frac{c_{_{1/2}}}{2}\frac{\partial}{\partial c_{_{1/2}}}.
	 \end{gathered}
	\eeq
	They are of course gauge equivalent to the expressions given in \cite{PP23} and \cite{HNNT24}. E.g. parameterizing $\mathcal L_n^{(5/2)}=\Lambda_n$ for $n=3,4,5$, the other  generators become
	\begin{subequations} \label{L52eq2v2}
\begin{align}
	\mathcal L^{(5/2)}_2=&\,\Lambda_5\frac{\partial}{\partial \Lambda_3}-\frac{\Lambda_4\lb\Lambda^2_4-4\Lambda_3\Lambda_5\rb}{8\Lambda_5^2},\\
	\mathcal L^{(5/2)}_1=&\, 3\Lambda_5\frac{\partial}{\partial \Lambda_4}+2\Lambda_4\frac{\partial}{\partial \Lambda_3}+\frac{\lb\Lambda^2_4-4\Lambda_3\Lambda_5\rb^2}{64\Lambda_5^3},
	\\
	\mathcal L^{(5/2)}_0=&\,5\Lambda_5\frac{\partial }{\partial \Lambda_5}+
	4\Lambda_4\frac{\partial }{\partial \Lambda_4}+
	3\Lambda_3\frac{\partial }{\partial \Lambda_3}.
\end{align}
\end{subequations}
   The gauge transformation with $F=\ds\frac{47\Lambda_4^5}{960\Lambda_5^4}-\ds\frac{5\Lambda_3\Lambda_4^3}{24\Lambda_5^3}+\frac{\Lambda_3^2\Lambda_4}{4\Lambda_5^2}$ then yields the realization of \cite[Eqs. (3.41)--(3.46)]{HNNT24}. Further reparameterization $\Lambda_3=-2c_1c_2$, $\Lambda_4=-c_2^2$ combined with the gauge transformation corresponding to $F=\ds\frac{c_1c_2^4\lb c_2^3+4c_1\Lambda_5\rb}{4\Lambda_5^3}+\ds\frac{17c_2^{10}}{960\Lambda_5^4}$ gives
   \begin{subequations}\label{PPLs}
   \begin{gather}
   \mathcal L^{(5/2)}_{5}=\Lambda_5,\qquad
   \mathcal L^{(5/2)}_{4}=-c_2^2,\qquad
   \mathcal L^{(5/2)}_{3}=-2c_1c_2,\qquad
   \mathcal L^{(5/2)}_2=-\frac{\Lambda_5}{2c_2}\frac{\partial}{\partial c_1},\\
   \mathcal L^{(5/2)}_1=-\frac{3\Lambda_5}{2c_2}\frac{\partial}{\partial c_2}+
   \frac{2c_2^3+3c_1\Lambda_5}{2c_2^2}\frac{\partial}{\partial c_1}-\frac{2c_1^2c_2^2}{\Lambda_5},\\
   \mathcal L^{(5/2)}_0=5\Lambda_5\frac{\partial }{\partial \Lambda_5}+
   2c_2\frac{\partial }{\partial c_2}+
   c_1\frac{\partial }{\partial c_1}.
   \end{gather}
   \end{subequations}
   This coincides with \cite[Eq. (2.8)]{PP23} after the sign flip $\Lambda_5\mapsto-\Lambda_5$.

	\subsection{Differential operator approach\label{subsec_Diffopapp}}
	Consider   the rank $2$ irregular state   $\bigl|I^{(2)}\bigr\rangle$ defined by $L_n\bigl|I^{(2)}\bigr\rangle=\mathcal L_n^{(2)}\bigl|I^{(2)}\bigr\rangle$ for all $n\ge0$. In the case $r=2$, the differential operators $\mathcal L^{(2)}_{n}$ in \eqref{virdiff1} are explicitly written as
	\beq
	\begin{gathered}
	\mathcal L^{(2)}_{n\ge 5}=0,\qquad\mathcal L^{(2)}_4=-c_2^2,\qquad \mathcal L^{(2)}_3=-2c_1c_2,\qquad\mathcal L^{(2)}_2=-2c_0c_2-c_1^2+3Qc_2,\\
	\mathcal L^{(2)}_1=2c_1\lb Q-c_0\rb+c_2\frac{\partial }{\partial c_1},\qquad 
	\mathcal L^{(2)}_0=c_0\lb Q-c_0\rb+
	2c_2\frac{\partial }{\partial c_2}+
	c_1\frac{\partial }{\partial c_1}.
	\end{gathered} 
	\eeq
	Let $\langle 0|$ be the Virasoro vacuum state satisfying $\langle 0|L_n=0$ for $n\leq 1$. Define conformal block $\mathcal F^{(2)}\lb c_1,c_2\rb$ by 
	\beq
	\mathcal F^{(2)}\lb c_1,c_2\rb= \bigl \langle 0 \bigr| I^{(2)}\bigr\rangle.
	\eeq
	The pairing between the vacuum module and the rank 2 module is required to satisfy the usual compatibility condition $\langle v|L_n\cdot |w\rangle=\langle v|\cdot L_n|w\rangle$ fixing it up to rescaling. This leads to Ward identities $\mathcal L^{(2)}_{0}\mathcal F^{(2)}=\mathcal L^{(2)}_{1}\mathcal F^{(2)}=0$ which are immediately solved by
	\beq
	\mathcal F^{(2)}\lb c_1,c_2\rb=\operatorname{const}\cdot c_2^{-c_0\lb Q-c_0\rb/2}\exp\frac{\lb c_0-Q\rb c_1^2}{c_2}.
	\eeq
	The constant prefactor is independent of $c_1$, $c_2$ and can be set equal to $1$. Note that if we replace $\langle 0|$ by a generic highest weight state $\langle \Delta|$, the Ward identities alone are no longer sufficient to fix $\langle \Delta|I^{(2)}\rangle$ since the right action of $L_1$ on $\langle \Delta|$ is nontrivial.

	Following the same procedure for $r=\frac52$, let us define conformal block $\mathcal F^{(5/2)}\lb c_1,c_2,\Lambda_5\rb$ by 
	\beq
	\mathcal F^{(5/2)}\lb c_1,c_2,\Lambda_5\rb= \bigl\langle 0 \bigr| I^{(5/2)}\bigr\rangle,
	\eeq
	where $L_n\bigl|I^{(5/2)}\bigr\rangle=\mathcal L_n^{(5/2)}\bigl|I^{(5/2)}\bigr\rangle$ for $n\ge 0$ and the differential operators $\mathcal L^{(5/2)}_n$ are given by \eqref{PPLs}. The equation $\mathcal L^{(5/2)}_{0}\mathcal F^{(5/2)}=0$ implies that if $c_1$, $c_2$ and $\Lambda_5$ are assigned degrees $1$, $2$ and $5$ then $\mathcal F^{(5/2)}\lb c_1,c_2,\Lambda_5\rb$ is homogeneous of degree $0$, so that
	\beq
	\mathcal F^{(5/2)}\lb c_1,c_2,\Lambda_5\rb=G\lb x,y\rb,\qquad \text{with } \quad x=\frac{\Lambda_5}{c_2^{5/2}},
	\; y= 
	\frac{c_1\Lambda_5}{c_2^3}.
	\eeq
	The second constraint $\mathcal L^{(5/2)}_{1}\mathcal F^{(5/2)}=0$ then yields a linear PDE
	\beq\label{exppref}
    \left[15x\frac{\partial }{\partial x}+4\lb 1+6y\rb\frac{\partial }{\partial y}-\frac{8y^2}{x^4}\right]G\lb x,y\rb=0,
	\eeq
	which can be solved by the standard method of characteristics.
	The solution is given by 
	\beq\label{expprefsol}
	G\lb x,y\rb= \exp\left\{-\frac{2 \left(135 y^2+30 y+2\right)}{405 x^4}\right\} F\lb s \rb,\qquad s=\frac{2\lb 1+6 y\rb^{5/4}}{3x^2}.
	\eeq
	
	The arbitrary function $F\lb s\rb$ represents the nontrivial part of the conformal block and has to be determined by other means. Anticipating the result, we write it in the form of a formal asymptotic series at $s=\infty$ (which corresponds to sending $\Lambda_5\to0$):
	\beq\label{CBansatz}
	F\lb s\rb=s^{-2\alpha}e^{\frac{s^2}{45}+\frac{4i\nu s}{5}}\left[1+\sum_{k=1}^\infty F_k s^{-k}\right].
	\eeq
	The role of $\frac{s^2}{45}$ here is to cancel the most singular term proportional to $\Lambda_5^{-4}$ in the exponential prefactor in \eqref{expprefsol} so that $\ln\mathcal F^{(5/2)}\lb c_1,c_2,\Lambda_5\rb=O\lb \Lambda_5^{-3}\rb$ as $\Lambda_5\to 0$. In fact, there occur more substantial cancellations:
	\beq
	\frac{s^2}{45}-\frac{2 \left(135 y^2+30 y+2\right)}{405 x^4}=\frac{2 c_1^3 c_2}{3 \Lambda_5}+O\lb 1\rb\qquad\text{as }\;\Lambda_5\to 0.
	\eeq
	The parameter $\nu$ in \eqref{CBansatz} is free while $\alpha$ and the coefficients $F_k$ are to be found in terms of $\nu$.
	
	After these preparations, let us now revisit the construction of \cite{PP23}. The idea is to look for the rank~$\frac52$ irregular state $| I^{(5/2)}\rangle$ in the form of a series in $\Lambda_5$ similar to the expansion \eqref{irr_state_dec}, 
		\beq\label{irr_state_dec_bis}
	\left|I^{\lb 5/2\rb}\right\rangle=f\lb c_1,c_2,\Lambda_5\rb\,|\Psi\rangle,
	\qquad |\Psi\rangle=
	\sum_{k=0}^\infty \Lambda_5^k \bigl|I^{\lb 2\rb}_k\bigr\rangle,
	\eeq
	where $\bigl|I^{\lb 2\rb}_0\bigr\rangle=\bigl|I^{(2)}\bigr\rangle$. The generalized descendants $	\bigl|I^{\lb 2\rb}_k\bigr\rangle$ have the form
	\beq\label{gendes}
	\bigl|I^{\lb 2\rb}_k\bigr\rangle=\sum_{n_1,n_2\in\mathbb Z_{\ge0}}\sum_{\lambda\in\mathbb Y}A^{[k]}_{\lambda;n_1,n_2}\lb c_1,c_2\rb\mathbb L_{-\lambda}\frac{\partial^{n_1}}{\partial c_1^{n_1}}\frac{\partial^{n_2}}{\partial c_2^{n_2}}\bigl|I^{(2)}\bigr\rangle.
	\eeq
	The above considerations suggest the following form of the singular prefactor $f\lb c_1,c_2,\Lambda_5\rb$ in \eqref{irr_state_dec_bis} as
	\begin{subequations}
	\begin{align}
	f\lb c_1,c_2,\Lambda_5\rb=&\,\lb\mathcal F^{(2)}\lb c_1,c_2\rb\rb^{-1}  s^{-2\alpha}\exp\left\{\frac{s^2}{45}+\frac{4i\nu s}{5}-\frac{2 \left(135 y^2+30 y+2\right)}{405 x^4}\right\}=
	 \\ 
	 \label{fsingb}=&\,c_2^{c_0\lb Q-c_0\rb/2}s^{-2\alpha}\exp\left\{\frac{s^2}{45}+\frac{4i\nu s}{5}
	 -\frac{4c_2^{10}}{405\Lambda_5^4}-\frac{4c_1c_2^7}{27\Lambda_5^3}-
	 \frac{2c_1^2c_2^4}{3\Lambda_5^2}+\frac{\lb Q-c_0\rb c_1^2}{c_2}\right\},
	\end{align}
	\end{subequations}
	with 
	\beq\label{irrCR}
	s=\frac{2c_2^5}{3\Lambda_5^2}\lb 1+\frac{6c_1\Lambda_5}{c_2^3}\rb^{5/4}.
	\eeq
	The  formula \eqref{fsingb} is equivalent to \cite[Eqs. (2.12)--(2.14)]{PP23} where this prefactor was found by invoking a Seiberg-Witten curve analysis instead of solving Ward identities.
	
	Substituting \eqref{irr_state_dec_bis} into the  defining equations of irregular state, we have $L_n|\Psi\rangle=f^{-1}\circ\mathcal L^{(5/2)}_n\circ f\,|\Psi\rangle$. The operators on the right hand side of this relation can be expressed as
	\begin{align}
		f^{-1}\circ\mathcal L^{(5/2)}_n\circ f=
		\begin{cases}
		0,\qquad & n>5,\\
		\Lambda_5, \qquad & n=5,\\
		-c_2^2,\qquad & n=4, \\
		-2c_1c_2,\qquad & n=3, \\	
		-\ds\frac{\Lambda_5}{2c_2}\ds\frac{\partial}{\partial c_1}+\varphi_2,\qquad & n=2,\\
		-\ds\frac{3\Lambda_5}{2c_2}\ds\frac{\partial}{\partial c_2}+
		\ds\frac{2c_2^3+3c_1\Lambda_5}{2c_2^2}\frac{\partial}{\partial c_1}+\varphi_1,\qquad & n=1,\\
		5\Lambda_5\ds\frac{\partial }{\partial \Lambda_5}+
		2c_2\ds\frac{\partial }{\partial c_2}+
		c_1\ds\frac{\partial }{\partial c_1}+c_0\lb Q-c_0\rb,\qquad & n=0,
		\end{cases}	
	\end{align}
	where the functions $\varphi_1$, $\varphi_2$ are readily computed as
	\begin{subequations}
	\begin{gather}
		\varphi_1=\lb Q-c_0\rb\left[ 2c_1+\frac{3\lb 6c_1^2-c_0c_2\rb}{4c_2^3}\Lambda_5\right],\\
	\begin{gathered}
	\varphi_2=\frac{2c_2^6}{27\Lambda_5^2}\left[1+\frac{9c_1\Lambda_5}{c_2^3}-
	\lb1+\frac{6c_1\Lambda_5}{c_2^3}\rb^{3/2}\right]-2i\nu c_2\lb1+\frac{6c_1\Lambda_5}{c_2^3}\rb^{1/4}\\-\frac{\lb Q-c_0\rb c_1}{c_2}\Lambda_5+\frac{15\alpha \Lambda_5^2}{2c_2^4}\lb1+\frac{6c_1\Lambda_5}{c_2^3}\rb^{-1}.
	\end{gathered}
	\end{gather}
	\end{subequations}
	It is important to note that the  $\Lambda_5\to 0$ expansion of $\varphi_2$  does not contain singular terms. Collecting the coefficients of different powers $\Lambda_5^k$ in $L_n|\Psi\rangle=f^{-1}\circ\mathcal L^{(5/2)}_n\circ f\,|\Psi\rangle$ yields a system of recurrence relations for the generalized descendants:
	\begin{subequations}
		\label{recrels1}
	\begin{gather}
		L_{n>5}\bigl|I^{\lb 2\rb}_k\bigr\rangle=0,\qquad 
		L_5 \bigl|I^{\lb 2\rb}_k\bigr\rangle=\bigl|I^{\lb 2\rb}_{k-1}\bigr\rangle,\qquad
		\lb L_4+c_2^2\rb \bigl|I^{\lb 2\rb}_k\bigr\rangle=0,\qquad 
		\lb L_3+2c_1c_2\rb \bigl|I^{\lb 2\rb}_k\bigr\rangle=0,\\
		L_2\bigl|I^{\lb 2\rb}_k\bigr\rangle=-\frac{1}{2c_2}\frac{\partial}{\partial c_1}
		\bigl|I^{\lb 2\rb}_{k-1}\bigr\rangle+\sum_{j=0}^k\varphi_{2,j}\bigl|I^{\lb 2\rb}_{k-j}\bigr\rangle,\\
		\lb L_1 -\mathcal L_1^{(2)}\rb\bigl|I^{\lb 2\rb}_k\bigr\rangle=\lb-\frac{3}{2c_2}\frac{\partial}{\partial c_2}+\frac{3c_1}{2c_2^2}\frac{\partial}{\partial c_1}+\frac{3\lb Q-c_0\rb\lb 6c_1^2-c_0c_2\rb}{4c_2^3}\rb \bigl|I^{\lb 2\rb}_{k-1}\bigr\rangle,\\
		\lb L_0 -\mathcal L_0^{(2)}-5k\rb\bigl|I^{\lb 2\rb}_k\bigr\rangle=0,
	\end{gather}
	\end{subequations}
	where $\varphi_{2,j}$ denote the coefficients in the Taylor expansion $\varphi_2=\sum_{j=0}^\infty \varphi_{2,j}\Lambda_5^j $. For $k=0$, all of these equations are satisfied automatically except for the $L_2$ equation which expresses the corresponding eigenvalue parameter $c_0$ in terms of $\nu$ as $c_0=i\nu+\frac{3Q}{2}$.
	
	One may attempt to solve the system \eqref{recrels1} recursively by restricting possible values of $n_1,n_2,\lambda$ in \eqref{gendes} to a suitable finite subset of $\mathbb Z_{\ge0}^2\times \mathbb Y$. For example, substituting   $\bigl|I^{\lb 2\rb}_1\bigr\rangle$ in the form
	\beq
	\bigl|I^{\lb 2\rb}_1\bigr\rangle=\lb A L_{-1}+B\frac{\partial }{\partial c_2}+C\frac{\partial }{\partial c_1}+D\rb \bigl|I^{\lb 2\rb}\bigr\rangle
	\eeq
	into the relations \eqref{recrels1} with $k=1$, we see that the $L_{n>5}$ equations are satisfied automatically, the $L_5$ equation determines $A=-\ds\frac{1}{6c_2^2}$, the $L_4$ equation gives $B=-5Ac_1=\ds\frac{5c_1}{6c_2^2}$ and the $L_3$ equation yields
	\beq C=\frac{2A\lb3Qc_2-2c_0c_2-c_1^2\rb-Bc_1}{c_2}=\frac{2c_0-3Q}{3c_2^2}-\frac{c_1^2}{2c_2^3}.
	\eeq
	The $L_2$ equation does not produce new constraints while $L_1$ and $L_0$ equations uniquely determine the remaining coefficient\footnote{This expression differs from its counterpart in Eq. (2.16) in \cite{PP23} because our choice of the singular prefactor $f$ differs from the one in  loc. cit. by a non-singular function $\lb 1-\frac{6c_1\Lambda_5}{c_2^3} \rb^{5\rho_5/8}$ and the sign of $\Lambda_5$.}
	\beq
	D=\ds\frac{\lb c_0-Q\rb\lb 22c_1^2+3c_2\lb 8Q-7c_0\rb \rb c_1}{12c_2^4}.
	\eeq
	
	Though in principle similar calculations  can be carried  out at higher orders in $\Lambda_5$, they quickly become quite cumbersome. However, the procedure we have just outlined does not take into account the symmetry of conformal block, namely the fact that $\mathcal F^{(5/2)}\lb c_1,c_2,\Lambda_5\rb$ nontrivially depends only on  the irregular cross-ratio  given by \eqref{irrCR}. In the next subsection, we develop a simpler, purely algebraic construction of irregular state which implements this symmetry and effectively sets $c_1=0$, $c_2=-\frac{i}{2}$ from the very beginning.

   \subsection{Algebraic construction}\label{SubsecAlgCon}
   \subsubsection{Whittaker vector of rank $\frac52$}
   Let $\mathbb C\mathcal |J\rangle$ be a one-dimensional representation of the subalgebra $\bigoplus\limits_{n\ge 2}\mathbb C L_n\subset \mathsf{Vir}$ defined by
   \beq\label{Wh2}
   L_{n>4}|J\rangle=0,\qquad L_4|J\rangle=\frac14 |J\rangle,\qquad L_3|J\rangle=0,
   \qquad   L_2|J\rangle=-\nu |J\rangle,
   \eeq
   with $\nu\in\mathbb C$. The vector $|J\rangle$ is called a rank 2 Whittaker state. The Whittaker module $\mathcal V^{[2]}_{\nu}$ is the corresponding induced representation of $\mathsf{Vir}$,
   \beq\label{PBWbasis}
   \mathcal V^{[2]}_{\nu}=\bigoplus_{m_{0,1}\in \mathbb Z_{\ge 0},\lambda\in\mathbb Y}\mathbb C\,\mathbb L_{-\lambda}L_0^{m_0}L_1^{m_1}|J\rangle.
   \eeq
   
   Our first task to find a vector $|\Psi\rangle\in\mathcal V^{[2]}_{\nu}$ such that
   \beq\label{Wh52}
   L_{n\ge 6}|\Psi\rangle=0,\qquad L_5|\Psi\rangle=\varepsilon |\Psi\rangle,\qquad
     L_4|\Psi\rangle=\frac14 |\Psi\rangle,\qquad L_3|\Psi\rangle=0,
   \eeq
   with $\varepsilon\in\mathbb C$. Any such $|\Psi\rangle$ will be called a Whittaker state of rank $\frac52$. By analogy with the Gaiotto-Teschner construction \eqref{irr_state_dec}, one may look for $|\Psi\rangle$ in the form of a power series in $\varepsilon$:
   \beq\label{Wh52exp}
   |\Psi\rangle=|J\rangle+\sum_{k=1}^{\infty}\varepsilon^k\mathbb G_k|J\rangle,
   \eeq
   where
   \beq \label{Wh52expAux}
   \mathbb G_k=\sum_{m_{0,1}\in\mathbb Z_{\ge0},\lambda\in\mathbb Y} G_{\lambda;m_0,m_1}^{[k]}\mathbb L_{-\lambda}L_0^{m_0}L_1^{m_1},
   \eeq
   and the coefficients $G_{\lambda;m_0,m_1}^{[k]}$ are independent of $\varepsilon$. Rescaling $|\Psi\rangle\mapsto f|\Psi\rangle$ by a power series $f\in\mathbb C[[\varepsilon]]$ such that $f\lb 0\rb=1$ does not change the relations \eqref{Wh52} nor the normalization $|\Psi\rangle_{\varepsilon=0}=|J\rangle$. We fix this ambiguity by requiring that 
   \beq\label{Wh52normalization}
   G_{\emptyset;0,0}^{[k]}=0,\qquad k\in\mathbb N.
   \eeq
   In other words, there are no terms proportional to $|J\rangle$ in the expression of all descendants $\mathbb G_k|J\rangle$ .
   Using the canonical pairing between $\mathcal V^{[2]}_{\nu}$ and Virasoro vacuum module, these conditions can be rewritten as $\langle 0|\Psi\rangle=\langle 0|J\rangle$.
   We will refer to \eqref{Wh52normalization} as the orthogonal gauge.
   
   \begin{rmk}
   	Fixing the eigenvalues of $L_3$ and $L_4$ as in \eqref{Wh2} involves no loss of generality since the vector $| J'\rangle=e^{\alpha L_0}e^{\beta L_1}|J\rangle$ is also a joint eigenstate of $L_2,L_3,L_4$ and
   	\beq
   	L_{n\ge 5}|J'\rangle=0,\qquad L_4|J'\rangle=\frac{e^{4\alpha}}{4} |J'\rangle,\qquad L_3|J'\rangle=
   	\frac{\beta e^{3\alpha}}{2}|J'\rangle,
   	\qquad   L_2|J'\rangle=\lb \frac{\beta^2}{4}-\nu\rb e^{2\alpha} |J'\rangle.
   	\eeq
   	Similar considerations can be applied to a joint eigenstate of $L_3,L_4,L_5$  to adjust its eigenvalues to \eqref{Wh52}. The latter relations are designed precisely so as to reproduce their lower rank counterparts \eqref{Wh2} in the limit $\varepsilon\to0$.
   \end{rmk}
   
   The main result of this subsection is the following
   \begin{theo}\label{conjWh52}
   	The rank $\frac52$ Whittaker state $|\Psi\rangle\in\mathcal V^{[2]}_{\nu}$ defined by \eqref{Wh52}--\eqref{Wh52expAux} and normalization \eqref{Wh52normalization} uniquely exists. Moreover,
   	\beq\label{selectionrulesWH}
   	\mathbb G_k=\sum_{\substack{1\leq\ell\leq k \\ \ell=k\;\mathrm{mod}\;2}}\sum_{\substack{m_0,m_1\in\mathbb Z_{\ge0},\lambda\in\mathbb Y \\ m_1+2m_0+|\lambda|=\ell}}G_{\lambda;m_0,m_1}^{[k]}\mathbb L_{-\lambda}L_0^{m_0}L_1^{m_1}.
   	\eeq
   \end{theo}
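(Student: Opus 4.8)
The plan is to solve \eqref{Wh52} recursively in $\varepsilon$ and to run the whole computation through one grading of $\mathcal V^{[2]}_\nu$. Setting $\psi_k=\mathbb G_k|J\rangle$, the defining relations split into the $\varepsilon$-stationary conditions $L_3\psi_k=0$, $(L_4-\tfrac14)\psi_k=0$, $L_{n\ge 6}\psi_k=0$ and the shift equation $L_5\psi_k=\psi_{k-1}$ (with $\psi_{-1}=0$ and $\psi_0=|J\rangle$). Since the first three families do not couple different orders, I would introduce the subspace $W=\ker L_3\cap\ker(L_4-\tfrac14)\cap\bigcap_{n\ge 6}\ker L_n\subset\mathcal V^{[2]}_\nu$ and regard $L_5$ as acting on it. A one-line commutator check using $[L_3,L_5]=-2L_8$, $[L_4,L_5]=-L_9$ and $[L_n,L_5]=(n-5)L_{n+5}$ shows that $W$ is $L_5$-invariant, so the entire problem collapses to inverting the single endomorphism $T:=L_5|_W$ iteratively: given $\psi_{k-1}\in W$, find $\psi_k\in W$ with $T\psi_k=\psi_{k-1}$, the residual freedom being fixed by the orthogonal gauge \eqref{Wh52normalization}.

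Next I would introduce the grading $\deg\bigl(\mathbb L_{-\lambda}L_0^{m_0}L_1^{m_1}|J\rangle\bigr)=|\lambda|+2m_0+m_1$, with finite-dimensional homogeneous pieces $V_d$ and filtration $F_d=\bigoplus_{\ell\le d}V_\ell$. Commuting generators to the right and using \eqref{Wh2}, one checks the structural facts that drive everything: each of $L_3$, $L_5$, $L_{n\ge 6}$ maps $F_d$ into $F_{d-1}$, while $L_4$ acts on the associated graded as the scalar $\tfrac14$, so that $L_4-\tfrac14$ is strictly degree-lowering as well; moreover the degree-$(-1)$ component of $L_5$ is the derivation $\tfrac32\,\partial/\partial L_{-1}$ on $\mathrm{gr}\,\mathcal V^{[2]}_\nu$. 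The parity half of the selection rule is then immediate: declaring $L_n$ to have parity $n\bmod 2$ and $\varepsilon$ to be odd makes every relation in \eqref{Wh52} parity-homogeneous, so solving within the parity-$k$ subspace and invoking uniqueness forces $\psi_k$ to be of pure parity $k$, i.e. $m_1+2m_0+|\lambda|\equiv k\ \mathrm{mod}\ 2$.

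Existence and uniqueness then become finite-dimensional linear algebra on $W\cap F_k$. The crux is the kernel of $T$: I claim that for generic $\nu$ the space $W_0:=W\cap\ker L_5=\{\,|w\rangle:L_{n\ge 5}|w\rangle=0,\ L_4|w\rangle=\tfrac14|w\rangle,\ L_3|w\rangle=0\,\}$ is one-dimensional, spanned by $|J\rangle$. Granting this, the gauge condition $\langle 0|\psi_k\rangle=0$ removes the one-dimensional ambiguity in $T^{-1}\psi_{k-1}$ (needed only at even $k$, since $|J\rangle$ is even), giving uniqueness. For existence I would show that $T$ maps $W\cap F_k$ onto $W\cap F_{k-1}$ in each parity — by a dimension count matching the one-dimensional kernel, or by verifying surjectivity of the leading symbol $\tfrac32\,\partial/\partial L_{-1}$ onto the relevant graded pieces — which produces $\psi_k$ already inside $F_k$. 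This last point simultaneously yields the degree bound $\psi_k\in F_k$, and together with global uniqueness completes the proof of the selection rule \eqref{selectionrulesWH} (the lower limit $\ell\ge 1$ being exactly the orthogonal gauge).

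The hard part will be the uniqueness lemma $W_0=\mathbb C|J\rangle$ and the matching surjectivity, because $W$ is only filtered rather than graded, and the leading symbols of $L_3$ and $L_5$ alone do not separate all directions: for instance $\tfrac32\,\partial/\partial L_{-1}$ annihilates every $L_{-1}$-free vector, so one must combine them with the degree-$(-2)$ symbol of $L_4-\tfrac14$ and with the symbols of $L_{n\ge 6}$, and in general descend through several steps of the filtration before the joint invariants reduce to the constants. One should also watch for resonances: at special $\nu$ (e.g. $\nu=0$, where $L_1|J\rangle$ already satisfies all $W_0$-conditions) the kernel $W_0$ jumps and uniqueness fails, so the argument must be run for generic $\nu$ — precisely the regime relevant to the Stokes parametrization of Section~\ref{sec_FT}.
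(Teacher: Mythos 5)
Your reduction is sound as far as it goes: the subspace $W=\ker L_3\cap\ker\lb L_4-\tfrac14\rb\cap\bigcap_{n\ge 6}\ker L_n$ is indeed $L_5$-invariant (your commutator check is correct), the filtration by $\deg\lb\mathbb L_{-\lambda}L_0^{m_0}L_1^{m_1}\rb=|\lambda|+2m_0+m_1$ behaves as you describe, the degree-lowering-by-one part of $L_5$ is $\tfrac32\,\partial/\partial L_{-1}$, and the parity argument correctly produces the $\mathrm{mod}\,2$ half of \eqref{selectionrulesWH} \emph{once uniqueness is granted}. But the two claims you defer --- that $W\cap\ker L_5=\mathbb C\,|J\rangle$ and that $L_5|_W$ maps $W\cap F_k$ onto $W\cap F_{k-1}$ --- are not auxiliary lemmas; they are the entire content of the theorem, and the tools you propose for them (an uncomputed dimension count, or surjectivity of the leading symbol) cannot establish them, for the very reason you state yourself: $W$ is only filtered, not graded, and the leading symbols of $L_3$, $L_4-\tfrac14$, $L_5$, $L_{n\ge6}$ do not separate directions. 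The paper (Appendix~\ref{AppProof}) closes precisely this gap with an ingredient absent from your plan: the bilinear form $M\lb v_\mu,v_\lambda\rb=\xi\bigl(\widetilde{\mathbf L}_\mu\mathbf L_{-\lambda}|J\rangle\bigr)$, pairing PBW monomials in the lowering operators against monomials in the shifted raising operators $\widetilde L_3=L_3$, $\widetilde L_4=L_4-\tfrac14$, $\widetilde L_{n\ge5}=L_n$, where $\xi$ extracts the coefficient of $|J\rangle$. Nagoya's triangularity lemma (\cite[Lemma 2.22]{Nagoya15}) gives $M\lb v_\mu,v_\lambda\rb=0$ whenever $|\mu|>|\lambda|$ or $|\mu|=|\lambda|$, $\mu\ne\lambda$, and $M\lb v_\lambda,v_\lambda\rb\ne0$; hence the linear system $\xi\bigl(\widetilde{\mathbf L}_\mu|\Psi_k\rangle\bigr)=\delta_{\mu,\lb3^k\rb}$ is triangular with invertible diagonal. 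This yields uniqueness and the support bounds simultaneously (rerunning the triangularity argument with the degrees $\deg_1,\deg_2$ of Lemma~\ref{deg12-lemma} gives exactly \eqref{selectionrulesWH}), while existence --- consistency of the remaining overdetermined equations --- follows from non-degeneracy of $M$ on each filtration piece (Step~3 of the proof of Theorem~\ref{Wh52deg}). Note that your kernel lemma then becomes a one-liner: if $w\in W\cap\ker L_5$, then $M\lb v_\mu, w-\xi\lb w\rb|J\rangle\rb=0$ for every $\mu$, hence $w=\xi\lb w\rb|J\rangle$.

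Separately, your genericity hedge is both unnecessary and based on a miscalculation. The vector $L_1|J\rangle$ never lies in $W\cap\ker L_5$: since $L_3|J\rangle=0$ and $L_4|J\rangle=\tfrac14|J\rangle$, one has $L_3L_1|J\rangle=[L_3,L_1]|J\rangle=2L_4|J\rangle=\tfrac12|J\rangle\ne0$ for \emph{every} $\nu$, so there is no kernel jump at $\nu=0$. The theorem holds for all $\nu\in\mathbb C$: the paper's proof is carried out for arbitrary $\Lambda_4\ne0$ and arbitrary $\Lambda_2,\Lambda_3$, and the descendants $\mathbb G_k$ at $\nu=0$ are even written out explicitly in Appendix~\ref{AppA1}. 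A proof valid only for generic $\nu$ would therefore establish strictly less than the stated result.
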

   \pf   The meaning of the selection rules in \eqref{selectionrulesWH} is as follows. We define $\operatorname{deg}L_{-k}=k$ for $k>0$, $\operatorname{deg}L_0=2$ and ${\operatorname{deg}L_1=1}$. The Virasoro monomials allowed in $\mathbb G_k$ have total degree $k$ modulo $2$. The latter restriction may be traced back to the automorphism $L_n\mapsto \lb -1\rb^n L_n$, $n\in\mathbb Z$ of the Virasoro algebra. The nontrivial part of the proof concerns the restriction 
   $m_1+2m_0+|\lambda|\le k$, and is given in Appendix~\ref{AppProof}. The proof builds upon and refines the ideas from \cite{Nagoya15}.  \epf
   
   To explain the internal mechanics and practical implications of this statement, let us first note that the joint eigenstate conditions \eqref{Wh52} can be equivalently rewritten as
   \beq\label{Gkrels}
   \left[L_5,\mathbb G_k\right]|J\rangle=
   \mathbb G_{k-1}|J\rangle,\qquad  \left[L_n,\mathbb G_k\right]|J\rangle=0,\qquad n=3,4,6,
   \eeq
   where $\mathbb G_0=1$. The generators $L_{n}$ with $n\ge 7$ can be obtained by commuting $L_3,\ldots ,L_6$, therefore the corresponding equations are satisfied automatically.

   \begin{itemize}
   	\item For $k=1$, Theorem~\ref{conjWh52} predicts that
   	\beq
   	\mathbb G_1=G^{[1]}_{\ytableausetup
   		{mathmode,  boxsize=0.4em}\ydiagram{1}\,;0,0}L_{-1}+G^{[1]}_{\emptyset;0,1}L_1.
   	\eeq
   	This ansatz automatically satisfies $L_6$ and $L_4$ relation
    in \eqref{Gkrels}. The $L_5$ and $L_3$ relations imply respectively that $\frac32G^{[1]}_{\ytableausetup
   		{mathmode,  boxsize=0.3em}\ydiagram{1}\,;0,0}=1$ and
   		$\frac12 G^{[1]}_{\emptyset;0,1}-4\nu G^{[1]}_{\ytableausetup
   			{mathmode,  boxsize=0.3em}\ydiagram{1}\,;0,0}=0$, so that
   			\beq
   			\mathbb G_1=\frac23  L_{-1}+\frac{16\nu}{3}L_1.
   			\eeq
   	\item Likewise, for $k=2$ we expect at most 5 terms in the descendant state,
   	\beq
   	\mathbb G_2=G^{[2]}_{\ytableausetup
   		{mathmode,  boxsize=0.4em}\ydiagram{2}\,;0,0}L_{-2}+G^{[2]}_{\ytableausetup
   		{mathmode,  boxsize=0.4em}\ydiagram{1,1}\,;0,0}L_{-1}^2+G^{[2]}_{\ytableausetup
   		{mathmode,  boxsize=0.4em}\ydiagram{1}\,;0,1}L_{-1}L_1
   		+G^{[2]}_{\emptyset;1,0}L_0+G^{[2]}_{\emptyset;0,2}L_1^2.
   		\eeq
   The $L_5$ equation determines two coefficients $G^{[2]}_{\ytableausetup
   	{mathmode,  boxsize=0.4em}\ydiagram{1}\,;0,1}=\frac{32\nu}{9}$, $G^{[2]}_{\ytableausetup
   	{mathmode,  boxsize=0.4em}\ydiagram{1,1}\,;0,0}=\frac29$. The $L_6$ equation yields a linear relation $2G^{[2]}_{\ytableausetup
   	{mathmode,  boxsize=0.4em}\ydiagram{2}\,;0,0}+\frac{21}{2}G^{[2]}_{\ytableausetup
   	{mathmode,  boxsize=0.4em}\ydiagram{1,1}\,;0,0}=0$ which in turn implies that $G^{[2]}_{\ytableausetup
   	{mathmode,  boxsize=0.4em}\ydiagram{2}\,;0,0}=-\frac76$. The $L_3$ equation gives two constraints. One of them involves $G^{[2]}_{\ytableausetup
   	{mathmode,  boxsize=0.4em}\ydiagram{1}\,;0,1}$, $G^{[2]}_{\ytableausetup
   	{mathmode,  boxsize=0.4em}\ydiagram{1,1}\,;0,0}$ and turns out to be satisfied by the previously determined expressions. Another one involves 3 already found coefficients and $G^{[2]}_{\emptyset;0,2}$ thereby fixing the latter. The $L_4$ equation produces one relation containing the same 3 coefficients and $G^{[2]}_{\emptyset;1,0}$. The final result is
   	\beq\label{G2}
   	\mathbb G_2=-\frac76L_{-2}+\frac29L_{-1}^2+\frac{32\nu}{9}L_{-1}L_1-\frac{103\nu}{9}L_0+\frac{256\nu^2+57}{18} L_1^2.
   	\eeq
   \end{itemize}
   The pattern described above continues for higher values of $k$. Equations \eqref{Gkrels} generate an overdetermined linear system for the coefficients $G^{[k]}_{\lambda;m_0,m_1}$. The key part of the statement concerns the existence of its solution. It was additionally checked by explicit computation of higher-order descendants $\mathbb G_3,\ldots,\mathbb G_{12}$. The expressions of  $\mathbb G_3$, $\mathbb G_4$, $\mathbb G_5$ (in the case $\nu=0$, also $\mathbb G_6$) are given in Appendix~\ref{AppA1}. For generic $\nu$, all of the coefficients in \eqref{selectionrulesWH} turn out to be non-zero.

   \subsubsection{Conformal blocks}\label{SubsectionConfBlocks52}
   Let us denote by $L_{\varepsilon}\in\mathsf{Vir}$ the element
   \beq
   L_{\varepsilon}=L_2-2\varepsilon L_1+6\varepsilon^2 L_0,
   \eeq
   and consider the Virasoro subalgebra $\mathsf{Vir}_{\varepsilon}^{(5/2)}=\bigoplus\limits_{n\ge 3}\mathbb C L_n\oplus\mathbb C L_{\varepsilon}$. There exists an anti-homomorphism of $\mathsf{Vir}_{\varepsilon}^{(5/2)}$ into the algebra of 1st order differential operators in $\varepsilon$:
   \beq
   \begin{gathered}
   	L_n\mapsto \mathcal L_n,\qquad n\in \{\varepsilon\}\cup \mathbb N_{\ge3},\\
   \mathcal L_{n\ge 6}= 0,\qquad \mathcal L_5= \varepsilon,\qquad 
   \mathcal L_4=\frac14,
   \qquad \mathcal L_3= 0,\qquad 
   \mathcal L_{\varepsilon}=30\varepsilon^3\frac{\partial}{\partial\varepsilon}-\nu.
   \end{gathered}
   \eeq
   
   \begin{rmk} The form of $L_{\varepsilon}$ is fixed by the condition for  $\mathcal L_{n\ge 3}$ to reproduce the eigenvalues of $L_n$ in \eqref{Wh52}. Indeed, replace $L_\varepsilon$ by an arbitrary linear combination $\tilde{L}=AL_2+BL_1+CL_0$, then we have 
   	\begin{subequations}
   	\begin{align}
   	\left[L_3,\tilde{L}\right]=&\,AL_5+2BL_4+3CL_3\mapsto A\varepsilon+\frac{B}{2},\\
   	\left[L_4,\tilde{L}\right]=&\,2AL_6+3BL_5+4CL_4\mapsto 3B\varepsilon+C.
   	\end{align}
   	\end{subequations}
   	However, since $\mathcal L_3,\mathcal L_4$ are independent of $\varepsilon$, both of these commutators should be mapped to $0$, which implies that $\tilde L=AL_\varepsilon$. The form of $\mathcal L_{\varepsilon}$ is then determined, up to a gauge transformation, by the remaining commutator $[L_5,L_\varepsilon]=3L_7-8\varepsilon L_6+30\varepsilon^2L_5\mapsto-30\varepsilon^3$. The constant term $-\nu$ is included into the definition of $\mathcal L_{\varepsilon}$ for later convenience.
   	\end{rmk}
   
    By analogy with \eqref{Ward_eqs}, let us now introduce the irregular state $\bigl|I^{(5/2)}\bigr\rangle\in \mathcal V_{\nu}^{[2]}$ by
    \beq\label{RedIrrStateDec}
    L_n\bigl|I^{(5/2)}\bigr\rangle=\mathcal L_n \bigl|I^{(5/2)}\bigr\rangle,\qquad  n\in \{\varepsilon\}\cup \mathbb N_{\ge3}.
    \eeq
    Up to a scalar prefactor, it must coincide with the Whittaker state $|\Psi\rangle$ defined by \eqref{Wh52}--\eqref{Wh52normalization}:
    \beq\label{IrrStateF}
    \bigl|I^{(5/2)}\bigr\rangle=\mathcal F\lb\varepsilon\,|\,\nu\rb\,|\Psi\rangle.
    \eeq 
    Just as our notation suggests, the prefactor $\mathcal F\lb\varepsilon\,|\,\nu\rb$ is nothing but the sought-for irregular conformal block. Indeed, taking into account the expansions \eqref{Wh52exp}--\eqref{Wh52expAux}and  recalling that $\langle 0|L_{n}=0$ for $n\le 1$, $\mathcal F\lb\varepsilon\,|\,\nu\rb$ can be written as
    \beq
    \mathcal F\lb\varepsilon\,|\,\nu\rb=
    \frac{\bigl\langle 0\,\bigl|I^{(5/2)}\bigr\rangle}{\langle 0|J\rangle\;}.
    \eeq
    
    The explicit form of conformal block  is obtained from the $L_\varepsilon$ equation in \eqref{RedIrrStateDec}:
    \beq
    \mathcal F\lb\varepsilon\,|\,\nu\rb L_\varepsilon|\Psi\rangle=\mathcal L_\varepsilon \mathcal F\lb\varepsilon\,|\,\nu\rb |\Psi\rangle.
    \eeq
    The procedure is as follows. In a more elaborated form, the last equation reads
    \beq\label{EqForFcal}
    \left[L_2 -2\varepsilon L_1+6\varepsilon^2 L_0+\nu-30\varepsilon^3\frac{\partial }{\partial\varepsilon}-30\varepsilon^3\frac{\partial
    \ln \mathcal F\lb\varepsilon\,|\,\nu\rb}{\partial\varepsilon}\right] 
    \sum_{k=0}^{\infty}\varepsilon^k\mathbb G_k|J\rangle=0.
    \eeq
    The Virasoro commutation relations together with $L_2|J\rangle=-\nu|J\rangle$ can now be used to express the left hand side in the  previously used Poincar\'e-Birkhoff-Witt basis \eqref{PBWbasis}  of $\mathcal V_{\nu}^{[2]}$. 
    \begin{conj} Irregular state $\bigl|I^{(5/2)}\bigr\rangle$ exists. Equations  \eqref{EqForFcal} admit a solution at every order in $\varepsilon$, with 
    \beq\label{CBauxexp}
    \mathcal F\lb\varepsilon\,|\,\nu\rb=\varepsilon^\alpha \Bigl[1+\sum_{k=1}^\infty
     \mathcal F_{k}\lb\nu\rb\varepsilon^{2k}\Bigr],
    \eeq
    and the coefficients $\mathcal F_{k}\lb\nu\rb$ and $\alpha$ are uniquely determined in terms of $\nu$ and  Virasoro central charge $c$.
    \end{conj}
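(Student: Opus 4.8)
The plan is to use the ansatz \eqref{IrrStateF} and reduce the system \eqref{RedIrrStateDec} to a single scalar ODE. For $n\ge 3$ the operators $\mathcal L_n$ act by multiplication by $0$, $\varepsilon$ or $\tfrac14$, so for $\bigl|I^{(5/2)}\bigr\rangle=\mathcal F\lb\varepsilon\,|\,\nu\rb|\Psi\rangle$ the relations $L_n\bigl|I^{(5/2)}\bigr\rangle=\mathcal L_n\bigl|I^{(5/2)}\bigr\rangle$ hold automatically, since $|\Psi\rangle$ satisfies \eqref{Wh52} and the scalar $\mathcal F$ commutes with every $L_n$. Hence the only genuine equation is the one for $L_\varepsilon$, namely \eqref{EqForFcal}, which upon dividing by $\mathcal F$ and setting $M:=L_\varepsilon-30\varepsilon^3\partial_\varepsilon+\nu$ becomes $M|\Psi\rangle=30\varepsilon^3\lb\partial_\varepsilon\ln\mathcal F\rb|\Psi\rangle$. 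Everything then reduces to showing that $M|\Psi\rangle$ is a scalar multiple of $|\Psi\rangle$ and integrating.

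The key step I would prove is that $M|\Psi\rangle$ again solves \eqref{Wh52}. From $[L_n,L_\varepsilon]=(n-2)L_{n+2}-2\varepsilon(n-1)L_{n+1}+6\varepsilon^2 nL_n$ together with the eigenvalues \eqref{Wh52} one finds $[L_n,L_\varepsilon]|\Psi\rangle=\delta_{n,5}\,30\varepsilon^3|\Psi\rangle$ for all $n\ge 3$ — these cancellations (e.g. $\varepsilon-4\varepsilon\cdot\tfrac14=0$ at $n=3$, $-6\varepsilon^2+24\varepsilon^2\cdot\tfrac14=0$ at $n=4$) are exactly what the definition of $L_\varepsilon$ and the choice of eigenvalues were engineered to produce. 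Combining this with $L_n\partial_\varepsilon|\Psi\rangle=\partial_\varepsilon\lb\mathcal L_n|\Psi\rangle\rb$, a one-line computation yields $L_nM|\Psi\rangle=\mathcal L_nM|\Psi\rangle+30\varepsilon^3\lb\delta_{n,5}-\partial_\varepsilon\mathcal L_n\rb|\Psi\rangle$, and the bracket vanishes because $\partial_\varepsilon\mathcal L_5=\partial_\varepsilon\varepsilon=1=\delta_{5,5}$ while $\mathcal L_3,\mathcal L_4$ and $\mathcal L_{n\ge 6}$ are $\varepsilon$-independent. Thus $M|\Psi\rangle$ obeys \eqref{Wh52}. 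By the uniqueness part of Theorem~\ref{conjWh52} — concretely, the statement that the homogeneous version of the recurrence \eqref{Gkrels} in the orthogonal gauge admits only the trivial solution — the solution space of \eqref{Wh52} is a free rank-one $\mathbb C[[\varepsilon]]$-module generated by $|\Psi\rangle$. Consequently $M|\Psi\rangle=g\lb\varepsilon\rb|\Psi\rangle$ with $g=\langle 0|M|\Psi\rangle/\langle 0|J\rangle$, and a check at orders $\varepsilon^0,\varepsilon^1$ (both of which vanish) shows that $g=O\lb\varepsilon^2\rb$.

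To pin down the precise form \eqref{CBauxexp}, I would use the automorphism $\tau:L_n\mapsto\lb-1\rb^nL_n$, $\varepsilon\mapsto-\varepsilon$. The selection rule \eqref{selectionrulesWH} makes $|\Psi\rangle$ $\tau$-invariant, and $M$ is manifestly $\tau$-invariant, so $g$ must be even. Integrating $30\varepsilon^3\partial_\varepsilon\ln\mathcal F=g\lb\varepsilon\rb=\sum_{k\ge1}g_{2k}\varepsilon^{2k}$ gives $\ln\mathcal F=\alpha\ln\varepsilon+\sum_{k\ge1}\tfrac{g_{2k+2}}{60k}\varepsilon^{2k}+\mathrm{const}$ with $\alpha=g_2/30$; fixing the overall constant so that the leading coefficient equals $1$ reproduces \eqref{CBauxexp}. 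Since $|\Psi\rangle$, and hence $g$, is unique, the exponent $\alpha$ and all coefficients $\mathcal F_k\lb\nu\rb$ are determined uniquely, their dependence on the central charge $c$ entering only through the central terms of commutators such as $[L_2,L_{-2}]$ that contribute to $g$. The existence of $\bigl|I^{(5/2)}\bigr\rangle$ then follows by taking $\bigl|I^{(5/2)}\bigr\rangle=\mathcal F|\Psi\rangle$.

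The main obstacle is the proportionality $M|\Psi\rangle\propto|\Psi\rangle$. The commutator cancellations above are robust, so the real content sits in the rank-one freeness of the solution space of \eqref{Wh52}, equivalently the uniqueness up to scalar of the rank-$2$ Whittaker vector inside $\mathcal V^{[2]}_{\nu}$. This is not an additional hypothesis: it is precisely the homogeneous counterpart of the order-by-order solvability established in the proof of Theorem~\ref{conjWh52}, so the argument is self-contained once that theorem is granted. The remaining ingredients — the commutator identities, the parity argument, and the integration of the first-order ODE — are routine.
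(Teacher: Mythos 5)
Your argument is correct, and it is worth stressing that it establishes more than the paper does: in the paper this statement is deliberately left as a conjecture, with existence explicitly identified as the hard, unproven part, and only the implication ``existence $\Rightarrow$ uniqueness'' is derived there, by projecting \eqref{EqForFcal} onto the vacuum to obtain \eqref{CBalgexp}--\eqref{L2comms} (plus order-by-order verification at low orders). Your route closes precisely the missing step. I have checked its two pillars. First, the identities $[L_n,L_\varepsilon]=(n-2)L_{n+2}-2\varepsilon(n-1)L_{n+1}+6\varepsilon^2 nL_n$ for $n\ge 3$ (no central terms arise) combined with \eqref{Wh52} do give $[L_n,L_\varepsilon]|\Psi\rangle=30\varepsilon^3\delta_{n,5}|\Psi\rangle$, and since $L_n$ commutes with $\partial_\varepsilon$ one gets $L_nM|\Psi\rangle=\lambda_n\lb\varepsilon\rb M|\Psi\rangle$ for all $n\ge3$, where $\lambda_n\lb\varepsilon\rb$ denotes the eigenvalue in \eqref{Wh52}; the paper records these same cancellations in the Remark fixing the form of $L_\varepsilon$, but uses them only to motivate the definition, never to prove existence. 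Second, your integration step reproduces the paper's formulas exactly: writing $g\lb\varepsilon\rb=\sum_{k\ge1}g_{2k}\varepsilon^{2k}$, one has $g=\langle 0|\lb L_2+\nu\rb|\Psi\rangle/\langle 0|J\rangle$ (the $L_1$, $L_0$ and $\partial_\varepsilon$ terms drop out against the vacuum and the orthogonal gauge), so $\alpha=g_2/30$ and the coefficients $g_{2k+2}/\lb 60k\rb$ coincide with $\mathcal U_{\operatorname{ln}}$ and $\mathcal U_k$ in \eqref{L2comms}; and the parity argument via $L_n\mapsto\lb-1\rb^nL_n$, $\varepsilon\mapsto-\varepsilon$, under which $|\Psi\rangle$ is invariant by \eqref{selectionrulesWH} together with uniqueness, legitimately kills the odd coefficients of $g$, yielding the form \eqref{CBauxexp}.

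The one point you must expand is the rank-one freeness of the solution space of \eqref{Wh52}, because it is \emph{not} the literal content of Theorem~\ref{conjWh52}: that theorem asserts uniqueness only within the class of series whose $\varepsilon^0$-term is $|J\rangle$ and which obey the gauge \eqref{Wh52normalization}, whereas $M|\Psi\rangle$ has vanishing $\varepsilon^0$-term. What you need is: any solution $|\Phi\rangle=\sum_k\varepsilon^k|\Phi_k\rangle$ of \eqref{Wh52} whose coefficient of $|J\rangle$ in the PBW basis \eqref{PBWbasis} vanishes at every order in $\varepsilon$ is identically zero. This does follow from the machinery of Appendix~\ref{AppProof} --- so your ``homogeneous counterpart'' phrasing is accurate --- but it uses the proof, not merely the statement, of the theorem: by induction in $k$, if $|\Phi_{k-1}\rangle=0$ then all shifted generators $\widetilde{L}_{n\ge 3}$ annihilate $|\Phi_k\rangle$, hence $\xi\bigl(\widetilde{\mathbf L}_\mu|\Phi_k\rangle\bigr)=0$ for every $\mu\ne\emptyset$; combined with $\xi\lb|\Phi_k\rangle\rb=0$ and the triangularity and non-degeneracy \eqref{Mform} of the bilinear pairing (an unfortunate notational clash with your operator $M$), this forces $|\Phi_k\rangle=0$. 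Given this, for an arbitrary solution the difference $|\Phi\rangle-\xi\lb|\Phi\rangle\rb|\Psi\rangle$ vanishes, which is exactly your freeness claim; applied to $M|\Psi\rangle$ it yields $M|\Psi\rangle=g\lb\varepsilon\rb|\Psi\rangle$, and the rest of your argument goes through. With that induction written out, your proof is complete and upgrades the paper's Conjecture to a theorem, with all ingredients already available in the paper's Appendix~\ref{AppProof}.
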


    The only part of the above statement which is hard to prove is existence. If  $\bigl|I^{(5/2)}\bigr\rangle$ exists (which is confirmed by direct calculation at low orders in $\varepsilon$), then its uniqueness follows from that of $|\Psi\rangle$. Indeed, the irregular conformal block $\mathcal F\lb\varepsilon\,|\,\nu\rb$ can be easily computed by projecting the relation \eqref{EqForFcal} onto the vacuum  $\langle 0|$:
    \beq\label{CBalgexp}
    30\varepsilon^3\frac{\partial
    	\ln \mathcal F\lb\varepsilon\,|\,\nu\rb}{\partial\varepsilon}=\frac{\langle 0|\lb L_2+\nu\rb|\Psi\rangle}{\langle 0|J\rangle},
    \eeq
    which implies that the coefficients of the series
    \beq\label{Ugenericc}
    \mathcal U\lb \varepsilon\,|\,\nu\rb=\ln \mathcal F\lb\varepsilon\,|\,\nu\rb=\mathcal U_{\operatorname{ln}}\lb\nu\rb\ln\varepsilon+\sum_{k=1}^{\infty}\mathcal U_{k}\lb \nu\rb\varepsilon^{2k}.
    \eeq
    can be calculated as
    \beq
    \label{L2comms}
    \mathcal U_{\operatorname{ln}}\lb\nu\rb=\alpha=\frac{\langle 0|\left[L_2,\mathbb G_{2}\right]|J\rangle}{30
    	\langle 0|J\rangle},\qquad
    \mathcal U_{k}\lb\nu\rb=\frac{\langle 0|\left[L_2,\mathbb G_{2k+2}\right]|J\rangle}{60k
    	\langle 0|J\rangle},\qquad k\in\mathbb N.
    \eeq
    
    For any practical purpose, equations \eqref{CBalgexp}--\eqref{L2comms} can be considered the definition of $\mathcal F\lb\varepsilon\,|\,\nu\rb$. The Whittaker state $|\Psi\rangle$ computed up to $o\lb \varepsilon^{12}\rb$ allows us to obtain the coefficients
        \begin{subequations}
        	\label{Ucoefs}
    	\begin{align}
    \label{G2ME}\mathcal U_{\operatorname{ln}}\lb\nu\rb=&\,\nu^2+\frac{1}{30}	-\frac{7\lb c-1\rb}{360},\\
    \label{G4ME}
    \mathcal U_{1}\lb\nu\rb=&\,\frac{\nu\lb 94\nu^2+17\rb}{2}-\frac{77\nu\lb c-1\rb}{24},\\
    \label{G6ME}
    \mathcal U_{2}\lb\nu\rb=&\,\frac{38585 \nu ^4+18385 \nu ^2+336}{10}-\frac{7 (c-1) \left(1433520 \nu ^2-14497 c+124825\right)}{17280},\\
    \mathcal U_{3}\lb\nu\rb=&\,\frac{\nu  \left(5326258 \nu ^4+5019530 \nu ^2+541269\right)}{10} -\frac{49 (c-1) \nu  \left(25241040 \nu ^2-880567 c+8708575\right)}{8640},\\
    \nonumber\mathcal U_{4}\lb\nu\rb=&\,3 \left(31386901 \nu ^6+50078328 \nu ^4+14438609 \nu ^2+188160\right)+\frac{49\lb c-1\rb^2 \left(60185465 \nu ^2+7172941\right)}{960} \\&\,-\frac{7\lb c-1\rb \left(686028420 \nu ^4+509742465 \nu ^2+21788387\right)}{120}
    -\frac{791845439 \lb c-1\rb^3}{34560},\\
    \nonumber\mathcal U_{5}\lb\nu\rb=&\,\frac{3\nu\left(32160819372 \nu ^6+78779679122 \nu ^4+45102923992 \nu ^2+3752724735\right) }{5} \\
    \label{G12ME}&\,-\frac{7\lb c-1\rb \nu \left(51339657369 \nu ^4+72168517621 \nu ^2+12853630497\right) }{30}   \\
    \nonumber &\,+\frac{7 \lb c-1\rb^2 \nu  \left(1354433276168 \nu ^2-34511035493 c+636095673641\right)}{5760}. 
     \end{align}
		 \end{subequations}

    \begin{rmk} The subset of descendants contributing to  matrix elements on the right of \eqref{CBalgexp} is relatively small. For example, out of 22 Virasoro monomials in the expression \eqref{G4} of $\mathbb G_4$, non-vanishing contributions to $\mathcal U_1\lb \nu\rb$ in \eqref{G4ME} come from only 5 terms: $L_{-2}$, $L_1^2$, $L_0$, $L_0L_1^2$ and $L_0^2$. In the case of $\mathbb G_6$ there are 47 more monomials, yet only 2 of them, namely $L_0^2L_1^2$ and $L_0^3$, contribute to $\mathcal U_2\lb \nu\rb$ given by \eqref{G6ME}. In the general case, one has $2k+3$ contributions to $\mathcal U_{k}\lb \nu\rb$, since
    	\beq
    	\frac{\langle 0|\left[L_2,\mathbb G_{2k}\right]|J\rangle}{
    		\langle 0|J\rangle}=\frac{c}{2}G^{[2k]}_{\ytableausetup
    		{mathmode,  boxsize=0.4em}\ydiagram{2}\,;0,0}+\sum_{\ell=1}^k
    		2^{\ell-2}\lb G^{[2k]}_{\emptyset\,;\,\ell-1,2} -4 \nu\, G^{[2k]}_{\emptyset\,;\,\ell,0}\rb.
    	\eeq
    \end{rmk}
    
    We are now finally ready to give an algebraic variant of the proposal of \cite{PP23}.
    \begin{conj}\label{CFTPI}
    For $c=1$, irregular conformal block $\mathcal{F}\lb\varepsilon\,|\,\nu\rb$ defined by \eqref{CBalgexp}  reproduces the Painlev\'e I partition function $\mathcal Z\lb s\,|\,\nu\rb$:
    \beq\label{CFTPIstatement}
    \mathcal Z\lb s\,|\,\nu\rb=\varepsilon^{-\nu^2-\frac1{30}}
    \mathcal{F}\lb\varepsilon\,|\,\nu\rb\Bigl|_{c=1},
    \eeq
    under identification $ \varepsilon^{-2}=48is$.
    \end{conj}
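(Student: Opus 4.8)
The first step is to take logarithms and read off what \eqref{CFTPIstatement} says order by order. Writing $\ln\mathcal Z=\mathcal E\lb s\,|\,\nu\rb=\sum_{k\ge1}\mathcal E_k\lb\nu\rb s^{-k}$ as in \eqref{PIFEnergy}, and $\ln\mathcal F=\mathcal U\lb\varepsilon\,|\,\nu\rb=\mathcal U_{\operatorname{ln}}\lb\nu\rb\ln\varepsilon+\sum_{k\ge1}\mathcal U_k\lb\nu\rb\varepsilon^{2k}$ as in \eqref{Ugenericc}, one observes that at $c=1$ the prefactor exponent in \eqref{CFTPIstatement} exactly cancels the logarithmic term, since $\mathcal U_{\operatorname{ln}}\lb\nu\rb\big|_{c=1}=\nu^2+\tfrac1{30}$ by \eqref{G2ME}. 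Hence, under $\varepsilon^{-2}=48is$ so that $\varepsilon^{2k}=\lb 48is\rb^{-k}$, the conjecture is equivalent to the all-order coefficient identity $\mathcal E_k\lb\nu\rb=\lb 48i\rb^{-k}\,\mathcal U_k\lb\nu\rb\big|_{c=1}$ for every $k\ge1$ (note that $\mathcal F$ is unconditionally defined through \eqref{CBalgexp} by the Whittaker state of Theorem~\ref{conjWh52}). This is already verified for small $k$ by \eqref{Epsilons} and \eqref{Ucoefs}; the task is to promote it to all orders, i.e. to prove the equality of two sequences of polynomials in $\nu$ defined by entirely different procedures — one from Painlev\'e~I, the other from Virasoro representation theory.

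The approach I would pursue is a ramified analogue of the method of \cite{ILT14}: realize the Painlev\'e~I linear system \eqref{linsys} directly inside conformal field theory by inserting a degenerate field into the rank~$\tfrac52$ setup. Concretely, I would adjoin to the irregular state $\bigl|I^{(5/2)}\bigr\rangle$ a level-two degenerate primary $\Phi_{2,1}\lb z\rb$ and study the chiral correlator $\psi\lb z\rb=\langle 0|\,\Phi_{2,1}\lb z\rb\,|I^{(5/2)}\rangle$. The BPZ null-vector decoupling equation turns $\psi$ into a solution of a second-order linear ODE in $z$ whose coefficients are fixed by the irregular Ward identities \eqref{RedIrrStateDec}; the claim to verify is that at $c=1$ this ODE coincides, after a gauge and coordinate transformation, with the scalar (oper) form $\lambda^2=4z^3+2tz+2H$ of \eqref{spcurve}, with the isomonodromic time $t$ supplied by the invariant $\varepsilon$ and the accessory parameter identified, through $\partial_t\ln\tau=H$ of \eqref{exttau}, with a multiple of $\partial_\varepsilon\ln\mathcal F$. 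This would identify $\ln\mathcal F$ with the Jimbo--Miwa--Ueno antiderivative of $H$, i.e. with the Painlev\'e~I tau-function building block of \eqref{bigtau1}, up to the explicit normalization $C\lb\nu\rb$ and the $48is$ rescaling.

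To pin down the parameter dictionary — in particular the identification $v_0=e^{2\pi i\nu}$ of \eqref{stokesrec} — one then computes the Stokes data of the ODE for $\psi$ from the braiding of $\Phi_{2,1}$ with $\bigl|I^{(5/2)}\bigr\rangle$ together with the asymptotics of $\psi$ in the five sectors, and matches it against the monodromy parameterization of Subsection~\ref{subsec_monodromy}. Invoking the one-to-one correspondence between Stokes data and solutions, and the quasi-periodicity of Proposition~\ref{taufou}, the block is then forced to equal the single Fourier coefficient $\mathcal T\lb t\,|\,\nu\rb$ in \eqref{taufourier2}, which is \eqref{CFTPIstatement}. \emph{The main obstacle is precisely this Stokes step}: the analytic continuation and connection problem for ramified rank-$\tfrac52$ irregular blocks has not been developed, and establishing the requisite BPZ equation and Voros-type connection formulae is exactly the missing ingredient — here one would draw on the exact-WKB computation of \cite[Section~5]{Iwaki19} and on \cite{HN24}. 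Short of this, an unconditional \emph{all-order} proof could instead be attempted by matching recursions: the coefficients $\mathcal E_k$ obey a recursion obtained by substituting the ansatz \eqref{bigtau1}--\eqref{Zasseries} into \eqref{eq:PI}, while \eqref{L2comms} together with the Whittaker recursion \eqref{Gkrels} yields a recursion for $\mathcal U_k$; proving that the two coincide under $\varepsilon^{-2}=48is$ and $c\to1$ is tantamount to the (still open) AGT statement for the $H_0$ Argyres--Douglas theory. A pragmatic middle route is to use the topological-recursion partition function of Section~\ref{sec_TR} as a certified bridge, since its large-$t$ expansion is shown in Subsection~\ref{subsec_TRPI} to reproduce $\mathcal Z\lb s\,|\,\nu\rb$; the conjecture would then follow from an order-by-order comparison of the TR free energies with the Virasoro matrix elements \eqref{L2comms}.
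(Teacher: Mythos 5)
You have correctly identified what the conjectural statement amounts to: your reduction in the first paragraph — the prefactor $\varepsilon^{-\nu^2-\frac1{30}}$ exactly cancels the logarithmic term because $\mathcal U_{\operatorname{ln}}\lb\nu\rb\bigl|_{c=1}=\nu^2+\frac1{30}$ by \eqref{G2ME}, so that \eqref{CFTPIstatement} is equivalent to $\mathcal E_k\lb\nu\rb=\lb 48i\rb^{-k}\,\mathcal U_k\lb\nu\rb\bigl|_{c=1}$ for all $k\ge1$ — is precisely the content of the statement, and the finite-order comparison of \eqref{Epsilons} with \eqref{Ucoefs} is in fact \emph{all} the paper itself offers in support of Conjecture~\ref{CFTPI}. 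The paper proves no more than this: the statement is explicitly left as an open conjecture, and the strategies you outline (degenerate-field insertion \`a la \cite{ILT14} with a BPZ equation for $\langle 0|\Phi_{2,1}\lb z\rb|I^{(5/2)}\rangle$ and a Stokes-data matching against Subsection~\ref{subsec_monodromy}; extending the blow-up-equation method of \cite{BS14}; exploiting \cite[Section 5]{Iwaki19} and \cite{HN24}) coincide almost verbatim with the routes the paper proposes in its ``Further questions'' section. So your proposal is a faithful reconstruction of the state of the art rather than a proof, and you are commendably explicit that the Stokes/connection step for ramified rank-$\frac52$ blocks is the missing ingredient.

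One correction is needed, however, on your ``pragmatic middle route.'' The topological-recursion partition function is \emph{not} a certified bridge. Subsection~\ref{subsec_TRPI} verifies the identity \eqref{TRPIvell} only up to order $O\lb s^{-6},\hbar^6\rb$ in the double expansion; the identification of the TR parameters $\lb\nu,\rho\rb$ with the genuine \PIeq Stokes coordinates rests on the conjectural description of \cite[Section 5]{Iwaki19}, which assumes Borel summability that has not been proven; and the paper states plainly that proving the all-order match between the TR expansion and the irregular conformal block of \cite{PP23} ``remains a challenge for future work.'' Thus routing the conjecture through TR does not bypass the open problem — it merely replaces one unproven all-order coefficient identity by two. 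Any of your three routes, if completed, would be a genuine advance; as written, each terminates at an acknowledged (or, in the TR case, an unacknowledged) gap, which is consistent with the statement's status in the paper as a conjecture.
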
 	
    A straightforward check of this statement can be made by comparing the expansion coefficients \eqref{Eps1}--\eqref{Eps5} of the \PIeq free energy $\mathcal E\lb s\,|\,\nu\rb$ and the corresponding coefficients \eqref{G4ME}--\eqref{G12ME} in the expansion of $\mathcal U\lb \varepsilon\,|\,\nu\rb$.
    
    \begin{rmk} The prefactor on the right of \eqref{CFTPIstatement} was introduced to compensate $\varepsilon^{\alpha}$ in \eqref{CBauxexp} thereby ensuring that the corresponding asymptotic expansion starts from $1$. It turns out that this singular prefactor nicely combines with other quantities  in the Fourier representation of the \PIeq tau function. As $t\to -\infty$, one has simply
    \beq
    	\mathcal T\lb t\,|\,\nu\rb\simeq \lb 2\pi \rb^{-\frac{\nu}{2}}G\lb 1+\nu\rb \exp\left\{-\tfrac{\varepsilon^{-4}}{103680}+\tfrac{\nu\varepsilon^{-2}}{60}\right\}
    \mathcal{F}\lb\varepsilon\,|\,\nu\rb\Bigl|_{c=1}. 
    \eeq
    where $\mathcal T\lb t\,|\,\nu\rb$ is the function that appears in the Fourier decomposition \eqref{taufourier2}.
    \end{rmk}

     We have developed an algebraic construction of the rank $\frac52$ Whittaker state $|\Psi\rangle$ and the corresponding irregular  state $\bigl|I^{(5/2)}\bigr\rangle$ by embedding them into a Whittaker module $\mathcal V^{[2]}_{\nu}$ of rank 2. An analog of this construction in the case of embedding into a Whittaker module of rank $\frac32$ is outlined in Appendix~\ref{AppA3}.

    \section{Topological recursion\label{sec_TR}}
    The initial input to the Eynard-Orantin topological recursion (TR) \cite{EO07} is a spectral curve --- a tuple 
    $\lb\Sigma,x,y\rb$ of a compact Riemann surface $\Sigma$  with a Torelli marking (i.e. a fixed symplectic basis of the first homology group $H_1\lb\Sigma,\mathbb Z\rb$), and a pair $\lb x,y\rb$ of meromorphic functions on $\Sigma$. 
    
    In this work, the pair $\lb x,y\rb$ is assumed to satisfy an irreducible algebraic relation of the form 
    \begin{equation} \label{eq:sp-poly}
    	y^2 = P_3\lb x\rb,
    \end{equation}
    with $P_3\lb x\rb $ a cubic polynomial. 
    The equation \eqref{eq:sp-poly} defines an algebraic curve, 
    which we denote by ${\mathcal C}$, and $\Sigma$ is its normalization. 
    When there can be no confusion, ${\mathcal C}$ will also be referred to 
    as a spectral curve. We regard $x : \Sigma \to {\mathbb P}^1$ as a double covering map, 
    and denote by $\sigma : \Sigma \to \Sigma$ the covering involution
    satisfying $x\lb z\rb = x\lb\sigma\lb z\rb\rb$ and $y\lb z\rb \ne y\lb \sigma\lb z\rb\rb$ for generic $z\in\Sigma$. A zero of $dx$ is called a ramification point. The set of ramification points will be denoted by $R ~ (\subset \Sigma)$.
    
    The TR output is a doubly-indexed sequence of meromorphic multi-differentials $W_{g,n}\lb z_1,\ldots, z_n\rb$ ($g\ge0$, $n\ge1$) on $\Sigma$, called correlators, and a sequence of numbers $F_g$ ($g\ge0$), called free energies. These quantities have attracted interest of researchers 
    in various fields such as enumerative geometry, topological string theory, 
    integrable systems, quantum topology, and others.
    
      The \PIeq partition function $\mathcal Z\lb s\,|\,\nu\rb$ turns out to be related to a special case of the generating series for free energies. We start this section by a brief review of the prior results \cite{IS16} on the TR/\PIeq correspondence for $\nu=0$. In order to extend it to generic monodromy, we need an implementation of the topological recursion on the Weierstrass elliptic curve \cite{BCD18}. We then explain how the $\hbar$-expansion of the TR free energy reproduces the  large \PIeq time asymptotic expansion of $\ln\mathcal Z\lb s\,|\,\nu\rb$.
    
    \subsection{A warm-up example\label{subsec_warmup}}
    The TR implementation dramatically simplifies in the case $\Sigma=\mathbb P^1$. In our setting, this situation corresponds to a degenerate \PIeq elliptic curve \eqref{spcurve},
    \begin{equation} \label{eq:deg-ec}
    	{\mathcal C}_{\rm deg} ~:~ y^2 = 4x^3 + 2 t x + u_0  = 4\lb x-q_0\rb^2 \lb x+ 2q_0\rb,
    \end{equation}
    where $t,u_0$ are parameterized by $q_0$ as
    \begin{equation} \label{eq:u0}
    	t=-6q_0^2,\qquad u_0 = 8q_0^3.
    \end{equation}
    As one might guess, the parameter $t$ will later be  identified with the independent variable of Painlev\'e I. The quantity~$q_0$ will appear as the leading term of its formal power series solution. The spectral curve is a rational parameterization of
    $	{\mathcal C}_{\rm deg}$:
    \begin{equation} \label{eq:sp-degPI}
    	x\lb z\rb = z^2 - 2q_0, \qquad y\lb z\rb = 2z\lb z^2 - 3q_0\rb, 
    \end{equation} 
    where $z$ is the coordinate of ${\mathbb P}^1$. The involution map $\sigma$ for \eqref{eq:sp-degPI} is given by 
    	$\sigma\lb z\rb = -z$,
    which induces the covering involution $\lb x,y\rb \mapsto \lb x, -y\rb$ of \eqref{eq:deg-ec}.

    The TR algorithm works as follows:
    \begin{itemize}
    	\item Its
    starting point are the expressions
    \beq
    W_{0,1}\lb z_1\rb=y\lb z_1\rb dx\lb z_1\rb,\qquad W_{0,2}\lb z_1,z_2\rb=\frac{dz_1dz_2}{\lb z_1-z_2\rb^2}.
    \eeq
    The 1st differential is canonically associated to the spectral curve \eqref{eq:sp-poly} and the 2nd one is the unique normalized symmetric bidifferential on $\mathbb P^1$ with a double pole along the diagonal and no other poles.    
    \item The other multi-differentials  are computed by induction in $\chi_{g,n}=2g-2+n$ as 
    \begin{equation}
    	W_{g,n}\lb z_{1},\dots,z_{n}\rb = 
    	\sum_{r \in R}
    	\Res_{z=r} \, K\lb z_1,z\rb R_{g,n}\lb z, z_2, \dots, z_n\rb.  
    	\label{eq:top-rec}
    \end{equation}
    Here 
    \begin{equation} \label{eq:R-gn}
    	R_{g,n}\lb z, z_2, \dots, z_n\rb = 
    	W_{g-1,n+1}\lb z, \sigma(z),z_2, \dots, z_n\rb + 
    	\sum'_{\substack{g_{1}+g_{2}=g \\ I \sqcup J = \{2,\dots,n \}}}
    	W_{g_{1}, |I|+1}\lb z,z_{I}\rb \, W_{g_{2}, |J|+1}\lb  \sigma(z), z_{J}\rb,
    \end{equation}
    and the recursion kernel $K\lb z_1,z\rb$ is 
    given by
    \begin{equation} \label{eq:rec-ker}
    	K(z_1,z) = 
    	\frac{\displaystyle\int^{w=z}_{ w=\tilde z} W_{0,2}\lb z_{1}, w\rb}{\lb y\lb z\rb - y\lb  \sigma(z)\rb\rb  dx(z)} \, ,
    \end{equation}
    where $\tilde z$ denotes an arbitrary basepoint\footnote{One can also replace the integral
    	$\int^{w=z}_{ w=\tilde z}$ by its standard symmetrized version $\frac12\int^{w=z}_{ w=\sigma\lb z\rb}$ ; see \cite[Footnote 3]{BHLMR2013} and \cite[Remark~3.10]{BE16}. The non-symmetrized version is more straightforward to extend to genus 1 case.}.
    We use the convention for a tuple of variables as 
    $z_{I} = (z_{i_1}, \dots, z_{i_{k}})$ if 
    $I = \{i_1, \dots, i_k \}$.  
    The prime in the right hand side of \eqref{eq:R-gn} 
    means that only the indices satisfying $\lb g_i, I_i\rb \ne \lb 0, \emptyset\rb$ are taken into account
    in the summation, which implies that $W_{0,1}$ does not appear  at all in the computation of $W_{g,n}$.
    \item 
    The $g$th free energy $F_g$ is defined for $g\ge 2$ by
    \beq\label{Fgdef}
    F_g=\frac{1}{2-2g}\sum_{r\in R}\Res_{z=r} \,\Phi\lb z\rb W_{g,1}\lb z\rb,
    \eeq
    where $\Phi$ is any primitive of $W_{0,1}$. The free energies $F_0$, $F_1$ may also be defined, but in a different manner, see \cite[Section 4.2]{EO07}  for details.
    \end{itemize}
    
    In the case at hand, the set $R$ consists of only one ramification point,  $R = \left\{ 0 \right\}$, and we may set
    \beq
    K\lb z_1,z\rb=\frac{dz_1/dz}{8 z^2\lb z^2-3q_0\rb\lb z_1-z\rb},\qquad \Phi\lb z\rb=\frac{4}5z^5-4q_0z^3.
    \eeq
    The differentials $W_{g,n}$ can now be easily calculated by residues.
    For example, at the first three steps of the recursion we obtain
    \begin{subequations}
    \begin{align}	
    &\chi_{g,n}=1:\qquad 
    W_{0,3}\lb z_1,z_2,z_3\rb=\frac{dz_1dz_2dz_3}{12q_0 z_1^2z_2^2z_3^2},\qquad W_{1,1}\lb z_1\rb=\ds\frac{\lb 1+3q_0z_1^{-2}\rb dz_1}{288q_0^2z_1^2},
    \\
    &\chi_{g,n}=2:\qquad  \begin{aligned} &W_{0,4}\lb z_1,z_2,z_3,z_4\rb=\frac{1+3q_0\lb z_1^{-2}+z_2^{-2}+z_3^{-2}+z_4^{-2}\rb}{144q_0^3z_1^2z_2^2z_3^2z_4^2}dz_1dz_2dz_3dz_4,\\ &W_{1,2}\lb z_1,z_2\rb=\frac{2+6q_0\lb z_1^{-2}+z_2^{-2}\rb+3q_0^2\lb 5z_1^{-4}+5z_2^{-4}+3z_1^{-2}z_2^{-2}\rb}{3456q_0^4z_1^2z_2^2}dz_1dz_2, 
    \end{aligned}	\\
    &\chi_{g,n}=3:\qquad W_{2,1}\lb z_1\rb=\frac{7 \left(4 z_1^8+12 q_0 z_1^6+36 q_0^2 z_1^4+87 q_0^3 z_1^2+135 q_0^4\right) dz_1}{1990656 q_0^7 z_1^{10}},\qquad \ldots
    \end{align}
    \end{subequations}
    Next, using $W_{g,1}$ to compute free energies, one finds e.g. that
    \beq\label{FgDeg}
    F_{2}=\frac{7q_0^{-5}}{207360},\qquad  F_{3}=\frac{245 q_0^{-10}}{429981696},\qquad F_4=\frac{259553
    q_0^{-15}}{7430083706880 },\qquad \ldots
    \eeq
    
    Before proceeding, let us draw the reader's attention to some of the general properties of $W_{g,n}$ and $F_g$ that will be relevant for us (see \cite{EO07} for more details).
    \begin{itemize}
    	\item The differentials $W_{g,n}$ are symmetric in all variables $z_1,\ldots,z_n$, even though this symmetry is not manifest in equations \eqref{eq:top-rec}--\eqref{eq:rec-ker}.
    	\item In each  variable $z_i$, $W_{g,n}$ with $\chi_{g,n}\ge 1$ have poles only at ramification points. The pole order is at most $6g+2n-4$ and the residues at these points vanish.
    	\item $W_{g,n}$ and $F_g$ possess a homogeneity property. In the above case, it means e.g. that for  $g\ge 2$ the free energies are expressible in the form
    	\beq
    	F_g=c_gq_0^{5-5g},
    	\eeq
    	with some constant $c_g\in\mathbb Q$.
    \end{itemize}
    
    Comparing the numerators of the expressions \eqref{FgDeg} and the successive contributions to the tronqu\'ee  \PIeq  free energy in \eqref{PIFE}, one may guess that these quantities coincide under a suitable identification of parameters. Indeed, the following TR/\PIeq relation has  been proved in \cite{IS16}:
    \begin{theo}\label{thm:0-par-tau}
    The TR free energy for the curve ${\mathcal C}_{\rm deg}$ defined by \eqref{eq:deg-ec},
    \beq
    \mathcal U_{\rm \,TR}^{\rm \,deg}\lb s\rb=\sum_{g=2}^\infty F_g,\qquad s^2+432q_0^5=0,
    \eeq	
    coincides, as a formal power series in $s^{-2}$, with the \PIeq free energy $\mathcal E\lb s\,|\,0\rb$ in \eqref{PIFE}. 
    \end{theo}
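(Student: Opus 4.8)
The plan is to show that, regarded as formal power series in $s^{-2}$, the two sides $\mathcal U_{\rm \,TR}^{\rm \,deg}\lb s\rb$ and $\mathcal E\lb s\,|\,0\rb$ satisfy one and the same ordinary differential equation — the $\sigma$-form of Painlev\'e~I — and share the same leading coefficient, so that they must coincide by uniqueness of the corresponding formal solution. That the shapes already agree is immediate: the homogeneity $F_g=c_gq_0^{5-5g}$ recorded above together with $s^2+432q_0^5=0$ gives $\mathcal U_{\rm \,TR}^{\rm \,deg}\lb s\rb=\sum_{g\ge2}c_g\lb-432\rb^{g-1}s^{-2\lb g-1\rb}$, an even series in $s^{-1}$ beginning at order $s^{-2}$, exactly like $\mathcal E\lb s\,|\,0\rb=\sum_{k\ge1}\mathcal E_{2k}\lb0\rb s^{-2k}$ in \eqref{PIFE}; the content of the theorem is the all-order matching of coefficients.

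On the Painlev\'e side I would first turn the assertion into an ODE. Writing the $\nu=0$ tronqu\'ee tau function as $\ln\tau\simeq-\tfrac1{60}\ln s-\tfrac{s^2}{45}+\mathcal E\lb s\,|\,0\rb$ and invoking the Jimbo--Miwa--Ueno relation $\partial_t\ln\tau=H$ with $H$ as in \eqref{hamPI}, one expresses $H$ explicitly through $s$ and $\partial_s\mathcal E$. Eliminating $p$ and $q$ from \eqref{hamPI} and the Hamilton equations \eqref{hameqs} produces the $\sigma$-form $\lb H''\rb^2=2H-4\lb H'\rb^3-2tH'$, with primes denoting $t$-derivatives. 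Its leading balance $2H_0-4\lb H_0'\rb^3-2tH_0'=0$ is the discriminant-vanishing (tronqu\'ee) condition solved by $H_0=4q_0^3$, $q_0=\sqrt{-t/6}$, while the smaller term $\lb H''\rb^2$ forces corrections of relative size $q_0^{-5}\sim s^{-2}$; feeding the expression for $H$ back into the $\sigma$-form thus yields a single recursion determining every $\mathcal E_{2k}\lb0\rb$ from the leading datum, so that $\mathcal E\lb s\,|\,0\rb$ is the unique such formal solution.

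On the topological recursion side the task is to manufacture a solution of the very same ODE out of the free energies. I would pass to the smooth family $y^2=4x^3+2tx+u_0$ with $t$ and $u_0$ independent, and apply the Eynard--Orantin variational (special geometry) formulas to write $\partial_tF_g$ and $\partial_{u_0}F_g$ as residue pairings of $W_{g,1}$ determined by the deformation differentials $\tfrac{x\,dx}{y}=\partial_t\lb y\,dx\rb$ and $\tfrac{dx}{2y}=\partial_{u_0}\lb y\,dx\rb$. Together with the genus-zero data and the linear and quadratic loop equations — which are precisely the content of the spectral relation $\lambda^2=4x^3+2tx+u_0$ — these variations should close into an identity expressing the $t$-derivative of $\ln Z_{\rm TR}$, $Z_{\rm TR}=\exp\sum_{g}\hbar^{2g-2}F_g$, in terms of a Hamiltonian that itself solves the $\sigma$-form. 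Specializing to the degenerate locus $t=-6q_0^2$, $u_0=8q_0^3$ (so $u_0=2H_0$) and differentiating along it via $\tfrac{du_0}{dt}=-2q_0=2H_0'$, then using the homogeneity and $s^2+432q_0^5=0$, would identify $\mathcal U_{\rm \,TR}^{\rm \,deg}\lb s\rb$ with a formal solution of that ODE. An alternative route to the same conclusion is the quantum-curve construction: the principal specialization of the $W_{g,n}$ assembles into a wave function annihilated by a second-order operator quantizing \eqref{eq:deg-ec}, which one identifies with the scalar form of the \PIeq linear system, whose isomonodromic tau function is $Z_{\rm TR}$.

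With both series now pinned to the $\sigma$-form of Painlev\'e~I, only the single integration constant remains, and I would fix it by the lowest order: under $s^2+432q_0^5=0$ the value $F_2=\tfrac{7}{207360}q_0^{-5}$ from \eqref{FgDeg} becomes $-\tfrac{7}{480}s^{-2}$, matching $\mathcal E_2\lb0\rb s^{-2}$ in \eqref{PIFE}. Uniqueness of the formal solution then forces agreement to all orders. The hard part is the third step — extracting the ODE from topological recursion alone. The delicate feature is that the degenerate curve \eqref{eq:deg-ec} is nodal, its two ramification points having merged at the double root $x=q_0$, so the variational and loop-equation manipulations must be carried out either on the smooth family and analytically continued to the degenerate locus, or directly within the single-ramification residue formulation \eqref{eq:top-rec}, all the while verifying that the node contributes no spurious terms and that the emergent constants reproduce the \PIeq $\sigma$-form exactly.
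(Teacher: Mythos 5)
Your overall skeleton --- both sides solve the $\sigma$-form $\lb H''\rb^2=2H-4\lb H'\rb^3-2tH'$ of \PIeq, the formal power-series solution with the tronqu\'ee leading term $H_0=4q_0^3$ is unique, and the leading coefficients match via $F_2=\frac{7}{207360}q_0^{-5}=-\frac{7}{480}s^{-2}$ --- is sound, and your Painlev\'e-side computations (the $\sigma$-form, the leading balance, the uniqueness of the formal correction series) are correct. But the entire mathematical content of the theorem sits in your third step, and there you have a plan rather than a proof. The variational/special-geometry formulas give each $\partial_t F_g$ \emph{separately} as a residue pairing against $W_{g,1}$; they do not by themselves produce the nonlinear, all-genus relation among the $F_g$ that the $\sigma$-form encodes. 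Your sentence that these variations ``should close into an identity expressing the $t$-derivative of $\ln Z_{\rm TR}$ \dots\ in terms of a Hamiltonian that itself solves the $\sigma$-form'' is precisely the assertion to be proved, and closing the loop equations into such an identity is exactly what requires the wave-function/quantum-curve construction and exact WKB analysis. That is the route of \cite{IS16}, which is also what the paper itself relies on (Theorem \ref{thm:0-par-tau} is quoted from \cite{IS16}, not re-proved), and which you mention only in one sentence as an ``alternative'' without developing it.

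There is a second, concrete obstruction to your preferred variant of step three. Passing to the smooth family $y^2=4x^3+2tx+u_0$ and then ``analytically continuing to the degenerate locus'' cannot work as stated: by the paper's own Theorem \ref{thm:coni-gap} and \eqref{eq:Fgdeg-as-subleading}, the smooth-curve free energies do \emph{not} tend to $F_g^{\rm deg}$ as the curve degenerates --- they diverge at the nodal (conifold) locus like $B_{2g}\bigl/\lb 4g\lb g-1\rb\nu^{2g-2}\rb$, and $F_g^{\rm deg}$ is recovered only as the subleading coefficient $\Lambda^{2g-2}F_g^{[0]}$ of the expansion around the node, not as a limit or continuation. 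So you must either work directly with the single-ramification-point recursion on ${\mathcal C}_{\rm deg}$ (where the residue calculus is unproblematic, but where the first gap remains: no mechanism is given for extracting the $\sigma$-form from the $W_{g,n}$), or carry out the quantum-curve/isomonodromy argument in full. As it stands, neither route is executed, so the proposal does not yet constitute a proof.
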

    
    In the context of TR, it is customary to introduce a counting parameter $\hbar$ and define the free energy as
    $\sum_{g\ge 0}F_g\hbar^{2g-2}$. We postpone the inauguration of $\hbar$ to the next subsection, since $q_0$ in our introductory example plays a completely equivalent role. The discussion of $F_0$ and $F_1$ is also postponed to minimize the amount of fuss. We simply mention that the corresponding expressions agree with the prefactor $s^{-1/60}e^{s^2/45}$ in \eqref{bigtau1}. 
    
    Let us also mention that the degenerate curve ${\mathcal C}_{\rm deg}$ agrees with the spectral curve of the $\lb 2,3\rb$-minimal string, whose relation to \PIeq is known since \cite{BK1990,DS1990,GM1990}.

    \subsection{Weierstrass case\label{subsec_Weierstrass}}
    \subsubsection{General setup}
    Consider a smooth Weierstrass elliptic curve defined by 
    \begin{equation} \label{eq:spcurve}
    	{\mathcal C} ~:~ y^2 = 4x^3 - g_2 x - g_3, \qquad\qquad g_2=- 2 t, \quad g_3=- u,
    \end{equation}
    and fix a symplectic basis $\{A, B\}$ of $H_1({\mathcal C}, {\mathbb Z})$. The quantities  defined by the topological recursion depend on the choice of $\{A, B\}$. We will choose a specific basis when  we will discuss asymptotic expansions as $|t|\to \infty$. 
    
    Introduce the differential $W_{0,1}=ydx$ on $\mathcal C$ and consider its  normalized period 
    \begin{equation} \label{eq:nu}
   	\nu = \frac{1}{2 \pi i} \oint_A y\, dx.
   \end{equation}
    The parameter $u=u\lb t,\nu\rb$ in \eqref{eq:spcurve} will be considered as a function of two independent variables $t$ and $\nu$, 
    locally defined by the implicit relation \eqref{eq:nu}.  Throughout this subsection, 
    we assume that $\nu \ne 0$ and $\lb t,\nu\rb$ vary 
    in a domain where the discriminant 
    \begin{equation} \label{eq:discriminant}
    	\Delta = g_2^3-27g_3^2=-8t^3 -27 u^2 
    \end{equation}
    does not vanish. We use the following notation for the periods of the differentials of the first and second kind on $\mathcal C$:
     \begin{equation}\label{periodsAB}
     	\omega_{\ast} = \oint_{\ast} \frac{dx}{y}
     	,\qquad
     	\eta_{\ast} = - \oint_{\ast} \frac{x\, dx}{y},
     	\qquad \ast = A, B .
     \end{equation}
     These periods are holomorphic in both $t$ and $\nu$ 
     on the domain which we consider. 
     
     Let $L = {\mathbb Z} \omega_A + {\mathbb Z} \omega_B$ be the lattice generated by 
     the periods $\omega_{A,B}$, and let
     \begin{equation}
     	\wp(z) = \frac{1}{z^2} 
     	+ \sum_{\omega \in L \setminus \{(0,0) \}} 
     	\left( \frac{1}{\lb z - \omega\rb^2} - \frac{1}{\omega^2} \right) 
     \end{equation}
     denote the associated Weierstrass $\wp$-function. 
     Then, the (non-degenerate) \PIeq spectral curve is given by 
     \begin{equation} \label{eq:sp-PI}
     	\Sigma = {\mathbb C}/L\,, \qquad x\lb z\rb = \wp\lb z\rb, \qquad y\lb z\rb = \wp'\lb z\rb.
     \end{equation}
     Indeed, it is well-known that \eqref{eq:sp-PI} gives a meromorphic parameterization, 
     known as the Weierstarss parameterization, of the elliptic curve \eqref{eq:spcurve}. 
     The involution map 
     \begin{equation}
     	\sigma\lb z\rb \equiv - z  \mod L
     \end{equation}
     also induces the covering involution $\lb x,y\rb \mapsto \lb x, -y\rb$ of \eqref{eq:spcurve}.
     In this case, the set $R$ of ramification points 
     consists of half-periods: 
     \begin{equation} \label{eq:r123}
     	R = \left\{ r_1, r_2, r_3 \right\},\qquad\text{with}\qquad
     	r_1 = \frac{\omega_A}{2}, \quad
     	r_2 = \frac{\omega_B}{2}, \quad
     	r_3 = \frac{\omega_A+\omega_B}{2}.
     \end{equation} 
     Note that the $\wp$-function values at half-periods, $e_i = \wp\lb r_i\rb$ ($i=1,2,3$), give the three simple zeros 
     of the right hand side of \eqref{eq:spcurve} and satisfy ${e_1+e_2+e_3=0}$.   
     The $A$-cycle specified in the definition of the spectral curve is represented by 
     the closed loops encircling $e_2$ and $e_3$.
Sometimes it will be convenient for us to parameterize the curve $\mathcal{C}$, its discriminant $\Delta$ and elliptic invariants $g_2$, $g_3$ as
     \beq\label{invariants_es}
     \begin{gathered}
     y^2=4\lb x-e_1\rb\lb x-e_2\rb\lb x-e_3\rb,\qquad
     \Delta=16\lb e_1-e_2\rb^2 \lb e_2-e_3\rb^2 \lb e_3-e_1\rb^2,
     \\ g_{2}=-4\lb e_1e_2+e_2e_3+e_3e_1\rb,\qquad g_3=4e_1e_2e_3.
     \end{gathered}
     \eeq
     
     The last ingredient we need to be able to run the topological recursion is an expression for $W_{0,2}$. In general, for a given spectral curve $\lb \Sigma,x,y\rb$, it is given by
     a meromorphic fundamental bidifferential 
     $B = B\lb z_1,z_2\rb$,  
     called the Bergman bidifferential. It is uniquely defined by the following conditions: (i) $B$ is symmetric, (ii) all of its $A$-periods  vanish, and (iii) $B$ has a double pole along the diagonal and no other poles, and such that in any local coordinate $z$, 
     	\begin{equation}
     		B\lb z_1, z_2\rb \sim \left( \frac{1}{\lb z_1 - z_2\rb^2} + {\text{regular}
     	} \right) dz_1 dz_2,\qquad \text{as }z_1\to z_2.
     	\end{equation}
     In the case of the Weierstrass elliptic curve $\mathcal C$, this leads to the expression
     \beq\label{W02ell}
     W_{0,2}\lb z_1,z_2\rb=\lb \wp\lb z_1-z_2\rb+S\rb dz_1dz_2,
     \qquad S=\frac{\eta_A}{\omega_A},
     \eeq
     where $\omega_A,\eta_A$ are defined in \eqref{periodsAB}. The rest of the 
     TR equations of the previous subsection are universal. In particular,  it follows from \eqref{eq:rec-ker} and \eqref{W02ell} that the recursion kernel is given by
     \beq
     K\lb z_1,z\rb=\frac{\lb P\lb z_1-\tilde z\rb-P\lb z_1-z\rb\rb dz_1}{2\wp'\lb z\rb^2dz},\qquad P\lb z\rb=-\zeta\lb z\rb+S z.
     \eeq
     Here 
     \beq
     \zeta\lb z\rb=\frac{1}{z} 
     + \sum_{\omega \in L \setminus \{(0,0) \}} 
     \left( \frac{1}{z - \omega} + \frac{1}{\omega} +\frac{z}{\omega^2} \right) 
     \eeq
     is the Weierstrass zeta function satisfying $\zeta'\lb z\rb=-\wp\lb z\rb$. There is however a subtlety: one has to fix a fundamental domain $D$ by a choice of representatives of the homology basis $\left\{A,B\right\}$, and require that $\tilde z,z\in D$.

     \subsubsection{Correlators}
     In order to illustrate the issues one has to deal with in the elliptic case, let us compute the differentials $W_{0,3}$ and $W_{1,1}$ which correspond to $\chi_{g,n}=1$. From the TR defining relations \eqref{eq:top-rec}--\eqref{eq:R-gn}, we find e.g.
     \begin{align}
     \begin{aligned}
     &W_{0,3}\lb z_1,z_2,z_3\rb=\sum_{i=1,2,3}\Res_{z=r_i} K\lb z_1,z\rb 
     \lb W_{0,2}\lb z,z_2\rb  W_{0,2}\lb -z,z_3\rb+W_{0,2}\lb z,z_3\rb  W_{0,2}\lb -z,z_2\rb\rb=\\
     &=\sum_{i=1,2,3}\Res_{z=r_i} \frac{\lb P\lb z_1-z\rb- P\lb z_1-\tilde z\rb\rb\lb\wp\lb z-z_2\rb+S\rb \lb\wp\lb z+z_3\rb+S\rb}{2\wp'\lb z\rb^2}dz_1dz_2dz_3+\lb z_2\leftrightarrow z_3\rb.
     \end{aligned}
     \end{align}
     The denominator of the above expression has a double zero at every ramification point $r_i\in R$. Moreover, since both $\wp'\lb z\rb^2$ and
     $\lb\wp\lb z-z_2\rb+S\rb \lb\wp\lb z+z_3\rb+S\rb+\lb z_2\leftrightarrow z_3\rb$
     are even elliptic functions of $z$, their expansions around $z=r_i$ contain only even powers of $z-r_i$. The contribution coming from the residue at $z=r_i$ is therefore determined by the derivative $P'_z\lb z_1-z\rb\bigl|_{z=r_i}=-\wp\lb z_1-r_i\rb-S$, so that
     \beq\label{W03ell01}
     W_{0,3}\lb z_1,z_2,z_3\rb=\frac{4}{\Delta}\sum_{i=1,2,3}
     \lb 3e_i^2-g_2\rb \lb \wp\lb z_1-r_i\rb+S\rb \lb \wp\lb z_2-r_i\rb+S\rb \lb \wp\lb z_3-r_i\rb+S\rb  dz_1dz_2dz_3.
     \eeq
     In a completely similar manner, one finds
     \begin{align}
     \nonumber	W_{1,1}\lb z_1\rb
     	=&\,\sum_{i=1,2,3}\Res_{z=r_i} K\lb z_1,z\rb W_{0,2}\lb z,\sigma\lb z\rb\rb=
     	\sum_{i=1,2,3}\Res_{z=r_i}\frac{
     		\lb P\lb z_1-z\rb-P\lb z_1-\tilde z\rb\rb\lb\wp\lb 2z\rb+S\rb}{2\wp'\lb z\rb^2}dz_1=\\
     \label{W11ell01}		
     	=&\,\frac{2}{\Delta}\sum_{i=1,2,3}
     	\lb 3e_i^2-g_2\rb \lb \frac{\wp''\lb z_1-r_i\rb}{24}+\lb S-e_i\rb\lb \wp\lb z_1-r_i\rb+S\rb\rb dz_1,
     \end{align}
     At the last step of the residue computation leading to \eqref{W03ell01} and \eqref{W11ell01}, we have used the identities
     \beq
     \frac{1}{\wp''\lb r_i\rb^{2}}=\frac{4g_2-12e_i^2}{\Delta},
     \qquad  e_i=\ds \frac{\wp^{(4)}\lb r_i\rb}{12\wp''\lb r_i\rb},
     \eeq
      which can be deduced from the  differential equation $\wp''\lb z\rb=6\wp^2\lb z\rb-g_2/2$. 
      
      The formulas \eqref{W03ell01}--\eqref{W11ell01} reproduce   \cite[Eqs. (A.1)--(A.2)]{BCD18} up to an overall sign\footnote{It is not clear to authors whether this typo also affects higher-level differentials presented in \cite{BCD18}.}. This is, however, not the end of the story, since we need to express the result in terms of elliptic invariants $g_{2,3}$ instead of the ramification points $e_{1,2,3}$. Using the addition formula for $\wp\lb z\rb$, we can write, e.g.
      \beq
      \wp\lb z-r_1\rb-e_1=\frac{12e_1^2-g_2}{4\lb\wp\lb z\rb-e_1\rb}=\frac{\lb 12e_1^2-g_2\rb\lb x\lb z\rb-e_2\rb \lb x\lb z\rb-e_3\rb}{y^2\lb z\rb}.
      \eeq
      After substitution of such expressions into \eqref{W03ell01}--\eqref{W11ell01},  $W_{0,3}$ and $W_{1,1}$ transform into  linear combinations of differentials whose coefficients are symmetric polynomials in $e_{1,2,3}$. The latter can be algorithmically expressed as polynomials in $g_{2,3}$ using the relations \eqref{invariants_es} and  $4e_i^3= g_2 e_i+g_3$. E.g., $W_{1,1}$ is given by
      \beq\label{W11ell02}
      W_{1,1}\lb z_1\rb=\lb-\frac{12g_2 x\lb z_1\rb^2+36g_3 x\lb z_1\rb+g_2^2}{32y\lb z_1\rb^4}-\frac{x\lb z_1\rb+4S}{8 y\lb z_1\rb^2}+\frac{g_2\lb g_2-12S^2\rb}{4\Delta}\rb dz_1.
      \eeq
      
      The  procedure outlined above can in principle be continued at higher values of $\chi_{g,n}$. The analogs of \eqref{W03ell01} and \eqref{W11ell01} are relatively easy to compute from truncated Laurent expansions of $K\lb z_1,z\rb$ and $W_{g,n}\lb z_1,\ldots,z_n\rb$ at the ramification points. The most computationally involved step is (by far!) the  final symmetrization. We have calculated the  expressions of $W_{0,3}$, $W_{1,2}$ and $W_{2,1}$, but even for $W_{0,3}$ the counterpart of \eqref{W11ell02} is too lengthy to be usefully displayed here.
      
      To exhibit the homogeneity of $W_{g,n}$, let us define a scaling operator 
      \begin{equation} \label{eq:degree}
      	\varrho_r : \lb z,t,\nu\rb \mapsto \lb r^{- \frac{1}{5}} z, r^{\frac{4}{5}} t, r \, \nu \rb,
      	\qquad r \in {\mathbb R}_{>0}.
      \end{equation}
      Recall that $t=-g_2/2$, $u=-g_3$ and note that e.g. $\varrho_r\lb u,S\rb=\bigl( r^{\frac65}u,r^{\frac25}S\bigr)$. For a function $f\lb z,t,\nu\rb$ and a differential $f\lb z,t,\nu\rb \, dz$, we define 
      \begin{equation}
      	\lb \varrho_r\rb_\ast f\lb z,t,\nu\rb = f\lb r^{-\frac{1}{5}} z, r^{\frac{4}{5}} t, r \, \nu\rb, 
      	\qquad 
      	\lb \varrho_r\rb_\ast f\lb z,t,\nu\rb \, dz = r^{-\frac{1}{5}}f\lb r^{-\frac{1}{5}} z, r^{\frac{4}{5}} t, r \, \nu\rb \, dz, 
      \end{equation}
      and generalize this to multi-differentials in a natural way. Then we have 
      \begin{equation} \label{eq:degree-Wgn}
      	\lb \varrho_r\rb_\ast W_{g,n} = r^{-\chi_{g,n}} \, W_{g,n},
      \end{equation}
      which can be proved by induction in $\chi_{g,n}$.
      
      The definition \eqref{W02ell} of $W_{0,2}$ 
      and the recursion relation \eqref{eq:top-rec} imply that $W_{g,n}$ with $\chi_{g,n} \ge 0$ are polynomial in $S$.
      The general structure of $W_{g,n}$ is described by 
      
      \begin{prop} \label{prop:pre-quasi-modular}
      		The correlator $W_{g,n}$ of the spectral curve \eqref{eq:sp-PI} with $\chi_{g,n} \ge 1$ 
      		has an expression of the form 
      		\begin{equation} \label{eq:qm-Wgn}
      			W_{g,n}\lb z_1,\ldots, z_n\rb = \frac{w_{g,n}\lb t,u,x\lb z_1\rb, \ldots, x\lb z_n\rb, S\rb}
      			{\Delta^{2g-2+n}  \prod_{i=1}^{n} y\lb z_i\rb^{6g-4+2n}} \, dz_1 \ldots dz_n.
      		\end{equation}
      		Here, the numerator takes the form 
      		\begin{equation} \label{eq:qm-Wgnk}
      			w_{g,n} = \sum_{k=0}^{3g-3+2n} w_{g,n,k}
      			\lb t,u,x\lb z_1\rb, \ldots, x\lb z_n\rb\rb 
      			S^k, 
      		\end{equation}
      		where $w_{g,n,k}$ is a polynomial in its arguments with constant coefficients, 
      		and satisfies 
      		\begin{equation} \label{eq:Pol-gnk-degree}
      			\lb \varrho_r\rb_\ast w_{g, n, k} 
      			= r^{\frac{1}{5}\lb 14g-14-2k+2n(9g+3n-2)\rb} \, w_{g,n, k}.
      		\end{equation}
      \end{prop}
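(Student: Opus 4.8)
The plan is to argue by induction on the Euler characteristic $\chi_{g,n}=2g-2+n$, the base cases $\chi_{g,n}=1$ being settled by the explicit expressions \eqref{W03ell01} and \eqref{W11ell02} for $W_{0,3}$ and $W_{1,1}$. Throughout I treat $t,u,S$ as independent variables and carry along the recursion \eqref{eq:top-rec}--\eqref{eq:R-gn} four separate bookkeeping tasks: the shape of the denominator, the polynomiality of the numerator in the $x(z_i)$, the polynomiality and degree in $S$, and finally the $(t,u)$-dependence and the scaling weight. I will assume the overall homogeneity \eqref{eq:degree-Wgn}, together with the standard Eynard--Orantin facts that, for $\chi_{g,n}\ge1$, each $W_{g,n}$ is anti-invariant under every local involution, $\sigma_i^\ast W_{g,n}=-W_{g,n}$, and has poles (in each variable) only at the ramification points, of order at most $6g-4+2n$, with vanishing residues.

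First I would fix the dependence on the insertion points. Anti-invariance means that the coefficient $W_{g,n}/\prod_i dz_i$ is an \emph{even} elliptic function of each $z_i$, hence a rational function of $x(z_i)=\wp(z_i)$ alone, with no odd (i.e. $y(z_i)$-linear) part. Its only admissible poles are at the half-periods $r_j$, where $x(z_i)=e_j$; since $y(z_i)$ has a simple zero there while $x(z_i)-e_j$ vanishes to order two, a pole of order at most $6g-4+2n$ in $z_i$ is cleared exactly by $y(z_i)^{6g-4+2n}=\bigl(4\prod_j(x(z_i)-e_j)\bigr)^{3g-2+n}$, and regularity at $z_i=0$ (where $x(z_i)=\infty$) bounds the degree so that the numerator is a genuine polynomial in $x(z_i)$. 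The mechanism producing this ``$x$-only'' form in the recursion is that $\wp'(r_j)=0$ at every ramification point: when the residues in \eqref{eq:top-rec} are evaluated, the dependence on $z_1$ enters through $\wp^{(m)}(z_1-r_j)$, and the addition theorem turns the even-order derivatives into rational functions of $x(z_1)$ with no $y(z_1)$, exactly as in the reduction $\wp(z_j-r_i)=-x_j-e_i+\tfrac{(x_j-e_k)(x_j-e_l)}{x_j-e_i}$ underlying $W_{0,3}$.

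Next I would control $S$ and $(t,u)$. Polynomiality in $S$ of degree at most $3g-3+2n$ is the easy half: $W_{0,2}$ and the kernel $K$ are each affine in $S$, the gluing in \eqref{eq:R-gn} makes the $S$-degree additive so that $R_{g,n}$ has degree at most $3g-3+2n$, and the residue (being $\mathbb C$-linear and $S$-independent) adds only the single power carried by $K$. The dependence on $g_2=-2t$, $g_3=-u$ is far more delicate and is where I expect the real difficulty. A priori each residue at a half-period $r_i$ produces a factor $\wp''(r_i)^{-2}=(4g_2-12e_i^2)/\Delta$ together with higher Laurent coefficients of $\wp'(z)^{-2}$ and of the inductively known $R_{g,n}$; because $R_{g,n}$ has a high-order pole at $r_i$, its Laurent coefficients carry powers of $\wp''(r_i)^{-1}$, hence powers of $\Delta^{-1}$ \emph{far exceeding} the claimed $2g-2+n$. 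The content of the bound \eqref{eq:qm-Wgn} is therefore that, after summing the three residues and reducing all symmetric functions of $e_1,e_2,e_3$ to $g_2,g_3$ via \eqref{invariants_es} and $4e_i^3=g_2e_i+g_3$, these spurious high powers of $\Delta^{-1}$ cancel, leaving a genuine polynomial in $g_2,g_3$ over $\Delta^{2g-2+n}$. Making this cancellation rigorous is the main obstacle.

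To overcome it I would avoid the ramification-point expansion and instead deform the contour, rewriting $W_{g,n}$ through the residue theorem on the compact curve $\Sigma$ as a sum of residues at the remaining poles (the points $z=z_1$ and the pole $z=0$ of $x$), in the spirit of the global-residue reformulation used in Proposition~\ref{prop:expansion-Wgn}; in that form the power of $\Delta$ is manifestly $2g-2+n$ and no $\wp''(r_i)^{-1}$ ever appears. Granting the shape \eqref{eq:qm-Wgn}--\eqref{eq:qm-Wgnk}, the weight relation \eqref{eq:Pol-gnk-degree} is then pure bookkeeping. Reading off from \eqref{eq:degree}, \eqref{eq:spcurve} and \eqref{eq:discriminant} the $\varrho_r$-weights $[t]=\tfrac45$, $[u]=\tfrac65$, $[x]=\tfrac25$, $[y]=\tfrac35$, $[S]=\tfrac25$, $[\Delta]=\tfrac{12}5$, $[dz]=-\tfrac15$, and imposing $[W_{g,n}]=-\chi_{g,n}$ from \eqref{eq:degree-Wgn}, I solve for the total weight of $w_{g,n}$; homogeneity then forces each coefficient $w_{g,n,k}$ of $S^k$ to carry that weight diminished by $\tfrac{2k}5$, which equals precisely $\tfrac15\lb14g-14-2k+2n(9g+3n-2)\rb$, closing the induction.
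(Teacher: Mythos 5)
Your skeleton (induction on $\chi_{g,n}$, $\sigma$-anti-invariance forcing dependence on $x(z_i)$ only, the pole-order bound giving the denominator $y(z_i)^{6g-4+2n}$, additivity of the $S$-degree, and homogeneity for \eqref{eq:Pol-gnk-degree}) is the same as the paper's, and your diagnosis that the power of $\Delta$ is the one delicate point is exactly right --- the paper's own sketch passes over it quickly, deferring to the method of \cite{FRZZ19}. The gap is that your proposed cure does not work. On the torus the kernel numerator $\int^z W_{0,2}(z_1,\cdot)=\zeta(z_1-z)+Sz+\mathrm{const}$ is only quasi-periodic (it is $A$-periodic, but shifts by $2\pi i/\omega_A$ under $z\mapsto z+\omega_B$), so the residue theorem does \emph{not} convert the sum over ramification points into residues at $z=z_1$ and $z=0$ alone: it produces in addition the residues at $z=z_j,\sigma(z_j)$ for $j\ge 2$ (from the $W_{0,2}(z,z_j)$ factors in $R_{g,n}$) and, crucially, a boundary term proportional to $\frac{dz_1}{\omega_A}\oint_A R_{g,n}/(2W_{0,1})$ --- precisely the first term of the paper's own reformulation \eqref{eq:alternative-TR-scaling}. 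In Proposition \ref{prop:expansion-Wgn} that $A$-cycle integral could be evaluated termwise as a residue only because the cycle pinches at the degeneration; in the present setting it is a genuine period, and expressing it through $S,g_2,g_3$ requires reducing differentials $\mathrm{poly}(x)\,dx/y^{2m+1}$ to the basis $\{dx/y,\ x\,dx/y\}$ modulo exact forms, which costs one factor of $\Delta^{-1}$ per step (Euclidean division by the resultant of $P_3$ and $P_3'$, equivalently Picard--Fuchs). So the contour-deformed form carries exactly the same apparent excess of negative powers of $\Delta$ that you objected to in the direct computation; nothing becomes ``manifest''. Already $W_{1,1}$ shows this: the residue at $z=z_1$ reproduces only the $y^{-2}$ and $y^{-4}$ terms of \eqref{W11ell02} (which involve no $\Delta$ at all), while the entire term $\frac{g_2(g_2-12S^2)}{4\Delta}\,dz_1$ comes from the $A$-period term, whose naive evaluation yields $\Delta^{-2}$ and must cancel down to $\Delta^{-1}$.

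The idea that is actually missing --- from your write-up, and present in the paper only through the citation of \cite{FRZZ19} --- is a refinement of the induction hypothesis that makes these cancellations automatic. The numerators $w_{g,n}$ are not generic polynomials: evaluated at $x_j=e_i$ they are divisible by definite powers of $\wp''(r_i)=2(e_i-e_k)(e_i-e_l)$, because the local behaviour of the correlators at a ramification point is universal (Airy-type) with local scale $\sqrt{\wp''(r_i)}$, so the deep Laurent coefficients carry exactly the compensating positive powers of $\wp''(r_i)$. Concretely, in \eqref{W11ell02} one has $12g_2e_i^2+36g_3e_i+g_2^2=(12e_i^2-g_2)^2=4\wp''(r_i)^2$, so the partial-fraction coefficient of $(x-e_i)^{-2}$ in $W_{1,1}$ is the pure constant $-1/32$, with no $\Delta$ whatsoever. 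Unless divisibility statements of this kind are carried along as part of the induction --- equivalently, unless one works with a basis of differentials adapted to the ramification points, with coefficients in a controlled ring, which is what the \cite{FRZZ19}-type argument does --- neither your route nor the naive residue count closes: both produce $\Delta$-exponents growing roughly like $5g$ instead of $2g-2+n$, and the proposition remains unproved at its only nontrivial point.
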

      
      \pf
      	A proof of a similar result was given in \cite[Theorem 4.4]{FRZZ19}, where 
      	the mirror curves of some local Calabi--Yau 3-folds were discussed.
      	Although our setting is different from \cite{FRZZ19}, 
      	the proof can be carried out along the same lines. 
      	For convenience of the reader, we give a sketch of it. 
      	
      	It may be shown   by induction that $W_{g,n}$ is a polynomial in $S$ 
      	with degree at most $3g-3+2n$.
      	When expressing $W_{g,n} / \lb dz_1 \ldots dz_n\rb$ as a rational function of $x\lb z_i\rb$ and $y\lb z_i\rb$, 
      	it is understood from the above considerations that the denominator is 
      	represented by even powers of $y\lb z_i\rb \, (= x'\lb z_i\rb$), and 
      	the numerator is a polynomial in $x\lb z_i\rb$.
      	When calculating according to the TR formula \eqref{eq:top-rec}, 
      	the discriminant of the denominator of \eqref{eq:qm-Wgn} arises 
      	when putting the residues at ramification points over a common denominator 
      	(by multiplying the series coefficients of $1/y(z)$ in the recursion kernel \eqref{eq:rec-ker}). 
      	Its degree, equal to the recursion level $\chi_{g,n}=2g-2+n$, can also be verified by induction. 
      	The relation \eqref{eq:Pol-gnk-degree} follows from the homogeneity \eqref{eq:degree-Wgn}. 
      \epf
     
      \subsubsection{Free energies and quasi-modular forms}
      For generic \PIeq  monodromy data, the contribution of the free energies $F_0$, $F_1$ to the function $\mathcal Z\lb s\,|\, \nu\rb$ turns out to be nontrivial. These quantities are separately defined in \cite{EO07} as
      \begin{equation} \label{eq:expression-F0-F1}
      	F_0  = \frac{g_2 g_3}{10} + \frac{\nu}{2} \, \oint_B y dx, \qquad
      	F_1 = - \frac{1}{12} \ln \left( \omega_A^6 \, {\Delta} \right).
      \end{equation}
      
      The free energies  $F_{g}$ with $g\ge 2$ are determined by the same universal formula \eqref{Fgdef} as in the degenerate case, with the primitive $\Phi\lb z\rb$ now given by 
      \begin{lemma}
      	We have
      \begin{equation}\label{PhiEll}
      	\Phi\lb z\rb=\int^z y\,dx =\frac{1}{5}\lb 2\wp\lb z\rb\wp'\lb z\rb+2g_2\zeta\lb z\rb-3g_3z\rb.    
      \end{equation}
      \end{lemma}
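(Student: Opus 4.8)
The plan is to verify the claimed closed form by differentiation, since $\Phi(z)$ is by definition any primitive of the meromorphic differential $W_{0,1}=y\,dx$ on $\Sigma$, and a primitive is determined uniquely up to an additive constant (which is immaterial for the free energy formula \eqref{Fgdef}, where only residues of $\Phi\,W_{g,1}$ enter). Under the Weierstrass parameterization \eqref{eq:sp-PI} one has $dx=\wp'(z)\,dz$ and $y=\wp'(z)$, so the integrand is simply
\begin{equation}
y\,\frac{dx}{dz}=\wp'(z)^2=4\wp(z)^3-g_2\wp(z)-g_3,
\end{equation}
the last equality being the defining differential equation of $\wp$. It therefore suffices to check that the $z$-derivative of the right-hand side of \eqref{PhiEll} reproduces this cubic in $\wp$.

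First I would differentiate the three terms separately, using the standard Weierstrass identities $\zeta'(z)=-\wp(z)$ and $\wp''(z)=6\wp(z)^2-\tfrac{g_2}{2}$. For the first term, $\frac{d}{dz}\bigl(2\wp\wp'\bigr)=2\wp'^2+2\wp\wp''$; substituting $\wp'^2=4\wp^3-g_2\wp-g_3$ and $\wp''=6\wp^2-\tfrac{g_2}{2}$ turns this into $20\wp^3-3g_2\wp-2g_3$. The second term contributes $\frac{d}{dz}\bigl(2g_2\zeta\bigr)=-2g_2\wp$, and the third gives $\frac{d}{dz}\bigl(-3g_3 z\bigr)=-3g_3$.

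Summing and dividing by $5$, the cubic, linear and constant pieces collect into $\tfrac{1}{5}\bigl(20\wp^3-5g_2\wp-5g_3\bigr)=4\wp^3-g_2\wp-g_3$, matching $\wp'^2$ exactly, which completes the verification. I do not anticipate any genuine obstacle here: the computation is entirely mechanical once the two differential relations for $\wp$ are invoked, and the only conceptual point worth flagging is that the primitive $\Phi$ is multivalued on $\Sigma$ (through the quasi-periodic $\zeta(z)$ and the linear term $-\tfrac{3}{5}g_3 z$). This causes no difficulty, however, because \eqref{Fgdef} depends only on local residues at the ramification points, where $\Phi$ is a well-defined meromorphic germ.
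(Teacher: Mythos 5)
Your proof is correct and follows exactly the paper's argument: differentiate the right-hand side of \eqref{PhiEll} and use the Weierstrass relations $\wp''=6\wp^2-g_2/2$ and $(\wp')^2=4\wp^3-g_2\wp-g_3$ to recover $\wp'(z)^2=y(z)x'(z)$. Your explicit bookkeeping of the three terms and the closing remark on the multivaluedness of $\Phi$ being harmless for the residue formula \eqref{Fgdef} are fine additions but do not change the route.
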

      \pf Differentiate the right hand side and then use the equations $\wp''=6\wp^2-g_2/2$ and $\lb\wp'\rb^2=4\wp^3-g_2\wp-g_3$ to check that the result is indeed $\wp'\lb z\rb^2=y\lb z\rb x'\lb z\rb$. \epf
      \begin{cor}\label{Cor_W01periods}
      	The periods of $W_{0,1}$ can be expressed as
      	\beq\label{W01periods}
      	\oint_* ydx=\frac{2g_2\eta_*-3g_3\omega_*}{5},\qquad *=A,B.
      	\eeq
      \end{cor}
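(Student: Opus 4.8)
The plan is to read off the periods directly from the explicit primitive $\Phi(z)$ furnished by the preceding Lemma. Since $W_{0,1}=y\,dx=d\Phi$ and the differential is single-valued even though $\Phi$ itself need not be, the period $\oint_* y\,dx$ equals the total increment of $\Phi(z)$ as $z$ traverses the cycle $*$, that is $\Phi\lb z+\omega_*\rb-\Phi\lb z\rb$, where $\omega_*$ is the corresponding lattice period. Here I use that in the Weierstrass parameterization $x=\wp\lb z\rb$, $y=\wp'\lb z\rb$ one has $dx/y=dz$, so that the quantity $\omega_*=\oint_* dx/y$ defined in \eqref{periodsAB} is precisely the increment of $z$ around $*$; equivalently, going once around the $A$- or $B$-cycle shifts the coordinate on $\Sigma={\mathbb C}/L$ by $\omega_A$ or $\omega_B$ respectively.

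It then suffices to compute the increment of each of the three terms in \eqref{PhiEll}. First I would observe that $2\wp\lb z\rb\wp'\lb z\rb$ is a doubly periodic function of $z$, hence contributes nothing to the period. Next, the linear term $-3g_3 z$ contributes $-3g_3\,\omega_*$. The only remaining piece is $2g_2\zeta\lb z\rb$, whose increment is governed by the quasi-periodicity of the Weierstrass zeta function.

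The one point that requires care is to match this quasi-period of $\zeta$ with the quantity $\eta_*$ defined in \eqref{periodsAB}. Using $x=\wp\lb z\rb$ together with the defining relation $\zeta'\lb z\rb=-\wp\lb z\rb$, one computes
\beq
\eta_*=-\oint_*\frac{x\,dx}{y}=-\oint_*\wp\lb z\rb\,dz=\oint_*\zeta'\lb z\rb\,dz=\zeta\lb z+\omega_*\rb-\zeta\lb z\rb,
\eeq
so that $\eta_*$ is exactly the additive increment of $\zeta$ around the cycle $*$. Consequently the $\zeta$-term contributes $2g_2\,\eta_*$, and assembling the three contributions with the overall factor $1/5$ yields
\beq
\oint_* y\,dx=\frac{1}{5}\bigl(2g_2\eta_*-3g_3\omega_*\bigr),\qquad *=A,B,
\eeq
as claimed. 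I do not anticipate any genuine obstacle: the entire content is the bookkeeping of periodic versus quasi-periodic contributions, and the only subtlety — the identification of $\eta_*$ with the $\zeta$ quasi-period — is settled by the short computation above.
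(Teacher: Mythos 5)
Your proof is correct and follows essentially the same route as the paper: the paper's proof of Corollary~\ref{Cor_W01periods} likewise invokes the primitive $\Phi$ from \eqref{PhiEll} together with the quasi-periodicity relation $\zeta\lb z+\omega_*\rb=\zeta\lb z\rb+\eta_*$. Your additional verification that the quasi-period of $\zeta$ coincides with $\eta_*$ as defined in \eqref{periodsAB} is a detail the paper leaves implicit, and it is carried out correctly.
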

      \pf Straightforward consequence of \eqref{PhiEll} and quasiperiodicity relations  $\zeta\lb z+\omega_*\rb=\zeta\lb z\rb+\eta_*$.  \epf
      
      Using $\Phi\lb z\rb$, we have obtained surprisingly simple explicit answers for $F_2$ and $F_3$:
       \begin{subequations}
       	\label{F2and3}
     \begin{align}
     	&F_2=\frac{1}{\Delta^2}\left\{
     	\frac{15 }{2}g_2^2 S^3
     	+\frac{297 }{4 }g_2 g_3 S^2
     	+\frac{3}{8 } \left(14 g_2^3+297 g_3^2\right)S
     	+\frac{897 }{80 }g_2^2 g_3\right\},\\
     	&\label{F3ell}
     	\begin{aligned}
     	F_3=&\,\frac{1}{\Delta^4}\left\{405 g_2^4 S^6
     	+9720 g_2^3 g_3 S^5
     	+\frac{81}{4 } \left(56 g_2^3+4293 g_3^2\right) g_2^2 S^4
     	+\frac{27  }
     	{4 } \left(3121 g_2^3+49005 g_3^2\right)g_2 g_3 S^3\right.\\
     	&\qquad +\frac{9  }{80 }\left(1079028 g_2^3 g_3^2+8023 g_2^6+3207600 g_3^4\right)S^2
     	+\frac{729  }
     	{80}\left(899 g_2^3+22463 g_3^2\right)g_2^2 g_3 S\\
     	&\left.\qquad +\frac{1}
     	{2240 } \left(29176497 g_2^3 g_3^2+171350 g_2^6+174573630 g_3^4\right)g_2\right\}.
     	\end{aligned}
     \end{align}
     \end{subequations}
     The first-principles TR derivation of the last equation requires the knowledge of $W_{3,1}$. Although we have not succeeded to compute a symmetrized version of this differential similar to \eqref{W11ell02}, it is still possible to find an analog of the representation \eqref{W11ell01} in terms of $e_{1,2,3}$ and plug it into \eqref{Fgdef} to compute $F_3$. The  resulting 6-fold sum over  ramification points coming from the iterated residue computation can then be symmetrized (i.e. expressed in terms of $g_{2,3}$) which leads to \eqref{F3ell}.
     
     For free energies, the homogeneity property takes the form 
     \begin{equation} \label{eq:degree-Fg}
     	\lb \varrho_r\rb_\ast F_g = r^{-2\lb g-1\rb} F_g.
     \end{equation}
     In the special case $g=1$ it is understood that $\lb \varrho_r\rb_\ast F_1$ is equal to $F_1$ up to an additive constant.
     Proposition~\ref{prop:pre-quasi-modular} implies
     that for  $g \ge 2$ the  free energy $F_g$ of the Weierstrass curve $\mathcal C$
     has an expression of the form 
     \begin{equation} \label{eq:rational-expression-Fg}
     	F_{g} = \frac{1} {\Delta^{2g-2}}\sum_{k=0}^{3g-3} f_{g, k}\lb t,u\rb 
     	S^k,
     \end{equation}
     where $f_{g, k}$ is a polynomial in $t$ and $u$ (equivalently, $g_2$ and $g_3$) with constant coefficients
     satisfying 
     \begin{equation} \label{eq:degree-Pol-gk}
     	\lb \varrho_r\rb_\ast f_{g, k} 
     	= r^{\frac{1}{5}(14g-14-2k)} \, f_{g, k}.
     \end{equation}

     In order to fully benefit from the homogeneity, we now define another parameterization. Introduce the elliptic nome and period ratio by
     \beq\label{nome_ratio}
     q=e^{2\pi i\tau},\qquad \tau = \frac{\omega_B}{\omega_A},\qquad \Im \tau>0.
     \eeq
     Note that we use the conventions where $q$ is the square of the usual nome. The quantities $S$, $g_2$ and $g_3$ parameterizing TR free energies with $g\ge2$  can then be rewritten as
     \begin{subequations}\label{eq:t-u-and-eisenstein} 
     \begin{align} 
     	\label{XE2}
     	S =&\, \frac {\pi^2 }{3 \omega_A^2}E_2,\\
     \label{g2E4}	g_2 =&\, \frac {4\pi^4 }{3 \omega_A^4} E_4,\\
     \label{g3E6}	g_3 =&\, \frac {8\pi^6 }{27 \omega_A^6}E_6,
     \end{align}
     \end{subequations}
     where $E_{2n}=E_{2n}\lb q\rb$ denote the Eisenstein series defined by
     \begin{align}\label{eisensteindef}
     	E_{2n}\lb q\rb = 1 - \frac{4n}{B_{2n}} \sum_{k=1}^{\infty} \frac{k^{2n-1} q^k}{1 - q^k},
     	\qquad n\in \mathbb N. 
     \end{align}
     In this definition, $B_{2n}$ stand for Bernoulli numbers (hence e.g. $B_2=\frac16$, $B_4=-\frac1{30}$, $B_6=\frac{1}{42}$). It is useful to keep in mind that later in the paper we will be interested in the asymptotic regime where $q$ is small. 
     
     The Eisenstein series  $E_2$ is a quasi-modular form with weight 2 and depth 1, while $E_4$ and $E_6$ are modular forms with weights $4$ and $6$, respectively. This means that they have a certain behavior under the transformations  
     \beq
     \tau\mapsto \frac{\alpha\tau+\beta}{\gamma\tau+\delta},\qquad
     \lb\begin{array}{cc}\alpha  & \beta \\ \gamma & \delta\end{array}\rb\in\mathrm{SL}_2\lb\mathbb Z\rb.
     \eeq

     Using \eqref{nome_ratio}--\eqref{eq:t-u-and-eisenstein} and Corollary~\ref{Cor_W01periods}, the normalized $A$-period of $W_{0,1}$-differential (which will play a role of the monodromy parameter in the relation to \PIeq) can now be expressed as
     \beq\label{numodular}
     \nu=\lb\frac{2\pi i}{\omega_A}\rb^5\frac{E_6-E_2E_4}{360}.
     \eeq
     Similar calculations, together with the Riemann bilinear identity $\eta_A\omega_B-\eta_B\omega_A=2\pi i$, lead to the following representation of the genus 0 free energy:
     \beq\label{F0modular}
     F_0=\nu^2\left[i\pi\tau+\frac{E_4\lb 6E_2E_4-11E_6\rb}{\lb E_6-E_2E_4\rb^2}\right].
     \eeq
     Let us also express the discriminant  $\Delta = g_2^3-27 g_3^2$
     of the elliptic curve $\mathcal C$  as
     \begin{equation} \label{eq:delta-eta}
     	\Delta = 
     	 \left( \frac{2\pi i}{\omega_A} \right)^{12} 
     	\frac{E_4^3 - E_6^2}{1728} 
     	= \left( \frac{2\pi i}{\omega_A} \right)^{12} \eta^{24},
     \end{equation}
     where $\eta = \eta\lb q\rb$ is the Dedekind eta function:
     \begin{equation}
     	\eta\lb q\rb = q^{\frac{1}{24}} \, \prod_{k=1}^{\infty} \lb 1 - q^k\rb.
     \end{equation}
     The free energy $F_1$ is therefore given (here and below, $F_1$ will be considered only up to an irrelevant additive numerical constant) by
     \beq\label{F1modular}
     F_1=-\frac1{12}\ln \lb\frac{2\pi}{\omega_A}\rb^6\lb E_4^3-E_6^2\rb.
     \eeq

        The quasi-modular structure of  $F_{g}$ with $g\ge 2$ is described by the following statement.
    \begin{prop}[{cf. \cite[Theorem 4.4]{FRZZ19}}]\label{prop_FgStructure}
    	The $g$th free energy of the spectral curve \eqref{eq:sp-PI} with $g \ge 2$ has the following expression: 
    	\begin{equation} \label{eq:Fg-quasi-modular-expression}
    		F_g = \left( \frac{\xi}{E_4} \right)^{2g-2} P_{g}\lb E_2, E_4, E_6\rb,
    	\end{equation}
    	where 
    	\begin{equation}\label{eq:Yukawa}
    		\xi=
    		- \frac{1}{2 \eta^{24}} \, \left( \frac{\omega_A}{2\pi i} \right)^5 \, {E_4},
    	\end{equation}
    	and $P_{g}$ is a polynomial in its arguments of the following form
    	\begin{equation} \label{eq:quasi-modular-part}
    		P_{g}\lb E_2, E_4, E_6\rb = 
    		\sum_{k = 0}^{3\lb g-1\rb} 
    		\Biggl( \sum_{\substack{a, b \ge 0 \\ 4a + 6b + 2k= 14\lb g-1\rb}} 
    		C_{a,b,k} \, E_4^a E_6^b \; \Biggr) \; E_2^k,
    	\end{equation}
    	with some constants $C_{a,b,k} \in {\mathbb C}$. 
    	In particular, $P_{g}\lb E_2, E_4, E_6\rb$ is 
    	a quasi-modular form of weight $14\lb  g-1\rb$ and depth $3\lb g-1\rb$.
    \end{prop}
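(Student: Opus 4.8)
The plan is to derive the quasi-modular expression \eqref{eq:Fg-quasi-modular-expression} directly from the rational form \eqref{eq:rational-expression-Fg}, which is itself a consequence of Proposition~\ref{prop:pre-quasi-modular}, by passing to the elliptic parameterization \eqref{eq:t-u-and-eisenstein}, \eqref{eq:delta-eta}. First I would use the homogeneity \eqref{eq:degree-Pol-gk} to pin down the monomial content of each coefficient $f_{g,k}$. Since $f_{g,k}$ is a polynomial in $g_2,g_3$ with constant coefficients, and under $\varrho_r$ one has $g_2\mapsto r^{4/5}g_2$, $g_3\mapsto r^{6/5}g_3$, the scaling weight in \eqref{eq:degree-Pol-gk} forces $f_{g,k}=\sum c_{a,b,k}\,g_2^a g_3^b$ with the sum restricted to $4a+6b=14\lb g-1\rb-2k$. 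This is exactly the constraint $4a+6b+2k=14\lb g-1\rb$ appearing in \eqref{eq:quasi-modular-part}.

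The heart of the argument is then the substitution and a bookkeeping of the $\omega_A$- and $\eta$-powers. From \eqref{eq:t-u-and-eisenstein} we have $S^k\propto \omega_A^{-2k}E_2^k$, $g_2^a\propto \omega_A^{-4a}E_4^a$, $g_3^b\propto \omega_A^{-6b}E_6^b$, while \eqref{eq:delta-eta} gives $\Delta^{-(2g-2)}\propto \omega_A^{24\lb g-1\rb}\eta^{-48\lb g-1\rb}$. Multiplying these, the total power of $\omega_A$ attached to a generic term of $F_g$ is $24\lb g-1\rb-\lb 4a+6b\rb-2k=24\lb g-1\rb-\bigl(14\lb g-1\rb-2k\bigr)-2k=10\lb g-1\rb$, independent of $\lb a,b,k\rb$, and the same term carries a factor $\eta^{-48\lb g-1\rb}$. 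From \eqref{eq:Yukawa} one computes $\xi/E_4=-\tfrac{1}{2\eta^{24}}\lb \omega_A/2\pi i\rb^5$, so $\lb \xi/E_4\rb^{2g-2}$ carries precisely $\omega_A^{10\lb g-1\rb}$ and $\eta^{-48\lb g-1\rb}$. Consequently $P_g:=\lb \xi/E_4\rb^{-(2g-2)}F_g$ is free of both $\omega_A$ and $\eta$, and is manifestly a polynomial in $E_2,E_4,E_6$ with purely numerical coefficients $C_{a,b,k}$, of the form \eqref{eq:quasi-modular-part}.

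It then remains to record the quasi-modular data. Each surviving monomial $E_2^k E_4^a E_6^b$ has modular weight $2k+4a+6b=14\lb g-1\rb$ by the constraint above, so $P_g$ is of pure weight $14\lb g-1\rb$; the presence of $E_2$ factors is what makes it quasi-modular rather than modular. The depth, i.e. the top power of $E_2$, is at most the $S$-degree $3g-3$ inherited from \eqref{eq:rational-expression-Fg}, giving depth $\le 3\lb g-1\rb$.

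The main obstacle I anticipate is the \emph{sharpness} of the depth claim, namely that the coefficient of $E_2^{3\lb g-1\rb}$ does not vanish; this is not forced by homogeneity. To settle it I would track the top power of $S$ through the recursion: since $S$ enters only through the additive constant in $W_{0,2}=\lb\wp\lb z_1-z_2\rb+S\rb dz_1dz_2$ and through $P\lb z\rb=-\zeta\lb z\rb+Sz$ in the kernel, each contributing at most one power of $S$ per factor, one can isolate the leading-$S$ part of \eqref{eq:top-rec} and verify by induction that the $S^{3g-3}$-coefficient in \eqref{eq:rational-expression-Fg} is a nonzero rational multiple of $g_2^{2\lb g-1\rb}$, consistent with the leading terms $\tfrac{15}{2}g_2^2 S^3$ and $405\,g_2^4 S^6$ of $F_2$ and $F_3$ in \eqref{F2and3}. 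Formalizing this leading-order count yields depth exactly $3\lb g-1\rb$; every other assertion is the substitutive bookkeeping already licensed by the analogous argument in \cite[Theorem 4.4]{FRZZ19}.
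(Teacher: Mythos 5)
Your proof is correct and is essentially the paper's own argument: the paper's proof is exactly the "straightforward consequence" of Proposition~\ref{prop:pre-quasi-modular}, the homogeneity constraint \eqref{eq:degree-Pol-gk}, the quasi-modular substitutions \eqref{eq:t-u-and-eisenstein}, \eqref{eq:delta-eta}, and the scaling of $\omega_A$, which you have simply written out in detail (your $\omega_A$- and $\eta$-power bookkeeping, giving $\omega_A^{10(g-1)}\eta^{-48(g-1)}$ matched against $(\xi/E_4)^{2g-2}$, is the content of that one-line proof). Your final paragraph on the sharpness of the depth addresses something the proposition does not actually claim and the paper does not prove: since \eqref{eq:quasi-modular-part} allows the constants $C_{a,b,k}$ to vanish, "depth $3(g-1)$" is to be read as the bound inherited from the $S$-degree $\le 3g-3$ in \eqref{eq:rational-expression-Fg}, so no nonvanishing argument for the $E_2^{3(g-1)}$ coefficient is required (and your sketch of it, which asserts without proof that the top $S$-coefficient is a nonzero multiple of $g_2^{2(g-1)}$, would indeed need more care, e.g. mixed monomials $g_2^a g_3^b$ with $b>0$ are allowed at top $S$-degree for $g\ge 3$).
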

    \pf Straightforward consequence of Proposition \ref{prop:pre-quasi-modular}, quasi-modular expressions \eqref{eq:t-u-and-eisenstein}, \eqref{eq:delta-eta}, and scaling behavior $\lb \varrho_r\rb_\ast \omega_A=r^{-\frac15}\omega_A$. \epf
    \begin{eg} \label{eg:P2-P3}
    	It follows from the expressions \eqref{F2and3}  that
    	\begin{subequations}
    		\label{F23modular}
    	\begin{align}
    		\label{F23modularA}&P_2\lb E_2,E_4, E_6\rb=-\frac{25 E_4^2 E_2^3 +165 E_4 E_6 E_2^2+15  \left(14 E_4^3+11 E_6^2\right) E_2+299 E_4^2 E_6}{207360},\\
    		\label{F23modularB}&\begin{aligned}P_3\lb E_2,E_4, E_6\rb=&\;\Bigl(525 E_4^4 E_2^6 +8400 E_4^3 E_6 E_2^5+315 E_4^2 \left(56 E_4^3+159 E_6^2\right) E_2^4 +70 E_4 E_6 \left(3121 E_4^3+1815 E_6^2\right)E_2^3\Bigr. \\
    			&\,+21 \left(8023 E_4^6+39964 E_4^3 E_6^2+4400 E_6^4\right)E_2^2+42  E_4^2 E_6 \left(24273 E_4^3+22463 E_6^2\right)E_2\\
    			&\,\Bigl.+\left(171350 E_4^6+1080611 E_4^3 E_6^2+239470 E_6^4\right) E_4\Bigr)\bigl/5016453120\bigr. .
    			\end{aligned}
    	\end{align}
    	 \end{subequations}
    \end{eg}
    
    The above discussion can be summarized as follows. We would like to compute the TR free energies $F_g=F_g\lb t,\nu\rb$ of the Weierstrass elliptic curve $\mathcal C$ as functions of a pair of parameters  $\lb t,\nu\rb$, where $t$ is one of the two curve moduli and $\nu$ is one of the two periods of $W_{0,1}=ydx$. One may use another pair of parameters, $\lb q,\omega_A\rb$, where $q$ is the elliptic nome \eqref{nome_ratio} and $\omega_A$ is the period defined by \eqref{periodsAB}. The relation between the two pairs is given by  \eqref{g2E4} and \eqref{numodular}. The main advantage of the 2nd pair is that the dependence of $F_g$ on $\omega_A$ and $q$ has a factorized form (except for $F_1$ for which the separation is additive) described by \eqref{eq:Fg-quasi-modular-expression}--\eqref{eq:quasi-modular-part}.

    \subsection{TR/\PIeq correspondence\label{subsec_TRPI}}

    Let us introduce an auxiliary parameter $\hbar$ and define the TR partition function
    \begin{equation} 
    	\mathcal Z_{\mathrm{TR}}\lb t,\nu; \hbar\rb = \exp \sum_{g \ge 0} \hbar^{2g-2} F_g\lb t,\nu\rb. 
    	\label{eq:TR-Z}
    \end{equation} 
    
    The main result on the TR/\PIeq correspondence is the following.
    \begin{theo}[{\cite[Theorem 4.3]{Iwaki19}}] \label{thm:2-par-tau}
    	The Fourier series 
    	\begin{equation} \label{eq:TR-tau}
    		\tau_{\mathrm{TR}}\lb t,\nu,\rho;\hbar\rb
    		= \sum_{k \in {\mathbb Z}} e^{2 \pi i k \rho / \hbar} 
    		\, \mathcal Z_{\mathrm{TR}}\lb t, \nu + k \hbar; \hbar\rb
    	\end{equation}
    	gives a formal series-valued tau function for $\hbar$-Painlev\'e I equation. That is, 
    	 \begin{equation}
    		 q\lb t, \nu, \rho; \hbar\rb = - \hbar^2 \frac{d^2}{dt^2} \ln \, \tau_{\rm TR}\lb t,\nu,\rho;\hbar\rb 
    		 \end{equation}
    	 gives a formal solution of 
    	 \begin{equation}\label{hbarPI}
    		 \hbar^2 \dfrac{d^2q}{dt^2} = 6 q^2 + t.
    		 \end{equation} 
    \end{theo}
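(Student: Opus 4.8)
The plan is to reduce the statement to the appropriately $\hbar$-rescaled Jimbo--Miwa--Ueno identity $\hbar^2\frac{d}{dt}\ln\tau=H$ for the \PIeq Hamiltonian, and then to check that the theta-like Fourier summation in \eqref{eq:TR-tau} preserves this structure. The starting observation is purely formal: for the \PIeq Hamiltonian $H=\frac{p^2}{2}-2q^3-tq$ of \eqref{hamPI} one has the off-shell relation $\partial H/\partial t=-q$, so that along any solution of Hamilton's equations \eqref{hameqs} (suitably $\hbar$-rescaled) one gets $\hbar^2\frac{d^2}{dt^2}\ln\tau=\frac{dH}{dt}=-q$. Hence, once it is known that $\hbar^2\frac{d}{dt}\ln\tau_{\mathrm{TR}}$ equals a quantity $H_{\mathrm{TR}}$ playing the role of a quantum-corrected Hamiltonian, the defining relation $q=-\hbar^2\partial_t^2\ln\tau_{\mathrm{TR}}$ together with Hamilton's equations becomes equivalent to the $\hbar$-\PIeq equation \eqref{hbarPI}.

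First I would establish the deformation formulas for the free energies $F_g(t,\nu)$. Using the Eynard--Orantin variational formula, the derivative of $F_g$ with respect to the period $\nu$ is computed by integrating $W_{g,1}$ over the dual $B$-cycle, while its derivative with respect to the modulus $t$ (recall $g_2=-2t$) is given by an insertion of the associated second-kind differential. At genus zero these reproduce the special-geometry relations, linking $\partial_\nu F_0$ to the dual $B$-period of $y\,dx$ (cf.\ Corollary~\ref{Cor_W01periods}) and $\partial_t F_0$ to $u=-g_3$. Assembling the $\hbar$-graded contributions yields $\hbar^2\frac{d}{dt}\ln\mathcal Z_{\mathrm{TR}}=H_{\mathrm{TR}}(t,\nu;\hbar)$, whose classical part reproduces the Hamiltonian $H=u/2$ of \eqref{hamPI} under the identification of the spectral curve \eqref{eq:spcurve} with the characteristic curve \eqref{spcurve}.

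Next I would construct the TR wave function as the principal specialization of the correlators $W_{g,n}$ and verify that it satisfies a quantum curve, i.e.\ a second-order scalar ODE whose classical limit is \eqref{eq:spcurve}, and that the associated linear system is isomonodromic in $t$ when $\nu$ is held fixed --- the key point being that $\nu=\frac{1}{2\pi i}\oint_A y\,dx$ is conjugate to a Stokes multiplier of \eqref{linsys}. This identifies $\mathcal Z_{\mathrm{TR}}(t,\nu;\hbar)$ with the isomonodromic tau function of the tronqu\'ee family and, in the special case $\nu=0$, recovers Theorem~\ref{thm:0-par-tau}. At this stage $q_{\mathrm{TR}}=-\hbar^2\partial_t^2\ln\mathcal Z_{\mathrm{TR}}$ already solves \eqref{hbarPI} as a formal one-parameter series.

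The final and hardest step is to pass from the single partition function to the full two-parameter solution through the Fourier series \eqref{eq:TR-tau}. For this I would rewrite \eqref{hbarPI} in Hirota bilinear form and invoke Proposition~\ref{taufou}: the genuine tau function is forced to carry the Zak structure $\sum_n\mathcal T(t,\nu+n)e^{2\pi i n\rho}$, and \eqref{eq:TR-tau} is its $\hbar$-rescaled avatar obtained via the shifts $\nu\mapsto\nu+k\hbar$ with weights $e^{2\pi i k\rho/\hbar}$. Since the Hamiltonian relation established above holds uniformly in the shifted argument $\nu+k\hbar$, the bilinear equation should close under the theta-like summation, in much the same way that the Fay/addition identities force theta functions to solve Hirota equations. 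The main obstacle is precisely this verification: one must give rigorous formal-series meaning to the shifts $\nu\mapsto\nu+k\hbar$ --- each summand $\mathcal Z_{\mathrm{TR}}(t,\nu+k\hbar;\hbar)$ being a double expansion in $\hbar$ and in $k\hbar$ --- and then check order by order that the nontrivial cancellations among the $\nu$-derivatives of the $F_g$ make the bilinear form of \eqref{hbarPI} hold for the summed series.
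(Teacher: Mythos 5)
Your proposal contains an outright false intermediate claim, and it defers the decisive step. You assert that, before any summation, $q_{\mathrm{TR}}=-\hbar^2\partial_t^2\ln\mathcal Z_{\mathrm{TR}}(t,\nu;\hbar)$ ``already solves \eqref{hbarPI} as a formal one-parameter series'', identifying $\mathcal Z_{\mathrm{TR}}(t,\nu;\hbar)$ with a tau function of the tronqu\'ee family. This fails for every $\nu\neq0$: at order $\hbar^0$, equation \eqref{hbarPI} forces $6q_0^2+t=0$ for the leading coefficient $q_0=-\partial_t^2F_0(t,\nu)$, but from \eqref{expansion_F0} one computes $-\partial_t^2F_0=-\sqrt{-t/6}+c\,\nu\,(-t)^{-3/4}+\dots$ with $c\neq0$ (the contribution of the term $\tfrac45 i\nu s$), so that $6q_0^2+t=O\bigl(\nu(-t)^{-1/4}\bigr)\neq0$. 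The tronqu\'ee identification is correct only at $\nu=0$, where $\mathcal C$ degenerates to ${\mathcal C}_{\rm deg}$ (Theorem \ref{thm:0-par-tau}). For generic $\nu$ the single partition function \eqref{eq:TR-Z} is only a building block and is \emph{not} a tau function; the Fourier summation \eqref{eq:TR-tau} is not a cosmetic repackaging of something that already solves \PIeq but is the very mechanism that produces a solution --- its theta-like phases $e^{2\pi ik\rho/\hbar}$ are what generate Boutroux's elliptic asymptotics, which cannot arise from a single $\hbar$-power series. With this the third stage of your plan collapses, and so does the reduction of the theorem to ``checking that the summation preserves'' a property which the summands do not possess.

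The remaining hard step is then named but not carried out: closing the Hirota bilinear form of \eqref{hbarPI} under the sum over shifts $\nu\mapsto\nu+k\hbar$ is precisely the content of the theorem, and you label it ``the main obstacle''. Proposition \ref{taufou} cannot substitute for it, because you invoke it in the wrong direction: it says that a genuine (extended) tau function, suitably normalized, must have Zak form; it does not say that a given Zak-type series assembled from TR free energies is a tau function, which is the converse implication you need. Note also that the paper itself does not prove Theorem \ref{thm:2-par-tau}: it quotes \cite[Theorem 4.3]{Iwaki19}, whose argument runs through quantum curves and exact WKB analysis --- the non-perturbative wave function built from the summed object \eqref{eq:TR-tau} is shown to satisfy a second-order ODE whose classical limit is \eqref{eq:spcurve} (the scalar form of the \PIeq Lax pair, with an apparent singularity at $x=q$) together with a compatible $t$-deformation, and \eqref{hbarPI} arises as the compatibility condition; the difference equations in $\nu$ satisfied by the summed wave function play exactly the role of the Fay-type identities you hope for. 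Finally, your appeal to the identification of $(\nu,\rho)$ with Stokes data of \eqref{linsys} is out of place here: as the paper stresses, that identification is the \emph{conjectural} part of \cite{Iwaki19} (it presupposes Borel summability), whereas the theorem is a purely formal statement whose proof neither needs nor may rely on it.
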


    The parameters $\nu$ and $\rho$ in \eqref{eq:TR-tau} are regarded as arbitrary integration constants, and hence, \eqref{eq:TR-tau} gives the tau function corresponding to general 2-parameter formal solution of \PIeq. 
    The above result can be considered analogous to the construction of Painlev\'e tau functions by conformal blocks \cite{GIL12, GIL13,ILT14, Nagoya15, Nagoya18} and partition functions of $\mathcal N=2$ supersymmetric 4D gauge theories \cite{BLMST}. 
    We also note that the TR construction of tau functions has been  generalized 
    to other Painlev\'e equations \cite{EGF, MO19, EGFMO}. 
    
    Note also that this representation can be written as a 
    formal power series in $\hbar$ whose coefficients 
    are written as differential polynomials of 
    the $\vartheta$-function associated with the spectral curve; 
    this is also known as the non-perturbative partition function introduced in \cite{emo}. 
    It also follows that the corresponding 2-parameter formal solution 
    of \PIeq has a series expansion whose leading term involves  $\wp$-functions 
    which appeared in the definition \eqref{eq:sp-PI} of the \PIeq spectral curve. They 
    faithfully reproduce the form of Boutroux's elliptic asymptotic solutions
    \cite{Bou1913}.
    
    The key to the proof of Theorem \ref{thm:2-par-tau} as well as its degenerate version, Theorem \ref{thm:0-par-tau}, lies in a relationship between the 
    isomonodromic deformations 
    of linear differential equations and TR via quantum curves and exact WKB analysis.  
    In fact, a conjectural formula for explicit description of the Stokes data of the linear system associated with \PIeq by using $(\nu, \rho)$ was proposed in \cite[Section 5]{Iwaki19}, under the assumption of Borel summability of the partition function, which has not yet been rigorously proven. The reader is referred to  \cite{IS16, Iwaki19} for more details.
    We note that this observation is consistent with the discussion in Section \ref{sec_FT} on the proof of existence for the Fourier series expression of the tau function.

    Actually, $\hbar$ in \eqref{hbarPI} is not a genuine new parameter since it can be absorbed by rescaling $\lb t,q\rb \mapsto \lb \hbar^{\frac45}t,\hbar^{\frac25} q\rb$. One may therefore expect a relation between the TR  asymptotic series  $\mathcal Z_{\mathrm{TR}}\lb t,\nu; \hbar\rb$ in $\hbar$ defined by \eqref{eq:TR-Z}  and the Painlev\'e I  asymptotic series $\tilde{\mathcal Z}\lb s\,|\,\nu\rb$ in $s^{-1}$ defined by \eqref{bigtau1}. 
    Recall that $s=24^{\frac14}\lb-t\rb^{\frac54}$ stands for \PIeq time parameterization from  \eqref{bigtau2}.
      We are now going to explain that, indeed,
    \beq\label{TRPIshort}
    \mathcal Z_{\mathrm{TR}}\lb t,\nu; \hbar\rb\simeq \tilde{\mathcal Z}\lb \hbar^{-1}s\,|\,\hbar^{-1}\nu\rb,
    \eeq
    in a very precise sense. More explicitly, we want to show that 
    \beq\label{TRPIvell}
    \sum_{g \ge 0} \hbar^{2g-2} F_g\lb t,\nu\rb\simeq\frac{1}{\hbar^2}\left[ \frac{s^2}{45}+\frac{4i\nu s}{5}-\frac{\nu^2}{2}\ln48is\right]+\ln G\lb 1+\frac{\nu}{\hbar}\rb-\frac{\nu\ln2\pi}{2\hbar}
    -\frac{1}{60}\ln\frac{s}{\hbar}+\mathcal E\lb\frac{s}{\hbar}\,\Bigl|\,\frac{\nu}{\hbar}\rb,
    \eeq
    where $\mathcal E\lb s\,|\,\nu\rb$ is the \PIeq free energy introduced in \eqref{PIFEnergy}, and the left hand side is interpreted as a double expansion in both $\hbar$ and $s^{-1}$.

    Our first task then is to compute the large $t$ expansion of $F_{g}\lb t,\nu\rb$ at fixed $\nu$. To this end, let us express $\omega_A$ from \eqref{g2E4} in terms of $t=-g_2/2$ and $E_4$:
    \beq\label{omegaAequation}
    \frac{\omega_A}{2\pi }=-\left[\frac{E_4}{24\lb -t\rb}\right]^{\frac14}.
    \eeq
    Here we have picked one of the four roots of \eqref{g2E4}; the other three options lead to equivalent results. Substituting \eqref{omegaAequation} into \eqref{numodular}, we find that
    \beq\label{nome_equation}
    \frac{i\nu}{s}=\frac{\lb E_6-E_2E_4\rb}{15 E_4^{5/4}}.
    \eeq
    Equation \eqref{nome_equation} should now be solved for the nome $q$ in terms of $\nu/s$. The solution may be written in the form of a convergent expansion  whose coefficients are readily found from the Eisenstein series definition \eqref{eisensteindef}. We have, e.g. 
    \beq\label{nome_expansion}
    q=-\frac{1}{48}\frac{i\nu}{s}-\frac{47}{384}\frac{\nu^2}{s^2}+\frac{1793 }{2304}\frac{i\nu^3}{s^3}+\frac{6861827 }{1327104}\frac{\nu^4}{s^4}-\frac{1500888517 }{42467328}\frac{i\nu^5}{s^5}+O\lb s^{-6}\rb,
    \eeq
    as $s\to\infty$. Substituting the expansion of $q$ back into \eqref{omegaAequation}, we determine the large $s$ expansion of $\omega_A$:
    \beq
    \lb\frac{\omega_A}{2\pi }\rb^5=
    \frac{\lb-1\rb}{24s}\left[1-\frac{25 }{4}\frac{i\nu}{s}-\frac{2675 }{64}\frac{\nu^2}{s^2}+\frac{667675 }{2304}\frac{i\nu^3}{s^3}+\frac{303033725 }{147456}\frac{\nu^4}{s^4}-\frac{2182577825}{147456}\frac{i\nu^5}{s^5}+O\lb s^{-6}\rb\right].
    \eeq
    Finally, plugging $\omega_A$ and $q$ into the formulas \eqref{F0modular}, \eqref{F1modular}, \eqref{eq:Fg-quasi-modular-expression}, \eqref{F23modular}, we obtain the corresponding expansions of the TR free energies. In particular,
    \begin{subequations}
    \begin{align}
    	&\begin{aligned} F_0=&\,\frac{s^2}{45}+\frac{4 i \nu  s}{5}+ \frac{\nu ^2}{2} \ln \left(-\frac{i \nu }{48 s}\right)-\frac{3\nu^2}{4}-\frac{47 i \nu ^3}{48 s}-\frac{7717 \nu ^4}{4608 s^2}+\frac{2663129 i \nu ^5}{552960 s^3}
        \label{expansion_F0}
    	\\ &\,+\frac{31386901 \nu ^6}{1769472 s^4}-\frac{2680068281 i \nu ^7}{35389440 s^5}+O\lb s^{-6}\rb,	\end{aligned}\\
    	& F_1=-\frac1{60}\ln\lb\nu^5s\rb-\frac{17 i \nu }{96 s}-\frac{3677 \nu ^2}{4608 s^2}+\frac{501953 i \nu ^3}{110592 s^3}+\frac{2086597 \nu ^4}{73728 s^4}-\frac{39389839561 i \nu ^5}{212336640 s^5}+O\lb s^{-6}\rb,\\
    	& F_2=\, -\frac{1}{240 \nu ^2}-\frac{7}{480 s^2}+\frac{20047 i \nu }{40960 s^3}+\frac{14438609 \nu ^2}{1769472 s^4}-\frac{5637865499 i \nu ^3}{53084160 s^5}
    	+O\lb s^{-6}\rb,\\
    	& F_3=\,\frac{1}{1008 \nu ^4}+\frac{245}{2304 s^4}-\frac{9265987 i \nu }{1048576 s^5}+O\lb s^{-6}\rb.
        \label{expansion_F3}
    \end{align}
    \end{subequations}
    Using the asymptotic expansion of the Barnes G-function $G\lb 1+z\rb$ as $z\to\infty$ (with $|\arg z|<\pi$),
    \beq\label{BarnesGas}
    G\lb 1+z\rb\simeq\frac{z^2}{2}\ln z-\frac{3z^2}{4}+\frac{z\ln 2\pi}{2}-\frac{\ln z}{12}+
    \zeta'\lb -1\rb+\sum_{g\ge 2}\frac{B_{2g}}{4g\lb g-1\rb z^{2g-2}},
    \eeq
    we find full agreement between both sides of \eqref{TRPIvell} up to order $O\lb s^{-6},\hbar^6\rb$ in the double expansion.  
    
    Let us note the following:
   \begin{itemize}
   	\item Assuming the TR/\PIeq relation \eqref{TRPIshort}, the knowledge of free energies $F_0,\ldots, F_g$ completely determines the coefficients  $\mathcal E_k\lb\nu\rb$, and thereby also the coefficients $\mathcal U_k\lb\nu\rb$ of $c=1$  conformal block $\mathcal F\lb \varepsilon\,|\,\nu\rb$, up to $k=2g-1$. This order upgrades to $k=2g+1$ if in addition we also allow the use of much more easily computable free energies of the degenerate curve $\mathcal C^{\mathrm{deg}}$.
   	\item For the expansions in $\hbar$ and $s^{-1}$ in \eqref{TRPIvell} to be consistent, the free energy $F_g$ with $g\ge 2$ must have the large $s$ asymptotic expansion
   	\beq\label{conigap}
   	F_g=\frac{\kappa_g}{\nu^{2g-2}}+O\lb s^{-\lb 2g-2\rb}\rb,
   	\eeq
   	with
   	\beq\label{eq:kappa_g}
   	\kappa_g=\frac{B_{2g}}{4g\lb g-1\rb}.
   	\eeq
   	This behavior, namely the absence of  $s^{-1},\ldots, s^{-\lb 2g-3\rb}$ contributions in the asymptotic series, is known as the \textit{conifold gap property}. 
    We will give a proof of this property in Section \ref{sec:conifold-gap} and Appendix \ref{appendix:constant-term}.
   \end{itemize}

   In topological string theory, the behavior \eqref{conigap} is believed to be universal    up to overall model-dependent prefactors (see e.g. \cite{gv-conifold}). 
   The constant term therein is nothing but the free energy of the Gaussian matrix model.  Since the topological recursion is intimately related to matrix models, 
   we may expect that the TR free energy satisfies the 
   conifold gap behavior universally. 
   One of the main contributions of this paper is to provide (see next subsection) a rigorous proof of this property  for our spectral curve ${\mathcal C}$. 
   The conifold gap behavior also plays a fundamental role in solving the holomorphic anomaly equation discussed in Section~\ref{sec_HAE}.

     \subsection{Asymptotic series for correlators and free energies}
     \label{sec:conifold-gap}

     \subsubsection{Rescaling of the spectral curve} \label{subsec:rescale}
   
     Let us first look at the large $t$ behavior of the spectral curve \eqref{eq:spcurve}. 
     It follows from \eqref{g2E4}--\eqref{g3E6} and \eqref{nome_expansion} that 
     $g_2$ and $g_3$ diverge as $s \to \infty$.
     To absorb this, we perform a rescaling and introduce new variables $(X,Y)$ by  
\begin{equation}  \label{eq:scaling-data}
(x,y) =  (\gamma^2 X, \gamma^3 Y), \qquad \gamma=\sqrt{3a}(-t)^{1/4}, \qquad a^2=1/6.
\end{equation} 
Then, the spectral curve \eqref{eq:spcurve} becomes 
\begin{equation} \label{eq:rescaled-sp-curve}
{\mathcal C}^{\rm res} ~:~ Y^2 = 4X^3 - \frac{4}{3}X + U, 
\end{equation}
with 
\begin{equation} \label{eq:U}
U=\gamma^{-6}u=(3a)^{-3}(-t)^{-3/2}u.
\end{equation} 
The condition \eqref{eq:nu} is equivalent to 
\begin{equation} \label{eq:rescaled-A-period}
\oint_A Y dX = 2 \pi i \nu \Lambda, \qquad
\Lambda = \gamma^{-5} = (3a)^{-5/2}(-t)^{- \frac{5}{4}}.
\end{equation}
For comparison with \eqref{bigtau1}--\eqref{Zasseries}, 
we should take $a = - \sqrt{1/6}$, 
in which case the branch of the square root is chosen as 
$\sqrt{3a} = i (3/2)^{1/4}$ so that
\begin{equation} \label{eq:Lambda-t}
\Lambda = -\frac{4i}{3s}.
\end{equation} 
In the remainder of this section, we adopt this branch.
(The other choice $a = \sqrt{1/6}$ corresponds to a different asymptotic behavior 
of tau function, which we will consider in Appendix \ref{appendix:conjecture-resurgence}.)  

Using relation \eqref{eq:Lambda-t}, the large $s$ expansions derived 
in the previous section can be translated into small $\Lambda$ expansions. 
For example, \eqref{nome_expansion} implies 
\begin{equation} \label{nome_expansion_2}
q = \frac{\nu \Lambda}{64} + \frac{141 \nu^2 \Lambda^2}{2048} + \frac{5379 \nu^3 \Lambda^3}{16384}
+ \frac{6861827 \nu^4 \Lambda^4}{4194304} + \frac{4502665551 \nu^5 \Lambda^5}{536870912}
+ O(\Lambda^6).
\end{equation}
Below, we will introduce a method to derive the series expansion in $\Lambda$ 
of the quantity calculated by topological recursion, based on this equation.

Before that, let us see that the $A$-cycle is represented by the vanishing cycle 
on ${\mathcal C}^{\rm res}$ when $\Lambda \to 0$, as follows.
Let $\tilde{e}_i = \gamma^{-2} e_i$ ($i=1,2,3$) be the zeros of 
the right hand side of \eqref{eq:rescaled-sp-curve}, 
and recall that the $A$-cycle is represented by a closed cycle around 
$\tilde{e}_2$ and $\tilde{e}_3$.
For computation of asymptotic behavior of these roots, we can use
\begin{equation}
   \tilde{e}_1= - \frac{\theta_3^4 + \theta_4^4}{3 E_4^{1/2}},\quad
   \tilde{e}_2=\frac{\theta_2^4 + \theta_3^4}{3 E_4^{1/2}},\quad
   \tilde{e}_3= - \frac{\theta_2^4 - \theta_4^4}{3 E_4^{1/2}}, 
\end{equation}
where
\begin{equation}
    \theta_2=\sum_{k \in {\mathbb Z}}Q^{(k+1/2)^2}, \quad 
    \theta_3=\sum_{k \in {\mathbb Z}}Q^{k^2}, \quad 
    \theta_4=\sum_{k \in {\mathbb Z}}(-1)^k Q^{k^2}, \quad Q=q^{1/2}.
\end{equation}
Using the series expansion \eqref{nome_expansion_2} of $q$, we have
\begin{subequations}
\begin{align} \label{tilde-e1}
\tilde{e}_1 & 
= 
-\frac{2}{3} + \nu \Lambda + \frac{49 \nu^2 \Lambda^2}{32} 
+ \frac{4543 \nu^3 \Lambda^3}{1024} + \frac{1053525 \nu^4 \Lambda^4}{65536} 
+ \frac{274381151 \nu^5 \Lambda^5}{4194304}
+ O\lb \Lambda^{6} \rb, \\
\tilde{e}_2 & = \frac{1}{3} + (\nu \Lambda)^{1/2} - \frac{\nu \Lambda}{2} 
 + \frac{25}{64} (\nu \Lambda)^{3/2} -\frac{49 \nu^2 \Lambda^2}{64} 
 + \frac{8139}{8192} (\nu \Lambda)^{5/2} - \frac{4543 \nu^3 \Lambda^3}{2048} +  O\lb \Lambda^{7/2} \rb,
\label{tilde-e2}
\\
\tilde{e}_3 & = \frac{1}{3} - (\nu \Lambda)^{1/2} - \frac{\nu \Lambda}{2} 
 -  \frac{25}{64} (\nu \Lambda)^{3/2} -\frac{49 \nu^2 \Lambda^2}{64} 
 - \frac{8139}{8192} (\nu \Lambda)^{5/2} - \frac{4543 \nu^3 \Lambda^3}{2048} +  O\lb \Lambda^{7/2} \rb.
\label{tilde-e3}
\end{align}
\end{subequations}
These behaviors are consistent with the equalities 
\begin{equation}
   4(\tilde{e}_1 \tilde{e}_2+\tilde{e}_1 \tilde{e}_3+\tilde{e}_2 \tilde{e}_3) = -\frac{4}{3}, \qquad 
   -4\tilde{e}_1 \tilde{e}_2 \tilde{e}_3=U.
\end{equation}
Moreover, we can show that the curve \eqref{eq:rescaled-sp-curve} has 
a limit when $\Lambda \to 0$, and the limiting curve 
\begin{equation} \label{eq:singular-elliptic-curve}
{\mathcal C}^{\rm res}_{\rm deg} ~:~ Y^2 = 4(X-1/3)^2 (X+2/3)
\end{equation}
is a degenerate elliptic curve, which is equivalent to 
the degenerate curve ${\mathcal C}_{\rm deg}$ under the same rescaling \eqref{eq:scaling-data}.
The $A$-cycle tends to small closed circle around the double point $1/3$.

\subsubsection{Series expansion of correlators} \label{subsec:TR-expansion}

Let us now compute the series expansion of the correlator $W_{g,n}$ in the limit $\Lambda \to 0$. 
Below, we will use the rescaled variable $X$ to express the correlators, 
and regard $X$ (and its copies) as a $\Lambda$-independent variable(s) in analyzing the small $\Lambda$ expansion. 

Using \eqref{g2E4}--\eqref{g3E6} and \eqref{nome_expansion_2}, we have 
\begin{equation}
U = \frac{8}{27} - 4 \nu \Lambda + \frac{15 \nu^2 \Lambda^2}{8}  + \frac{705 \nu^3 \Lambda^3 }{256} 
+ \frac{115755 \nu^4 \Lambda^4 }{16384} + \frac{23968161 \nu^5 \Lambda^5 }{1048576}
+ O(\Lambda^{6}).
\end{equation}
This allows us to see that $Y$ has the following series expansion
\begin{equation} \label{eq:series-Y}
Y = \sqrt{4X^3-4X/3+U} = \sum_{k \ge 0} Y^{[k]}(X) \, (\nu \Lambda )^k
\end{equation}
when $\Lambda \to 0$, where $Y^{[k]}$ are meromorphic functions on 
the degenerate curve ${\mathcal C}^{\rm res}_{\rm deg}$ 
and are explicitly given as 
\begin{equation}
Y^{[0]} = 2(X-1/3)\sqrt{X+2/3}, \quad 
Y^{[1]} = - \frac{1}{(X-1/3) \sqrt{X+2/3}}, \quad
Y^{[2]} = \frac{135X^3-45X-62}{288(X-1/3)^3 (X+2/3)^{3/2}},\ \dots
\end{equation}
These are independent of $\nu$ and $\Lambda$, and can be determined at any order in principle. 
The relation \eqref{eq:rescaled-A-period} implies that the
$A$-cycle tends to negatively-oriented small closed circle around $1/3$ so that
\begin{equation} \label{eq:branch-sqrt}
\frac{1}{2 \pi i} \oint_A  Y^{[k]} \, dX = - \Res_{X=1/3} Y^{[k]} \, dX =
\begin{cases} 
1 & k = 1, \\
0 & k \ne 1.
\end{cases} 
\end{equation}
The series expansion \eqref{eq:series-Y} immediately implies that 
$W_{0,1} = y dx = \Lambda^{-1} Y dX$ admits a series expansion of the form
\begin{equation} \label{eq:ser-W01}
W_{0,1}(X)  = \Lambda^{-1} \sum_{k \ge 0}  W_{0,1}^{[k]}(X) \, (\nu\,\Lambda)^k.
\end{equation}
Here, by definition, $W_{0,1}^{[k]} = Y^{[k]} \, dX$.

Next, let us compute the expansion of $W_{0,2}$. 
Using the addition formula for the $\wp$-function, 
we can express the Bergman bidifferential in \eqref{W02ell} as follows: 
\begin{equation} \label{eq:Bergman-via-XY}
W_{0,2}(X_1, X_2) 
= \left( \frac{1}{4}\Bigl( \frac{Y_1 + Y_2}{X_1 - X_2} \Bigr)^2 
- X_1 - X_2 + \frac{\widetilde{\eta}_A}{\widetilde{\omega}_A} \right) 
\frac{dX_1 \, dX_2}{Y_1 \, Y_2}.
\end{equation}
Here, $(X_i, Y_i) = (X_i, Y(X_i))$ ($i=1, 2$) 
denotes the copies of rescaled variables, and 
\begin{equation}  \label{eq:rescaled-omega-eta-A}
\widetilde{\omega}_A = \oint_{A} \dfrac{dX}{Y}, 
\quad
\widetilde{\eta}_A = - \oint_{A} \dfrac{X dX}{Y}.
\end{equation}
Combining the series expansion \eqref{eq:series-Y} of $Y$ with
\begin{equation}
\frac{\widetilde{\eta}_A}{\widetilde{\omega}_A} 
= \gamma^{-2} \frac{\pi^2}{3 \omega_A^2} E_2
= - \frac{1}{3} + \frac{3 \nu \Lambda}{4} + \frac{45 \nu^2 \Lambda^2}{32} 
+ \frac{19035 \nu^3 \Lambda^3}{4096} + \frac{2430855 \nu^4 \Lambda^4}{131072} 
+ \frac{1366185177 \nu^5 \Lambda^5}{16777216} + O(\Lambda^6), 
\end{equation}
which can also be computed by \eqref{nome_expansion_2}, 
we can obtain a series expansion of $W_{0,2}$ of the form 
\begin{equation} \label{eq:expansion-W02}
W_{0,2}(X_1, X_2) = \sum_{k \ge 0} W_{0,2}^{[k]}(X_1, X_2) \, (\nu \, \Lambda)^k.
\end{equation}
Here, the coefficients $W_{0,2}^{[k]}(X_1, X_2)$ are meromorphic bidifferentials 
on ${\mathcal C}^{\rm res}_{\rm deg}$ and are independent of $\nu$ and $\Lambda$. 
The first few coefficients are given explicitly as 
\begin{subequations}
\begin{align}
W_{0,2}^{[0]}(X_1, X_2) & = \frac{4 + 3X_1 + 3X_2 + 6 \sqrt{X_1+2/3} \sqrt{X_2 + 2/3}}
{12(X_1 - X_2)^2 \sqrt{X_1 + 2/3} \sqrt{X_2 + 2/3}}  \, dX_1 dX_2 
\\[+.5em]
W_{0,2}^{[1]}(X_1, X_2) & = \frac{dX_1 dX_2}{432 \, (X_1-1/3)^2(X_2-1/3)^2 (X_1+2/3)^{3/2}(X_2+2/3)^{3/2}}
\notag  \\
& \hspace{-2.5em}
\times \Bigl( 
34 + 66 (X_1 + X_2) + 36 (X_1^2 + X_2^2) + 171 X_1 X_2 + 27 X_1 X_2(X_1+X_2) + 81 X_1^2 X_2^2 \Bigr).
\label{eq:W02-2}
\end{align}
\end{subequations}


For general $(g,n)$, we have 
 
\begin{prop} \label{prop:expansion-Wgn}
For each  $g \ge 0$ and $n \ge 1$, 
the correlator $W_{g,n}$ of the spectral curve ${\mathcal C}$
has a convergent series expansion at $\Lambda = 0$ of the following form:
\begin{equation}\label{eq:expansion-Wgn}
W_{g,n}(X_1, \dots, X_n) = 
\Lambda^{2g-2+n} \,  \sum_{k \ge 0} 
W_{g,n}^{[k]}(X_1, \dots, X_n) \, (\nu \, \Lambda)^{k}.
\end{equation}
The coefficients $W_{g,n}^{[k]}$ are meromorphic multi-differentials 
on the degenerate curve ${\mathcal C}^{\rm res}_{\rm deg}$. 
\end{prop}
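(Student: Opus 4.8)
I would argue by induction on the recursion level $\chi_{g,n}=2g-2+n$. The two initial data already have the asserted shape: this is exactly the content of \eqref{eq:ser-W01} and \eqref{eq:expansion-W02}, both obtained from the convergent expansion \eqref{eq:series-Y} of $Y$ together with the small-$\Lambda$ expansion of $\widetilde{\eta}_A/\widetilde{\omega}_A$. For $\chi_{g,n}\ge 1$ the natural move is to feed the inductive hypothesis into the recursion \eqref{eq:top-rec}--\eqref{eq:R-gn}. A naive term-by-term passage to the limit is, however, obstructed: in the rescaled coordinate $X$ the three ramification points sit near $\tilde e_1\to -\tfrac23$ and $\tilde e_{2,3}\to\tfrac13$, so as $\Lambda\to 0$ the pair $r_2,r_3$ coalesces at $X=1/3$ (cf. \eqref{tilde-e2}--\eqref{tilde-e3}). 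The individual residues $\Res_{z=r_2}$ and $\Res_{z=r_3}$ then blow up in negative powers of $(\nu\Lambda)^{1/2}$, even though their sum is finite.

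The device to circumvent this is to rewrite the residue sum as a single contour integral over a \emph{fixed}, $\Lambda$-independent cycle. Since for $\chi_{g,n}\ge 1$ all correlators have poles in each variable only at ramification points, and $K(z_1,z)$ has in the $z$-variable only the poles at the $r_i$ and a simple pole at $z=z_1$, I would choose a contour $\Gamma$ on $\mathcal C^{\rm res}$ whose image in the $X$-plane encircles both limiting branch points $X=1/3$ and $X=-2/3$, keeps the arguments $z_1,\dots,z_n$, their $\sigma$-images and the basepoint $\tilde z$ outside, and stays at a fixed positive distance from $\{1/3,-2/3\}$ as $\Lambda\to 0$. By the residue theorem,
\beq
W_{g,n}\lb z_1,\dots,z_n\rb=\frac{1}{2\pi i}\oint_{\Gamma}K\lb z_1,z\rb\,R_{g,n}\lb z,z_2,\dots,z_n\rb,
\eeq
and the right-hand side no longer sees the coalescence, because $\Gamma$ is bounded away from $X=1/3$.

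On the fixed contour every ingredient of the integrand is analytic in $\Lambda$ near $\Lambda=0$: the rescaled curve \eqref{eq:rescaled-sp-curve} depends analytically on $U=U(\Lambda)$, a convergent power series in $\Lambda$, so $Y$ and the kernel $K$ expand in integer powers of $(\nu\Lambda)$ when $X$ is held fixed, and by the inductive hypothesis so does $R_{g,n}$. The overall prefactor then follows by power counting: the denominator $(y(z)-y(\sigma(z)))dx=2\gamma^5 Y\,dX$ contributes a factor $\gamma^{-5}=\Lambda$, so $K\sim\Lambda$, while both contributions in \eqref{eq:R-gn} carry total Euler characteristic $\chi_{g,n}-1$ by induction, giving $R_{g,n}\sim\Lambda^{\chi_{g,n}-1}$ and hence $W_{g,n}\sim\Lambda^{\chi_{g,n}}=\Lambda^{2g-2+n}$. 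Uniform convergence on the compact contour lets me integrate term by term, producing the convergent series \eqref{eq:expansion-Wgn}; each coefficient $W_{g,n}^{[k]}$ is a contour integral of a multidifferential meromorphic on the degenerate curve \eqref{eq:singular-elliptic-curve}, hence is itself meromorphic on $\mathcal C^{\rm res}_{\rm deg}$ with poles only at its ramification points.

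The main obstacle is precisely this coalescence of $r_2$ and $r_3$ at $\Lambda=0$, which destroys the naive limit of the Eynard--Orantin formula taken residue-by-residue; the entire argument rests on replacing that formula by the globally defined contour integral and checking that one $\Lambda$-independent cycle encloses all ramification points for all small $\Lambda$. A secondary point to verify is that the pole orders of the $W_{g,n}^{[k]}$ at the ramification points do not grow uncontrollably with $k$, so that the termwise integration is legitimate; here I would invoke the explicit bound $6g+2n-4$ on pole orders together with the structural description in Proposition~\ref{prop:pre-quasi-modular}.
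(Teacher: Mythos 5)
Your overall strategy --- trading the residues at the colliding ramification points for integrals over cycles that stay at a fixed distance from the collision --- is exactly the idea behind the paper's proof, which replaces \eqref{eq:top-rec} by the alternative formula \eqref{eq:alternative-TR-scaling} (the analogue of Lemma B.1 of \cite{Iwaki19}) and then expands term-wise. However, your central identity
\beq
W_{g,n}\lb z_1,\dots,z_n\rb=\frac{1}{2\pi i}\oint_{\Gamma}K\lb z_1,z\rb\,R_{g,n}\lb z,z_2,\dots,z_n\rb
\eeq
is asserted ``by the residue theorem'' without justification, and this is precisely where the genus-one nature of $\mathcal C$ bites. The kernel $K\lb z_1,z\rb$ is \emph{not} single-valued in $z$ on the torus: its numerator $\int_{\tilde z}^{z}W_{0,2}\lb z_1,\cdot\rb$ changes by $\oint_B W_{0,2}\lb z_1,\cdot\rb=2\pi i\, dz_1/\omega_A\ne 0$ when $z$ winds around the $B$-cycle --- this is why the paper insists on fixing a fundamental domain right after \eqref{W02ell}. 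If $\Gamma$ is, as you write, a \emph{single} contour whose $X$-image encircles both $1/3$ and $-2/3$, then on the side containing the ramification points it bounds a genus-one surface with boundary (the branched double cover of the inner disk, branched at three points), where Stokes-type arguments cannot be applied to a multivalued integrand. The identity is then actually false: the difference between $\oint_\Gamma$ and $2\pi i\sum_{r\in R}\Res_r$ is a period of $R_{g,n}\,dz_1\bigl/\bigl[\lb y-y\circ\sigma\rb dx\bigr]$, which is exactly the $A$-cycle term appearing as the first contribution in \eqref{eq:alternative-TR-scaling}; your formula silently drops it.

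The repair is to take two separate fixed contours, one around $X=-2/3$ and one around $X=1/3$, and then two points must be supplied. First, the preimage in $\Sigma$ of the circle around $X=1/3$ consists of \emph{two} closed curves (the branched double cover of that disk, branched at $\tilde e_2,\tilde e_3$, is an annulus containing $r_2,r_3$), and both components enter the integral. Second, merging $\Res_{r_2}+\Res_{r_3}$ into this integral requires $K$ to be single-valued on the intervening annulus; this holds only because the core of the annulus is homologous to the $A$-cycle and $\oint_A W_{0,2}=0$ by the normalization of the Bergman bidifferential. These two checks are the genuine content hidden in your ``by the residue theorem''; once they are in place, your remaining steps (uniform convergence of the integrand's $\Lambda$-expansion on the fixed contours, the power counting $K\sim\Lambda$, $R_{g,n}\sim\Lambda^{\chi_{g,n}-1}$, and term-wise integration) do go through and run parallel to the paper's argument. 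Two minor slips: your closing claim that the coefficients $W_{g,n}^{[k]}$ have poles only at the ramification points of $\mathcal C^{\rm res}_{\rm deg}$ is wrong for $k\ge1$ --- e.g.\ $W_{0,2}^{[1]}$ in \eqref{eq:W02-2} has poles at the two preimages of $X_i=1/3$, consistent with the paper's remark after the proof --- and the worry about pole orders growing with $k$ is irrelevant to term-wise integration, which only needs uniform convergence of the integrand on the fixed contours; neither affects the statement, which asserts only meromorphy.
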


\begin{proof}
Since we have already seen that $W_{0,1}$ and $W_{0,2}$ admit a series expansion, 
we will now show by induction that $W_{g,n}$ with $2g-2+n \ge 1$ also admit a series expansion. 

Let ${W}^{\rm res}_{g,n}$ be the correlator of the type $(g,n)$ defined by the topological recursion 
from the rescaled spectral curve ${\mathcal C}^{\rm res}$ defined in \eqref{eq:rescaled-sp-curve}. 
It follows from the relation $W_{0,1} = \Lambda^{-1} {W}^{\rm res}_{0,1}$ 
and the recursion relation \eqref{eq:top-rec} that 
\begin{equation}
W_{g,n}(X_1, \dots, X_n)  = \Lambda^{2g-2+n} {W}^{\rm res}_{g,n}(X_1, \dots, X_n)  
\label{eq:scaling-degree-Wgn}
\end{equation}
hold for all $g$ and $n$. 
Therefore, our task is to show the existence of the power series expansion of ${W}^{\rm res}_{g,n}$
when $\Lambda \to 0$. 
However, we should note that a collision of ramification points (i.e., $\tilde{e}_2, \tilde{e}_3 \to 1/3$) 
happens when $\Lambda \to 0$.
Since the integrand of the defining formula \eqref{eq:top-rec} for the correlator
has singularities at those points,  the original definition \eqref{eq:top-rec} 
is not useful to derive a series expansion. 

To overcome this difficulty, we use the following alternative expression of \eqref{eq:top-rec} 
which can be derived using the residue theorem in a manner analogous to \cite[Lemma B.1]{Iwaki19}:
\begin{align} 
W^{\rm res}_{g,n}(X_1, \dots, X_n) 
& = 
\frac{dX_1}{\widetilde{\omega}_A \, Y(X_1)} \,  
\oint_{X \in A} \frac{{R}^{\rm res}_{g,n}(X, X_2, \dots, X_n)}
{2 W_{0,1}^{\rm res}(X)} 
- 2 
\sum_{i=1}^{n}  \Res_{X=X_i} \frac{\int^{X'=X}_{X'=\infty} {W}^{\rm res}_{0,2}(X_1, X')}
{2 W_{0,1}^{\rm res}(X)} 
\, {R}^{\rm res}_{g,n}(X, X_2, \dots, X_n).
\label{eq:alternative-TR-scaling}
\end{align}
Here, ${R}^{\rm res}_{g,n}$ is obtained from $R_{g,n}$ given in \eqref{eq:R-gn} 
by replacing $W_{g,n}$ by ${W}^{\rm res}_{g,n}$, and 
we have chosen the base point in \eqref{eq:rec-ker} as $\tilde{z} = 0$.
Under the induction hypothesis, ${R}^{\rm res}_{g,n}$ also has a series expansion of the form 
\begin{equation}
{R}^{\rm res}_{g,n}(X, X_2, \dots, X_n)
= \sum_{k \ge 0} R_{g,n}^{[k]}(X, X_2, \dots, X_n)(\nu \Lambda)^{k}.
\end{equation}
The coefficients $R_{g,n}^{[k]}(X, X_2, \dots, X_n)$ are 
meromorphic multi-differentials on ${\mathcal C}^{\rm res}_{\rm deg}$. 
The aforementioned collision of ramification points 
does not cause the integration cycles in both terms of \eqref{eq:alternative-TR-scaling} to be pinched.
Therefore, the integration over the $A$-cycle and residue calculation at $X = X_i$ can be performed term-wise. 
In particular, the expansion coefficients of the $A$-cycle integral of the first term 
can be obtained term by term by calculating the residue at $X = 1/3$, similarly to \eqref{eq:branch-sqrt}.
Therefore, we can inductively verify that $W_{g,n}$ have a convergent series expansion 
of the form \eqref{eq:expansion-Wgn} when $\Lambda$ tends to $0$. 
This completes the proof of Proposition \ref{prop:expansion-Wgn}. 
\end{proof}

From the proof, it can also be seen that the coefficient $W_{g,n}^{[k]}$ 
with $2g-2+n \ge 1$ has poles only at branch points, infinity, and at two preimages of $X_i = 1/3$. 
These coefficients are independent of $\nu$ and $\Lambda$ and are explicitly computable at any order. 
This approach also enables us to derive the series expansion without requiring the explicit form 
of the correlator such as \eqref{W11ell02}. For example, the above algorithm provides
\begin{equation}
W_{0,3}^{[0]}(X_1, X_2, X_3) = 
\frac{dX_1 dX_2 dX_3}{32(X_1+2/3)^{3/2}(X_2+2/3)^{3/2}(X_3+2/3)^{3/2}}, \qquad
W_{1,1}^{[0]}(X_1) =  
\frac{(3X_1+5) dX_1}{192(X_1+2/3)^{5/2}}.
\end{equation}

\subsubsection{Series expansion of $F_g$ and conifold gap property}
\label{sec:series-exp-Fg}

We have already seen in \eqref{expansion_F0}--\eqref{expansion_F3} that 
the first few free energies $F_g$ admit a series expansion at $\Lambda = 0$. 
The main claim of this section, 
which justifies the previous properties \eqref{conigap}--\eqref{eq:kappa_g}, 
is the following. 

\begin{theo}\label{thm:coni-gap}
For each $g \ge 2$, the $g$-th free energy $F_g$ of the spectral curve ${\mathcal C}$ 
has a convergent series expansion at $\Lambda = 0$ of the following form:
\begin{equation} \label{eq:expansion-Fg}
F_g  = 
\frac{B_{2g}}{4g(g-1) \, \nu^{2g-2}} + \Lambda^{2g-2} \sum_{k \ge 0} F_g^{[k]} \, (\nu \, \Lambda)^k. 
\end{equation}
The coefficients $F_g^{[k]} \in {\mathbb C}$ are 
independent of $\nu$ and $\Lambda$ and are explicitly computable at any order. 
\end{theo}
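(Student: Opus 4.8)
The plan is to reduce the statement to the rescaled curve $\mathcal{C}^{\rm res}$ of \eqref{eq:rescaled-sp-curve} and to localize the singular behaviour at the pair of ramification points that collide as $\Lambda\to0$. First I would record the homogeneity of the free energy under the rescaling \eqref{eq:scaling-data}: since $W_{0,1}=ydx=\Lambda^{-1}YdX$ gives $\Phi=\Lambda^{-1}\Phi^{\rm res}$ with $\Phi^{\rm res}=\int YdX$, and since \eqref{eq:scaling-degree-Wgn} gives $W_{g,1}=\Lambda^{2g-1}W^{\rm res}_{g,1}$, the defining formula \eqref{Fgdef} yields $F_g=\Lambda^{2g-2}F^{\rm res}_g$ with $F^{\rm res}_g=\frac{1}{2-2g}\sum_{i=1,2,3}\Res_{X=\tilde e_i}\Phi^{\rm res}W^{\rm res}_{g,1}$. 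Thus \eqref{eq:expansion-Fg} is equivalent to the claim that $F^{\rm res}_g$, as a function of $w=\nu\Lambda$, has a Laurent expansion whose only negative power is the single term $\frac{B_{2g}}{4g(g-1)}w^{-(2g-2)}$; in the flat coordinate $\nu$ this is exactly the conifold gap, namely that $\nu^{-(2g-2)}$ is the sole pole of $F_g$.

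Next I would split the residue sum according to the fate of the three branch points. The point $\tilde e_1\to-2/3$ stays isolated, and since its position, the local Laurent data of $W^{\rm res}_{g,1}$ there, and $\Phi^{\rm res}$ all depend holomorphically on $\Lambda$ near this simple branch point, the residue $\Res_{X=\tilde e_1}\Phi^{\rm res}W^{\rm res}_{g,1}$ is holomorphic at $\Lambda=0$ and feeds only the non-negative part $\sum_{k\ge0}F^{[k]}_g w^k$. The whole singular contribution therefore comes from the colliding pair $\tilde e_2,\tilde e_3\to1/3$. Since $F^{\rm res}_g$ is invariant under the exchange $\tilde e_2\leftrightarrow\tilde e_3$, which fixes the $A$-cycle encircling both and acts as $\sqrt w\mapsto-\sqrt w$ on the expansions \eqref{tilde-e2}--\eqref{tilde-e3}, the function $F^{\rm res}_g$ is even in $\sqrt w$; hence only integer powers of $w$ occur and half-integer powers are excluded at once. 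The genuine difficulty is that, although Proposition \ref{prop:expansion-Wgn} shows $W^{\rm res}_{g,1}=\sum_{k\ge0}W^{[k]}_{g,1}w^k$ converges, its coefficients $W^{[k]}_{g,1}$ have poles of growing order at $X=1/3$; evaluating the residues at the moving points $\tilde e_{2,3}=1/3\pm\sqrt w+O(w)$ resums these into honest negative powers of $w$, so finitely many Taylor coefficients do not suffice and an exact treatment of the collision is needed.

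To resum the collision I would pass to the local coordinate $X=1/3+\sqrt w\,\xi$ set by the separation $\tilde e_2-\tilde e_3=2\sqrt w\,(1+O(w))$. There $W_{0,1}=YdX\to2w\,\sqrt{\xi^2-1}\,d\xi$, so near its node the degenerate curve $\mathcal{C}^{\rm res}_{\rm deg}$ of \eqref{eq:singular-elliptic-curve} develops the Weber (Gaussian) curve $Y_W^2=\xi^2-1$. Using the TR scaling $W_{0,1}\mapsto\kappa W_{0,1}\Rightarrow F_g\mapsto\kappa^{2-2g}F_g$ with $\kappa=2w$, together with the universal Gaussian value $\frac{B_{2g}}{2g(2g-2)}=\frac{B_{2g}}{4g(g-1)}$, the leading collision contribution is precisely $\frac{B_{2g}}{4g(g-1)}w^{-(2g-2)}$, the normalization factors of $2$ cancelling. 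The essential analytic input, carried out in Appendix \ref{appendix:constant-term}, is that the limit $\Lambda\to0$ commutes with the topological recursion: after an appropriate symplectic transformation the rescaled family degenerates to the Weber curve, and—working with the residue-theorem form of the recursion used in the proof of Proposition \ref{prop:expansion-Wgn}, whose $A$-cycle integral and residues at the $X_i$ remain unpinched in the limit—one shows term by term that the collision contribution equals the exact Weber free energy plus a remainder holomorphic in $\Lambda$.

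The main obstacle is precisely this last point, the gap itself: ruling out the intermediate negative powers $w^{-1},\dots,w^{-(2g-3)}$. The $O(\sqrt w)$ deformation of the local curve, coming from the variation of the factor $X-\tilde e_1$ across the collision region, naively threatens a contribution of order $w^{-(2g-3)}$; the point is that the Gaussian free energy at genus $g$ is a single power of its coupling with no subleading corrections, and symplectic invariance of $F_g$ for $g\ge2$ lets one transform the deformed local curve back to pure Weber form without altering the free energy, so these deformations can only enter the holomorphic remainder. Convergence of the resulting expansion \eqref{eq:expansion-Fg}, and the fact that each $F^{[k]}_g$ is a constant independent of $\nu$ and $\Lambda$, then follow from the convergence in Proposition \ref{prop:expansion-Wgn} applied to $W^{\rm res}_{g,1}$ together with the holomorphy of the $\tilde e_1$-residue established above.
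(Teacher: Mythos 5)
Your overall architecture---the reduction $F_g=\Lambda^{2g-2}F_g^{\rm res}$, the splitting of the residue sum \eqref{Fgdef} into the isolated point $\tilde e_1$ and the colliding pair, the parity argument in $\sqrt{w}$ ($w=\nu\Lambda$), and the identification of the leading singular term with the Gaussian free energy via the Weber limit (deferring to Appendix \ref{appendix:constant-term})---is sound and, for the constant term, coincides with the paper's strategy. The genuine gap is precisely the step you yourself flag as ``the main obstacle'': excluding the powers $w^{-1},\dots,w^{-(2g-3)}$. Your resolution---that ``symplectic invariance of $F_g$ lets one transform the deformed local curve back to pure Weber form without altering the free energy''---does not work. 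Eynard--Orantin symplectic transformations act on the entire spectral curve, not on a local model near the colliding branch points, and no such transformation can take the deformed curve to the exact Weber curve: if it could, $F_g$ would equal $F_g^{\rm Web}$ identically in $\Lambda$, contradicting the nonzero coefficients $F_g^{[k]}$ in your own expansion \eqref{eq:expansion-Fg}. What the global symplectic transformation \eqref{eq:symplectic-tr-to-Weber} together with commutation of TR with the limit actually delivers is only $\lim_{\Lambda\to0}F_g=F_g^{\rm Web}$, i.e.\ the constant term; the transformed curve $\widetilde{\mathcal C}$ differs from ${\mathcal C}_{\rm Web}$ already at order $\Lambda^{1/2}$, so nothing in your argument forbids corrections to $F_g$ at every intermediate order (parity removes only the half-integer ones). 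Moreover, the residues at $\tilde e_2,\tilde e_3$ involve the recursion kernel built from the Bergman kernel of the \emph{full} elliptic curve, with its $A$-cycle constant $S=\eta_A/\omega_A$, so the collision region is not a self-contained Weber recursion; making ``local Weber plus holomorphic remainder'' precise is essentially as hard as the theorem itself.

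The missing idea---and the way the paper actually proves the gap---is the variational formula $\partial F_g/\partial t=\Res_{z=0}W_{g,1}(z)/z$, in which the residue is taken above $X=\infty$, far away from the collision. Since that cycle is never pinched as $\Lambda\to0$, the residue can be evaluated term-wise in the expansion of Proposition \ref{prop:expansion-Wgn}, showing that $\partial F_g/\partial t$ equals $\Lambda^{2g-1-\frac15}$ times a power series in $\nu\Lambda$. Integrating back in $t$ (with $\Lambda\propto(-t)^{-5/4}$) shows that the entire $t$-dependent part of $F_g$ is $\Lambda^{2g-2}\sum_{k\ge0}F_g^{[k]}(\nu\Lambda)^k$---this \emph{is} the gap---and even determines each coefficient $F_g^{[k]}$ explicitly; the $t$-independent integration constant, which quasi-modularity (via \eqref{eq:Fg-quasi-modular-expression} and \eqref{eq:nuxi}) forces to be of the form $\kappa_g\nu^{-(2g-2)}$, is then the only quantity left for your collision/Weber-limit analysis to fix. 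Your proof becomes complete once the symplectic-invariance heuristic is replaced by an argument of this kind.
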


\begin{proof}
Since $\xi$, given in \eqref{eq:Yukawa}, has an expansion 
\begin{equation}
\xi = \frac{1}{\nu} \left[1  + \frac{141 \nu \Lambda}{32} + \frac{23151 \nu^2 \Lambda^2}{1024} 
 + \frac{7989387 \nu^3 \Lambda^3}{65536} + \frac{1412410545 \nu^4 \Lambda^4}{2097152} 
 + \frac{506532905109 \nu^5 \Lambda^5}{134217728} +O(\Lambda^6)
 \right], 
\end{equation} 
the expression \eqref{eq:Fg-quasi-modular-expression} implies that 
$\nu^{2g-2} F_g$ with $g \ge 2$ has a convergent series expansion with respect to $(\nu \Lambda)$. 
The non-trivial aspects of the statement are as follows:
\begin{itemize}
\item 
The gap property \eqref{conigap}; i.e., the absence of the coefficients of 
$\Lambda^{1}, \dots, \Lambda^{2g-1}$ in \eqref{eq:expansion-Fg}. 

\item 
The expression \eqref{eq:kappa_g} of leading term. 
\end{itemize}

The gap property can be shown as follows. Here we employ the variational formula
\begin{equation} \label{eq:vari-Fg}
\frac{\partial F_g}{\partial t} 
= \Res_{z=0} \frac{W_{g,1}(z)}{z} 
= 2 \Res_{X=\infty} \left( \Lambda^{2g-1 - \frac{1}{5}} \sum_{k \ge 0} 
X^{\frac{1}{2}}  W_{g,1}^{[k]}(X) \, (\nu \Lambda)^{k}  \right),
\end{equation}
which is valid for $g \ge 1$ (see \cite[Theorem 5.1]{EO07} and \cite[Proposition 3.4]{Iwaki19}). 
Since the residue cycle around $X = \infty$ is not pinched by 
the merging pair $\tilde{e}_2, \tilde{e}_3$ of ramification points, 
we can take the residue in the right hand side of \eqref{eq:vari-Fg} term-wise. 
Thus we can verify that $\partial F_g/ \partial t$ has a series expansion at $\Lambda = 0$, 
and the absence of the coefficients of 
$\Lambda^{1}, \dots, \Lambda^{2g-1}$ in \eqref{eq:expansion-Fg}. 
Moreover, the variational formula \eqref{eq:vari-Fg} fixes all series coefficients of $F_g$ by
\begin{equation} \label{eq:Fgk-formula}
\frac{15(2g-2+k)}{8} \, F_g^{[k]} = 
2 \Res_{X=\infty} X^{\frac{1}{2}} W_{g,1}^{[k]}(X) \qquad (k \ge 0).
\end{equation} 
except for a $t$-independent leading term. 

Next, let us prove the formula \eqref{eq:kappa_g} for the leading term of $F_g$. 
The proof is technical and would be lengthy if presented in full, 
so here we will outline several key equalities, 
with the detailed derivation provided in Appendix \ref{appendix:constant-term}.
To prove \eqref{eq:kappa_g}, we apply the following symplectic transformation 
$(x,y) \mapsto (\widetilde{X}, \widetilde{Y})$ 
to the (unrescaled) spectral curve ${\mathcal C}$: 
\begin{equation} \label{eq:symplectic-tr-to-Weber}
(x,y) = \left( \frac{{e}_2+ {e}_3}{2} + \alpha \widetilde{X}, \alpha^{-1} \widetilde{Y} \right), 
\qquad 
\alpha = \left(-\frac{1}{24 {e}_1}\right)^{1/4},
\end{equation}
where $e_1, e_2, e_3$ are the zeros of the right hand side of \eqref{eq:spcurve}. 
It transforms the spectral curve ${\mathcal C}$ into 
\begin{equation} \label{eq:spcurve-Weber-expression}
\widetilde{\mathcal C} ~:~\widetilde{Y}^2 = (1 + 16 \alpha^5  \widetilde{X} ) 
\left( \frac{\widetilde{X}^2}{4} - \frac{({e}_2 - {e}_3)^2}{16 \alpha^2}\right).
\end{equation}
It is known (and easily verified in this case) that 
the symplectic transformation \eqref{eq:symplectic-tr-to-Weber} 
preserves the correlator $W_{g,n}$ and free energy $F_{g}$ (cf. \cite[\S 7]{EO07}). 
The key observation is the following: 
\begin{equation}
\lim_{\Lambda \to 0}(1 + 16 \alpha^5  \widetilde{X} ) 
\left( \frac{\widetilde{X}^2}{4} - \frac{(e_2-e_3)^2}{16 \alpha^2}\right)
= \frac{\widetilde{X}^2}{4} - \nu.
\end{equation}
This follows immediately from the behavior \eqref{tilde-e1}--\eqref{tilde-e3} of $e_i$. 
Therefore, the limit of the spectral curve $\widetilde{\mathcal C}$ becomes
\begin{equation} \label{eq:Weber-curve}
{\mathcal C}_{\rm Web} ~:~ \widetilde{Y}^2 = \frac{\widetilde{X}^2}{4} - \nu, 
\end{equation}
which is called the Weber curve. 
As is well-known, the free energy $F_{g}^{\rm Web} $ of 
the Weber curve ${\mathcal C}_{\rm Web}$ is given by 
\begin{equation} \label{eq:Weber-free-energy-explicit}
F_{g}^{\rm Web} = \frac{B_{2g}}{4g(g-1) \, \nu^{2g-2}} \qquad (g \ge 2).
\end{equation}
(See \cite[Theorem 1]{Nor09}, \cite[Theorem 4.9]{IKoT1} for the proof). 
Therefore, the desired property \eqref{eq:kappa_g} is established by proving that 
the topological recursion and limit of the spectral curve commute, that is, 
\begin{equation} \label{eq:limit-Fg-appendix}
\lim_{\Lambda \to 0} F_g = F_{g}^{\rm Web}
\quad (g \ge 2).
\end{equation}
The details of this final step are provided in Appendix \ref{appendix:constant-term}. 
\end{proof}

\begin{rmk}
In cases where a family of spectral curves depending on a parameter arises, as discussed in this paper, 
the commutativity between the limit with respect to the parameter and the application of 
topological recursion often becomes a significant issue.
Indeed, in \cite{BouSul12}, an example is provided where the limit of the spectral curve and the topological recursion do not commute.
The commutativity with the limit has been discussed in a recent paper \cite{BBCKS23}, 
where the "constant genus assumption" is imposed, 
meaning that the genus of the spectral curve remains fixed during the limiting process. 
However, the example treated in this paper provides an example where the elliptic curve 
reduces to the Weber curve with genus $0$, yet the topological recursion commutes with the limit.
\end{rmk}

\begin{rmk}
Slightly digressing from the main discussion, let us make a brief yet meaningful observation here.
As stated in the Proposition \ref{prop:expansion-Wgn}, 
the coefficients $W_{g,n}^{[k]}$ define multi-differentials on the degenerate curve ${\mathcal C}_{\rm deg}^{\rm res}$. 
In fact, the constant term $W_{g,n}^{[0]}$ coincides with the correlator 
of the degenerate curve ${\mathcal C}_{\rm deg}$ discussed in Section \ref{subsec_warmup} 
(through the rescaling in equation \eqref{eq:scaling-data}). 
More specifically, if we denote by $W_{g,n}^{\rm deg}$ the correlator
of ${\mathcal C}_{\rm deg}$, then the following holds:
\begin{equation} \label{eq:Wgn-0-and-deg}
W_{g,n}^{\rm deg}(z_1, \dots, z_n) = \Lambda^{2g-2+n} W_{g,n}^{[0]}(X_1, \dots, X_n)
\end{equation}
under the change of coordinate $X_i = (z_i/ \gamma)^2 - 2/3$ for $i = 1, \dots, n$.
This relation can be easily proved by comparing the first terms of the 
$\Lambda$-expansion on both sides of \eqref{eq:alternative-TR-scaling}. 
(It can be similarly shown that $W_{g,n}^{\rm deg}$ satisfies the same relation; see \cite[Theorem 3.11]{IS16}.) 
Combining the variational formula \eqref{eq:vari-Fg} with \eqref{eq:Wgn-0-and-deg}, 
we can also prove that the the free energy $F_{g}^{\rm deg}$ 
of the degenerate curve ${\mathcal C}_{\rm deg}$ appears in the expansion 
\eqref{eq:expansion-Fg} of $F_g$ as 
\begin{equation} \label{eq:Fgdeg-as-subleading}
F_{g}^{\rm deg} = \Lambda^{2g-2} F_{g}^{[0]} \quad (g \ge 2).
\end{equation}
\end{rmk}

     \section{Holomorphic anomaly equation\label{sec_HAE}}
     
     This section is devoted to the so-called holomorphic anomaly equation (HAE), 
     given in \eqref{eq:HAE} below. 
     We review a method of recursive computation of the TR free energies, based on the HAE and 
     the conifold gap property shown 
     in Theorem \ref{thm:coni-gap}.
     The method itself is a well-known technique that originated in \cite{BCOV93v2} and has been widely employed in the context of topological string theory (higher genus B-model), supersymmetric gauge theory and quantum mechanics \cite{HK06,HK09,KW10,HK10,HKPK11,CM16,FMP23,FGMS23}. We are concerned here with a proof of existence and uniqueness of the HAE solutions. As explained below, with the rigorous proof of the conifold gap property 
     the existence is established as an application. 
     Besides, we introduce special bases for holomorphic ambiguities and use them to prove uniqueness of the HAE solution.   
     
     \subsection{General setup}
 
     In what follows, we will mean by $\partial F_g / \partial E_2$  the derivative of 
     the expression \eqref{eq:Fg-quasi-modular-expression} for $F_g$ obtained by the differentiation of the polynomial $P_g\lb E_2,E_4,E_6\rb$ with respect to $E_2$, regarding the prefactor of $P_g$ as a constant.
     The following theorem relates $\partial F_g / \partial E_2$ to lower genus free energies.

     \begin{theo}[{\cite[Theorem 6.1]{EO07}}] \label{thm:TR-HAE}
     	For $g \ge 2$, the $g$-th free energy of the \PIeq spectral curve \eqref{eq:sp-PI} satisfies the
     	holomorphic anomaly equation 
     	\begin{equation} \label{eq:HAE}
     		\frac{\partial F_g}{\partial E_2} =  - \frac{1}{24} 
     		\left[ \frac{\partial^2 F_{g-1}}{\partial \nu^2} 
     		+ \sum_{h=1}^{g-1} \frac{\partial F_{h}}{\partial \nu} 
     		\, \frac{\partial F_{g-h}}{\partial \nu} \right]. 
     	\end{equation}
     \end{theo}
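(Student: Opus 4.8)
The plan is to recognize that the entire dependence of $F_g$ on $E_2$ enters through the single constant $S=\tfrac{\pi^2}{3\omega_A^2}E_2$ sitting inside the Bergman bidifferential $W_{0,2}=\lb\wp\lb z_1-z_2\rb+S\rb dz_1dz_2$, and then to identify the $S$-derivative of the free energy with the Eynard--Orantin variation of $F_g$ under a Bergman-kernel deformation. First I would reduce $\partial_{E_2}$ to a formal $S$-derivative. By the structure \eqref{eq:rational-expression-Fg}, $F_g=\Delta^{-\lb 2g-2\rb}\sum_k f_{g,k}\lb t,u\rb S^k$ is a polynomial in $S$ whose coefficients, together with $\Delta$, depend only on the elliptic invariants $\lb g_2,g_3\rb$ (equivalently $\lb t,u\rb$) and hence not on $E_2$. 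Since the prescribed $\partial_{E_2}$ differentiates $P_g$ while holding the prefactor $\lb\xi/E_4\rb^{2g-2}$ of Proposition~\ref{prop_FgStructure} fixed, and $\xi$, $E_4$, $\omega_A$ are all $E_2$-independent, this operation is exactly $\partial_{E_2}F_g=\tfrac{\pi^2}{3\omega_A^2}\,\partial_S F_g$, where $\partial_S$ is the formal derivative of the polynomial in $S$ at fixed curve.

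The core step is to compute $\partial_S F_g$ directly from topological recursion. The only place $S$ enters the recursion \eqref{eq:top-rec}--\eqref{eq:rec-ker} is through $W_{0,2}$, and $\partial_S W_{0,2}\lb z_1,z_2\rb=dz_1dz_2$ is a globally holomorphic bilinear differential, i.e.\ $\partial_S W_{0,2}=\omega_A^2\,du\lb z_1\rb du\lb z_2\rb$ with the normalized holomorphic differential $du=dz/\omega_A$, $\oint_A du=1$. This is precisely an admissible deformation of the spectral curve in the sense of \cite{EO07}, and I would apply the corresponding variational theorem: acting by $\partial_S$ cuts the one edge carrying the insertion, which either lowers the genus by one or splits the surface, producing
\beq
\partial_S F_g=\frac{\kappa}{2}\left[\oint_{z'\in B}\oint_{z''\in B}W_{g-1,2}\lb z',z''\rb+\sum_{h=1}^{g-1}\oint_{z'\in B}W_{h,1}\lb z'\rb\oint_{z''\in B}W_{g-h,1}\lb z''\rb\right],
\eeq
where each created leg is contracted, through the holomorphic insertion $\partial_S W_{0,2}$, against the $B$-period. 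To turn the $B$-periods into $\nu$-derivatives I would use the special-geometry relation $\partial_\nu W_{g,n}=\oint_{B}W_{g,n+1}$, valid because $\nu=\tfrac{1}{2\pi i}\oint_A W_{0,1}$ is the Eynard--Orantin filling fraction; this gives $\oint_B W_{h,1}=\partial_\nu F_h$ and, upon one further iteration, $\oint_B\oint_B W_{g-1,2}=\partial_\nu\!\oint_B W_{g-1,1}=\partial_\nu^2 F_{g-1}$. Substituting reproduces the bracket on the right of \eqref{eq:HAE}.

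It remains to pin the constants. Rauch's variational formula $\partial_\tau W_{0,2}=2\pi i\,du\,du$ together with $\partial_S W_{0,2}=\omega_A^2\,du\,du$ gives $\partial_S=\tfrac{\omega_A^2}{2\pi i}\partial_\tau$; matching against the standard form $\partial_\tau F_g=\tfrac{1}{4\pi i}\bigl[\partial_\nu^2F_{g-1}+\sum_h\partial_\nu F_h\,\partial_\nu F_{g-h}\bigr]$ fixes the contraction normalization as $\kappa=\lb\tfrac{\omega_A}{2\pi i}\rb^2$. Then
\beq
\partial_{E_2}F_g=\frac{\pi^2}{3\omega_A^2}\cdot\frac{1}{2}\lb\frac{\omega_A}{2\pi i}\rb^2\left[\partial_\nu^2 F_{g-1}+\sum_{h=1}^{g-1}\partial_\nu F_h\,\partial_\nu F_{g-h}\right]=-\frac{1}{24}\left[\partial_\nu^2 F_{g-1}+\sum_{h=1}^{g-1}\partial_\nu F_h\,\partial_\nu F_{g-h}\right],
\eeq
which is \eqref{eq:HAE}. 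As an independent check I would verify the identity at $g=2,3$ against the closed forms \eqref{F2and3} and the modular periods \eqref{numodular}, \eqref{F0modular}. The main obstacle is the bookkeeping in the middle paragraph: justifying that $\partial_S$ is an admissible Bergman-kernel deformation at fixed elliptic invariants, and tracking the $2\pi i$ and $\omega_A$ factors through the contraction of the two holomorphic legs and the filling-fraction relation so that $\kappa=\lb\omega_A/2\pi i\rb^2$ emerges precisely; once this is settled, everything downstream is a clean substitution.
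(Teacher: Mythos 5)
Your route is the same as the paper's: locate the entire $E_2$-dependence of $F_g$ in the constant part $S$ of the Bergman bidifferential \eqref{W02ell}, apply the Eynard--Orantin variational theorem for deformations of the Bergman kernel (\cite[Theorem 6.1]{EO07}) to express that variation through the double $B$-period of $W_{g-1,2}$ plus the splitting sum, convert $B$-periods into $\nu$-derivatives via $\partial_\nu F_h=\oint_B W_{h,1}$ (\cite[Theorem 5.1]{EO07}, derivative taken at fixed $t$), and finish with the chain rule coming from $S=\pi^2E_2/3\omega_A^2$. This is literally the paper's proof, where the kernel is written as $W_{0,2}=\lb\omega_A^2\wp\lb z_1-z_2\rb+2\pi i\kappa\rb\frac{dz_1}{\omega_A}\frac{dz_2}{\omega_A}$ with $\kappa=\eta_A\omega_A/2\pi i$, and the normalization of \cite{EO07}, namely $2\pi i\,\partial F_g/\partial\kappa=\frac12\left[\cdots\right]$, is quoted as is.

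The one step that does not hold up as written is your calibration of the contraction constant. You fix it by matching against ``the standard form'' $\partial_\tau F_g=\frac{1}{4\pi i}\bigl[\partial_\nu^2F_{g-1}+\sum_h\partial_\nu F_h\,\partial_\nu F_{g-h}\bigr]$. Read literally, as a holomorphic $\tau$-derivative, this identity is false: $\partial_\tau$ acts on $F_g$ also through $E_4$ and $E_6$ (equivalently through $g_2$, $g_3$, hence through $\wp$ and $y$), and those variations are not captured by the right-hand side --- the anomaly equation concerns only the derivative ``through the kernel constant'' ($\partial_{E_2}$ formally, equivalently $\partial_{\bar\tau}$ of the almost-holomorphic completion, equivalently $\partial_\kappa$). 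Read correctly, i.e.\ as the derivative through the Bergman-kernel deformation alone, the quoted identity \emph{is} precisely the normalization of \cite[Theorem 6.1]{EO07} that you are trying to reconstruct, so the matching is circular. The repair is immediate and is what the paper does: cite that theorem with its stated normalization, $2\pi i\,\partial_\kappa F_g=\frac12\left[\cdots\right]$ for the family $B\mapsto B+2\pi i\kappa\,du\,du$. Since $\delta S=2\pi i\,\delta\kappa/\omega_A^2$, this yields $\partial_S F_g=-\frac{\omega_A^2}{8\pi^2}\left[\cdots\right]$, which equals your $\frac{1}{2}\lb\omega_A/2\pi i\rb^2$, and the rest of your computation --- including the final $-1/24$ --- goes through unchanged.
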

     
     \pf
     	Our Bergman bidifferential \eqref{W02ell} is written as 
     	\begin{equation}
     		W_{0,2}\lb z_1, z_2\rb = \left( \omega_A^2 \wp(z_1 - z_2) + 2 \pi i \kappa \right) \, 
     		\frac{dz_1}{\omega_A} \, \frac{dz_2}{\omega_A}, \qquad
     		\kappa = \frac{\eta_A \omega_A}{2\pi i}.
     	\end{equation}
     	As shown in \cite[Theorem 6.1]{EO07}, the derivatives of $W_{g,n}$ and $F_g$ 
     	with respect to $\kappa$ can be written again by certain period integrals 
     	of $W_{g,n}$ with different $(g,n)$.
     	The explicit formula for $\partial F_g / \partial \kappa$ reads 
     	\begin{align}
     		2\pi i \frac{\partial F_g}{\partial \kappa} & = 
     		\frac{1}{2} \oint_{z_1 \in B} \oint_{z_2 \in B} W_{g-1,2}(z_1,z_2) 
     		+ \frac{1}{2} \sum_{h=1}^{g-1} \left( \oint_{z_1 \in B} W_{h, 1}(z_1)  \right)
     		\, \left( \oint_{z_2 \in B} W_{g-h, 1}(z_2)  \right) \notag  \\
     		& = \frac{1}{2} \left[ \frac{\partial^2 F_{g-1}}{\partial \nu^2}
     		+ \sum_{h=1}^{g-1} \frac{\partial F_{h}}{\partial \nu} 
     		\, \frac{\partial F_{g-h}}{\partial \nu} \right],
     		\qquad g \ge 2. 
     		\label{eq:EO-HAE}
     	\end{align}  
     	Here we have used the variational formula $\ds\frac{\partial F_g}{\partial\nu}=\ds\oint_B W_{g,1}$, where the derivative with respect to $\nu$ is computed at fixed $t$, see 
     	\cite[Theorem 5.1]{EO07}. The result \eqref{eq:HAE} then immediately follows from the relation \eqref{XE2} rewritten as $2 \pi i \kappa = {\pi^2 E_2} / {3}$.
     \epf
     
     The HAE \eqref{eq:HAE} can be used for very effective recursive computation of $F_g$ starting from $F_1$ given by 
     \eqref{eq:expression-F0-F1}. Just as in the previous section, the free energies $F_g$ can be considered as functions of variables $\lb \nu,t\rb$ or,
     equivalently, of variables $\lb\omega_A,q\rb$. We will rewrite 
     the $\nu$-derivatives for fixed $t$ on the right of \eqref{eq:HAE} in terms of the derivative
     $D_\tau=q\frac{\partial}{\partial q}$, 
     \begin{equation}\label{eq:nuqchain}
     	\frac{\partial}{\partial \nu} f\lb q\rb =\xi\lb q\rb \cdot D_\tau f(q),
     	\qquad \xi(q)=\frac{\partial}{\partial \nu}\ln q .
     \end{equation}
     Let us show that $\xi\lb q\rb$ is given by \eqref{eq:Yukawa}. We will need the differentiation formulas for the Eisenstein series with respect to the nome $q$,
     \begin{equation}\label{eq:diffE}
     	D_\tau E_2 = \frac{1}{12}\left(E_2^2 -E_4\right),\qquad  
     	D_\tau E_4 = \frac{1}{3}\left(E_2 E_4 -E_6\right), \qquad 
     	D_\tau E_6 = \frac{1}{2}\left(E_2 E_6 -E_4^2\right),
     \end{equation}
     The relation between $\nu$ and $q$ can be deduced from two ways to compute the ratio $g_3^2/g_2^3$:
     \begin{equation}
     	\frac{u^2}{\lb -2t\rb^3}=\frac{g_3^2}{g_2^3}=\frac{E_6^2}{27E_4^3}.
     \end{equation}
     The derivative of the left hand side  with respect to $\nu$ includes 
     \begin{equation}
     	\frac{\partial u}{\partial \nu}=\frac{4\pi i}{\omega_A},
     \end{equation}
     coming from the differentiation of \eqref{eq:nu}. The derivative of the right hand side
     can be computed using the chain rule \eqref{eq:nuqchain} together with \eqref{eq:diffE}. 
     Equating the two results leads to the expression \eqref{eq:Yukawa} for the quantity $\xi\lb q\rb$, known as the Yukawa coupling in topological string theories. Note that   $\xi\lb q\rb$ actually depends not only on the nome $q$ but also on $\omega_A$.
     	
     	Additionally, we will need the $\nu$-derivative of $\omega_A$ for fixed $t$. It follows from the $\nu$-derivative of 
     	$g_2=-2t$ given by \eqref{g2E4} that
     	\begin{equation}\label{eq:dernu-omega}
     		\frac{\partial}{\partial \nu} \ln \omega_A = \frac14 \frac{\partial}{\partial \nu} \ln E_4(q).
     	\end{equation}
     	Equations \eqref{eq:dernu-omega} and \eqref{eq:nuqchain} allow to rewrite 
     	the right hand side of \eqref{eq:HAE} in terms of derivatives in the nome $q$ only.
     	
     	The relation \eqref{eq:HAE} thus determines the coefficients $C_{a,b,k}$ with $k>0$
     	in the expression \eqref{eq:Fg-quasi-modular-expression} for $F_g$.
     	The remaining coefficients $C_{a,b}:=C_{a,b,0}$
     	appear in the $E_2$-independent part of $F_g$, the so-called holomorphic ambiguities
     	\begin{equation} \label{eq:hol-ambiguity}
     		H_g=\left( \frac{\xi}{E_4}  \right)^{2g-2} h_g,
     		\qquad
     		h_g= \sum_{\substack{a,b \ge 0 \\ 4a+6b=14 \lb g-1\rb}} C_{a,b} \, E_4^a E_6^b \in \mathcal{M}_{14\lb g-1\rb},
     	\end{equation}
     	where $\mathcal{M}_k$ is the space of $\mathrm{SL}_2\lb \mathbb{Z}\rb$ modular forms of weight $k$. 
     	In the next subsection, we will discuss how to fix them using the gap condition.
   
       \subsection{Fixing holomorphic ambiguity}
       \label{subsec:Fixing-holomorphic-ambiguity}
   Here we show that the conifold gap properties 
       \eqref{eq:expansion-Fg}
       of free energies are sufficient to fix the holomorphic ambiguities \eqref{eq:hol-ambiguity},  
       and give a recursive algorithm for the computation of $F_g$. 
       The crucial point here is that after the initial constant term in the asymptotic expansion of $F_g$, 
       the coefficients of all powers from 
       $\Lambda^{1}$ up to $\Lambda^{2g-3}$  
       vanish (indicating the presence of a gap). 
       This gives a set of constraints on the coefficients $C_{a,b}$
       in \eqref{eq:hol-ambiguity}. 
       
      The gap condition in terms of $\Lambda$-expansion can be reformulated in terms of $q$-expansion 
      with the help of relation \eqref{nome_expansion_2} and its inverse.  
      Indeed, the gap property of $F_g$  described by Theorem~\ref{thm:coni-gap}, 
      is equivalent to vanishing of the coefficients of $q^k$ with $k=1,\ldots,2g-3$, in the $q$-expansion of $\nu^{2g-2} F_g$: 
       	\begin{equation} \label{eq:q-expansion-Fg}
       		\nu^{2g-2} F_g  = \frac{B_{2g}}{4g(g-1)} + q^{2g-2} \sum_{k \ge 0} \widetilde F_g^{[k]} \, q^k . 
       	\end{equation}
       	Here the coefficients $\widetilde F_g^{[k]}\in\mathbb C$ are obtained by re-expanding 
	    \eqref{eq:expansion-Fg} as a $q$-series by using the inverse of the relation \eqref{nome_expansion_2}. 
      The fact that $\nu^{2g-2}F_g$ depends only on $q$   follows 
       from \eqref{eq:Fg-quasi-modular-expression} and the relation 
       \begin{equation}\label{eq:nuxi}
       	\frac{\nu \xi}{E_4}= \frac{E_2E_4-E_6}{720\eta^{24}}=1+42q+840 q^2+11340 q^3+\ldots,
       \end{equation}
       which, in turn, is a consequence of \eqref{numodular} and \eqref{eq:Yukawa}.

       We will use the conifold gap condition described by \eqref{eq:q-expansion-Fg}
       to fix the holomorphic ambiguity $h_g$ given by \eqref{eq:hol-ambiguity} which is 
       a modular form of weight $14(g-1)$. 
       By this  we refer to fixing $h_g$ through the requirement on $\nu^{2g-2}F_g$ to have the coefficients 
       of $q^j$, $j=0,1,\ldots,2g-3$ in accordance with the right hand side of \eqref{eq:q-expansion-Fg}. 
       To implement this, the key idea is to choose a convenient special basis in the corresponding space of modular forms $\mathcal{M}_k$:
       \begin{equation}\label{nice_basis}
       	V_j= E_4^{a_0-3j}E_6^{b_0} \Delta^j\lb q\rb, \qquad j=0,\ldots, \dim \mathcal{M}_k-1,  
       \end{equation}
       where $\Delta\lb q\rb=\eta^{24}\lb q\rb$ is a cusp form of weight 12, and
       \begin{equation}
       	b_0=\begin{cases}
       		1, \quad k=2 \mod 4,\\
       		0, \quad k=0 \mod 4,
       	\end{cases}
       	\qquad a_0=\frac{k-6b_0}{4},\qquad 
       	\dim \mathcal{M}_k =1+[a_0/3].
       \end{equation}
       The cusp form $\Delta\lb q\rb$ is related to, but should not be confused with, the discriminant $\Delta$ of the Weierstrass curve, cf. \eqref{eq:delta-eta}.
      The main reason for choosing the basis \eqref{nice_basis} is its expansion in powers of $q$: 
       \begin{equation}\label{eq:Vj-expansion}
       	V_j=q^j\lb 1+O\lb q\rb\rb.
       \end{equation}
       In this basis, the holomorphic ambiguity takes the form
       \begin{equation}
       	h_g=\sum_{j=0}^{d_g-1} \alpha_j V_j,
       	\qquad
       	d_g=\dim \mathcal{M}_{14\lb g-1\rb}.
       \end{equation}
       
       The procedure of using HAE \eqref{eq:HAE}  to find all $F_g$ is recursive starting from $F_2$ and using $F_1$. 
       At each recursion step, $F_g$ can be presented in the form  \eqref{eq:Fg-quasi-modular-expression} with 
       unknown $h_g=P_g\lb 0,E_4,E_6\rb$ which is the $E_2$-free part of the polynomial $P_g$. 
       The conifold gap condition gives a system of  $2g-2$ linear constraints on the coefficients 
       $\alpha_j$, ${j=0,\ldots,d_g-1}$. Since the dimension $d_g$ is given by
       \beq
       d_g=\begin{cases}
       	\left[\frac{7g-4}{6}\right],\qquad & g\in2\mathbb N,\vspace{0.1cm}\\
       	\left[\frac{7g-1}{6}\right],\qquad & g\in 2\mathbb N+1,
       \end{cases}
       \eeq
        for any $g\ge 2$ we have $d_g<2g-2$ and the system of constraints is overdetermined. 
       Indeed, already the first $d_g$ constraints corresponding to the first $d_g$ powers of $q$ uniquely fix all $\alpha_j$  due to the unitriangular form of the matrix of coefficients following from \eqref{eq:Vj-expansion}. All the other constraints must be satisfied automatically to ensure the existence of solutions of HAE supplemented with the conifold gap condition. The existence should follow from the properties of modular forms entering the procedure, though we do not have a  direct proof of this fact.
       However, we do know already that the TR free energies $F_g$ satisfy
       both  the conifold gap property (cf. Theorem \ref{thm:coni-gap}) 
       and the holomorphic anomaly equation (cf. Theorem \ref{thm:TR-HAE}), 
       therefore a solution of the overdetermined system of constraints necessarily exists! 
       
       Summarizing, we have the following
       \begin{theo} \label{thm:sol-HAE-weak}
       	For any $g \ge 2$, the holomorphic anomaly equations \eqref{eq:HAE} 
       	supplemented with the  \underline{weak} conifold gap constraint 
       	\begin{equation} \label{eq:q-expansion-Fg-weak}
       		\nu^{2g-2} F_g  = \kappa_g + O\lb q^{d_g}\rb,  
       	\end{equation}
       	where $d_g=\dim \mathcal{M}_{14\lb g-1\rb}$ and $\kappa_g$ is given by \eqref{eq:kappa_g}, 
       	define $F_g$ uniquely by recursion in $g$ starting from $F_1$ given by \eqref{eq:expression-F0-F1}. 
       \end{theo}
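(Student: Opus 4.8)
The plan is to establish the theorem by separating the two logically distinct assertions it contains. Existence of a solution is not really at stake: by Theorem~\ref{thm:coni-gap} and Theorem~\ref{thm:TR-HAE} the TR free energies $F_g$ satisfy the full conifold gap property, and hence a fortiori its weak version \eqref{eq:q-expansion-Fg-weak}, as well as the holomorphic anomaly equation \eqref{eq:HAE}. Thus at least one solution exists at every order, and the substance of the statement is that the HAE together with the weak gap constraint \emph{uniquely} determines $F_g$ once $F_1,\ldots,F_{g-1}$ are known. I would therefore set up an induction on $g$, initialized by the explicit $F_1$ of \eqref{eq:expression-F0-F1}, and prove uniqueness of the inductive step.

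Under the inductive hypothesis that $F_1,\ldots,F_{g-1}$ are already fixed, the entire right-hand side of \eqref{eq:HAE} is a known quantity. Converting the $\nu$-derivatives into $q$-derivatives by the chain rule \eqref{eq:nuqchain}, \eqref{eq:dernu-omega} and the differentiation formulas \eqref{eq:diffE}, it becomes an explicit quasi-modular expression. By Proposition~\ref{prop_FgStructure} we may write $F_g=(\xi/E_4)^{2g-2}P_g(E_2,E_4,E_6)$ with $P_g$ of the constrained shape \eqref{eq:quasi-modular-part}. Integrating \eqref{eq:HAE} in $E_2$ with $E_4,E_6$ and the prefactor held fixed then determines every coefficient $C_{a,b,k}$ of $P_g$ with $k\ge 1$; the only surviving freedom is the $E_2$-independent piece, namely the holomorphic ambiguity $h_g=P_g(0,E_4,E_6)\in\mathcal{M}_{14(g-1)}$ of \eqref{eq:hol-ambiguity}, carrying the $d_g=\dim\mathcal{M}_{14(g-1)}$ free parameters $\alpha_0,\ldots,\alpha_{d_g-1}$ in the basis \eqref{nice_basis}.

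The core of the argument is then to show that the weak gap constraint \eqref{eq:q-expansion-Fg-weak} fixes these $d_g$ parameters uniquely. First I would record that $\nu^{2g-2}F_g$ is a power series in $q$ alone, which follows from \eqref{eq:Fg-quasi-modular-expression} together with the identity \eqref{eq:nuxi} giving $(\nu\xi/E_4)^{2g-2}=1+O(q)$. Writing $\nu^{2g-2}F_g=\sum_{k\ge 0}b_k q^k+(\nu\xi/E_4)^{2g-2}\sum_{j=0}^{d_g-1}\alpha_j V_j$, where the $b_k$ are the known contributions of the $E_2$-dependent part fixed above, and invoking the key expansion property $V_j=q^j(1+O(q))$ of \eqref{eq:Vj-expansion}, one sees that the contribution of $\alpha_j$ to the coefficient of $q^m$ vanishes for $m<j$ and equals $\alpha_j$ for $m=j$. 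Hence the linear map from $(\alpha_0,\ldots,\alpha_{d_g-1})$ to the coefficients of $q^0,\ldots,q^{d_g-1}$ in $\nu^{2g-2}F_g$ is unitriangular. Since \eqref{eq:q-expansion-Fg-weak} prescribes exactly these $d_g$ coefficients (the constant term $\kappa_g$ of \eqref{eq:kappa_g} and the vanishing of the coefficients of $q^1,\ldots,q^{d_g-1}$), forward substitution determines $\alpha_0,\alpha_1,\ldots,\alpha_{d_g-1}$ successively and uniquely, closing the induction.

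The step I expect to be the main obstacle is verifying cleanly that the $E_2$-integration of \eqref{eq:HAE} stays within the quasi-modular class \eqref{eq:quasi-modular-part}, and that the residual integration freedom is precisely $\mathcal{M}_{14(g-1)}$ and not some larger space; in other words, that the weight-and-depth bookkeeping of Proposition~\ref{prop_FgStructure} is compatible with the anomaly equation order by order. Once this is secured, the unitriangularity of the gap system is essentially formal and rests only on the two expansions \eqref{eq:Vj-expansion} and \eqref{eq:nuxi}. I emphasize that we need not check consistency of the gap constraints beyond the first $d_g$: their consistency is automatic, being guaranteed by the existence of the TR solution supplied by Theorem~\ref{thm:coni-gap}.
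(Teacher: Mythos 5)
Your proposal is correct and follows essentially the same route as the paper: existence is imported from Theorems~\ref{thm:coni-gap} and~\ref{thm:TR-HAE}, and uniqueness comes from the unitriangular $q$-expansion \eqref{eq:Vj-expansion} of the basis $V_j$, so that the first $d_g$ gap constraints fix the ambiguity parameters by forward substitution. The obstacle you flag at the end is dissolved by the paper's formulation rather than by an extra argument: $\partial F_g/\partial E_2$ in \eqref{eq:HAE} is \emph{defined} only on expressions of the form \eqref{eq:Fg-quasi-modular-expression}, so two solutions differ by $(\xi/E_4)^{2g-2}$ times an $E_2$-free polynomial in $E_4,E_6$, which the weight constraint in \eqref{eq:quasi-modular-part} forces to lie exactly in $\mathcal{M}_{14(g-1)}$.
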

       
       \begin{cor} \label{cor:existence-strong-coniforld-gap-HAE}
       	For each $g \ge 2$, the free energies $F_g$ defined by Theorem~\ref{thm:sol-HAE-weak} 
       	satisfy the conifold gap property \eqref{eq:q-expansion-Fg} with more vanishing coefficients in the $q$-expansion of
       	$\nu^{2g-2} F_g$ than is required by \eqref{eq:q-expansion-Fg-weak}. 
       \end{cor}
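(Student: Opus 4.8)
The plan is to deduce the corollary directly from the uniqueness asserted in Theorem~\ref{thm:sol-HAE-weak}, using the fact that the genuine TR free energies already provide a solution of the weakly-constrained system that in fact satisfies the \emph{strong} gap. In other words, the substantive analytic work has been carried out in Theorem~\ref{thm:coni-gap}, and the corollary then becomes a soft consequence of uniqueness.

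Concretely, I would argue by induction on $g$. For the base of the recursion, both the free energies produced by Theorem~\ref{thm:sol-HAE-weak} and the TR free energies start from the same $F_1$ given by \eqref{eq:expression-F0-F1}. Assume that for all $h$ with $2 \le h < g$ the free energy $F_h$ delivered by Theorem~\ref{thm:sol-HAE-weak} coincides with the TR free energy $F_h^{\mathrm{TR}}$ of the Weierstrass curve $\mathcal{C}$. Then the right-hand side of the holomorphic anomaly equation \eqref{eq:HAE} at genus $g$, being a differential polynomial in $F_1, \dots, F_{g-1}$, is identical for the two constructions, so the $E_2$-dependent part of $F_g$ is fixed identically; only the holomorphic ambiguity $h_g$ remains to be determined.

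Next I would observe that $F_g^{\mathrm{TR}}$ satisfies the full conifold gap \eqref{eq:q-expansion-Fg} by Theorem~\ref{thm:coni-gap}, i.e. the coefficients of $q^1, \dots, q^{2g-3}$ in the $q$-expansion of $\nu^{2g-2} F_g^{\mathrm{TR}}$ all vanish and its constant term equals $\kappa_g$ from \eqref{eq:kappa_g}. Since $d_g < 2g-2$ for every $g \ge 2$, this strong gap in particular entails the weak gap constraint \eqref{eq:q-expansion-Fg-weak}. Hence $F_g^{\mathrm{TR}}$ solves precisely the system ``HAE plus weak gap'' at genus $g$ with the right-hand side fixed by the induction hypothesis. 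By the uniqueness established in Theorem~\ref{thm:sol-HAE-weak} --- which rests on the unitriangular structure of the constraints in the basis \eqref{nice_basis} --- the ambiguity $h_g$, and therefore $F_g$ itself, is uniquely determined, so $F_g = F_g^{\mathrm{TR}}$. This closes the induction and identifies the free energies of Theorem~\ref{thm:sol-HAE-weak} with the TR free energies; they consequently inherit the strong gap, which forces the vanishing of the additional coefficients of $q^{d_g}, \dots, q^{2g-3}$ beyond those imposed in \eqref{eq:q-expansion-Fg-weak}.

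The statement is almost tautological once these ingredients are assembled, so I do not expect a genuine obstacle. The only point requiring slight care is the bookkeeping of the induction: one must verify that at genus $g$ the HAE right-hand side depends only on strictly lower-genus data, so that the inductive identification $F_h = F_h^{\mathrm{TR}}$ for $h < g$ makes the two constructions share an identical $E_2$-dependent part before the ambiguity is fixed. All the real content is already packaged in Theorem~\ref{thm:coni-gap} (strong gap for TR) and in the uniqueness half of Theorem~\ref{thm:sol-HAE-weak}.
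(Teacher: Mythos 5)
Your proof is correct and is essentially the paper's own argument: the paper likewise observes that the TR free energies satisfy both the HAE (Theorem~\ref{thm:TR-HAE}) and the strong conifold gap (Theorem~\ref{thm:coni-gap}), hence furnish a solution of the weakly-constrained system, and then invokes the uniqueness of Theorem~\ref{thm:sol-HAE-weak} (unitriangularity in the basis \eqref{nice_basis}) to identify the HAE solution with the TR free energy, from which the extra vanishing coefficients follow. Your explicit genus-by-genus induction merely spells out the recursive bookkeeping that the paper leaves implicit in its recursion starting from $F_1$.
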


       	In light of  many works on topological string theory such as \cite{HK06,HK09}, 
       	the above claim is believed to hold true generally. 
       	However, to our knowledge, its mathematically rigorous proof 
       	is not known (at least for the TR free energy of our spectral curve \eqref{eq:sp-PI}). 
       	Below we illustrate the method
       	by the computation of $F_2$ and $F_3$.
       	From these examples alone, it becomes apparent that 
       	the existence of solutions for the overdetermined system,
       	stated in Theorem \ref{thm:sol-HAE-weak}  and Corollary \ref{cor:existence-strong-coniforld-gap-HAE}, 
       	is highly nontrivial. Note that analogous computations 
       	have already appeared in e.g. 
       	 \cite[Section 4.1]{FGMS23} and \cite[Section 3]{PP23} (in the unrefined limit). Reproducing them here serves mainly pedagogical purposes.

       \begin{eg}[Computation of $F_2$]
       	We have already obtained an explicit formula for $F_2$ in \eqref{eq:Fg-quasi-modular-expression}, \eqref{eq:Yukawa}, \eqref{F23modularA} by an \textit{ab initio} TR calculation. Let us now rederive it using the aforementioned method. 
       	     The holomorphic anomaly equation \eqref{eq:HAE} for $F_2$ is 
       \begin{equation}\label{eq:HAEg2}
       	\frac{\partial F_2}{\partial E_2} 
       	= {- } \frac{1}{24} 
       	\left[ \frac{\partial^2 F_1}{\partial \nu^2} + \lb \frac{\partial F_1}{\partial \nu} \rb^2 \right]
       	=  \left( \frac{\xi}{E_4} \right)^{2} \frac{\partial P_2}{\partial E_2} .
       \end{equation}
       Our task is to find the polynomial $P_2=P_2\lb E_2,E_4,E_6\rb$. 
       The right hand side of HAE gives its $E_2$-derivative 
       \begin{equation}\label{eq:derP2}
       	\frac{\partial P_2}{\partial E_2} = - \frac{5E_2^2E_4^2 + 22 E_2 E_4 E_6 + 14 E_4^3 + 11 E_6^2}{13824}  .
       \end{equation}
       To derive this expression, we used that
       \begin{equation}
       	\frac{\partial  F_1 }{\partial \nu }= -\frac{\xi}{E_4} \cdot\frac{E_2E_4+E_6}{24},
       \end{equation}
       for $F_1$ given by  \eqref{eq:expression-F0-F1}, and the $\nu$-derivatives \eqref{eq:nuqchain}, \eqref{eq:dernu-omega}  allowing to compute 
       \begin{equation}
       	\frac{\partial}{\partial \nu } \lb\frac{\xi}{E_4}\rb=-\left(\frac{\xi}{E_4}\right)^2 \frac{7E_2E_4 +5E_6}{12}.
       \end{equation}
       Integrating \eqref{eq:derP2} with respect to $E_2$ leads to 
       \begin{equation}\label{eq:F2modular}
       	F_2 = \left(\frac{\xi}{E_4}\right)^2 
       	\left( - \frac{5E_2^3E_4^2 + 33 E_2^2 E_4 E_6 + E_2 \lb 42 E_4^3 + 33 E_6^2\rb}{41472} + \alpha_0 E_4^2 E_6 \right).
       \end{equation}
       This equation contains an unknown constant $\alpha_0$ multiplying $E_4^2E_6$, the only possible $\mathrm{SL}_2\lb \mathbb Z\rb$ modular form of weight~$14$. 
       To fix  $\alpha_0$, let us look at the series expansion of $\nu^2 F_2$ at $q = 0$ :
       \begin{equation}
       	\nu^2 F_2 = 
       	\alpha_0-\frac{113 }{41472 } 
       	+ \left(60 \alpha_0+\frac{299}{3456}  \right)  q +  O(q^2),
       \end{equation}
       which can be obtained from \eqref{eq:nuxi}. 
       The standard (strong) conifold gap property \eqref{eq:q-expansion-Fg} then imposes two constraints,
       \begin{align}\label{eqsalpha0}
       	\begin{cases} \displaystyle
       		\alpha_0-\frac{113 }{41472 }   & = 
       		\dfrac{B_{2g}}{4g\lb g-1\rb } \Bigl|_{g=2} = 
       		\displaystyle
       		-\frac{1}{240} , 
       		\\[+1.em]
       		\displaystyle
       		60 \alpha_0+\frac{299}{3456}  & =   0,
       	\end{cases}
       \end{align}
       whereas the weak conifold gap condition \eqref{eq:q-expansion-Fg-weak} would only imply the first of them.
       Both equations in \eqref{eqsalpha0} are of course equivalent, and their solution reads
       \begin{equation}
       	\alpha_0 = - \frac{299}{207360}. 
       \end{equation}
       This fixes the  holomorphic ambiguity in the expression \eqref{eq:F2modular} for $F_2$, 
       which agrees with \eqref{F23modularA}. 
       The $q$-expansion of $\nu^2F_2$ becomes 
       \begin{equation} \label{eq:F2qexp}
       	\nu^2F_2 =
       	-\frac{1}{240} +\frac{168 q^2}{5} +\frac{70353 q^3}{2} +\frac{25505004 q^4}{5}+ \frac{1790001618 q^5}{5}
       	+O\lb q^6\rb.
       \end{equation}
       
       \end{eg}

       \begin{eg}[Computation of $F_3$]
       
       Let us find the next contribution to free energy, $F_3$, following the same strategy. 
       The holomorphic anomaly equation
       \begin{equation}
       	\frac{\partial F_3}{\partial E_2} = - \frac{1}{24} \left[ \frac{\partial^2 F_2}{\partial \nu^2} 
       	+ 2 \frac{\partial F_2}{\partial \nu} \frac{\partial F_1}{\partial \nu} \right]
       \end{equation}
       fixes $p_3$ in the decomposition 
       \begin{equation}\label{eq:F3modular}
       	F_3=\left(\frac\xi{E_4}\right)^4 \lb p_3+h_3\rb,
       \end{equation}
       where $p_3$ combines all terms with $k\ge 1$ in the sum \eqref{eq:quasi-modular-part}, and $h_3$ is the holomorphic ambiguity: 
       \begin{equation}\label{eq:h3modular}
       	h_3=\alpha_0 V_0+\alpha_1 V_1+\alpha_2 V_2= \alpha_0 E_4^7 +\alpha_1 E_4^4\Delta\lb q\rb +\alpha_2 E_4\Delta^2\lb q\rb.
       \end{equation}
       The explicit formula for $p_3$ reads 
       \begin{align}
       	716636160 p_3= &\, 
       	75E_2^6 E_4^4 
       	+ 1200 E_2^5 E_4^3 E_6 
       	+ 45E_2^4 E_4^2 \lb 56E_4^3+159E_6^2\rb 
       	+ 10E_2^3 E_4 E_6  \lb 3121 E_4^3 + 1815 E_6^2\rb \\
       \nonumber	&\,+ 3 E_2^2 \lb 8023 E_4^6 + 39964 E_4^3 E_6^2 + 4400 E_6^4\rb
       	+ 6 E_2 E_4^2 E_6 \lb 24273 E_4^3 + 22463 E_6^2\rb. 
       \end{align}
       Using it together with \eqref{eq:nuxi}, we can compute the small $q$ expansion of $\nu^4 F_3 $. 
       Imposing the weak gap condition on the coefficients of $1$, $q$, and $q^2$, we obtain  a system of linear equations for $\alpha_{0,1,2}$ with a unitriangular matrix of coefficients
       \begin{align*}
       	\begin{cases} \displaystyle
       		\alpha_0  +\frac{497887}{716636160}   & = 
       		\dfrac{B_{2g}}{4g\lb g-1\rb } \Bigl|_{g=3} = 
       		\displaystyle
       		\frac{1}{1008},  
       		\\[+1.em]
       		\displaystyle
       		1848 \alpha_0+\alpha_1-\frac{72929}{5971968} & =   0,
       		\\[+1.em]
       		\displaystyle
       		1520904 \alpha_0+1104 \alpha_1+\alpha_2-\frac{48826219}{29859840} & =   0.
       	\end{cases}
       \end{align*}
       Its solution reads
       \begin{equation}
       	\alpha_0=\frac{1491431}{5016453120},\qquad \alpha_1= -\frac{222793}{414720},
       	\qquad \alpha_3=\frac{3421}{24},
       \end{equation}
       fixing thereby the holomorphic ambiguity \eqref{eq:h3modular} in the decomposition \eqref{eq:F3modular}.
       Finally, the small $q$ expansion for $\nu^4F_3$ becomes
       \begin{equation} \label{eq:F3qexp}
       	\nu^4F_3=\frac{1}{1008} +564480 q^4 +1614901401 q^5 +614954090130 q^6 + \frac{778563472560150 q^7}{7}+O\lb q^8\rb.
       \end{equation}
       It contains an extra vanishing coefficient of $q^3$, in agreement with the strong conifold gap property for $F_3$. 
       \end{eg}

       While we proved the existence of solutions by demonstrating 
       that the TR free energy  satisfies the conifold gap property (Theorem \ref{thm:coni-gap}), 
       we believe it would be also interesting to have a more direct proof 
       based on the properties of quasi-modular forms.

       
       \subsection{Refined holomorphic anomaly\label{subsec_betaHAE}}
       
       Remarkably, the HAE construction described above admits a one-parameter deformation \cite{KW10,HK10} (see also recent works \cite{FMP23,FGMS23} and references therein)  where $\hbar$ is replaced by by two complex parameters $\epsilon_1$ and $\epsilon_2$. For generic values of $\epsilon_1,\epsilon_2$, it is convenient to introduce the notation 
       \begin{equation}
       	\beta=\sqrt{-\frac{\epsilon_1}{\epsilon_2}}, \qquad  \hbar^2=\epsilon_1 \epsilon_2.
       \end{equation}
       In the description of this deformation, the  free energy $F_0$ remains undeformed and is defined by \eqref{eq:expression-F0-F1} which explicitly evaluates to \eqref{F0modular}.
       The free energy $F_1$ is given by  a $\beta$-deformation 
       \begin{equation}\label{eq:expression-F1b}
       	F_1^{(\beta)}=-\frac12 \ln \omega_A -\frac{\beta^2+\beta^{-2}}{24} \ln \frac{\Delta\lb q\rb}{\omega_A^{12}}. 
       \end{equation}
       This expression is used as an initial condition for the same HAE recursion \eqref{eq:HAE}, and defines  $F_g^{(\beta)}$ with $g\ge 2$. 
       Using 
       that $\Delta=\lb 2\pi/\omega_A\rb^{12} \Delta\lb q\rb$, it is easy to check that the free energy $F_1^{(\beta=1)}$ reduces to $F_1$ given by \eqref{eq:expression-F0-F1}  (up to an additive constant).
       The corresponding generalization of Theorem~\ref{thm:sol-HAE-weak} is given by 
       \begin{theo}\label{thm:sol-HAE-weak-b}
       	For any $g \ge 2$, the holomorphic anomaly equations \eqref{eq:HAE} 
       	supplemented with the  $\beta$-deformed \underline{weak} conifold gap condition 
       	\begin{equation} \label{eq:q-expansion-Fg-weak-b}
       		\nu^{2g-2} F_g^{(\beta)}  = \kappa_g^{(\beta)} + O\lb q^{d_g}\rb,  
       	\end{equation}
       	where $d_g=\dim \mathcal{M}_{14\lb g-1\rb}$ and 
       	\begin{equation}\label{eq:kappa-g-b}
       		\kappa_g^{(\beta)}= - \lb 2g-3\rb! \sum_{h=0}^g \hat B_{2h} \hat B_{2g-2h} \beta^{2g-4h},
       		\qquad \hat B_m=\lb1- 2^{1-m}\rb\frac{B_m}{m!},
       	\end{equation} 
       	define $F_g^{(\beta)}$ uniquely by recursion in $g$ starting from $F_1^{(\beta)}$ given by \eqref{eq:expression-F1b}.
       \end{theo}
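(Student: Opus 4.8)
The plan is to follow the proof of Theorem~\ref{thm:sol-HAE-weak} verbatim, isolating the only two places where the deformation enters: the initial datum $F_1^{(\beta)}$ in \eqref{eq:expression-F1b} and the prescribed constant $\kappa_g^{(\beta)}$ in \eqref{eq:kappa-g-b}. Since $F_0$ is undeformed and the holomorphic anomaly equation \eqref{eq:HAE} is literally the same, the combinatorial skeleton of the argument is unchanged; what must be re-established is that the recursion still produces quasi-modular forms of the expected weight and depth, after which the uniqueness reduces to the identical unitriangular linear algebra.

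First I would prove, by induction on $g$, that $F_g^{(\beta)}$ retains the structure \eqref{eq:Fg-quasi-modular-expression}--\eqref{eq:quasi-modular-part}: that is, $F_g^{(\beta)}=\lb\xi/E_4\rb^{2g-2}P_g^{(\beta)}$ with $P_g^{(\beta)}$ quasi-modular of weight $14\lb g-1\rb$ and depth $\le 3\lb g-1\rb$, whose $E_2$-free part $h_g^{(\beta)}$ lies in the finite-dimensional space $\mathcal M_{14\lb g-1\rb}$. In the undeformed case this was imported from topological recursion through Proposition~\ref{prop_FgStructure}; as that input is unavailable here, I would argue directly inside the graded ring $\mathbb C[E_2,E_4,E_6]$. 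The seed of the induction is
\beq
\frac{\partial F_1^{(\beta)}}{\partial\nu}=-\frac{\xi}{E_4}\cdot\frac{E_2E_4+\lb \beta^2+\beta^{-2}-1\rb E_6}{24},
\eeq
obtained from \eqref{eq:expression-F1b} using the chain rules \eqref{eq:nuqchain}, \eqref{eq:dernu-omega} and the Eisenstein differentiation formulas \eqref{eq:diffE}; it is $\xi/E_4$ times a weight-$6$ quasi-modular form, of exactly the same shape as for $\beta=1$, the coefficient of $E_6$ being the only trace of $\beta$. Because the rewriting of $\nu$-derivatives via \eqref{eq:nuqchain}--\eqref{eq:diffE} and the subsequent integration in $E_2$ are $\beta$-independent operations, the weight/depth bookkeeping of the inductive step goes through unchanged: the right-hand side of \eqref{eq:HAE} becomes $\lb\xi/E_4\rb^{2g-2}$ times a quasi-modular form of weight $14\lb g-1\rb-2$ and depth $\le 3\lb g-1\rb-1$, and integrating it in $E_2$ fixes $P_g^{(\beta)}$ up to the modular ambiguity $h_g^{(\beta)}\in\mathcal M_{14\lb g-1\rb}$.

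With this structure in hand, I would expand $h_g^{(\beta)}=\sum_{j=0}^{d_g-1}\alpha_j V_j$ in the special basis \eqref{nice_basis} and impose the weak gap condition \eqref{eq:q-expansion-Fg-weak-b}. Writing $\nu^{2g-2}F_g^{(\beta)}=\lb\nu\xi/E_4\rb^{2g-2}\lb p_g^{(\beta)}+h_g^{(\beta)}\rb$, with $p_g^{(\beta)}$ the $E_2$-dependent part already pinned down by the HAE, and using $\nu\xi/E_4=1+O\lb q\rb$ from \eqref{eq:nuxi} together with $V_j=q^j\lb 1+O\lb q\rb\rb$ from \eqref{eq:Vj-expansion}, the linear map sending $\lb\alpha_0,\dots,\alpha_{d_g-1}\rb$ to the coefficients of $q^0,\dots,q^{d_g-1}$ in $\nu^{2g-2}F_g^{(\beta)}$ is unitriangular, hence invertible. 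The $d_g$ requirements --- constant term equal to $\kappa_g^{(\beta)}$ and vanishing of the coefficients of $q^1,\dots,q^{d_g-1}$ --- therefore admit a unique solution, which yields existence and uniqueness of $F_g^{(\beta)}$ simultaneously. In contrast with Corollary~\ref{cor:existence-strong-coniforld-gap-HAE}, the weak condition is exactly determined rather than overdetermined, so no surplus constraints have to be checked for consistency, and the precise value \eqref{eq:kappa-g-b} of $\kappa_g^{(\beta)}$ affects only the constant term of the solution.

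The main obstacle is the structural induction of the second paragraph: proving the quasi-modular form of $F_g^{(\beta)}$ without the topological-recursion input that was available for $\beta=1$. The delicate point is to confirm that the HAE recursion, once seeded by $F_1^{(\beta)}$, never leaves $\mathbb C[E_2,E_4,E_6]$ and keeps the weight locked at $14\lb g-1\rb$ with depth at most $3\lb g-1\rb$, so that the holomorphic ambiguity is confined to $\mathcal M_{14\lb g-1\rb}$ and the counting $d_g=\dim\mathcal M_{14\lb g-1\rb}$ applies. Once this is secured, the deformation plays no further structural role and the uniqueness is the same unitriangular argument as in the case $\beta=1$.
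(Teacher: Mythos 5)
Your proposal is correct and follows the same route that the paper (implicitly) intends for this theorem: Theorem \ref{thm:sol-HAE-weak-b} is stated as the direct generalization of Theorem \ref{thm:sol-HAE-weak}, whose proof is precisely your unitriangular linear algebra in the basis \eqref{nice_basis}, and your key observation --- that the weak gap gives a square, exactly determined system, so existence and uniqueness come together and no (refined) topological-recursion input is needed --- is exactly why the $\beta$-deformed statement is formulated with the weak gap only. The one point where you go beyond the paper's text is the structural induction showing that the HAE recursion seeded by $F_1^{(\beta)}$ stays inside $(\xi/E_4)^{2g-2}\,\mathbb{C}[E_2,E_4,E_6]$ with weight $14(g-1)$ and depth at most $3(g-1)$; the paper leaves this implicit (for $\beta=1$ it can be imported from Proposition \ref{prop_FgStructure} via TR, but for $\beta\neq 1$ no such crutch exists), and your seed computation $\partial_\nu F_1^{(\beta)}=-(\xi/E_4)\bigl(E_2E_4+(\beta^2+\beta^{-2}-1)E_6\bigr)/24$ is correct and makes the argument self-contained, which is a genuine (if small) strengthening of what is written.
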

       
       \begin{eg} For $g=2$, the HAE recursion yields the following expression:
       \begin{align}
       	F_2^{(\beta)}=&\,-\left(\frac{\xi}{E_4}\right)^2  \left[   \frac{5 E_2^3 E_4^2}{41472} 
       	+\frac{\left(2 \beta^2+7 +2\beta^{-2}\right) E_2^2 E_4E_6}{13824 } 
       	+   \frac{ \left(6 \beta^2-5+6 \beta^{-2}\right) E_2 E_4^3}{6912}\right.\\
       	\nonumber &\,\left.+   \frac{ \left(\beta^4+8 \beta^2-7 +8 \beta^{-2}+\beta^{-4}\right) E_2E_6^2}{13824}+\frac{\left(237 \beta^4-330 \beta^2+485 -330 \beta^{-2}+237 \beta^{-4}\right) E_4^2 E_6}{207360}\right].
       \end{align}      
       Its small $q$ expansion  has the form
       \begin{equation}
       	\nu^2F_2^{(\beta)}=-\frac{7 \beta^4+10+7\beta^{-4}}{5760 } 
       	+\frac{7 \left(14497 \lb \beta^4+ \beta^{-4}\rb-39600 \lb \beta^2+ \beta^{-2}\rb+52510\right) }{480} q ^2 + 
       	O\lb q^3\rb ,
       \end{equation}
       which exhibits the anticipated gap (vanishing coefficient of $q$).
       \end{eg}
       
       Motivated by the TR/\PIeq correspondence \eqref{TRPIshort}, let us introduce the $\beta$-deformed \PIeq partition function by
       \beq\label{betadefZ}
       \tilde{\mathcal Z}^{(\beta)}\lb \hbar^{-1}s\,|\,\hbar^{-1}\nu\rb= \exp\sum_{g \ge 0} \hbar^{2g-2} F_g^{(\beta)}\lb t,\nu\rb.
       \eeq
       Taking into account Conjecture~\ref{CFTPI} relating the \PIeq partition function to $c=1$ Virasoro conformal blocks, it is natural to guess that its $\beta$-deformed version $\tilde{\mathcal Z}^{(\beta)}\lb s\,|\,\nu\rb$ is similarly related to the irregular block  \eqref{CBalgexp}--\eqref{L2comms} for generic central charge. The precise statement is as follows\footnote{We recall once again that $F_1$ in \eqref{eq:expression-F1b} is defined up to an additive constant independent of $\nu$ and $s$; it may, however, depend on $\beta$. }.
       \begin{conj}\label{CFTHAEcorr}
       	The large $s$ asymptotic expansion of the $\beta$-deformed partition function $\tilde{\mathcal Z}^{(\beta)}\lb s\,|\,\nu\rb$ defined by \eqref{betadefZ} is given by
       	\begin{gather}
       	\tilde{\mathcal Z}^{(\beta)}\lb s\,|\,\nu\rb=C^{\lb\beta\rb}\lb\nu\rb e^{\frac{s^2}{45}+\frac{4}{5}i\nu s}\mathcal F\lb\varepsilon\,|\,\nu\rb,     		
       	\end{gather}
       	where $\mathcal F\lb\varepsilon\,|\,\nu\rb$ is the irregular conformal block defined by \eqref{CBalgexp}--\eqref{L2comms}, with
       	\beq\label{beta_identification}
       	c=1-6\lb\beta-\beta^{-1}\rb^2,\qquad \varepsilon^{-2}=48is.
       	\eeq
       	The $s$-independent prefactor $C^{\lb\beta\rb}\lb\nu\rb$ is given by
       	\beq
       	C^{\lb\beta\rb}\lb\nu\rb=\Gamma_{\beta}^{-1}\left(\frac{\tilde Q}{2}  +\nu\right),\qquad \tilde Q= \beta+\beta^{-1},
       	\eeq
       	 where $\Gamma_\beta(z)$ is expressed in terms of the Barnes gamma and zeta functions (see e.g. \cite{Spr09} for the corresponding definitions)
       	\begin{equation}\label{eq:BarnesGamma}
       		\ln \Gamma_\beta\lb z\rb =
       		\ln \frac{\Gamma_2\lb z;\beta,\beta^{-1}\rb}{\Gamma_2\lb \tilde Q/2;\beta,\beta^{-1}\rb}=
       		\left. \frac{\partial}{\partial s} \left(\zeta_2\lb s;\beta,\beta^{-1},z\rb - \zeta_2\lb s;\beta,\beta^{-1},\tilde Q/2\rb \right)\right|_{s=0}.
       	\end{equation}
       \end{conj}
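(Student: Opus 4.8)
The plan is to reduce the statement to the uniqueness result of Theorem~\ref{thm:sol-HAE-weak-b}. Write $\mathcal R^{(\beta)}\lb s\,|\,\nu\rb=C^{\lb\beta\rb}\lb\nu\rb\, e^{\frac{s^2}{45}+\frac45 i\nu s}\,\mathcal F\lb\varepsilon\,|\,\nu\rb$ for the conjectured right-hand side, with $c=1-6\lb\beta-\beta^{-1}\rb^2$ and $\varepsilon^{-2}=48is$. Reinstating the counting parameter through the scaling $s\mapsto\hbar^{-1}s$, $\nu\mapsto\hbar^{-1}\nu$, I would first show that $\ln\mathcal R^{(\beta)}\lb\hbar^{-1}s\,|\,\hbar^{-1}\nu\rb$ admits a genus expansion $\sum_{g\ge0}\hbar^{2g-2}\Phi_g\lb t,\nu\rb$, and track the three sources: the Gaussian factor feeds only $\Phi_0$; the block, via $\varepsilon^2=\hbar/(48is)$ and the fact that each $\mathcal U_k\lb\nu\rb$ in \eqref{Ucoefs} is a polynomial in $\nu$ of degree $k+2$, contributes to $\Phi_g$ for every $k\ge 2g-2$; and the large-argument expansion of $\ln\Gamma_\beta\lb\tilde Q/2+\hbar^{-1}\nu\rb$, the refined analogue of the Barnes expansion \eqref{BarnesGas}, supplies the remaining $s$-independent pieces of each $\Phi_g$. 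The goal is then to prove $\Phi_g=F_g^{(\beta)}$.

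Next I would establish the $\beta$-deformed weak conifold gap \eqref{eq:q-expansion-Fg-weak-b} for the $\Phi_g$. The $s$-independent part of $\Phi_g$ comes solely from $-\ln\Gamma_\beta$, so its constant term is the coefficient of $z^{-(2g-2)}$ in the large-$z$ expansion of $\ln\Gamma_\beta^{-1}\lb\tilde Q/2+z\rb$ at $z=\nu$. Using the representation \eqref{eq:BarnesGamma} of $\Gamma_\beta$ through the Barnes double zeta function, this coefficient is a double-Bernoulli sum, and I would verify that it equals precisely $\kappa_g^{(\beta)}$ in \eqref{eq:kappa-g-b}; this is a self-contained computation about Barnes functions and is exactly why $\Gamma_\beta^{-1}\lb\tilde Q/2+\nu\rb$ is the correct prefactor. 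The vanishing of the subleading coefficients up to $q^{d_g-1}$ demanded by the weak gap is the less automatic half, and must be extracted from the $c$-dependent structure of the block coefficients \eqref{Ucoefs}.

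The main obstacle is the holomorphic anomaly equation itself: one must show that the $\Phi_g$ satisfy \eqref{eq:HAE}. This is delicate because \eqref{eq:HAE} presupposes the quasi-modular presentation \eqref{eq:Fg-quasi-modular-expression}, i.e. that $\Phi_g$ is assembled from the periods of the Weierstrass curve \eqref{eq:sp-PI}, whereas $\mathcal F\lb\varepsilon\,|\,\nu\rb$ is defined purely algebraically through the rank $\tfrac52$ Whittaker vector of Theorem~\ref{conjWh52}. Exhibiting this hidden modular structure is exactly the (still unproven) AGT-type correspondence for the Argyres--Douglas setting, so far checked only order by order, as in \cite{PP23}. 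I see two routes. One is to derive the HAE directly from bilinear (Hirota) relations for the tau function in the spirit of \cite{BGMT24}, promoted to the refined $\beta\ne1$ level, so that the $c$-dependence of $\mathcal U_k$ organizes into the $E_2$-derivative of \eqref{eq:HAE}. The other is to follow \cite{ILT14} and build a solution of the \PIeq linear system from the irregular block with a degenerate-field insertion, whereby the Weierstrass periods appear intrinsically and the HAE of Theorem~\ref{thm:TR-HAE} becomes available. Either way, this is the crux of the argument.

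Finally, assuming the HAE is secured, I would close the proof by matching the genus-$0$ and genus-$1$ data: a direct computation shows that the Gaussian factor, together with the logarithmic term $\mathcal U_{\operatorname{ln}}\lb\nu\rb\ln\varepsilon$ of \eqref{Ugenericc} and the leading terms of $\ln\Gamma_\beta$, reproduces $F_0$ in \eqref{F0modular} and $F_1^{(\beta)}$ in \eqref{eq:expression-F1b} (at $\beta=1$ this degenerates to the Barnes-$G$ prefactor of \eqref{bigtau1}). With $\Phi_1=F_1^{(\beta)}$ as initial datum and the HAE plus weak gap in hand, Theorem~\ref{thm:sol-HAE-weak-b} forces $\Phi_g=F_g^{(\beta)}$ for every $g\ge2$, which is the asserted identity.
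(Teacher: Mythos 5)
You should first note that the statement you were asked to prove is labeled a \emph{Conjecture} in the paper: the authors do not prove it. What they provide is (i) an order-by-order verification, matching the coefficients $\mathcal U_{\operatorname{ln}},\mathcal U_1,\ldots,\mathcal U_5$ of \eqref{Ugenericc} against the HAE side computed up to $F_3^{(\beta)}$, and (ii) a rigorous derivation, from an integral representation, of the asymptotic expansion \eqref{betaBarnesAs} of $\ln\Gamma_\beta^{-1}\lb \tilde Q/2+z\rb$, showing that the prefactor $C^{(\beta)}\lb\nu\rb$ reproduces exactly the constant terms $\kappa_g^{(\beta)}$ of \eqref{eq:kappa-g-b}. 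Your overall strategy --- dress the conformal block, reinstate $\hbar$, and invoke the uniqueness statement of Theorem~\ref{thm:sol-HAE-weak-b} to force $\Phi_g=F_g^{(\beta)}$ --- is the natural frame, and your bookkeeping of which sources feed which genus (the Gaussian factor into $\Phi_0$ only; $\mathcal U_k$ into $\Phi_g$ for $k\ge 2g-2$ via the degree-$(k+2)$ polynomiality; $\ln\Gamma_\beta$ into the $s$-independent parts) is consistent with the paper's checks. The Barnes computation you describe in your second paragraph is legitimately available: it is precisely \eqref{betaBarnesAs}.

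However, your proposal is a proof strategy rather than a proof, and the gap is exactly the one you concede in your third paragraph. To apply Theorem~\ref{thm:sol-HAE-weak-b} you need two inputs for the CFT-defined quantities $\Phi_g$: that they satisfy the HAE \eqref{eq:HAE}, and that they satisfy the weak gap \eqref{eq:q-expansion-Fg-weak-b}. The first is worse than ``delicate'': the HAE is a derivative in $E_2$ at fixed $E_4,E_6$, so it cannot even be \emph{formulated} for $\Phi_g$ until one proves that $\ln\mathcal F\lb\varepsilon\,|\,\nu\rb$, re-expanded through the period map \eqref{nome_expansion_2}, assembles into the quasi-modular form \eqref{eq:Fg-quasi-modular-expression}; this hidden modular structure of the algebraically defined Whittaker block is the unproven AGT-type correspondence itself, and neither of the two routes you sketch (refined bilinear relations \`a la \cite{BGMT24}, or an \cite{ILT14}-style construction with degenerate fields) is carried out. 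The second input --- vanishing of the coefficients of $q^1,\ldots,q^{d_g-1}$ in $\nu^{2g-2}\Phi_g$ --- is likewise asserted to be ``extracted from the $c$-dependent structure'' of \eqref{Ucoefs} but never established; the Barnes prefactor only fixes the constant term, not the gap. In effect your argument relocates the conjecture into these two unproven lemmas rather than closing it, which is consistent with the fact that the authors themselves could only leave the statement as Conjecture~\ref{CFTHAEcorr} supported by finite-order checks.
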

       
             We have checked that previously computed coefficients $\mathcal U_{\ln}$, $\mathcal U_1,\ldots,\mathcal U_5$ of the expansion \eqref{Ugenericc} indeed coincide with the expressions predicted by Conjecture~\ref{CFTHAEcorr} under identification \eqref{beta_identification}. This comparison involves the calculation of $\beta$-deformed free energies up to $F_3^{(\beta)}$.
       
        The role of the $\Gamma_\beta$-function in the above is
       to reproduce the leading terms $\kappa_g^{(\beta)}$ of expansions \eqref{eq:q-expansion-Fg-weak-b} of $F_g^{(\beta)}$. As $z\to\infty$ with $|\arg z|<\frac{\pi}{2}$, we have
       \begin{equation}\label{betaBarnesAs}
       	\ln \Gamma^{-1}_{\beta}\left(\frac{\tilde Q }{2} +z\right) \simeq
       	\frac{z^2}{2}\ln z- \frac{3z^2}{4}-\frac{\beta^2+\beta^{-2}}{24} \ln z +
       	c_0\lb \beta\rb+\sum_{g\ge 2}   \frac{\kappa_g^{(\beta)}}{z^{2g-2}}.
       \end{equation}
       This follows from the integral representation 
       \beq
       \ln \Gamma^{-1}_{\beta}\left(\frac{\tilde Q }{2} +z\right)=\int_0^\infty\frac{dx}{x}
       \left[\frac{1-e^{-zx}}{4\sinh \frac{\beta x}{2}\sinh\frac{x}{2\beta}}+\frac{z^2}{2}e^{-x}-\frac{z}{x}\right],\qquad \Re z>0.
       \eeq
       Indeed, decompose the integral on the right into a sum of three contributions:
       \beq\begin{gathered}
       	\int_0^\infty\left[\lb\frac{z^2}{2}e^{-x}-\frac{z}{x}\rb-\lb \frac{1}{x^2}-\frac{\beta^2+\beta^{-2}}{24}\rb e^{-zx}+\lb \frac{1}{x^2}-\frac{\beta^2+\beta^{-2}}{24}e^{-x}\rb\right]\frac{dx}{x}\\
       	-
       	\int_0^\infty \left(\frac{1}{x^2}-\frac{\beta^2+\beta^{-2}}{24}e^{-x}-\frac{1}{4\sinh \frac{\beta x}{2}\sinh\frac{x}{2\beta}}\right)\frac{dx}{x}+
       	\int_0^\infty \left(\frac{1}{x^2}-\frac{\beta^2+\beta^{-2}}{24}-\frac{1}{4\sinh \frac{\beta x}{2}\sinh\frac{x}{2\beta}}\right)\frac{e^{-z x}}{x}\,dx.
       	\end{gathered}
       \eeq
       The first of these integrals is elementary and evaluates to $\frac{z^2}{2}\ln z- \frac{3z^2}{4}-\frac{\beta^2+\beta^{-2}}{24} \ln z$, which is exactly the divergent part of the asymptotics in \eqref{betaBarnesAs}. The second integral gives its finite part $c_0\lb\beta\rb$. Finally, in the third integral one may use the well-known expansion
       \beq
       \frac{1}{2\sinh x}=\sum_{g=0}^\infty\hat{B}_{2g}x^{2g-1}
       \eeq
       to obtain the last sum in \eqref{betaBarnesAs}. It seems plausible that the asymptotic expansion \eqref{betaBarnesAs} holds in a wider sector $|\arg z|<\pi$, similarly to \cite[Proposition 8.11]{Spr09}. In the non-deformed case, the $\Gamma_\beta$-function reduces to the usual Barnes G-function, $\left.\Gamma^{-1}_{\beta}\left(\tilde Q /2 +\nu\right) \right|_{\beta=1} = \lb 2\pi\rb^{-\nu/2} G\lb 1+\nu\rb$.

    \appendix
    
    \section{Irregular states}\label{AppIrrStates}
    \subsection{Higher-order descendants\label{AppA1}}
    We record here the higher order contributions to the rank $\frac52$ Whittaker state $|\Psi\rangle$ in \eqref{Wh52exp}.\vspace{0.2cm}\\
    \underline{Order 3}:
    \beq\label{G3}
    \begin{aligned}
    	\mathbb G_3=&\,\frac{343}{135}L_{-3}-\frac79 L_{-2}L_{-1}+\frac{4}{81}L_{-1}^3-\frac{56\nu}{9}L_{-2}L_{1}+\frac{32\nu}{27}L_{-1}^2L_1-
    	\frac{206\nu}{27}L_{-1}L_0\\&\,+
    	\frac{256\nu^2+57}{27} L_{-1}L_1^2-\frac{16\lb 515\nu^2+146\rb}{135}L_0L_1+\frac{8\nu\lb 256\nu^2+171\rb}{81}L_1^3\\&\,+\frac{14648\nu}{405} L_{-1}+\frac{4\lb 31636\nu^2+4401-1029c\rb}{405}L_1.
    \end{aligned}
    \eeq
    \underline{Order 4}:
    \beq\label{G4}
    \begin{aligned}
    	\mathbb G_4=&\,-\frac{5551}{810}L_{-4}+\frac{686}{405}L_{-3}L_{-1}+\frac{49}{72}L_{-2}^2-\frac{7}{27}L_{-2}L_{-1}^2+\frac{2}{243}L_{-1}^4+\frac{5488\nu}{405}L_{-3}L_1-\frac{112\nu}{27}L_{-2}L_{-1}L_1\\
    	&\,+\frac{64\nu}{243}L_{-1}^3L_1+\frac{721\nu}{54}L_{-2}L_0-\frac{7\lb 256\nu^2+57\rb}{108}L_{-2}L_1^2	-\frac{206\nu}{81}L_{-1}^2L_0+\frac{256\nu^2+57}{81}L_{-1}^2L_1^2\\
    	&\,-\frac{32\lb 515\nu^2+146\rb}{405}L_{-1}L_0L_1+\frac{16\nu\lb256\nu^2+171\rb}{243}L_{-1}L_1^3
    	+\frac{65536\nu^4+87552\nu^2+9747}{1944}L_1^4\\
    	&\,+\frac{10609\nu^2+4209}{162}L_0^2-\frac{\nu\lb 131840\nu^2+104107\rb}{810}L_0L_1^2
    	-\frac{108661\nu}{810}L_{-2}+\frac{29296\nu}{1215}L_{-1}^2
    	\\
    	&\,+\lb \frac{487456\nu^2}{1215}-\frac{2744c}{405}+\frac{48043}{540}\rb L_{-1}L_1
    	-\lb \frac{1464989\nu^2}{1215 } - \frac{10619c}{216} + \frac{50929}{216}\rb L_0\\
    	&\,+\frac{\nu\lb 4049408\nu^2-131712c+2252067\rb}{2430} L_1^2.
    \end{aligned}
    \eeq
    \begingroup
    \underline{Order 5}:
    \begin{align}
    	\nonumber \mathbb G_5=&\,\frac{30851}{1620}L_{-5}-\frac{5551}{1215}L_{-4}L_{-1}-\frac{2401}{810}L_{-3}L_{-2}+
    	\frac{686}{1215}L_{-3}L_{-1}^2+\frac{49}{108}L_{-2}^2L_{-1}-\frac{14}{243}L_{-2}L_{-1}^3
    	+\frac{4}{3645}L_{-1}^5\\ 
    	\nonumber &\,-\frac{44408\nu}{1215}L_{-4}L_1+\frac{10976\nu}{1215}L_{-3}L_{-1}L_1+\frac{98\nu}{27} L_{-2}^2 L_1
    	-\frac{112\nu}{81} L_{-2} L_{-1}^2 L_1 +\frac{32 \nu}{729} L_{-1}^4 L_1 \\
    	\nonumber &\,-\frac{35329\nu}{1215} L_{-3} L_0 +\frac{343  \left(256 \nu ^2+57\right)}{2430}L_{-3}
    	L_1^2+\frac{721\nu}{81} L_{-2}L_{-1}  L_0-\frac{7 \left(256 \nu ^2+57\right)}{162}
    	L_{-2} L_{-1} L_1^2  \\
    	\nonumber &\,-\frac{412\nu}{729}  L_{-1}^3L_0 +\frac{2\left(256 \nu ^2+57\right)}{729} L_{-1}^3L_1^2 +\frac{56\left(515 \nu
    		^2+146\right)}{405} L_{-2} L_0 L_1  -\frac{28\nu\left(256 \nu ^2+171 \right)}{243} L_{-2} L_1^3 \\
    		\nonumber &\,-\frac{32 \left(515
    			\nu ^2+146\right)}{1215}L_{-1}^2 L_0 L_1 +\frac{16\nu\left(256 \nu ^2+171 \right)}{729}L_{-1}^2  L_1^3 +\frac{\left(10609 \nu ^2+4209\right)}{243} L_{-1} L_0^2 \\
    			\nonumber &\,-\frac{\nu \left(131840 \nu ^2+104107 \right)}{1215}L_{-1}L_0 L_1^2 + \frac{ 65536 \nu ^4+87552 \nu ^2+9747}{2916}L_{-1}L_1^4  +\frac{8\nu  \left(53045 \nu ^2+51121 \right)}{1215}L_0^2 L_1\\
    	\nonumber &\,-\frac{8  \left(131840 \nu ^4+200193 \nu ^2+24966\right)}{3645}	L_0 L_1^3	+ \frac{2\nu \left(65536 \nu ^4+145920 \nu ^2+48735 \right)}{3645} L_1^5+\frac{551393 \nu }{1215} L_{-3} \\
    	\label{G5} &\,	-\frac{159929 \nu }{1215} L_{-2} L_{-1}+\frac{29296  \nu }{3645} L_{-1}^3-\frac{2 \left(656096 \nu ^2-7203 c+172695\right)}{1215}L_{-2} L_1 \\
    	\nonumber &\,+\frac{481216 \nu ^2 -5488 c+120657}{2430}L_{-1}^2L_1  -\lb \frac{1479574 \nu^2}{1215}-\frac{10619 c}{324}+\frac{37435}{108} \rb L_{-1}L_0 \\
    	\nonumber &\,+\frac{7 \nu \left(282112 \nu ^2-6272 c+182655  \right)}{1215} L_{-1} L_1^2-\frac{2 \nu \left(6078976 \nu ^2-229943 c+4309981 
    		\right)}{1215}L_0 L_1\\
    	\nonumber &\,+\frac{2 
    		\left(8098816 \nu ^4-263424 c \nu ^2+9684864 \nu ^2-58653c+1187289\right)}{3645}L_1^3\\
    	\nonumber &\,	+\frac{2
    		 \left(9362948 \nu ^2-465165 c+2269809\right)}{3645}L_{-1}+\frac{2 \nu\left(80655268 \nu ^2-7880031 c  +35907717 
    		 \right)}{3645}L_1. 
    	 \end{align}
      To appreciate the achieved simplifications, the reader may wish to compare  $\mathbb G_3$ (and $\mathbb G_2$ given earlier in \eqref{G2}) with their counterparts in \cite[Eqs (A.2)--(A.3)]{PP23}.\vspace{0.2cm}\\
     For $\nu=0$, the expression of  Whittaker state simplifies as the previous formulas reduce to
     \begin{subequations}
      \begin{align}
      &\mathbb{G}_1=\,\frac23 L_{-1},\qquad	\mathbb{G}_2=-\frac{7}{6}  L_{-2}+\frac{2}{9}L_{-1}^2+\frac{19}{6} L_1^2,\\
     &\mathbb{G}_3= \frac{343}{135} L_{-3}-\frac{7}{9} L_{-2}L_{-1}+\frac{4}{81} L_{-1}^3+\frac{19}{9} L_{-1}L_1^2-\frac{2336}{135} L_0L_1-\frac{4\lb 343 c-1467\rb}{135}  L_1,\\
     &\mathbb{G}_4= -\frac{5551}{810} L_{-4}+\frac{686}{405} L_{-3}L_{-1}+\frac{49}{72} L_{-2}^2-\frac{7}{27} L_{-2}L_{-1}^2+\frac{2}{243}  L_{-1}^4-\frac{133}{36} L_{-2}L_1^2-\frac{4672}{405} L_{-1}L_0L_1\\
     \nonumber &\qquad +\frac{19}{27} L_{-1}^2L_1^2-\frac{10976 c-144129}{1620}L_{-1}L_1+\frac{1403}{54} L_0^2+\frac{361}{72} L_1^4+\frac{10619c -50929 }{216}L_0,\\
     &\mathbb{G}_5=\frac{30851 }{1620}L_{-5}-\frac{5551 }{1215}L_{-4}L_{-1}
     -\frac{2401}{810} L_{-3}L_{-2}+\frac{686 }{1215}L_{-3}L_{-1}^2+\frac{49}{108} L_{-2}^2L_{-1}-\frac{14}{243} L_{-2}L_{-1}^3+\frac{4 }{3645}L_{-1}^5\\ 
     \nonumber&\qquad
     +\frac{6517}{810} L_{-3}L_1^2-\frac{133}{54} L_{-2}L_{-1}L_1^2+\frac{8176}{405} L_{-2}L_0L_1-\frac{4672 }{1215}L_{-1}^2L_0L_1+\frac{38}{243}L_{-1}^3L_1^2+\frac{1403}{81} L_{-1}L_0^2\\
     \nonumber&\qquad  +\frac{361}{108} L_{-1}L_1^4 -\frac{22192}{405} L_0L_1^3+\frac{2(2401 c-57565)}{405} L_{-2}L_1+\frac{ 10619 c-112305}{324} L_{-1}L_0\\
     \nonumber&\qquad -\frac{5488 c-120657 }{2430} L_{-1}^2L_1-\frac{2(6517 c-131921)}{405}  L_1^3 -\frac{2 (51685c-252201) }{405} L_{-1}.
      \end{align}
      In this case, the first non-vanishing coefficient in the conformal block, $\mathcal U_2\lb 0\rb=\frac{7 \left(14497 c^2-139322 c+207769\right)}{17280}$, is determined by the next term in the expansion of $|\Psi\rangle$,
      \begin{align}
      	\mathbb{G}_6=&\,-\frac{210338 }{3645}L_{-6}+\frac{30851 }{2430}L_{-5}L_{-1}+\frac{38857 }{4860}L_{-4}L_{-2}-\frac{5551 }{3645}L_{-4}L_{-1}^2+\frac{117649 }{36450}L_{-3}^2-\frac{2401 }{1215}L_{-3}L_{-2}L_{-1}\\
      	\nonumber&\, +\frac{1372 }{10935}L_{-3}L_{-1}^3-\frac{343 }{1296}L_{-2}^3+\frac{49}{324} L_{-2}^2L_{-1}^2-\frac{7}{729} L_{-2}L_{-1}^4
      	+\frac{4 }{32805}L_{-1}^6-\frac{105469 }{4860}L_{-4}L_1^2\\
      	\nonumber&\,+\frac{6517 }{1215}L_{-3}L_{-1}L_1^2-\frac{801248 }{18225}L_{-3}L_0L_1+\frac{931}{432} L_{-2}^2L_1^2-\frac{133}{162} L_{-2}L_{-1}^2L_1^2+\frac{16352 }{1215}L_{-2}L_{-1}L_0L_1\\
      	\nonumber&\,-\frac{9821}{324} L_{-2}L_0^2-\frac{2527}{432} L_{-2}L_1^4+\frac{19}{729} L_{-1}^4L_1^2-\frac{9344 }{10935}L_{-1}^3L_0L_1+\frac{1403}{243} L_{-1}^2L_0^2
      	+\frac{361}{324} L_{-1}^2L_1^4+\frac{6859 }{1296}L_1^6\\
      	\nonumber&\,-\frac{44384 }{1215}L_{-1}L_0L_1^3+\frac{16911617 }{72900}L_0^2L_1^2
      	-\frac{3764768 c-135441381}{145800}L_{-3}L_1+\frac{76832 c-2522375}{9720}L_{-2}L_{-1}L_1\\
      	\nonumber &\,-\frac{7 (2389275c-43148897) }{291600}L_{-2}L_0-\frac{7 (1568 c-48357) }{21870}L_{-1}^3L_1+\frac{10619 c-173681}{972} L_{-1}^2L_0\\
      	\nonumber&\,-\frac{208544 c-6067987}{9720}L_{-1}L_1^3+\frac{96676097c-2155846795}{291600}L_0L_1^2+
      	\frac{7 (28280777 c-143573431)}{145800}L_{-2}\\
      	\nonumber &\,-\frac{4961760 c-34601027}{29160}L_{-1}^2+
      	\frac{3764768 c^2-533275715c+2482839717}{72900}L_1^2.
      \end{align}
      \end{subequations}
     The expression of $\mathbb G_k\bigl|_{\nu=0}$ for arbitrary $k$ satisfies an extra constraint $m_0+m_1=\frac{k-\ell}{2}\;\operatorname{mod}\;2$ in addition to selection rules already present in \eqref{selectionrulesWH}.
     
     \subsection{Proof of Theorem~\ref{conjWh52}\label{AppProof}} 
     
In this Appendix, we prove Theorem~\ref{conjWh52} about embedding of  the rank $\frac52$ Whittaker vector $|\Psi\rangle$ 
into the rank 2 Whittaker module ${\mathcal V}^{[2]}$.
The proof uses several definitions and ideas from \cite{Nagoya15}. 

Slightly generalizing the initial setting, let $\left|\Psi\right\rangle$  denote a vector satisfying
\begin{equation}
    \lb L_n-\mathcal{L}_n^{(5/2)}\rb\left|\Psi\right\rangle=0, \qquad n\ge3, 
\end{equation}
where
\begin{equation}
    \mathcal{L}_n^{(5/2)}=\begin{cases}
        \Lambda_n, \qquad & n=3,4,5,\\
        0, \qquad & n>5,
    \end{cases}
\end{equation}
and it is assumed that $\Lambda_4\ne  0$. Likewise,  the rank 2 Whittaker module ${\mathcal V}^{[2]}$  will be generated from a vector $|J\rangle$ 
that satisfies
\begin{equation}
    \lb L_n-\mathcal{L}_n^{(2)}\rb|J\rangle=0, \qquad n\ge2, 
\end{equation}
\begin{equation}\label{Lambda234}
    \mathcal{L}_n^{(2)}=\begin{cases}
        \Lambda_n, \qquad &n=2,3,4,\\
        0, \qquad &n>4.
    \end{cases}
\end{equation}

We are going to prove that the Whittaker vector $|\Psi\rangle$ can be embedded into the module ${\mathcal V}^{[2]}$ in the form of a formal power series in $\Lambda_5$:
\begin{equation}\label{PsiV2ser}
    |\Psi\rangle=\sum_{k=0}^{\infty}\Lambda_5^k \left|\Psi_k\right\rangle, \qquad \left|\Psi_0\right\rangle=|J\rangle.
\end{equation}
To find $|\Psi_k\rangle\in {\mathcal V}^{[2]}$, we will use a PBW basis labeled by Young diagrams  \cite{Nagoya15}
\begin{equation}\label{PBW_app}
    v_\lambda=\mathbf{L}_{-\lambda} |J\rangle, \qquad \mathbf{L}_{-\lambda}=L_{-\lambda_1+2}\cdots L_{-\lambda_\ell+2}, \qquad \lambda=\lb \lambda_1,\ldots,\lambda_\ell\rb\in\mathbb Y, 
    \qquad \lambda_1\ge \ldots \ge \lambda_n \ge 1,
\end{equation}
\begin{equation}
    v_\emptyset=|J\rangle, \quad \mathbf{L}_{-\emptyset} = 1.
\end{equation}
The difference between $\mathbf{L}_{-\lambda}$ and the notation $\mathbb{L}_{-\lambda}$ previously used in \eqref{notation_young} is the shift of indices of all Virasoro generators by $2$.

Let us define a degree for the basis elements by
\begin{equation}\label{degL}
    \deg\,v_\lambda=\deg\,\mathbf L_{-\lambda} \left|J\right\rangle=|\lambda|.
\end{equation}
The degree of  an arbitrary vector $v \in {\mathcal V}^{[2]}$ is defined as the maximal degree of its non-zero components, 
\begin{equation}
    \mathrm{deg}\,v=\max\,\lb\mathrm{deg}\,  v_\lambda\rb, \qquad v=\sum_\lambda c_\lambda v_\lambda, \qquad c_\lambda \ne 0.
\end{equation}
With respect to this degree, we are going to consider an infinite flag  $U_0\subset U_1\subset U_2\subset \ldots$ of subspaces $U_m \subset {\mathcal V}^{[2]}$ given by
\begin{equation}\label{DefUm}
    U_m=\mathrm{span}\,\left(v_\lambda\, | \,\mathrm{deg}\,v_\lambda \leq m\right).
\end{equation}
It will also be convenient to introduce a linear functional $\xi$ on  ${\mathcal V}^{[2]}$  defined as the coefficient of $|J\rangle$ in the decomposition in the PBW basis \eqref{PBW_app},
\begin{equation}
    \xi(v)=c_{\emptyset}, \qquad v=\sum_\lambda c_\lambda v_\lambda.
\end{equation}

We will use the following modification  of Virasoro generators:
\begin{equation}
    \widetilde{L}_n=\begin{cases}
        L_n-\Lambda_n, \qquad & n=3,4,\\
        L_n, \qquad & n>4.
    \end{cases}
\end{equation}
The products of such shifted generators will be similarly labeled by a Young diagram $\mu\in \mathbb Y$, 
\begin{equation}
    \widetilde{\mathbf L}_\mu = \widetilde{L}_{\mu_1+2}\cdots \widetilde{L}_{\mu_n+2}, \qquad \mu=\lb \mu_1,\ldots, \mu_n\rb, \qquad 
    \mu_1\ge \mu_2 \ge \cdots \ge \mu_n\ge 1.
\end{equation}
Note that the products $\mathbf L_{-\lambda}$ and $\widetilde{\mathbf L}_\mu$ involve the generators $L_n$ with $n\leq 1$ and $n\geq 3$, respectively.
We further define a bilinear form $M$ on ${\mathcal V}^{[2]}$ by
\begin{equation}
    M\lb v_\mu, v_\lambda\rb=\xi\left(\widetilde{\mathbf L}_\mu \mathbf L_{-\lambda}\left|J\right\rangle\right).
\end{equation}

The main result we need  from \cite{Nagoya15} is Lemma~2.22 therein, which states that 
\begin{equation}\label{Mform}
    M\lb v_\mu, v_\lambda\rb=\begin{cases}
        0, \qquad &|\mu|>|\lambda|, \\
        0, \qquad &|\mu|=|\lambda|,\ \mu\neq \lambda, \\
        \neq 0, \qquad &\mu=\lambda.
    \end{cases}
\end{equation}
Let us introduce the matrix $M_k$ of pairings defined by the form $M$ restricted to the space $U_k$, 
\begin{equation}
    M_k = \left(M\lb v_\mu,v_\lambda\rb\right)_{0\leq |\mu|,|\lambda|\leq k}.
\end{equation}
Ordering the basis vectors in descending order with respect to the size of Young diagrams, 
it follows from \eqref{Mform} that $M_k$ becomes a triangular matrix with non-zero determinant. 
Note that the ordering of basis vectors corresponding to Young diagrams of the same size  does not affect the triangularity of $M_k$ thanks to 
\eqref{Mform}.

The problem of finding the rank $\frac52$ Whittaker vector $|\Psi\rangle$ in the form \eqref{PsiV2ser} 
can be rewritten as a problem of solving the following recurrence relations for descendants $\left|\Psi_k\right\rangle$ in terms of the action of shifted Virasoro generators $\widetilde{L}_n$:
\begin{equation}\label{recdesc}
    \widetilde{L}_n \left|\Psi_k\right\rangle = \begin{cases}
        \left|\Psi_{k-1}\right\rangle, \qquad &n=5,\\
        0, \qquad &n\ge 3, n\neq 5.
    \end{cases}
\end{equation}
In order to ensure uniqueness of solution,  we impose the
{\em orthogonal gauge} defined by the following conditions on the components of $|\Psi\rangle$:
\begin{equation}\label{orthgauge}
    \xi\left(\left|\Psi_k\right\rangle\right)=\delta_{k,0}.
\end{equation} 
From \eqref{recdesc} it follows that
\begin{equation}
    \xi\left(\cdots\widetilde{L}_5^{m_5}\widetilde{L}_4^{m_4}\widetilde{L}_3^{m_3} |\Psi_k\rangle\right)=\begin{cases}
        1, \qquad m_5=k, m_3=m_4=m_6=\cdots = 0,\\
        0, \qquad  \text{otherwise}, 
    \end{cases}
\end{equation}
or, equivalently,
\begin{equation}\label{xiLtPsi}
    \xi\lb \widetilde{\mathbf L}_\mu \left|\Psi_k\right\rangle\rb=\begin{cases}
        1, \qquad \mu=\lb 3^k\rb,\\
        0, \qquad \text{otherwise}.
    \end{cases}
\end{equation}

\begin{theo} \label{Wh52deg} 
The following holds:
\begin{enumerate}
\item The rank $\frac52$ Whittaker vector $\left|\Psi\right\rangle$ in the form \eqref{PsiV2ser} exists.
\item For any $k\in\mathbb N$, its component $\left|\Psi_k\right\rangle$ belongs to $U_{3k}$. 
\item The vector $|\Psi\rangle$ is unique in the orthogonal gauge \eqref{orthgauge}.
\end{enumerate}
\end{theo}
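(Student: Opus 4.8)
The plan is to prove the three assertions simultaneously by induction on $k$, taking the triangular non-degenerate pairing \eqref{Mform} (Nagoya's Lemma~2.22) as the central tool and recasting the recurrence \eqref{recdesc} into the moment conditions \eqref{xiLtPsi}. The base case $k=0$ is immediate with $|\Psi_0\rangle=|J\rangle\in U_0$. For the inductive step I would assume that $|\Psi_0\rangle,\dots,|\Psi_{k-1}\rangle$ have been constructed with $|\Psi_j\rangle\in U_{3j}$, satisfying \eqref{recdesc} and the orthogonal gauge \eqref{orthgauge}, so that in particular \eqref{xiLtPsi} holds at all lower orders. The forward implication \eqref{recdesc}$\Rightarrow$\eqref{xiLtPsi} is already recorded in the excerpt; the substance of the construction will be the converse, together with confinement to $U_{3k}$.

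First I would work with the moment functionals $v\mapsto \xi(\widetilde{\mathbf L}_\mu v)$. Restricted to the finite-dimensional subspace $U_{3k}$ and to partitions $\mu$ with $|\mu|\le 3k$, these functionals are paired with the basis $\{v_\lambda:|\lambda|\le 3k\}$ by the matrix $M_{3k}$, which by \eqref{Mform} is triangular with nonzero diagonal. Hence $v\mapsto(\xi(\widetilde{\mathbf L}_\mu v))_{|\mu|\le 3k}$ is a linear isomorphism of $U_{3k}$, and prescribing \eqref{xiLtPsi}, i.e. $\xi(\widetilde{\mathbf L}_\mu|\Psi_k\rangle)=\delta_{\mu,(3^k)}$ for $|\mu|\le 3k$, determines a unique candidate $|\Psi_k\rangle\in U_{3k}$ (the target is consistent since $|(3^k)|=3k$). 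Moreover the first line of \eqref{Mform} gives $\xi(\widetilde{\mathbf L}_\mu v_\lambda)=0$ whenever $|\mu|>|\lambda|$, so all higher moments $\xi(\widetilde{\mathbf L}_\mu|\Psi_k\rangle)$ with $|\mu|>3k$ vanish automatically; thus the candidate satisfies \eqref{xiLtPsi} for every $\mu$. This provides a vector in $U_{3k}$ and will yield assertions (1) and (2) once the recurrence is confirmed.

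The key step is to verify that this candidate actually solves \eqref{recdesc}. Setting $w_n=\widetilde{L}_n|\Psi_k\rangle-\delta_{n,5}|\Psi_{k-1}\rangle$ for $n\ge 3$, I would compute every moment: expanding $\widetilde{\mathbf L}_\nu\widetilde{L}_n$ in the PBW basis of the subalgebra $\bigoplus_{m\ge 3}\mathbb C L_m$ and using \eqref{xiLtPsi}, one finds that $\xi(\widetilde{\mathbf L}_\nu w_n)$ equals the coefficient of $\widetilde{L}_5^{\,k}$ in $\widetilde{\mathbf L}_\nu\widetilde{L}_n$ minus $\delta_{n,5}\delta_{\nu,(3^{k-1})}$. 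The crucial observation is that every commutator $[\widetilde{L}_a,\widetilde{L}_b]=(a-b)\widetilde{L}_{a+b}$ with $a,b\ge 3$ produces a generator of index $a+b\ge 6$; consequently no factor $\widetilde{L}_5$ can ever be created by reordering, and $\widetilde{L}_5^{\,k}=\widetilde{\mathbf L}_{(3^k)}$ can occur only in the commutator-free leading term $\widetilde{\mathbf L}_{\mathrm{sort}(\nu\cup\{n-2\})}$. This leading term equals $\widetilde{L}_5^{\,k}$ precisely when $n=5$ and $\nu=(3^{k-1})$, with coefficient $1$, so $\xi(\widetilde{\mathbf L}_\nu w_n)=0$ for all $\nu$. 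Since $w_n$ lies in some finite $U_m$, the zero-moment vector must vanish (see the uniqueness argument below), which establishes \eqref{recdesc} and hence assertions (1) and (2).

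Uniqueness (assertion (3)) would follow from \eqref{Mform} alone, independently of any degree bound: if $|\Psi\rangle$ is any solution of \eqref{recdesc} in the orthogonal gauge, then it satisfies \eqref{xiLtPsi} for all $\mu$, so its difference $\delta$ with the constructed solution has all moments zero; writing $d=\deg\delta$ and pairing the top-degree part $\sum_{|\lambda|=d}c_\lambda v_\lambda$ with $\widetilde{\mathbf L}_\mu$ of matching size $|\mu|=d$, the two vanishing cases of \eqref{Mform} collapse the sum to $c_\mu\,M(v_\mu,v_\mu)$, forcing $c_\mu=0$ and thus $\delta=0$. In particular every solution coincides with the one built inside $U_{3k}$, which re-confirms the degree bound and closes the induction. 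In my view the genuine content of the argument lives entirely in \eqref{Mform}: the strict triangularity and non-degeneracy of the pairing between the annihilation monomials $\widetilde{\mathbf L}_\mu$ and the creation monomials $\mathbf L_{-\lambda}$ in the size of the Young diagram. Granting that input, the main delicate point to execute carefully is the recurrence verification of the third paragraph, where one must keep the length/index-weight bookkeeping tight enough to isolate $\widetilde{L}_5^{\,k}$; this is where I would expect the only real risk of error.
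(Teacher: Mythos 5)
Your proposal is correct and follows essentially the same route as the paper's proof in Appendix~\ref{AppProof}: induction in $k$, with Nagoya's triangular non-degenerate pairing \eqref{Mform} fixing $|\Psi_k\rangle$ from the moment conditions \eqref{xiLtPsi}, and the recursion \eqref{recdesc} verified by PBW-reordering $\widetilde{\mathbf L}_\mu \widetilde{L}_n$ together with the observation that commutators of $\widetilde{L}_{a},\widetilde{L}_{b}$ with $a,b\ge 3$ only create indices $\ge 6$, so $\widetilde{L}_5^{\,k}$ can arise solely from the commutator-free term. The only (cosmetic) deviations are that you build the candidate directly inside $U_{3k}$ and recover the degree bound from a global top-degree uniqueness argument checking all moments of $w_n$, whereas the paper first works in $U_N$ with $N\ge 3k$, extracts the support restriction as a separate step, and bounds $\deg w_n$ via Nagoya's Lemma~2.19 to reduce to finitely many pairings.
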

\pf 
We will prove the above statements by induction in $k$ for components $ \left|\Psi_k\right\rangle$ in \eqref{PsiV2ser}. 
Assume that $\left|\Psi_{k-1}\right\rangle\in U_{3(k-1)}$ satisfies the recursion \eqref{recdesc}. 
We are looking for $\left|\Psi_k\right\rangle$ in the form
\begin{equation}\label{psikgen}
    \left|\Psi_k\right\rangle=\sum_{1\leq |\lambda|\leq N}c_\lambda^{(k)}\mathbf L_{-\lambda}\left|J\right\rangle,
\end{equation}
where $N$ is an integer satisfying $N\ge 3k$. Let us show that such a vector exists and is unique. 
Moreover, it will be shown that $c_\lambda^{(k)}=0$ for $|\lambda| > 3k$ and therefore we may set $N=3k$. 

\noindent
{\it \underline{Step 1} (uniqueness).}  We will prove uniqueness
by showing that the coefficients $c_\lambda^{(k)}$ satisfy a system of linear equations having a unique solution. 
This system comes from computing $\xi\lb \widetilde{\mathbf L}_\mu \left|\Psi_k\right\rangle\rb$ for $1\le |\mu|\leq N$ in two different ways: by 
direct computation using \eqref{psikgen} and by recursion relations \eqref{recdesc} for the components $\left|\Psi_j\right\rangle$ of $|\Psi\rangle$ leading to 
\eqref{xiLtPsi}:
\begin{equation}\label{syst1step}
    \sum_{1\leq |\lambda|\leq N}M\lb v_\mu,v_\lambda\rb c_\lambda^{(k)}=\xi\lb \widetilde{\mathbf L}_\mu \left|\Psi_k\right\rangle\rb=\delta_{\mu,\lb 3^k\rb}.
\end{equation}
Due to \eqref{Mform},  the matrix of the system is non-degenerate, which implies uniqueness of solution for $c_\lambda^{(k)}$.
As a corollary,  with such $c_\lambda^{(k)}$, the vector $|\Psi_k\rangle$ ($k\in\mathbb N$) given by \eqref{psikgen} satisfies
\begin{equation}\label{relstep2}
    \xi\lb \widetilde{\mathbf L}_\mu \left|\Psi_k\right\rangle\rb=\delta_{\mu,\lb 3^k\rb}
\end{equation}
for any $\mu\in\mathbb Y$. Indeed, for $\mu=\emptyset$ this follows from  \eqref{orthgauge}, the case $0<|\mu|\leq N$ follows from 
\eqref{syst1step}, and the case $|\mu|>N$  follows from the first relation in \eqref{Mform}.\vspace{0.1cm}

\noindent
{\it \underline{Step 2} (restriction of support of $|\Psi_k\rangle$).} Let us show that 
the solution of \eqref{syst1step} satisfies $c_\lambda^{(k)}=0$ for $|\lambda| > 3k$.
The starting point here is that
the only non-zero coefficient in the right hand side of \eqref{syst1step}  corresponds to $\mu=\lb 3^k\rb$.
Therefore, if we order the basis vectors of $U_N$ in \textit{any} descending order, denoted by $\succ$ (not necessarily coming from the degree introduced above), such that the matrix of coefficients 
of \eqref{syst1step}  is lower triangular, then $c_\lambda^{(k)}=0$ for all $\lambda \succ \lb 3^k\rb$.

In particular, due to \eqref{Mform}, for any order of basis elements $v_\lambda$  consistent with the non-ascending degree \eqref{degL}, 
the corresponding matrix of coefficients becomes  lower triangular  and hence $c_\lambda^{(k)}=0$ for $\left|\lambda\right| > |\lb 3^k\rb|= 3k$.
Therefore, if $\left|\Psi_k\right\rangle$ exists, then it should verify $\left|\Psi_k\right\rangle\in U_{3k}$ and  we may impose the restriction $N=3k$. 

In what follows, a similar idea will be used to even further restrict the subspace of $\mathcal V^{[2]}$ to which $\left|\Psi_k\right\rangle$ belongs using other orderings of basis elements.\vspace{0.1cm}

\noindent
{\it \underline{Step 3} (existence).} 
The requirement  for $\left|\Psi_k\right\rangle$ in the form \eqref{psikgen} to satisfy  \eqref{recdesc}
leads to an overdetermined linear system for the coefficients $c_\lambda^{(k)}$ in \eqref{psikgen}.
Its subsystem \eqref{syst1step} has a unique solution and one needs to check that it solves the whole overdetermined system too.
Let us show that \eqref{psikgen}, with $c_\lambda^{(k)}$ defined by the solution of \eqref{syst1step}, indeed  satisfies \eqref{recdesc}.
Equivalently, we need to check that for $n\ge 3$ the vectors 
\begin{equation}\label{syst2step}
    w_n=\widetilde{L}_{n} |\Psi_k\rangle-\delta_{n,5}|\Psi_{k-1}\rangle \in U_{3k-n+2}
\end{equation}
are equal to zero. (The restriction on the degree comes from Lemma~2.19 of \cite{Nagoya15}.)
In turn, thanks to non-degeneracy of the bilinear form $M_m$ on $U_m$, it is sufficient to show that
\begin{equation}
    M\lb v_\mu,w_n\rb=0
\end{equation}
for all $n\ge 3$ and all $|\mu|\leq 3k-n+2$.

For $n\neq 5$, we need to show that
\begin{equation}
    \xi\left(\widetilde{\mathbf L}_\mu \widetilde{L}_{n} \left|\Psi_k\right\rangle\right)=0,
\end{equation}
which follows from \eqref{relstep2} by PBW ordering of $\widetilde{\mathbf L}_\mu \widetilde{L}_{n}$.
Similar arguments work for $n=5$ with $\mu\ne\lb 3^{k-1}\rb$.
In the case $\mu=\lb 3^{k-1}\rb$, we have
\begin{equation}
    \xi\left(\widetilde{L}_5^{k-1}\left( \widetilde{L}_{5} \left|\Psi_k\right\rangle-\left|\Psi_{k-1}\right\rangle\right)\right)=\xi\left(\widetilde{L}_5^{k} \left|\Psi_k\right\rangle\right)-\xi\left(\widetilde{L}_5^{k-1} \left|\Psi_{k-1}\right\rangle\right)=0.
\end{equation}
The 1st term is equal to $1$ by \eqref{relstep2} and the 2nd term is equal to $1$ by induction.  This completes the proof. \epf

\begin{rmk}    
The explicit form of $\left|\Psi_1\right\rangle$ reads
\begin{equation}\label{Psi1deg}
    \left|\Psi_1\right\rangle=\left(\frac{1}{6\Lambda_4}L_{-1}-\frac{5\Lambda_3}{24\Lambda_4^2}L_0+\frac{15\Lambda_3^2-16\Lambda_2\Lambda_4}{48\Lambda_4^3}L_1+0\cdot L_1^2+0\cdot L_0 L_1+0\cdot L_1^3\right)
    \left|J\right\rangle.
\end{equation}
Observe that some coefficients of the decomposition of $\left|\Psi_1\right\rangle\in U_3$ are equal to zero. The number of vanishing coefficients becomes even larger for higher orders. 
This means that the restriction $\left|\Psi_k\right\rangle\in U_{3k}$ is not optimal and
$\left|\Psi_k \right\rangle \in V_k \varsubsetneq U_{3k}$.
\end{rmk}
We are now going to improve the statement of Theorem~\ref{Wh52deg} by explaining why a part of the coefficients in the decompositions such as \eqref{Psi1deg} are equal to zero. 
To do this, let us introduce two more degrees for basis elements $v_\lambda$, denoted  by $\deg_{\,\delta}$ with $\delta=1,2$ and defined by
\begin{equation}
    {\mathrm{deg}_{\,\delta}}v_\lambda=n_1+\delta n_2+\sum_{k>2}n_k\lb k-2\rb, \qquad \lambda=\lb\cdots\, 3^{n_3}2^{n_2}1^{n_1}\rb,
\end{equation}
or, in other words,
\begin{equation}
    {\mathrm{deg}_{\,\delta}} L_{-k}=k, \quad k>0,\qquad\quad {\mathrm{deg}_{\,\delta}} L_{0}=\delta, \qquad {\mathrm{deg}_{\,\delta}} L_{1}=1.
\end{equation}
These degrees can be used to show partial triangularity of the bilinear form $M$ in other bases. 
\begin{lemma}\label{deg12-lemma}
For all $\lambda,\mu\in\mathbb Y$ and arbitrary $\Lambda_4\ne 0$, $\Lambda_3$, and $\Lambda_2$ in the definition of rank 2 Whittaker module ${\mathcal V}^{[2]}$ (see \eqref{Lambda234}), we have 
\begin{equation}\label{Mform1}
    M\lb v_\mu, v_\lambda\rb=0 \qquad \mathrm{if} \qquad {\mathrm{deg}_1}v_\mu>{\mathrm{deg}_1}v_\lambda.
\end{equation}
If $\Lambda_3=0$,  we have 
\begin{equation}\label{Mform2}
    M\lb v_\mu, v_\lambda\rb=0 \qquad \mathrm{if} \qquad {\mathrm{deg}_2}v_\mu>{\mathrm{deg}_2}v_\lambda.
\end{equation}
\end{lemma}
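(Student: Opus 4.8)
The plan is to re-run the filtered leading--degree argument behind Nagoya's triangularity \eqref{Mform}, replacing the size $|\cdot|$ by the auxiliary degrees $\mathrm{deg}_1,\mathrm{deg}_2$. For $\delta\in\{1,2\}$ introduce the increasing filtration $F^{(\delta)}_p=\mathrm{span}\,(v_\lambda:\mathrm{deg}_\delta v_\lambda\le p)$ of ${\mathcal V}^{[2]}$. Since $\mathrm{deg}_\delta v_\emptyset=0$ while $\mathrm{deg}_\delta v_\lambda\ge 1$ for $\lambda\neq\emptyset$, one has $F^{(\delta)}_p=0$ for $p<0$, so it suffices to prove that $\widetilde{\mathbf L}_\mu v_\lambda\in F^{(\delta)}_{\,\mathrm{deg}_\delta v_\lambda-\mathrm{deg}_\delta v_\mu}$: indeed, when $\mathrm{deg}_\delta v_\mu>\mathrm{deg}_\delta v_\lambda$ this forces $\widetilde{\mathbf L}_\mu v_\lambda=0$, whence $M(v_\mu,v_\lambda)=\xi(\widetilde{\mathbf L}_\mu v_\lambda)=0$.

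This reduces the lemma to two structural facts. \emph{First}, the subalgebra $\mathfrak b=\mathrm{span}\,\{L_m:m\le 1\}$, which generates ${\mathcal V}^{[2]}$ from $|J\rangle$, is filtered by each $\mathrm{deg}_\delta$; this is a one-line check on $[L_m,L_{m'}]=(m-m')L_{m+m'}$ for $m,m'\le1$ (no central terms occur inside $\mathfrak b$), and it implies that left multiplication by a creation operator $L_m$ raises $F^{(\delta)}$ by at most $\mathrm{deg}_\delta L_m$. \emph{Second}, and this is the crux, the shifted operators act as filtered lowering operators: extending the notation by $\widetilde L_2=L_2-\Lambda_2$, I claim $\widetilde L_n\,F^{(\delta)}_p\subseteq F^{(\delta)}_{\,p-d^{(\delta)}_n}$ for every $n\ge2$, where $d^{(\delta)}_n$ is the $\mathrm{deg}_\delta$-weight of a single part of size $n-2$, namely $d^{(1)}_2=d^{(2)}_2=0$, $d^{(1)}_n=\max(1,n-4)$, and $d^{(2)}_3=1,\ d^{(2)}_4=2,\ d^{(2)}_n=n-4\ (n\ge5)$. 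Granting this, $\widetilde{\mathbf L}_\mu$ lowers $F^{(\delta)}$ by $\sum_i d^{(\delta)}_{\mu_i+2}=\mathrm{deg}_\delta v_\mu$, which is exactly the inclusion needed above.

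I would establish the lowering property by induction on the length $\ell(\lambda)$, the base case $\widetilde L_n|J\rangle=0$ $(n\ge2)$ being immediate from \eqref{Lambda234}. For the step, write $v_\lambda=L_m w$ with $L_m$ the leading creation operator and commute $\widetilde L_n L_m=L_m\widetilde L_n+[L_n,L_m]$: the term $L_m\widetilde L_n w$ is handled by the induction hypothesis together with the first fact, while $[L_n,L_m]=(n-m)L_{n+m}+\frac{c}{12}n(n^2-1)\delta_{n+m,0}$ is expanded into the cases $n+m\le 1$ (a creation operator or the central scalar, controlled by the first fact), $n+m\ge5$ (again a $\widetilde L$, controlled by induction), and $n+m\in\{2,3,4\}$ (where $L_{n+m}=\widetilde L_{n+m}+\Lambda_{n+m}$, the $\widetilde L$-part by induction and the $\Lambda$-part a scalar multiple of $w$). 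In each case the required containment reduces to an elementary comparison among $d^{(\delta)}_n$, $d^{(\delta)}_{n+m}$ and $\mathrm{deg}_\delta L_m$; the sharpest of these is the scalar-insertion bound $d^{(\delta)}_n\le\mathrm{deg}_\delta L_m$ with $n+m\in\{2,3,4\}$, coming from the $\Lambda_{n+m}$-pieces.

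The finite verification of these inequalities is the bulk of the work, but its only genuine content is a single anomaly, which is also what forces $\Lambda_3=0$ in \eqref{Mform2}. It occurs in $[L_4,L_{-1}]=5L_3=5(\widetilde L_3+\Lambda_3)$: here $\mathrm{deg}_2 L_{-1}=1$ but $d^{(2)}_4=\mathrm{deg}_2 L_0=2$, so the scalar piece $5\Lambda_3 w$ lands in $F^{(2)}_{\mathrm{deg}_2 w}$ while the target is $F^{(2)}_{\mathrm{deg}_2 w-1}$, overshooting by one. Setting $\Lambda_3=0$ annihilates exactly this offending term and the induction closes, yielding \eqref{Mform2}; the analogous $\delta=1$ bound is automatically safe because $d^{(1)}_4=\mathrm{deg}_1 L_0=1\le\mathrm{deg}_1 L_{-1}=1$, and every other $\Lambda_2,\Lambda_3,\Lambda_4$ insertion respects both filtrations, so \eqref{Mform1} holds for arbitrary parameters. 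The mechanism is already transparent in \eqref{Psi1deg}, whose $L_0$-coefficient is the unique term of $\mathrm{deg}_2$ larger than $\mathrm{deg}_2 v_{(3)}=1$ and is proportional to $\Lambda_3$.
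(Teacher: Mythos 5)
Your proof is correct and follows essentially the same route as the paper: the paper reduces the lemma to a lowering property of the shifted generators (if $\tilde L_{2+j}v_\lambda\ne 0$ then $\mathrm{deg}_\delta\, \tilde L_{2+j} v_\lambda \le \mathrm{deg}_\delta\, v_\lambda - \mathrm{deg}_\delta\, L_{2-j}$, proved by case-by-case commutator analysis, with $\Lambda_3=0$ needed only for $\delta=2$) followed by recursive application of this property, which is precisely your filtered-lowering claim and its iteration. Your write-up in fact supplies the explicit induction and inequalities --- including pinpointing the single obstruction $[L_4,L_{-1}]=5L_3$ whose $\Lambda_3$-part forces the hypothesis $\Lambda_3=0$ in \eqref{Mform2} --- that the paper compresses into ``straightforward case-by-case analysis''.
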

\begin{rmk}   
Note that the relations \eqref{Mform1} and \eqref{Mform2} do not follow from \eqref{Mform} because there are cases when $|\mu|<|\lambda|$ for which
${\mathrm{deg}_{\,\delta}}v_\mu>{\mathrm{deg}_{\,\delta}}v_\lambda$. For example, $\mu=\lb 1^2\rb$ and $\lambda=\lb 3\rb$.
\end{rmk}

\begin{rmk}   
The degrees ${\mathrm{deg}_{\,\delta}}$ alone do not imply a triangularity of the matrix associated to the form~$M$, since $M\lb v_\mu,v_\lambda\rb$ does not necessarily vanish if $\mathrm{deg}_\delta v_\mu = \mathrm{deg}_\delta v_\lambda$ with $\lambda\ne \mu$, cf. \eqref{Mform}. In other words, we can only ensure that the relevant matrix is \textit{block} lower-triangular, different blocks being labeled by ${\mathrm{deg}_{\,\delta}}$. The triangularity can however be restored  if we use in addition the previous degree \eqref{degL} inside each block.
\end{rmk}
\begin{rmk} Equations \eqref{Wh2} in the main text satisfy the restriction $\Lambda_3=0$. Accordingly, the degree $\deg_2$ appears explicitly in the selection rules \eqref{selectionrulesWH}.
\end{rmk}

For $\delta=1,2$, the space $U_{3k}$ defined by \eqref{DefUm} can be decomposed into a direct sum of two subspaces $U_{3k}=W_k^{(\delta)} \bigoplus V_k^{(\delta)}$, where
\begin{equation}
    W_k^{(\delta)}=\mathrm{span}\left(v_\lambda\in U_{3k}|\, {\mathrm{deg}_{\,\delta}}v_\lambda> k\right) , \qquad 
    V_k^{(\delta)}=\mathrm{span}\left(v_\lambda\in U_{3k}|\, {\mathrm{deg}_{\,\delta}}v_\lambda\leq k\right).
\end{equation}
Note that for $k>0$ we have a strict inclusion $V_k^{(2)}\varsubsetneq   V_k^{(1)}$.

The matrix $M_{3k}$ of bilinear form $M$ restricted to $U_{3k}=W_k^{(\delta)}\bigoplus V_k^{(\delta)}$ has the form of a $2\times 2$ block matrix
\begin{equation}\label{MMatrix_delta}
    M_{3k}^{(\delta)}=\begin{pmatrix}
        M_{WW} & M_{WV} \\
        M_{VW} & M_{VV}
    \end{pmatrix},
\end{equation}
where the indices $W$ and $V$ correspond to restrictions to $W_k^{(\delta)}$ and $V_k^{(\delta)}$, and additionally inside each block we use a descending order of basis elements with respect to $\operatorname{deg} v_\lambda =|\lambda|$. Then,  it follows from \eqref{Mform} that $M_{WW}$ and $M_{VV}$ are lower-triangular and non-degenerate matrices, while Lemma~\ref{deg12-lemma} ensures that $M_{WV}=0$. 
The matrix $M_{3k}^{(\delta)}$ is therefore lower-triangular. Using this fact, we can now formulate an enhanced version of Theorem~\ref{Wh52deg}.

\begin{theo} \label{Wh52deg12} 
The rank $\frac52$ Whittaker vector $\left|\Psi\right\rangle$ exists and is unique in the orthogonal gauge. For any $k\in\mathbb N$, its component $\left|\Psi_k\right\rangle$ belongs to the subspace $V_k^{(1)}$.
If in addition $\Lambda_3=0$, then $|\Psi_k\rangle\in V_k^{(2)}\subset  V_k^{(1)}$.
\end{theo}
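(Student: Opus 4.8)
The plan is to obtain Theorem~\ref{Wh52deg12} as a refinement of Theorem~\ref{Wh52deg}, reusing the existence and uniqueness already proven there and upgrading only the localization of the components $|\Psi_k\rangle$. Since Theorem~\ref{Wh52deg} already guarantees that $|\Psi\rangle$ exists, is unique in the orthogonal gauge \eqref{orthgauge}, and has $|\Psi_k\rangle\in U_{3k}$, the sole remaining task is to show that the unique coefficient vector $c^{(k)}=(c_\lambda^{(k)})$ solving \eqref{syst1step} is in fact supported on $V_k^{(1)}$, and on $V_k^{(2)}$ when $\Lambda_3=0$. I would fix $k\in\mathbb N$ and $\delta\in\{1,2\}$ and analyze \eqref{syst1step} through the block decomposition $U_{3k}=W_k^{(\delta)}\bigoplus V_k^{(\delta)}$.

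The first point I would record is that the inhomogeneity of \eqref{syst1step}, namely $\delta_{\mu,(3^k)}$, is supported on the single partition $(3^k)$, which has $\deg_\delta v_{(3^k)}=k$ for both $\delta=1,2$: in frequency notation $(3^k)$ has $n_3=k$ and all other $n_j=0$, so $\deg_\delta=n_3(3-2)=k$. Hence $(3^k)\in V_k^{(\delta)}$ and the $W$-part of the right-hand side vanishes. The second point, already assembled just before the statement, is that with respect to this decomposition the pairing matrix $M_{3k}^{(\delta)}$ is block lower-triangular: the block $M_{WV}$ vanishes by Lemma~\ref{deg12-lemma} (using \eqref{Mform1} for $\delta=1$, and \eqref{Mform2}, valid only for $\Lambda_3=0$, for $\delta=2$), while $M_{WW}$ and $M_{VV}$ are lower-triangular and non-degenerate by \eqref{Mform}.

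Writing $c^{(k)}=(c_W,c_V)$ and the right-hand side as $(0,b_V)$, the system \eqref{syst1step} decouples into $M_{WW}c_W=0$ and $M_{VW}c_W+M_{VV}c_V=b_V$. Non-degeneracy of $M_{WW}$ then forces $c_W=0$, i.e. $c_\lambda^{(k)}=0$ for all $v_\lambda\in W_k^{(\delta)}$, which is precisely $|\Psi_k\rangle\in V_k^{(\delta)}$. For $\delta=1$ this holds unconditionally, whereas for $\delta=2$ it requires the extra hypothesis $\Lambda_3=0$; the strict inclusion $V_k^{(2)}\varsubsetneq V_k^{(1)}$ then yields the claimed sharper localization.

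The step I expect to require the most care is the synchronization of the two triangularity layers, rather than the final linear algebra. The block partition is dictated by $\deg_\delta$ while the triangularity inside each block is dictated by the coarse degree $|\lambda|=\deg v_\lambda$, and these orderings are genuinely inequivalent (for instance $(1^2)$ and $(3)$ satisfy $|(1^2)|<|(3)|$ but $\deg_\delta v_{(1^2)}>\deg_\delta v_{(3)}$, cf. the remark after Lemma~\ref{deg12-lemma}). I would therefore fix, once and for all, a total order refining $\deg_\delta$ by $|\lambda|$ and verify that $M_{3k}^{(\delta)}$ is genuinely lower-triangular with non-zero diagonal in this order, so that the conclusion $c_W=0$ is rigorous. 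Were one to prove the theorem entirely from scratch, the real difficulty would instead lie upstream, in establishing Lemma~\ref{deg12-lemma} itself; here it is available, and the argument above is essentially bookkeeping layered on top of Theorem~\ref{Wh52deg}.
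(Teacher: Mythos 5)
Your proposal is correct and follows essentially the same route as the paper's own proof: existence and uniqueness are inherited from Theorem~\ref{Wh52deg}, and the localization $|\Psi_k\rangle\in V_k^{(\delta)}$ is obtained by exploiting the block lower-triangular structure of $M_{3k}^{(\delta)}$ (with $M_{WV}=0$ from Lemma~\ref{deg12-lemma} and intra-block triangularity from the coarse degree $|\lambda|$) together with the observation that $v_{(3^k)}\in V_k^{(\delta)}$, so the $W$-component of the right-hand side of \eqref{syst1step} vanishes and non-degeneracy of $M_{WW}$ forces $c_W=0$. Your explicit block-matrix decoupling and your attention to refining $\deg_\delta$ by $|\lambda|$ to get genuine triangularity are precisely the content of the paper's adjusted Step~2.
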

\pf
Theorem~\ref{Wh52deg} already proves the existence and uniqueness of $|\Psi\rangle$ with components $|\Psi_k\rangle\in U_{3k}$. 
The only thing that needs to be adjusted is Step~2 in its proof. Namely, we will now use the lower triangular matrix $M_{3k}^{(\delta)}$ 
given by \eqref{MMatrix_delta} associated to the bilinear form $M$.
To complete the proof, it suffices to note that $v_{\lb 3^k\rb}=L_{-1}^k|J\rangle\in V_k^{(\delta)}$.
\epf 

Finally, let us note that Lemma~\ref{deg12-lemma} follows immediately from the part $(b)$ of the following lemma.
\begin{lemma} Let $\lambda,\mu\in\mathbb Y$ be arbitrary Young diagrams.
For the degree $\deg_1$, and, if $\Lambda_3=0$, also for the degree $\deg_2$, we have  
\begin{enumerate}
\item[(a)]  If $\tilde L_{2+j} v_\lambda \ne 0$ with $j\ge 1$, then
\begin{equation}
    \deg_{\,\delta} \tilde L_{2+j} v_\lambda \le \deg_{\,\delta}  v_\lambda - \deg_{\,\delta} L_{2-j}\,.
\end{equation}
\item[(b)]  If $\deg_{\,\delta} v_\mu > \deg_{\,\delta} v_\lambda$, then 
\begin{equation}
    \widetilde{\mathbf L}_\mu  v_\lambda =0.
\end{equation}
\end{enumerate}
\end{lemma}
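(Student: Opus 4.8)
The plan is to deduce (b) from (a) and devote the real work to (a). Granting (a), factor $\widetilde{\mathbf L}_\mu=\tilde L_{\mu_1+2}\cdots\tilde L_{\mu_n+2}$ and apply the operators to $v_\lambda$ one at a time from the right; since each $\mu_i\ge1$ every index $\mu_i+2\ge3$, so (a), extended from basis vectors to arbitrary vectors by linearity, shows that each factor lowers $\deg_\delta$ by at least $\deg_\delta L_{2-\mu_i}$. Because $\sum_i\deg_\delta L_{2-\mu_i}=\deg_\delta\big(\mathbf L_{-\mu}|J\rangle\big)=\deg_\delta v_\mu$, iterating gives $\deg_\delta\big(\widetilde{\mathbf L}_\mu v_\lambda\big)\le\deg_\delta v_\lambda-\deg_\delta v_\mu$. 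When $\deg_\delta v_\mu>\deg_\delta v_\lambda$ the right-hand side is negative; as every basis vector has $\deg_\delta\ge0$, with $|J\rangle$ of degree $0$, so does every nonzero vector, and therefore $\widetilde{\mathbf L}_\mu v_\lambda$ must vanish. This is exactly (b), from which $M(v_\mu,v_\lambda)=\xi(\widetilde{\mathbf L}_\mu v_\lambda)=0$ and hence Lemma~\ref{deg12-lemma} follow.

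For (a) it is cleanest to prove a slightly stronger master estimate that also covers $L_2$: for every $m\ge2$ and every $\lambda\in\mathbb Y$, $\deg_\delta(\tilde L_m v_\lambda)\le\deg_\delta v_\lambda-e_m$, where I set $\tilde L_2:=L_2-\Lambda_2$, $e_m:=\deg_\delta L_{4-m}$ for $m\ge3$, and $e_2:=0$. Including $m=2$ is unavoidable because commuting $\tilde L_m$ through the letters of $v_\lambda$ can create $L_2$. First I would record an auxiliary grading estimate $(\ast)$: for any lowering generator $L_{2-k}$ ($k\ge1$, index $\le1$) and any vector $w$, $\deg_\delta(L_{2-k}w)\le\deg_\delta L_{2-k}+\deg_\delta w$. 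After PBW reordering this reduces to the subadditivity $d(a+b)\le d(a)+d(b)$ of $d(m):=\deg_\delta L_m$ on indices $m\le1$, together with the key observation that reordering two lowering generators never produces an $L_2$: the bracket $[L_a,L_b]$ with $a,b\le1$ yields $L_{a+b}$ with $a+b\le2$, and $a+b=2$ forces $a=b=1$, where the bracket vanishes. Subadditivity, and the harmlessness of the central term which appears only when $a+b=0$, is verified directly for $\delta=1$ and $\delta=2$.

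The master estimate is then proved by induction on the number of parts $\ell(\lambda)$. Writing $v_\lambda=L_{2-\lambda_1}v_{\lambda'}$ and using $\tilde L_m L_{2-\lambda_1}=[L_m,L_{2-\lambda_1}]+L_{2-\lambda_1}\tilde L_m$, the tail $L_{2-\lambda_1}\tilde L_m v_{\lambda'}$ is controlled by the inductive hypothesis together with $(\ast)$, while $[L_m,L_{2-\lambda_1}]=(m-2+\lambda_1)L_p+(\text{central})$, with $p=m+2-\lambda_1$, is handled case by case in $p$. For $p\le1$ one invokes $(\ast)$ and the elementary inequality $\deg_\delta L_p\le\deg_\delta L_{2-\lambda_1}-e_m$; for $p\ge2$ one writes $L_p=\tilde L_p+\Lambda_p$ (with $\Lambda_p=0$ for $p\ge5$) and applies the inductive hypothesis to $\tilde L_p v_{\lambda'}$, noting $\ell(\lambda')<\ell(\lambda)$, plus the eigenvalue contributions $\Lambda_p v_{\lambda'}$ for $p\in\{2,3,4\}$. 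Each of the finitely many resulting inequalities among values of $d$ and $e$ is routine. The main obstacle, and the only place the distinction $\delta=1$ versus $\delta=2$ enters, is the eigenvalue term $\Lambda_3 v_{\lambda'}$ arising from $p=3$, $m=4$, i.e. commuting $\tilde L_4$ past $L_{-1}$: it has degree $\deg_\delta v_{\lambda'}$, and the required bound forces $e_4=\deg_\delta L_0=\delta\le\deg_\delta L_{-1}=1$, which fails precisely when $\delta=2$. This is exactly why the $\deg_2$ statement is asserted only under the hypothesis $\Lambda_3=0$: in that case the offending term is absent, and all remaining inequalities hold for both $\delta=1$ and $\delta=2$.

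Finally, specializing the master estimate to $m=2+j$ with $j\ge1$ recovers (a) verbatim, since $e_{2+j}=\deg_\delta L_{2-j}$; combined with the first paragraph this establishes (b) and Lemma~\ref{deg12-lemma}. I expect the bookkeeping of the case analysis to be the most laborious part, but each individual inequality is elementary; the genuinely conceptual point is the appearance of $\Lambda_3$ in the single bad case, which pins down the role of the hypothesis $\Lambda_3=0$ for $\deg_2$.
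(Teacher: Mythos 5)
Your proof is correct and follows essentially the same route as the paper, which proves (a) by case-by-case analysis of the commutators and obtains (b) by recursive application of (a); your write-up supplies the details the paper leaves as ``straightforward'' --- the strengthened estimate including $\tilde L_2:=L_2-\Lambda_2$ (needed to close the induction), the subadditivity lemma $(\ast)$ for reordering lowering generators, and the induction on $\ell(\lambda)$ --- and all the resulting inequalities do check out. In particular, your identification of the single offending case, the eigenvalue term $\Lambda_3 v_{\lambda'}$ produced when commuting $\tilde L_4$ past $L_{-1}$, which demands $\deg_\delta L_0=\delta\le\deg_\delta L_{-1}=1$, correctly pins down why the $\deg_2$ statement requires the hypothesis $\Lambda_3=0$, a point the paper's sketch does not make explicit.
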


\pf 
The first claim can be proved by straightforward case-by-case analysis of different values of $j$ for the case $\delta=1$ and (if $\Lambda_3=0$) also for $\delta=2$.
Statement $(b)$ is obtained by recursive application of  $(a)$. \epf
    
    \subsection{Rank $\frac52$ state in the rank $\frac32$ Whittaker module\label{AppA3}}
      We start by adapting the computations from Subsection \ref{subsec_Diffopapp} for the representation of Virasoro generators given by \eqref{52equationsV2}. The main steps are as follows. First, define conformal block as
    	\beq
    	\mathcal F^{(5/2)}\lb c_{_{1/2}},c_{_{3/2}},c_{_{5/2}}\rb= \bigl\langle 0 \bigr| I^{(5/2)}\bigr\rangle,
    	\eeq
    	where $L_n\bigl|I^{(5/2)}\bigr\rangle=\mathcal L_n^{(5/2)}\bigl|I^{(5/2)}\bigr\rangle$ for $n\ge 0$ and the differential operators $\mathcal L^{(5/2)}_n$ are defined by \eqref{52equationsV2}. Next, consider the Ward identities
    	\beq 
    	\mathcal L^{(5/2)}_{0}\mathcal F^{(5/2)}=0, \qquad  \mathcal L^{(5/2)}_{1}\mathcal F^{(5/2)}=0.
    	\eeq 
    	Their general solution reads
    	\beq
    	\mathcal F^{(5/2)}\lb c_{_{1/2}},c_{_{3/2}},c_{_{5/2}}\rb = \exp\left\{\frac{4c_{_{3/2}}^5}{405c_{_{5/2}}^3}-\frac{4c_{_{1/2}}c_{_{3/2}}^3}{27c_{_{5/2}}^2}
    	+\frac{2c_{_{1/2}}^2c_{_{3/2}}}{3c_{_{5/2}}}\right\} F\lb \xi\rb,
    	\eeq
    	where the irregular cross-ratio $\xi$ is given by
    	\beq \label{sc52}
    	\xi=-\frac{2i}{3} \frac{c_{_{3/2}}^{5/2}}{c_{_{5/2}}^{3/2}}\left(1-\frac{6c_{_{1/2}}c_{_{5/2}}}{c_{_{3/2}}^2}\right)^{5/4},
    	\eeq
    	and  $F\lb \xi\rb$ is an arbitrary function. It represents the part of conformal block which is not fixed by the Ward identities. 
    	
    	At the next step, we would like to embed the rank $\frac52$ irregular state into the rank $\frac32$ Whittaker module in a way similar to \eqref{irr_state_dec}. Write 
    	\beq\label{auxI52}
    	\bigl| I^{(5/2)}\bigr\rangle=f\lb c_{_{1/2}},c_{_{3/2}},c_{_{5/2}}\rb |\tilde\Psi\rangle,
    	\eeq
    	where the normalized state $|\tilde \Psi\rangle$ and the singular prefactor $f$ are required to satisfy
    	\beq\label{fcond5232}
    	|\tilde\Psi\rangle=\bigl|I^{(3/2)}\bigr\rangle+o\lb 1\rb,\qquad f^{-1}\mathcal L_n^{(5/2)}f=\mathcal L_n^{(3/2)}+o\lb 1\rb.
    	\eeq
    	as $c_{_{5/2}}\to0$. The nontrivial operators  $\mathcal L_n^{(3/2)}$ are given by 
    	\beq
    	\begin{gathered}
    	\mathcal L_3^{(3/2)}=-c_{_{3/2}}^2,\qquad \mathcal L_2^{(3/2)}=-2c_{_{1/2}}c_{_{3/2}},\\
    	\mathcal L_1^{(3/2)}=-c_{_{1/2}}^2+\frac{c_{_{3/2}}}{2}\frac{\partial}{\partial c_{_{1/2}}},\qquad
    	\mathcal L_0^{(3/2)}=
    	\frac{3 c_{_{3/2}}}{2}\frac{\partial}{\partial c_{_{3/2}}}+
    	\frac{c_{_{1/2}}}{2}\frac{\partial}{\partial c_{_{1/2}}},
    	\end{gathered}
    	\eeq
    	cf Proposition~\ref{PropLs}.
    	The singular prefactor can be chosen in a form that incorporates the as-yet undetermined conformal block:
    	\beq\label{orthogauge32}
    	f\lb c_{_{1/2}},c_{_{3/2}},c_{_{5/2}}\rb=\frac{\bigl\langle 0 \bigr| I^{(5/2)}\bigl\rangle}{\bigl\langle 0 \bigr| I^{(3/2)}\bigr\rangle}.
    	\eeq
        Indeed, projecting  \eqref{auxI52} onto the vacuum and using \eqref{orthogauge32}, one obtains $\langle 0 |\tilde \Psi\rangle=\bigl\langle 0 \bigr| I^{(3/2)}\bigr\rangle$. The above choice therefore amounts to working in the orthogonal gauge.
    	Also note that the rank $\frac32$ conformal block  $\mathcal F^{(3/2)}\lb c_{_{1/2}},c_{_{3/2}}\rb=\bigl\langle 0\bigr|I^{(3/2)}\bigr\rangle$ can be easily found from the Ward identities: 
    	\beq
    	\mathcal F^{(3/2)}\lb c_{_{1/2}},c_{_{3/2}}\rb=\exp \frac{2c_{_{1/2}}^3}{3c_{_{3/2}}}.
    	\eeq

    	The leading  singular behavior of the  function $F\lb \xi \rb$ as $c_{_{5/2}}\to 0$ is determined by the condition \eqref{fcond5232}
    	\beq
    	\ln F\lb \xi\rb=\frac{\xi^2}{45}+O\lb \xi\rb,\qquad \xi\to\infty.
    	\eeq
    	The results obtained in the case of rank 2 reduction suggest that $\ln F\lb \xi\rb$ should admit an expansion in integer powers of $\xi$. This, however, immediately implies (cf \eqref{sc52}) that  $|\tilde \Psi\rangle$ should be expanded in \textit{half-integer} powers of $c_{_{5/2}}$:
    	\beq
    	|\tilde \Psi\rangle=\bigl| I^{(3/2)}\bigr\rangle+\sum_{k=1}^{\infty} c_{_{5/2}}^{k/2}
    	\bigl|I^{(3/2)}_{k/2}\bigr\rangle, \qquad \bigl\langle 0\bigr|I^{(3/2)}_{k/2}\bigr\rangle=0.
    	\eeq
    	
    	We will now use the ideas of Subsection~\ref{SubsecAlgCon} to develop an algebraic construction of the normalized 
    	Whittaker state $|\tilde \Psi\rangle$.
    		Let $|\tilde J\rangle$ satisfy
    	\beq
    	L_{n>3}|\tilde J\rangle=0,\qquad L_{3}|\tilde J\rangle=|\tilde J\rangle,\qquad L_2|\tilde J\rangle=0,
    	\eeq 
    	and consider the corresponding Whittaker module
    	\beq
    	\mathcal V^{[3/2]}=\bigoplus_{m_{0,1}\in \mathbb Z_{\ge 0},\lambda\in\mathbb Y}\mathbb C\,\mathbb L_{-\lambda}L_0^{m_0}L_1^{m_1}|\tilde J\rangle.
    	\eeq
    	We would like to find $|\tilde\Psi\rangle\in \mathcal V^{[3/2]}$ such that
    	\beq\label{Wh52version32}
    	L_{n>5}|\tilde\Psi\rangle=0,\qquad L_5|\tilde\Psi\rangle=\varepsilon^2 |\tilde\Psi\rangle,
    	\qquad L_4|\tilde\Psi\rangle=-2\varepsilon  |\tilde\Psi\rangle,\qquad 
    	L_3|\tilde\Psi\rangle= |\tilde\Psi\rangle.
    	\eeq
    	The parameterization of eigenvalues is suggested by \eqref{52equationsV2} and corresponds to setting $c_{_{1/2}}=0$, $c_{_{3/2}}=-i$, ${c_{_{5/2}}=i\varepsilon}$ therein. The parameter $\varepsilon$  should not be confused with the one used in Subsection~\ref{SubsecAlgCon}. It is related to the parameter $\xi$ as
    	\beq
    	\xi=-\frac{2i}{3}\varepsilon^{-3/2}.
    	\eeq
    	As $\varepsilon\to0$, the relations \eqref{Wh52version32} are verified not only by $|\tilde{J} \rangle$ but also by $L_1^p|\tilde{J}\rangle$ for any $p\in\mathbb N$. This constitutes a major difference as compared to the rank $2$ embedding. We look for  $|\tilde\Psi\rangle$ in the form
    	\beq \label{psit32ser}
    	|\tilde\Psi\rangle=|\tilde J\rangle+\sum_{k=1}^{\infty} \varepsilon^{k/2}\mathbb{G}_{k/2}|\tilde J\rangle, \qquad \mathbb G_{k/2}=\sum_{m_0,m_1\in\mathbb Z_{\ge0},\lambda\in\mathbb Y}G_{\lambda;m_0,m_1}^{[k/2]}\mathbb L_{-\lambda}L_0^{m_0}L_1^{m_1}, 
    	\eeq
    	and impose the orthogonal gauge conditions
    	\beq\label{OGII}
    	G_{\emptyset;0,0}^{[k/2]}=0,\qquad k\in\mathbb N.
    	\eeq
    	
    	In contrast to the rank $2$ embedding, the algebraic relations \eqref{Wh52version32} are not sufficient for the construction of $|\tilde\Psi\rangle$. We will also need an analog of~$L_\varepsilon$. It can be found by constructing an antihomomorphism of a suitable subalgebra of $\mathsf{Vir}$ into the algebra of 1st order differential operators along the lines of Subsection~\ref{SubsectionConfBlocks52}. Consider a linear combination $ L_\varepsilon=AL_2+BL_1+CL_0$ and require that
    	\beq
    	\begin{gathered}
    		L_n\mapsto \mathcal L_n,\qquad n\in \{\varepsilon\}\cup \mathbb N_{\ge3},\\
    		\mathcal L_{n\ge 6}= 0,\qquad \mathcal L_5= \varepsilon^2,\qquad 
    		\mathcal L_4=-2\varepsilon,
    		\qquad \mathcal L_3= 1,\qquad 
    		\mathcal L_{\varepsilon}=f\lb \varepsilon\rb\frac{\partial}{\partial\varepsilon}.
    	\end{gathered}
    	\eeq
    	Computing the commutators
    	\begin{subequations}
    		\begin{align}
    			\left[L_3,L_{\varepsilon}\right]=&\,AL_5+2BL_4+3CL_3\mapsto A\varepsilon^2-4B\varepsilon+3C,\\
    			\left[L_4,L_{\varepsilon}\right]=&\,2AL_6+3BL_5+4CL_4\mapsto 3B\varepsilon^2-8C\varepsilon,\\
    			\left[L_5,L_{\varepsilon}\right]=&\,3AL_7+4BL_6+5CL_5\mapsto 5C\varepsilon^2,
    		\end{align}
    	\end{subequations}
    	and comparing them with
    	\beq
    	\left[\mathcal L_{\varepsilon},\mathcal L_3\right]=0,\qquad 
    	\left[\mathcal L_{\varepsilon},\mathcal L_4\right]=-2f\lb \varepsilon\rb,\qquad
    	\left[\mathcal L_{\varepsilon},\mathcal L_5\right]=2\varepsilon f\lb \varepsilon\rb,
    	\eeq
    	we can set 
    	\beq
    	L_\varepsilon=L_2+\varepsilon L_1+\varepsilon^2 L_0,\qquad \mathcal L_{\varepsilon}=\frac{5\varepsilon^3}{2}\frac{\partial}{\partial\varepsilon}.
    	\eeq
    	
    	Note that we can add to $\mathcal L_{\varepsilon}$ an arbitrary function of $\varepsilon$ without changing the commutation relations. This freedom corresponds to different embeddings of the rank $\frac52$ Whittaker vector into rank $\frac32$ Whittaker module. 
    	Let us define the action of $L_\varepsilon$ as
    	\beq \label{lepsil32}
    	L_\varepsilon|\tilde{\Psi}\rangle=\left(\frac{5\varepsilon^3}{2}\frac{\partial}{\partial\varepsilon}+\sum_{k=1}^{\infty}\alpha_k \varepsilon^{\frac{3k}{2}-1}\right)|\tilde{\Psi}\rangle,
    	\eeq
    	where the coefficients of the formal series in the second term can be chosen arbitrarily.
    	\begin{conj} The state $|\tilde{\Psi}\rangle$ defined by the relations \eqref{Wh52version32}, \eqref{psit32ser}--\eqref{OGII} and \eqref{lepsil32} exists
    		and is uniquely determined in terms of $\left\{\alpha_n\right\}_{n\in\mathbb N}$ and the central charge $c$. Moreover,
    		\beq\label{selectionrulesWHv2}
    		\mathbb G_{k/2}=\sum_{\substack{1\leq\ell\leq k \\ \ell=k\;\mathrm{mod}\;3}}\sum_{\substack{m_0,m_1\in\mathbb Z_{\ge0},\lambda\in\mathbb Y \\ m_1+3m_0+2|\lambda|=\ell}}G_{\lambda;m_0,m_1}^{[k/2]}\mathbb L_{-\lambda}L_0^{m_0}L_1^{m_1}.
    		\eeq
    		where the coefficients $G_{\lambda;m_0,m_1}^{[k/2]}$ depend only on $\alpha_1, \ldots, \alpha_{\left[(k+2)/3\right]}$.
    	\end{conj}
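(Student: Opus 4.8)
The plan is to transplant the Nagoya-type argument behind Theorem~\ref{conjWh52} (Appendix~\ref{AppProof}) to the rank~$\frac32$ module $\mathcal V^{[3/2]}$, supplementing it with the $L_\varepsilon$-equation in order to resolve a new freedom that is absent in the rank~$2$ embedding. First I would fix the combinatorial skeleton: the PBW basis $v_\lambda=\mathbf L_{-\lambda}|\tilde J\rangle$ with the index shift of \eqref{PBW_app}, the functional $\xi$ reading off the coefficient of $|\tilde J\rangle$, the shifted generators $\widetilde L_3=L_3-1$, $\widetilde L_{n>3}=L_n$, and the bilinear form $M(v_\mu,v_\lambda)=\xi\bigl(\widetilde{\mathbf L}_\mu\mathbf L_{-\lambda}|\tilde J\rangle\bigr)$. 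Since $\widetilde L_n|\tilde J\rangle=0$ for every $n\ge3$, the first task is to re-establish the triangularity \eqref{Mform} of $M$ in this setting. The only structural change is that the top nonzero eigenvalue is now carried by $L_3$ (with $L_3|\tilde J\rangle=|\tilde J\rangle$) instead of by $L_4$; I expect Nagoya's degree bookkeeping to go through verbatim, the non-vanishing of the diagonal entries of $M$ being guaranteed by $L_3\mapsto1\ne0$.

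With $M$ and its triangularity in hand, I would insert the expansion \eqref{psit32ser}, writing $|\tilde\Psi_{k/2}\rangle:=\mathbb G_{k/2}|\tilde J\rangle$ with $|\tilde\Psi_0\rangle=|\tilde J\rangle$, and translate \eqref{Wh52version32} into the order-by-order recurrences
\[
\widetilde L_3|\tilde\Psi_{k/2}\rangle=0,\quad
\widetilde L_4|\tilde\Psi_{k/2}\rangle=-2|\tilde\Psi_{(k-2)/2}\rangle,\quad
\widetilde L_5|\tilde\Psi_{k/2}\rangle=|\tilde\Psi_{(k-4)/2}\rangle,\quad
\widetilde L_{n\ge6}|\tilde\Psi_{k/2}\rangle=0,
\]
with the convention that components of negative index vanish. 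Exactly as in the three-step induction behind Theorem~\ref{Wh52deg}, the numbers $M(v_\mu,|\tilde\Psi_{k/2}\rangle)$ are prescribed by these right-hand sides together with the orthogonal gauge \eqref{OGII}, and the non-degeneracy of $M$ on each graded piece determines all components of $|\tilde\Psi_{k/2}\rangle$ lying outside the kernel $K=\{w:\widetilde L_n w=0,\ n\ge3\}$. The decisive new phenomenon, flagged already in the main text, is that $K$ is no longer trivial: at $\varepsilon=0$ the relations \eqref{Wh52version32} are solved by the whole string $\{L_1^p|\tilde J\rangle\}_{p\ge0}$, since $L_3L_1^p|\tilde J\rangle=L_1^p|\tilde J\rangle$ and $L_4L_1^p|\tilde J\rangle=L_5L_1^p|\tilde J\rangle=0$. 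Consequently the $\{\widetilde L_n\}_{n\ge3}$-recurrence fixes $|\tilde\Psi_{k/2}\rangle$ only modulo its $K$-component $\sum_p c_p^{(k)}L_1^p|\tilde J\rangle$; adding a $K$-vector to $|\tilde\Psi_{k/2}\rangle$ does not alter $\widetilde L_n|\tilde\Psi_{k/2}\rangle$ for $n\ge3$, precisely because $\widetilde L_n$ annihilates $K$.

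To pin down the $c_p^{(k)}$ I would bring in the $L_\varepsilon$-equation \eqref{lepsil32}. Collecting powers of $\varepsilon^{1/2}$ in $L_\varepsilon=L_2+\varepsilon L_1+\varepsilon^2L_0$ gives, at order $\varepsilon^{m/2}$,
\[
L_2|\tilde\Psi_{m/2}\rangle+L_1|\tilde\Psi_{(m-2)/2}\rangle+L_0|\tilde\Psi_{(m-4)/2}\rangle
=\tfrac{5(m-4)}{4}|\tilde\Psi_{(m-4)/2}\rangle+\sum_{k\ge1}\alpha_k|\tilde\Psi_{(m-3k+2)/2}\rangle .
\]
The key identity is $L_2\,L_1^p|\tilde J\rangle=p\,L_1^{p-1}|\tilde J\rangle$, so that $L_2$ acts as a lowering operator along the kernel string. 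Hence, reading off the coefficient of each $L_1^{p-1}|\tilde J\rangle$ in the displayed equation yields $p\,c_p^{(m)}=(\text{already determined data})$ and fixes $c_p^{(m)}$ for $p\ge1$, while $c_0^{(m)}=0$ is the orthogonal gauge. A constant $\alpha_k$ first intervenes at order $m=3k-2$, through the term $\alpha_k|\tilde\Psi_0\rangle=\alpha_k|\tilde J\rangle$; tracking this bookkeeping reproduces the claimed dependence of $G^{[k/2]}_{\lambda;m_0,m_1}$ on $\alpha_1,\dots,\alpha_{[(k+2)/3]}$ and on $c$. The congruence $\ell\equiv k\pmod{3}$ in \eqref{selectionrulesWHv2}, as well as the cutoff $\ell\le k$, I would obtain from refined gradings analogous to $\deg_1,\deg_2$ of Lemma~\ref{deg12-lemma} (here $\deg L_1=1$, $\deg L_0=3$, $\deg L_{-j}=2j$) together with the automorphism $L_n\mapsto\zeta^nL_n$, $\zeta=e^{2\pi i/3}$, which multiplies a degree-$\ell$ monomial by $\zeta^\ell$ and so enforces the mod-$3$ selection just as $L_n\mapsto(-1)^nL_n$ did in the rank-$2$ case.

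The genuinely hard point — and, I believe, the reason the statement is phrased as a conjecture — is existence, i.e.\ the mutual compatibility of the two defining systems. The $\{\widetilde L_n\}_{n\ge3}$-recurrence fixes the non-kernel part of $|\tilde\Psi_{m/2}\rangle$ and the order-$m$ instance of $L_\varepsilon$ fixes its kernel coefficients, but that single vector equation has far more components than the $\sim m/3$ unknowns $c_p^{(m)}$, so all of its remaining projections must hold automatically. In the rank~$2$ case this kind of consistency was the whole content of Step~3 in the proof of Theorem~\ref{Wh52deg}, where it followed from degree bounds and the non-degeneracy of $M$. Here the argument is strictly harder for two reasons: the grading is by half-integers, and the leading kernel $\mathrm{span}\{L_1^p|\tilde J\rangle\}$ is infinite-dimensional, so one cannot read the state off a finite non-degenerate linear system. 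The crux I expect to resist a short proof is to show that the kernel component imposed by $L_\varepsilon$ at each order is compatible with every higher $\widetilde L_n$-relation into which $|\tilde\Psi_{m/2}\rangle$ feeds — in other words, that the recurrence-determined and $L_\varepsilon$-determined pieces never clash. Establishing this uniformly in $m$, presumably by extending the refined-degree estimates of Lemma~\ref{deg12-lemma} so as to control the interaction of $L_2$ with the kernel string and to confirm that the $\alpha_k$-terms do not violate the degree bounds, is where the real work lies.
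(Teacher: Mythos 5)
You should first be aware that the paper contains no proof of this statement: it is stated as a conjecture, supported only by the $\mathbb Z_3$-symmetry argument for the congruence $\ell=k\;\mathrm{mod}\;3$, by the explicit admission that the restriction $m_1+3m_0+2|\lambda|=\ell$ ``was observed experimentally'', and by the low-order computations of $\mathbb G_{1/2},\dots,\mathbb G_{5/2}$. Your overall architecture is the natural one and is consistent with everything the paper hints at: split each order into a part controlled by the relations \eqref{Wh52version32} and a part along the kernel string $K=\mathrm{span}\{L_1^p|\tilde J\rangle\}_{p\ge0}$; fix the latter from the $L_\varepsilon$-equation \eqref{lepsil32} via the (correct) lowering identity $L_2L_1^p|\tilde J\rangle=p\,L_1^{p-1}|\tilde J\rangle$; note that $\alpha_k$ first enters at order $\varepsilon^{(3k-2)/2}$, which indeed reproduces the bound $\left[(k+2)/3\right]$; and derive the congruence from the automorphism $L_n\mapsto\omega^{-n}L_n$, $\varepsilon\mapsto\omega^2\varepsilon$. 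Your order-$\varepsilon^{m/2}$ expansion of the $L_\varepsilon$-equation is also correct.

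The genuine gap sits in the step you treat as routine. You claim that Nagoya's triangularity \eqref{Mform} carries over ``verbatim'', with non-vanishing diagonal entries of $M$ ``guaranteed by $L_3\mapsto1\ne0$''. This is false, and it in fact contradicts your own (correct) observation that $K$ is annihilated by every $\widetilde L_{n\ge3}$: a unitriangular $M$ with nonzero diagonal would be non-degenerate on each $U_m$, so no nonzero vector with vanishing $\xi$-component could be killed by all $\widetilde{\mathbf L}_\mu$ with $\mu\ne\emptyset$ --- yet $L_1|\tilde J\rangle$ is exactly such a vector. Concretely, with the index shift of \eqref{PBW_app}, a diagonal entry of $M$ is built from commutators whose indices sum to $(\lambda_i+2)+(-\lambda_i+2)=4$, so it is proportional to the eigenvalue of $L_4$ --- this is precisely why Appendix~\ref{AppProof} must assume $\Lambda_4\ne0$ --- and in $\mathcal V^{[3/2]}$ that eigenvalue is zero: for instance $\xi\lb\widetilde L_3 L_1|\tilde J\rangle\rb=2\,\xi\lb L_4|\tilde J\rangle\rb=0$, and more generally the entire columns $M\lb\,\cdot\,,L_1^p|\tilde J\rangle\rb$ vanish identically. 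The entries that do survive are shifted ones, e.g. $\xi\lb\widetilde L_3L_0|\tilde J\rangle\rb=3$, because producing the nonzero eigenvalue of $L_3$ forces $\mu_i=\lambda_i-1$; hence $M$ pairs graded pieces of different sizes, the gradewise matrices are rectangular, and no verbatim transplant of Lemma~2.22 of \cite{Nagoya15} yields the unitriangularity you need. What your argument actually requires is a new combinatorial lemma --- a shifted triangularity/non-degeneracy of $M$ on a complement of $K$, presumably organized by the half-integer degree $\deg L_1=\tfrac12$, $\deg L_0=\tfrac32$, $\deg L_{-k}=k$ used in the paper --- together with a proof that the joint kernel of $\{\widetilde L_n\}_{n\ge3}$ is exactly $K$. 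Neither is supplied, so even the ``uniqueness modulo $K$'' half of your proof is open; and since the existence/compatibility of the two intertwined systems is, as you yourself say, also left open, the proposal is a plausible program rather than a proof --- which is precisely why the paper states the result as a conjecture.
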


    	The $\mathrm{mod}\;3$ selection rule stems from the $\mathbb Z_3$-symmetry of the relations \eqref{Wh52version32} defining the irregular vector~$|\tilde{\Psi}\rangle$. Namely, their form is preserved by the transformation
    	\beq
    	L_n\mapsto \omega^{-n}L_n, \qquad \varepsilon\mapsto \omega^2 \varepsilon, \qquad \omega^3=1.
    	\eeq
    	The restriction $m_1+3m_0+2|\lambda|=\ell$ was observed experimentally. It can be reformulated using the following definition of degree of the Virasoro generators:
    	\beq
    	\mathrm{deg}\, L_1=\frac{1}{2}, \qquad \mathrm{deg}\, L_0=\frac{3}{2}, \qquad \mathrm{deg}\, L_{-k}=k, \qquad k\in \mathbb{N}.
    	\eeq
    	Under such assignments, the corresponding selection rule can be rewritten as $\operatorname{deg}\, \mathbb G_{k/2}\leq \frac{k}{2}$.
    	
    		The irregular state is thus fixed by a single function (formal series) that has to be determined from other considerations. A similar phenomenon was observed in a related but different context in \cite[Conjecture~3.2]{Nagoya18}.
    	The first  descendant contributions to $|\tilde{\Psi}\rangle$ are explicitly given by 
    	\begin{subequations}
    	\begin{align}
    		&\, \mathbb G_{1/2}=-2\nu L_1, \qquad \mathbb G_{1}=-\frac{2}{5}L_{-1}+\frac{20 \nu ^2+1}{10}L_1^2,\\
    	&\, \mathbb G_{3/2}=-\frac{16\nu}{15} L_0+\frac{4\nu}{5}L_{-1}L_1
    	-\frac{\nu\left(20 \nu ^2+3  \right)}{15} L_1^3,\\
    	&\,\begin{aligned}
    		\mathbb G_{2}=&\,-\frac{1}{5}L_{-2}+\frac{2}{25}L_{-1}^2+\frac{160 \nu ^2+9}{75} L_0 L_1-\frac{20 \nu ^2+1}{25} L_{-1}L_1^2
    		+\frac{400 \nu ^4+120 \nu ^2+3}{600} L_1^4\\&\,-\frac{640 \nu ^2-1125 \alpha -15 c+36}{150} L_1,
    	\end{aligned}\\
    	&\,\begin{aligned}
    		\mathbb G_{5/2}=&\,\frac{2\nu}{5} L_{-2}L_1+\frac{32\nu}{75} L_{-1}L_0-\frac{28\nu}{25} L_{-1}-\frac{2\nu\left(80 \nu ^2+13  \right)}{75}  L_0 L_1^2 +\frac{2\nu\left(20 \nu ^2+3 \right)}{75}  L_{-1}L_1^3\\
    		&\,+\frac{4\nu}{25} L_{-1}L_1^2-\frac{\nu\left(80 \nu ^4+40 \nu ^2+3 \right)}{300}  L_1^5
    		+\frac{\nu\left(2560 \nu ^2-4500 \alpha   -60c  +377  \right)}{300}  L_1^2,
    	\end{aligned}
    	\end{align}
    \end{subequations}	
    where we parameterize $\alpha_1=-2\nu$, $\alpha_2=\frac{15\alpha}{2}$.

\section{Completion of the proof of the conifold gap property}
\label{appendix:constant-term}

Here, we present the final part of the proof of Theorem \ref{thm:coni-gap},  
specifically addressing the proof of equation \eqref{eq:limit-Fg-appendix}. 
Before proceeding, let us briefly recall the proof strategy outlined in Section 4.4.3. Starting from the original spectral curve 
${\mathcal C}$ defined in equation \eqref{eq:spcurve}, 
we performed a symplectic transformation \eqref{eq:symplectic-tr-to-Weber}, 
resulting in a new spectral curve 
$\widetilde{\mathcal C}$ as defined in equation \eqref{eq:spcurve-Weber-expression}. 
Importantly, both spectral curves ${\mathcal C}$ and $\widetilde{\mathcal C}$
share the same free energy $F_g$. 
While the defining equation for 
${\mathcal C}$ has diverging coefficients as $\Lambda \to 0$, 
the symplectic transformation absorbs this divergence, 
and the Weber curve ${\mathcal C}_{\rm Web}$, which is given in \eqref{eq:Weber-curve},  
appears as the limit of $\widetilde{\mathcal C}$ as $\Lambda \to 0$.
The claim to be shown is that the limit of $F_g$ of ${\mathcal C}$ 
as $\Lambda \to 0$ exists and coincides with that of 
${\mathcal C}_{\rm Web}$. 
In this Appendix, we provide the proof of this statement.

\subsection{Review of topological recursion for the Weber curve}

First, we recall several results regarding the topological recursion for 
the Weber curve ${\mathcal C}_{\rm Web}$, which plays an essential role in the proof.
The Weber curve is known as the spectral curve for the Gaussian matrix model, 
whose correlator and free energy were studied from the viewpoint of TR by many authors; 
see \cite{Nor09, IKoT1} for example.

We can apply TR to the Weber curve \eqref{eq:spcurve-Weber-expression} 
through the Zhukovsky parametrization
\begin{equation}  \label{eq:Weber-paranetrization}
(\widetilde{X},\widetilde{Y}) = 
\left( \sqrt{\nu}\,(w + w^{-1}) ,  \frac{\sqrt{\nu}}{2}\,(w - w^{-1}) \right).
\end{equation}
The ramification points are $w = \pm 1$, 
and $\sigma(w) = w^{-1}$ gives the covering involution map. 
We denote by $W_{g,n}^{\rm Web}$ the correlators of the Weber curve. 
The first few correlators are 
\begin{equation} \label{eq:W01-02-Web}
W_{0,1}^{\rm Web}(w_1) 
= \frac{\nu (w_1^2-1)^2 }{2w_1^3} \, dw_1, \qquad
W_{0,2}^{\rm Web}(w_1, w_2) = \frac{dw_1 dw_2}{(w_1 - w_2)^2}, 
\end{equation}
\begin{equation}
W_{0,3}^{\rm Web}(w_1, w_2, w_3) = 
\frac{dw_1 dw_2 dw_3}{2\nu} \biggl( \frac{1}{(w_1-1)^2(w_2-1)^2(w_3-1)^2} 
+ \frac{1}{(w_1+1)^2(w_2+1)^2(w_3+1)^2} \biggr), 
\end{equation}
\begin{equation}
W_{1,1}^{\rm Web}(w_1) = - \frac{w_1^3 \, dw_1}{\nu (w_1^2-1)^4}.
\end{equation}
The correlators with $2g-2+n \ge 1$ are holomorphic except for the ramification points. 
Particularly, the fact that they are holomorphic at $w_i = 0, \infty$, 
which corresponds to the $\widetilde{X} = \infty$, 
will be crucial in the following discussion. 
We will also use the following alternative expression of TR 
which can be obtained by the residue theorem, similarly to \eqref{eq:alternative-TR-scaling}:
\begin{equation} \label{eq:alt-Weber}
W_{g,n}^{\rm Web}(w_1, \dots, w_n) = 
-2
\sum_{i=1}^{n}  \Res_{\widetilde{X}=\widetilde{X}_i} 
\frac{\int^{\widetilde{X}'=\widetilde{X}}_{\widetilde{X}'=\infty} {W}^{\rm Web}_{0,2}(\widetilde{X}_1, \widetilde{X}')}
{2W_{0,1}^{\rm Web}(\widetilde{X})} \, {R}^{\rm Web}_{g,n}(\widetilde{X}, \widetilde{X}_2, \dots, \widetilde{X}_n),
\end{equation}
where 
$R_{g,n}^{\rm Web}$ is 
\eqref{eq:R-gn} for the Weber curve, and the relation between the coordinates 
$w_i$ and $\widetilde{X}_i$ are given in \eqref{eq:Weber-paranetrization}; that is,
\begin{equation} \label{eq:coord-wi}
\widetilde{X}_i = \nu (w_i + w_{i}^{-1}),  \qquad i = 1,\dots, n.
\end{equation} 

As we have seen in \eqref{eq:Weber-free-energy-explicit}, 
an explicit formula for the free energy $F^{\rm Web}_g$ in terms of Bernoulli numbers is known, 
and it plays an essential role in the derivation of \eqref{eq:limit-Fg-appendix}, which will be discussed below.

\subsection{Another series expansion of $W_{g,n}$}

Here we show that the correlator $W_{g,n}$ of the spectral curve ${\mathcal C}$
has a series expansion when $\Lambda \to 0$ 
when we regard $\widetilde{X}$,  given in \eqref{eq:symplectic-tr-to-Weber}, as $t$-independent. 
We note that, since the coordinate $X$, introduced in \eqref{eq:scaling-data}, 
is related to $\widetilde{X}$ through $t$-depending functions as 
\begin{equation} \label{eq:rel-coord}
X = \gamma^{-2} x =  \frac{\tilde{e}_2 + \tilde{e}_3}{2} + \gamma^{-2} \alpha \widetilde{X}, \qquad \gamma^{-2}=(3a)^{-1}(-t)^{-1/2}.
\end{equation}
the expansion we will obtain below will be different from \eqref{eq:expansion-Wgn}. 

Let us look at the expansion of $W_{0,1}$. 
Using the series expansion \eqref{tilde-e1}--\eqref{tilde-e3} of $e_i$
and the relation \eqref{eq:rel-coord} among the coordinates, we have
\begin{align} 
\widetilde{Y} & = (1 + 16 \alpha^5  \widetilde{X} )^{1/2} 
\left( \frac{\widetilde{X}^2}{4} - \frac{(e_2-e_3)^2}{16 \alpha^2}\right)^{1/2} \notag \\
& 
=  \frac{\sqrt{\widetilde{X}^2- 4\nu}}{2} + \frac{\widetilde{X} \sqrt{\widetilde{X}^2- 4\nu}}{8} \, \Lambda^{1/2} 
- \frac{\widetilde{X}^4 - 4\nu \widetilde{X}^2 + 2 \nu^2}{64 \sqrt{\widetilde{X}^2- 4\nu} } \Lambda 
+ \frac{\widetilde{X}^5 + 56 \nu \widetilde{X}^3 - 242 \nu^2 \widetilde{X}}{256 \sqrt{\widetilde{X}^2- 4\nu}} \Lambda^{3/2}
+ O\lb \Lambda^{2} \rb.
\label{eq:alt-expansion-tildeY} 
\end{align}
Here and in what follows, the branch of $\sqrt{\widetilde{X}^2 - 4\nu}$ 
is chosen so that
\begin{equation} \label{eq:branch-weber-limit}
\frac{\sqrt{\widetilde{X}^2- 4\nu}}{2} \, d\widetilde{X} 
= \frac{\nu (w^2-1)^2 }{2w^3} \, dw
\end{equation}
holds under the change of coordinate \eqref{eq:Weber-paranetrization}.
The equality \eqref{eq:alt-expansion-tildeY} shows that 
$W_{0,1} = \widetilde{Y} d\widetilde{X}$ has the series expansion 
of the form 
\begin{equation}
W_{0,1}(\widetilde{X}) = 
\sum_{\ell \ge 0} \widetilde{W}_{0,1}^{[\ell]}(\widetilde{X}) \, \Lambda^{\ell/2} 
\end{equation}
with certain meromorphic differentials $\widetilde{W}_{0,1}^{[\ell]}$
on the Weber curve ${\mathcal C}_{\rm Web}$. 
In particular, \eqref{eq:branch-weber-limit} implies that 
the leading term $\widetilde{W}_{0,1}^{[0]}$ agrees with $W_{0,1}^{\rm Web}$ 
under the change of coordinate \eqref{eq:Weber-paranetrization}.
Even when taking into account the coordinate transformation \eqref{eq:rel-coord}, 
it should be noted that $\widetilde{W}_{0,1}^{[\ell]}(\widetilde{X})$ 
defined here and $W_{0,1}^{[\ell]}(X)$ given in \eqref{eq:expansion-Wgn}
are entirely different differentials.

It also follows from the expression \eqref{eq:Bergman-via-XY} 
of Bergman bidifferential and the above expansion \eqref{eq:alt-expansion-tildeY} 
that $W_{0,2}$ also has the series expansion of the form 
\begin{equation}
W_{0,2}(\widetilde{X}_1, \widetilde{X}_2) = 
\sum_{\ell \ge 0} 
\widetilde{W}_{0,2}^{[\ell]}(\widetilde{X}_1, \widetilde{X}_2) \, \Lambda^{\ell/2} 
\end{equation}
with certain meromorphic bidifferentials $\widetilde{W}_{0,2}^{[\ell]}$ 
on ${\mathcal C}_{\rm Web}$. 
The first few terms are 
\begin{subequations}
\begin{align}
\widetilde{W}_{0,2}^{[0]}(\widetilde{X}_1, \widetilde{X}_2) & = 
\frac{\widetilde{X}_1 \widetilde{X}_2 + \sqrt{\widetilde{X}_1^2- 4\nu} \sqrt{\widetilde{X}_2^2- 4\nu} - 4 \nu}
{2(\widetilde{X}_1 - \widetilde{X}_2)^2 \sqrt{\widetilde{X}_1^2- 4\nu} \sqrt{\widetilde{X}_2^2- 4\nu}} \, 
d\widetilde{X}_1 d\widetilde{X}_2, \\
\widetilde{W}_{0,2}^{[1]}(\widetilde{X}_1, \widetilde{X}_2) & = 0, \\
\widetilde{W}_{0,2}^{[2]}(\widetilde{X}_1, \widetilde{X}_2) & = 
\frac{\widetilde{X}_1^3 \widetilde{X}_2^3 - 4 \nu \widetilde{X}_1 \widetilde{X}_2(\widetilde{X}_1^2+\widetilde{X}_2^2) 
+ 18 \nu^2 \widetilde{X}_1 \widetilde{X}_2 + 8 \nu^3}
{64(\widetilde{X}_1^2- 4\nu)^{3/2} (\widetilde{X}_2^2- 4\nu)^{3/2}} \, 
d\widetilde{X}_1 d\widetilde{X}_2, \\
\widetilde{W}_{0,2}^{[3]}(\widetilde{X}_1, \widetilde{X}_2) & =
-\frac{(\widetilde{X}_1+\widetilde{X}_2)(\widetilde{X}_1\widetilde{X}_2-2\nu)}{128\sqrt{\widetilde{X}_1^2- 4\nu} \sqrt{\widetilde{X}_2^2- 4\nu}} \, d\widetilde{X}_1 d\widetilde{X}_2.
\end{align}
\end{subequations}
We can also confirm that $\widetilde{W}_{0,2}^{[0]} = W_{0,2}^{\rm Web}$ 
holds under the coordinate transform \eqref{eq:coord-wi}. 

More generally, we have

\begin{prop}
For each  $g \ge 0$, $n \ge 1$, 
the correlator $W_{g,n}$ of the spectral curve $\widetilde{\mathcal C}$ 
(and ${\mathcal C}$)
has the following convergent series expansion at $\Lambda = 0$:
\begin{equation}\label{eq:alt-expansion-Wgn}
W_{g,n}(\widetilde{X}_1, \dots, \widetilde{X}_n) = 
\sum_{\ell \ge 0} \widetilde{W}_{g,n}^{[\ell]}(\widetilde{X}_1, \dots, \widetilde{X}_n)
\, \Lambda^{\ell/2}.
\end{equation}
The coefficients $\widetilde{W}_{g,n}^{[\ell]}$ are meromorphic multi-differentials 
on the Weber curve ${\mathcal C}_{\rm Web}$, and independent of $\Lambda$. 
In particular, the leading term coincides with the correlator of the Weber curve 
\begin{equation} \label{eq:leading-Weber}
W_{g,n}^{\rm Web}(w_1, \dots, w_n) = 
\widetilde{W}_{g,n}^{[0]}(\widetilde{X}_1, \dots, \widetilde{X}_n)
\end{equation} 
under the change of the coordinate \eqref{eq:coord-wi}.
\end{prop}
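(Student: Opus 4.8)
The plan is to run the induction on the Euler characteristic $2g-2+n\ge 1$ exactly as in the proof of Proposition~\ref{prop:expansion-Wgn}, but now in the $\widetilde X$-coordinate of the symplectically transformed curve $\widetilde{\mathcal C}$ of \eqref{eq:spcurve-Weber-expression}, keeping $\widetilde X$ and its copies fixed and $\Lambda$-independent. Since the symplectic transformation \eqref{eq:symplectic-tr-to-Weber} preserves $W_{g,n}$, it suffices to treat $\widetilde{\mathcal C}$. The base cases $W_{0,1}$ and $W_{0,2}$ are already settled: the expansion \eqref{eq:alt-expansion-tildeY} of $\widetilde Y$ together with \eqref{eq:Bergman-via-XY} gives their $\Lambda^{1/2}$-expansions, and \eqref{eq:branch-weber-limit} identifies the leading coefficients with the Weber correlators \eqref{eq:W01-02-Web}. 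The crucial geometric input, guaranteed by \eqref{tilde-e1}--\eqref{tilde-e3} and the limit computation preceding \eqref{eq:Weber-curve}, is that in the $\widetilde X$-coordinate the two finite ramification points stay at the distinct points $\pm 2\sqrt{\nu}$ while the third runs off to $\widetilde X=-1/(16\alpha^5)\to\infty$; thus, in contrast with the $X$-coordinate used in Proposition~\ref{prop:expansion-Wgn}, no collision of ramification points occurs at finite $\widetilde X$.

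For the inductive step I would use the residue-theorem form of topological recursion for the elliptic curve $\widetilde{\mathcal C}$, that is, the analogue of \eqref{eq:alternative-TR-scaling}, which expresses $W_{g,n}$ as the sum of an $A$-cycle term $\frac{d\widetilde X_1}{\widetilde\omega_A\,\widetilde Y(\widetilde X_1)}\oint_{A}\frac{R_{g,n}}{2W_{0,1}}$ and of residues at the external points $\widetilde X=\widetilde X_i$. By the induction hypothesis all lower correlators, and hence $R_{g,n}$ assembled from them through \eqref{eq:R-gn}, possess a $\Lambda^{1/2}$-expansion with coefficients that are meromorphic multi-differentials on the Weber curve. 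Since the external points $\widetilde X_i$ remain away from $\pm 2\sqrt{\nu}$ and from $\infty$, and since the $A$-cycle may be taken as a fixed contour around the segment $[-2\sqrt{\nu},2\sqrt{\nu}]$ that is not pinched as $\Lambda\to 0$, both the residues at $\widetilde X_i$ and the $A$-integral can be evaluated term by term in $\Lambda^{1/2}$. This produces the expansion \eqref{eq:alt-expansion-Wgn} with Weber-curve coefficients, the convergence being inherited from the analyticity of the coefficients of $\widetilde{\mathcal C}$ in $\Lambda^{1/2}$.

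It remains to match the leading term with $W_{g,n}^{\rm Web}$, which is where the real work lies. The external-residue part converges term-wise to the corresponding part of the Weber formula \eqref{eq:alt-Weber}, so the whole content of \eqref{eq:leading-Weber} is the vanishing of the leading coefficient of the $A$-cycle term. The plan here is to pass to the Zhukovsky coordinate \eqref{eq:Weber-paranetrization}, in which the $A$-cycle becomes a loop separating $w=0$ from $w=\infty$, so that $\oint_A(\cdot)=2\pi i\,\Res_{w=0}(\cdot)$. In the limit $\Lambda\to 0$ the integrand becomes $R_{g,n}^{\rm Web}/(2W_{0,1}^{\rm Web})$; the numerator $R_{g,n}^{\rm Web}$ is holomorphic at $w=0$ because, by the stated property of the Weber correlators, all $W_{g',n'}^{\rm Web}$ entering \eqref{eq:R-gn} are holomorphic at $w=0,\infty$, while $W_{0,1}^{\rm Web}=\frac{\nu(w^2-1)^2}{2w^3}dw$ has a triple pole at $w=0$. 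Hence the integrand is holomorphic, indeed vanishing, at $w=0$, its residue there is zero, and the leading $A$-period vanishes. This yields \eqref{eq:leading-Weber} and closes the induction.

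The main obstacle is precisely this control of the $A$-cycle term, which is present for the genus-$1$ curve $\widetilde{\mathcal C}$ but absent from the genus-$0$ Weber formula \eqref{eq:alt-Weber}: its $\Lambda^{1/2}$-expandability rests on the non-pinching of the contour, and the vanishing of its leading coefficient is the analytic incarnation of the drop in genus, which the argument above reduces to the holomorphy of the Weber correlators at $w=0,\infty$ together with the pole order of $W_{0,1}^{\rm Web}$ there. One should additionally verify that no pole of the finite-$\Lambda$ integrand — in particular the one attached to the runaway ramification point $\widetilde X=-1/(16\alpha^5)$, where $\widetilde Y$ vanishes — crosses the $A$-cycle or approaches the external points as $\Lambda\to 0$, which follows from that point tending to infinity.
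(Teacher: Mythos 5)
Your overall strategy is the same as the paper's: induction through the residue-theorem form \eqref{eq:alt-TR} of the recursion, term-wise expansion in $\Lambda^{1/2}$ justified by non-pinching of contours, and identification of the leading-order relation with the Weber recursion \eqref{eq:alt-Weber}. The gap sits exactly in the step you flag as ``where the real work lies''. Your claim that, for a loop separating $w=0$ from $w=\infty$, one has $\oint_A(\cdot)=2\pi i\,\Res_{w=0}(\cdot)$ is false for the representative you chose (the loop around $[-2\sqrt{\nu},2\sqrt{\nu}]$): besides $w=0$, that loop encloses the ramification points $w=\pm 1$ and exactly one point of each pair $\{w_j,\,1/w_j\}$ of preimages of the external variables. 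The residues at $w=\pm1$ do vanish, but this itself requires an argument you never give (the $1$-form $R_{g,n}/(2W_{0,1})$ is $\sigma$-anti-invariant, since $R_{g,n}$ is $\sigma$-invariant while $W_{0,1}$ is anti-invariant, forcing zero residue at the fixed points of $\sigma$ and $\Res_{w=1/w_j}=-\Res_{w=w_j}$). The residues at the external preimages, however, are generically non-zero. Concretely, for $(g,n)=(0,3)$ one finds from \eqref{eq:R-gn} and \eqref{eq:W01-02-Web}
\begin{equation*}
\frac{R^{\rm Web}_{0,3}(w,w_2,w_3)}{2W^{\rm Web}_{0,1}(w)}
=-\frac{w^3}{\nu\,(w^2-1)^2}\left[\frac{1}{(w-w_2)^2(1-w w_3)^2}+\frac{1}{(w-w_3)^2(1-w w_2)^2}\right]dw\,dw_2\,dw_3 ,
\end{equation*}
whose residue at $w=w_2$ equals $\ds\frac{d}{dw}\left[\frac{-w^3}{\nu(w^2-1)^2(1-w w_3)^2}\right]_{w=w_2}\,dw_2\,dw_3\neq 0$ for generic $w_2,w_3$. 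Hence the integral over your loop equals $-2\pi i\,[\Res_{w=w_2}+\Res_{w=w_3}]\neq 0$ rather than the residue at $w=0$, so with your representative the leading $A$-cycle term does \emph{not} vanish and your derivation of \eqref{eq:leading-Weber} collapses for all $n\geq 2$ (which then poisons the induction even for the $W_{g,1}$ needed for the free energies).

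The missing idea is that a period of a \emph{meromorphic} $1$-form depends on the chosen representative of the homology class, modulo $2\pi i$ times residues, so one must work with the representative for which \eqref{eq:alt-TR}, with its residue terms as written, degenerates onto \eqref{eq:alt-Weber}. That representative is not the loop around $[-2\sqrt{\nu},2\sqrt{\nu}]$ but the homologous loop encircling the \emph{other} branch cut of $\widetilde{\mathcal C}$ in \eqref{eq:spcurve-Weber-expression}, the one joining $\widetilde{X}=-1/(16\alpha^5)=3e_1/(2\alpha)\sim -2\Lambda^{-1/2}$ (precisely your ``runaway'' ramification point) to $\widetilde{X}=\infty$. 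In the coordinate $1/\widetilde{X}$ these two branch points merge as $\Lambda\to 0$ — this is where the genus drop actually happens in the $\widetilde{X}$ picture — so a fixed small loop around that cut is a legitimate $A$-representative whose term-wise limit is a contractible loop around a single point of ${\mathcal C}_{\rm Web}$ over $\widetilde{X}=\infty$. Only for this choice does the $A$-integral reduce, order by order, to a residue at $w=\infty$ (equivalently $w=0$), where your holomorphy observation correctly gives zero; this is exactly what the paper's phrase ``the $A$-cycle integral can be reduced to term-wise residue calculus at $\widetilde{X}=\infty$'' implements. If one insists on your near representative, the non-zero $A$-term is compensated by a matching modification of the residue terms in \eqref{eq:alt-TR} (the swept external residues move between the two terms), so the naive identification of the residue part with the Weber formula — which you also assume — fails simultaneously. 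As written, your two errors are not shown to cancel; the proof needs the choice of $A$-representative, and the accompanying residue bookkeeping, made explicit.
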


\begin{proof} 

Since the case for $(g,n) = (0,1)$ and $(0,2)$ have already been shown above, 
we will now provide the proof for general $(g,n)$ using the induction.

Similarly to the proof of Proposition \ref{prop:expansion-Wgn}, 
it is necessary to rewrite the original TR formula \eqref{eq:top-rec} 
in order to avoid difficulties arising from the 
collision of ramification points (i.e., $e_2, e_3 \to 1/3$) in $\Lambda \to 0$.
As well as \eqref{eq:alternative-TR-scaling}, the residue theorem provides 
the following alternative expression of TR formula (c.f., \cite[Lemma B.1]{Iwaki19}):
\begin{align} 
W_{g,n}(\widetilde{X}_1, \dots, \widetilde{X}_n) 
& =  
\frac{d\widetilde{X}_1}{{\omega}_A \, \widetilde{Y}(\widetilde{X}_1)} \,  
\oint_{\widetilde{X} \in A} \frac{{R}_{g,n}(\widetilde{X}, \widetilde{X}_2, \dots, \widetilde{X}_n)}
{2W_{0,1}(\widetilde{X})}  
- 2 
\sum_{i=1}^{n}  \Res_{\widetilde{X}=\widetilde{X}_i} 
\frac{\int^{\widetilde{X}'=\widetilde{X}}_{\widetilde{X}'=\infty} {W}_{0,2}(\widetilde{X}_1, \widetilde{X}')}
{2W_{0,1}(\widetilde{X})}  
\, {R}_{g,n}(\widetilde{X}, \widetilde{X}_2, \dots, \widetilde{X}_n). 
\label{eq:alt-TR}
\end{align}
Under the induction hypothesis, ${R}_{g,n}$ has a series expansion of the form 
\begin{equation} \label{eq:ser-Rgn-Web}
{R}_{g,n}(\widetilde{X}, \widetilde{X}_2, \dots, \widetilde{X}_n)
= \sum_{\ell \ge 0}
\widetilde{R}^{[\ell]}_{g,n}(\widetilde{X}, \widetilde{X}_2, \dots, \widetilde{X}_n) \, \Lambda^{\ell/2}.
\end{equation}
The coefficients $\widetilde{R}^{[\ell]}_{g,n}$ are meromorphic quadratic differentials 
on ${\mathcal C}_{\rm Web}$ in the first variable $\widetilde{X}$. 

We note that, under the change \eqref{eq:Weber-paranetrization} of the coordinate, 
the $A$-cycle is now represented by a lift of closed curve on $\widetilde{X}$-plane 
which encircles the two ramification points $\widetilde{X} = \pm (e_2 - e_3)/2 \alpha$ 
of \eqref{eq:spcurve-Weber-expression}.
Then, we can use the same argument which we have used 
in the proof of Proposition \ref{prop:expansion-Wgn};
that is, the $A$-cycle integral can be reduced to term-wise residue calculus
at $\widetilde{X} = \infty$ 
(i.e., $w = \infty$ in the parametrization \eqref{eq:Weber-paranetrization}). 
This allows us to obtain a series expansion of $W_{g,n}$ of the form \eqref{eq:alt-expansion-Wgn}, 
and each coefficient can be explicitly computed term-by-term at any order in principle.

Next, let us show \eqref{eq:leading-Weber}. 
Looking at the leading terms of the both sides of \eqref{eq:alt-TR}, we have
\begin{align} 
\widetilde{W}^{[0]}_{g,n}(\widetilde{X}_1, \dots, \widetilde{X}_n) 
& = - \frac{d\widetilde{X}_1}{\omega^{[0]}_A \, \widetilde{Y}^{[0]}(\widetilde{X}_1)} \,  
\oint_{\widetilde{X} \in A} \frac{\widetilde{R}^{[0]}_{g,n}(\widetilde{X}, \widetilde{X}_2, \dots, \widetilde{X}_n)}
{2\widetilde{W}_{0,1}^{[0]}(\widetilde{X})}  
- 2  \, 
\sum_{i=1}^{n}  \Res_{\widetilde{X}=\widetilde{X}_i} \frac{\int^{\widetilde{X}'=\widetilde{X}}_{\widetilde{X}'=\infty} 
\widetilde{W}^{[0]}_{0,2}(\widetilde{X}_1, \widetilde{X}')}
{2\widetilde{W}_{0,1}^{[0]}(\widetilde{X})} \, \widetilde{R}^{[0]}_{g,n}(\widetilde{X}, \widetilde{X}_2, \dots, \widetilde{X}_n),
\label{eq:alternative-TR-Web}
\end{align} 
where 
$\widetilde{Y}^{[0]}(\widetilde{X}) = \frac{1}{2}\sqrt{\widetilde{X}^2 - 4\nu}$  
is the leading term of $\widetilde{Y}$ (c.f., \eqref{eq:alt-expansion-tildeY}).  
Under the induction hypothesis, 
$\widetilde{R}^{[0]}_{g,n}$ coincides with $R_{g,n}^{\rm Web}$ 
under the coordinate change \eqref{eq:coord-wi}.
Since $\widetilde{X}=\infty$ is not the ramification point of the Weber curve ${\mathcal C}_{\rm Web}$,
$W_{g,n}^{\rm Web}$ are holomorphic there. 
This implies that 
$\widetilde{R}^{[0]}_{g,n}/\widetilde{W}_{0,1}^{[0]} = R_{g,n}^{\rm Web}/W_{0,1}^{\rm Web}$ 
is holomorphic at $\widetilde{X} = \infty$, and hence, 
the first term of \eqref{eq:alternative-TR-Web} becomes trivial.
Therefore, we can conclude that the leading term $\widetilde{W}^{[0]}_{g,n}$ satisfies 
the same equation \eqref{eq:alt-Weber} as $W_{g,n}^{\rm Web}$, which implies 
the desired equality \eqref{eq:leading-Weber}. This completes the proof.
\end{proof}

\subsection{Proof of ~ $\lim_{\Lambda \to 0} F_g  = F_g^{\rm Web}$.}

Here we provide the final step in the proof of \eqref{eq:limit-Fg-appendix}. 

First, we show that the residue at $r_1$ in the definition \eqref{Fgdef} of $F_g$ 
does not contribute to the constant term in \eqref{eq:expansion-Fg}, as follows. 
Since $e_1 = x(r_1)$ is away from the collision of the ramification points $e_2, e_3$,  
no singular points of the integrant of \eqref{Fgdef} hits the residue cycle around $r_1$.
Consequently, we have 
\begin{equation}
\Res_{z=r_1} \,\Phi\lb z\rb W_{g,1}\lb z\rb 
= \Lambda^{2g-2} \Res_{X = e_1} \left( \int^{X} W^{\rm res}_{0,1}(X) \right) W^{\rm Res}_{g,1}(X) 
~\rightarrow~ 0, \qquad \Lambda \to 0,
\end{equation}
for $g \ge 2$. Here we have used \eqref{eq:scaling-degree-Wgn}. 

On the other hand, by taking the coordinate $\widetilde{X}$ 
given in \eqref{eq:Weber-paranetrization}, 
the aforementioned collision of the ramification points is resolved,
and it can be seen from \eqref{tilde-e1}--\eqref{tilde-e3} that 
the points corresponding to $e_2, e_3$ converge to the ramification points 
$\pm 2\sqrt{\nu}$ of the Weber curve ${\mathcal C}_{\rm Web}$ when $\Lambda \to 0$. 
Therefore, we have 
\begin{equation} 
\lim_{\Lambda \to 0} F_g 
= 
\frac{1}{2-2g} \left(  \Res_{\widetilde{X} = 2 \sqrt{\nu}} + \Res_{\widetilde{X} = - 2 \sqrt{\nu}} \right) 
\left(\int^{\widetilde{X}} \widetilde{W}^{[0]}_{0,1}(\widetilde{X}) \right) \widetilde{W}_{g,1}^{[0]}(\widetilde{X}).
\end{equation}
Then, the result \eqref{eq:leading-Weber} shows that 
the right hand side of the above equality is identical to $F_{g}^{\rm Web}$. 
Thus we have obtained the desired result \eqref{eq:limit-Fg-appendix}, 
and this also completes the proof of Theorem \ref{thm:coni-gap}. 

\section{Verification of a resurgence formula}
\label{appendix:conjecture-resurgence}

In recent years, the study of resurgence properties of the free energy and partition function 
in topological string theory has advanced significantly. 
Recently, Gu--Mari\~no discovered a conjectural formula describing 
the alien derivatives of the free energy in \cite{gm22} through 
the non-perturbative analysis of the holomorphic anomaly equation, based on earlier works \cite{cesv1, cesv2}. 
Since the TR free energy also satisfies the holomorphic anomaly equation, 
it is expected to fulfill the above conjectural formula.  
Indeed, in \cite{IM24}, it was derived based on heuristic observations of 
the monodromy preserving deformations of the quantum curve 
that the TR free energy considered in this paper satisfies the conjectural formula of \cite{gm22}. 

In the following, we review this conjectural formula and compare it with known studies on 
the Stokes phenomenon for Painlev\'e transcendents. 
In conclusion, it can be confirmed that the formula conjectured in \cite{gm22, IM24} 
for the free energy $F^{\rm deg}(t; \hbar) = \sum_{g \ge 0} \hbar^{2g-2} F^{\rm deg}_g(t)$ 
defined by a degenerate elliptic curve ${\mathcal C}_{\rm deg}$ 
is consistent with the connection formula for the tri-tronqu\'ee solution of the P$_{\rm I}$ 
obtained by Kapaev, Kitaev, and others. The purpose of this section is to introduce this fact.

Unfortunately, as of now, there is no mathematically rigorous proof that 
the free energy is Borel summable and resurgent. 
Below, we will proceed under the assumption that TR partition function and free energy are 
Borel summable and resurgent, and compare it with the Stokes phenomenon of the Painlev\'e equations.
See \cite{Sauzin14} for the foundation of resurgence theory.

\subsection{Conjectural resurgence formula} 

Let 
\begin{equation}
\widehat{F}(t, \nu;\hbar) = \sum_{g \ge 2} \hbar^{2g-2} F_g(t, \nu) 
\quad \Bigl( = {F}(t, \nu;\hbar) - \hbar^{-2} F_0(t, \nu) - F_1(t,\nu) \Bigr)
\end{equation}
be the ``stable part'' of the TR free energy of the spectral curve ${\mathcal C}$. 
The conjecture of Drukker--Mari\~no--Putrov \cite{dmp-np} claims that 
the singularities of the Borel transform 
\begin{equation}
{\mathcal B}\widehat{F}(t, \nu;\zeta) = \sum_{g \ge 2} \frac{F_g(t,\nu)}{(2g-3)!} \, \zeta^{2g-3}
\end{equation}
are contained in the lattice 
\begin{equation}
{\mathcal P} = \left\{ m \oint_A ydx + n \oint_B ydx ~;~ m,n \in \mathbb Z \right\} ~ \subset ~ {\mathbb C}_\zeta
\end{equation}
of periods of $W_{0,1} = y dx$ on the complex $\zeta$-plane, which is called the Borel plane. 
These singularities are called the Borel singularities. 
We also assume that ${\mathcal B}\widehat{F}(t, \nu;\zeta)$ has analytic continuation along 
any path in ${\mathbb C}_\zeta \setminus {\mathcal P}$, 
and the Borel sum 
\begin{equation} \label{eq:Borel-sum}
{\mathcal S}F(t,\nu; \hbar) = \hbar^{-2} F_0(t, \nu) + F_1(t,\nu) + 
\int^{+\infty}_{0} e^{-\zeta/\hbar} {\mathcal B} \widehat{F}(t, \nu;\zeta) \, d\zeta
\end{equation}
of the free energy is well-defined if the integration contour never hits the points in ${\mathcal P}$. 

Now, let us focus on the contribution of integer multiples of the $B$-period 
\begin{equation} \label{eq:Borel-B-period}
\phi(t,\nu) = \oint_{B} y \, dx
\end{equation}
to the Borel sum.
We assume that our parameters $t, \nu$ are chosen so that the above $B$-period is real and positive.
As in \cite[\S 2.3]{IM24}, it is possible to realize the situation by choosing 
$t \in {\mathbb R}_{<0}$ with $|t|$ being sufficiently large, 
and an appropriate $A$-cycle with gives $\nu \in {\mathbb R}_{>0}$ by \eqref{eq:nu}. 
Actually, the choice of cycles in \cite[\S 2.3]{IM24} differs from that in Section \ref{subsec:rescale} of this paper. 
However, through similar calculations, it can be seen that the $B$-period exhibits the following asymptotic behavior
\begin{equation}
\phi(t,\nu) =\frac{1}{\Lambda} \left(  \frac{16}{15} 
+ \nu \Lambda \left(-1 + \log \Bigl(\frac{\Lambda \nu}{64}\Bigr) \right)
+ \frac{141 \nu^2 \Lambda^2}{64} + \frac{7717 \nu^3 \Lambda^3}{2048} 
+ \frac{2663129 \nu^4 \Lambda^4}{262144} 
+ O(\Lambda^5)
\right)
\end{equation}
as $t \to - \infty$.
This asymptotic behavior is obtained when choosing $a = 1/\sqrt{6}$ 
and $\sqrt{3a} = (3/2)^{1/4}$ in \eqref{eq:scaling-data}, where we have $\Lambda = 4/(3s)$ 
with $s = 24^{1/4}(-t)^{5/4}$. 
Let us consider this choice in what follows. 

In this situation, the Borel sum is expected to be ill-defined 
since the Borel singularities $n \phi(t,\nu)$ lies 
on the integration contour in \eqref{eq:Borel-sum}.  
In such a case, one can consider the lateral Borel summation ${\mathcal S}_{\pm}$ 
which assigns the same Laplace integral \eqref{eq:Borel-sum} but the path of integration 
is rotated by angle $\pm \epsilon$ with a sufficiently small $\epsilon > 0$. 
These lateral Borel sums ${\mathcal S}_{+}F$ and ${\mathcal S}_{-}F$ differ 
by exponentially small terms due to the Borel singularities on the positive real axis, 
and this is understood as the Stokes phenomenon.

Now we can formulate the conjecture of \cite{IM24}. 
It reads the following explicit formula for the action of the Stokes automorphism 
${\mathfrak S}$ $(= ``{\mathcal S}_{-}^{-1} \circ {\mathcal S}_{+}")$ 
on the partition function associated with the Borel singularity \eqref{eq:Borel-B-period}: 
\begin{equation} \label{eq:Stokes-rel}
{\mathfrak S}F(t,\nu;\hbar) = \sum_{n=0}^{\infty} F^{(n)}(t,\nu;\hbar), 
\end{equation}
where the right hand side is 
obtained through the action of the following differential-difference operator with respect to $\nu$: 
\begin{equation} \label{eq:trans-series}
\exp\left( \sum_{n=0}^{\infty} F^{(n)}(t,\nu;\hbar) \right) 
= \exp\left( \frac{1}{2 \pi i} {\rm Li}_2 (e^{- \hbar \partial_\nu})  
- \frac{\hbar \partial_\nu}{2 \pi i} \log(1 - e^{- \hbar \partial_\nu})  \right) Z(t,\nu; \hbar). 
\end{equation}
The formal series $F^{(n)}$ is called the $n$-instanton amplitude.  
$F^{(0)}$ is identical to the original free energy $F$, 
and the 1-instanton amplitude is explicitly given as follows
(which was conjectured earlier in \cite{gm22}):  
\begin{equation} \label{eq:1-inst}
F^{(1)}(t, \nu;\hbar) 
= \frac{1}{2 \pi i} \left( 1 + \hbar
\frac{\partial F}{\partial \nu}(t, \nu - \hbar; \hbar) \right) 
\exp\Bigl(  F(t,\nu - \hbar; \hbar) - F(t,\nu; \hbar) \Bigr).
\end{equation}
The variation formula $\partial_\nu F_0 = \phi$
implies that the right hand side of \eqref{eq:1-inst} is a formal power series with the exponential factor $e^{- \phi(t,\nu)/\hbar}$. 
This makes the right hand side of \eqref{eq:trans-series} a well-defined trans-series.

\subsection{Non-linear Stokes phenomenon for tri-tronque\'e solution of P$_{\rm I}$} 

Finally, let us compare the formula \eqref{eq:1-inst} with the known results of Stokes phenomenon 
for the $0$-parameter solution of P$_{\rm I}$. 
The behavior of the solutions of P$_{\rm I}$ that exhibit Stokes phenomena on 
the negative real axis of the $t$-plane is discussed, for instance in \cite{Kapaev2004}, 
where the exponentially small terms arising from the Stokes phenomena are explicitly described. 

As we have seen in \S \ref{subsec_warmup}, the $0$-parameter solution $q(t;\hbar)$ 
is obtained from TR free energy $F^{\rm deg}(t;\hbar)$ of the degenerate elliptic curve ${\mathcal C}_{\rm deg}$
by taking $t$-derivative. 
On the other hand, \eqref{eq:Fgdeg-as-subleading} implies that the Borel sum of the $0$-parameter solution 
can be obtained as
\begin{equation} \label{eq:stokes-auto-PIsol}
{\mathcal S}_{\pm}q(t;\hbar) 
= {\mathcal S}_{\pm} \left[ - \hbar^2 \frac{d^2}{dt^2} F^{\rm deg}(t;\hbar) \right]
= \lim_{\nu \to +0} {\mathcal S}_{\pm} \left[ - \hbar^2 \frac{d^2}{dt^2} F(t,\nu; \hbar) \right]
\end{equation}
Therefore, the difference of ${\mathcal S}_{\pm}q(t;\hbar)$, 
or the action of Stokes automorphism on the 0-parameter solution, 
should be obtained from \eqref{eq:Stokes-rel} by taking $t$-derivative and setting $\nu \to +0$.

Here, it is important to note that due to the conifold gap property \eqref{conigap} 
(or \eqref{eq:expansion-Fg}), 
the evaluation $\nu = 0$ cannot be made at the level of the formal series. 
The constant term $\kappa_{g}/ \nu^{2g-2}$ of $F_g$ has singularities at $\nu=0$, 
but these singularities are resolved by taking the Borel sum. 
In fact, the following equation holds (cf. \eqref{BarnesGas}):
\begin{equation}
{\mathcal S} \left( \sum_{g \ge 2} \hbar^{2g-2} \frac{B_{2g}}{4g(g-1) \nu^{2g-2}} \right) 
=  \log G\left( 1 + \frac{\nu}{\hbar} \right) 
- \Bigl(\frac{\nu^2}{2 \hbar^2} - \frac{1}{12} \Bigr) \log \frac{\nu}{\hbar}
+ \frac{3\nu^2}{4\hbar^2}
- \frac{\nu}{2 \hbar} \log 2\pi
- \frac{1}{12} + \log A.
\end{equation}
where $A$ is the Glaisher's constant.
This is valid at when $|\arg \nu| < \pi$, and hence, 
the formula is valid for our current situation $\nu \in {\mathbb R}_{>0}$.
Using this expression, we can verify that the aforementioned singularity at $\nu=0$ cancels with the other terms, 
and we can evaluate the behavior of \eqref{eq:trans-series} at $\nu \to +0$. 
The resulting formula takes the form
\begin{equation}
{\mathfrak S}q(t;\hbar) = \sum_{k \ge 0} q^{(k)}(t; \hbar), 
\end{equation}
where the series $q^{(0)}(t;\hbar)$ is identical to the original 0-parameter solution $q(t;\hbar)$, 
and \eqref{eq:1-inst}--\eqref{eq:stokes-auto-PIsol} implies that the 1-instanton amplitude $q^{(1)}$ reads
\begin{equation} \label{eq:quasi-linear-Stokes-PI}
q^{(1)}(t;\hbar) = \exp\left( - \frac{2^{{11}/{4}} \, 3^{{1}/{4}} }{5 \hbar}(-t)^{{5}/{4}} \right)
\frac{i \hbar^{1/2}}{\pi^{1/2}} \,2^{-11/8} \, 3^{-1/8}\, (-t)^{- {1}/{8}} 
\Bigl( 1 + O(\Lambda) \Bigr).
\end{equation}
Here we have used the series expansion \eqref{eq:expansion-Fg} (and explicit values of some of $F_{g}^{[k]}$)
to compute the asymptotic behavior.
A remarkable observation here is that, the leading term of 
the right hand side of \eqref{eq:quasi-linear-Stokes-PI} 
precisely agrees with the connection formula of the tritronqu\'ee solutions of \PIeq 
(e.g., \cite[eq. (2.67)]{Kapaev2004}), which was found in 
\cite{Kapaev88, JK1992, Takei, CCH15} etc. by various methods.
We can also verify that the next few terms also agree with 
series expansion of trans-series solution of \PIeq which can be found in \cite{GIKM12} for example. 
This fact supports the validity of conjectures of \cite{gm22, IM24}.

The Stokes phenomenon for the general solution of P$_{\rm I}$, 
which has a representation different from the Fourier series type discussed in this paper, 
is addressed in \cite{BSSV23, Takei2000}. 
The connection formula is derived by using the fact that P$_{\rm I}$ describes isomonodromic deformations. 
Since the conjectured formula \eqref{eq:trans-series} above is also a consequence of monodromy invariance, 
we expect that performing an analysis similar to the one conducted here for general values of 
$\nu$ and $\rho$ could lead to the derivation of results such as \cite[(7.73)]{BSSV23}.


\begin{thebibliography}{100000} 

    	 \bibitem[AGT09]{AGT09}
	L. F. Alday, D. Gaiotto, Y. Tachikawa, \textit{Liouville correlation functions from
		four-dimensional gauge theories}, Lett. Math. Phys.~\textbf{91}, (2010), 167--197; arXiv:0906.3219 [hep-th].  
	

	\bibitem[AFKMY12]{AFKMY12}  
    H. Awata, H. Fuji, H. Kanno, M. Manabe, Y. Yamada, 
    \textit{Localization with a Surface Operator, Irregular Conformal Blocks and Open Topological String}, 
    {\it Adv. Theor. Math. Phys.}, {\bf 16} (2012), 725--804. arXiv:1008.0574 [hep-th].


   \bibitem[BSSV23]{BSSV23}
   S. Baldino, R. Schiappa, M. Schwick, R. Vega, 
   \textit{Resurgent Stokes Data for Painleve Equations and Two-Dimensional Quantum (Super) Gravity}, 
   Communications in Number Theory and Physics, {\bf 17} (2023), 385--552.


    

   \bibitem[BE19]{BE19}
   R. Belliard and B. Eynard,
   \textit{From the quantum geometry of Fuchsian systems to conformal blocks of $W$-algebras},
   arXiv:1907.10543 [math-ph].


    \bibitem[BCOV93]{BCOV93v2}
   		M. Bershadsky, S. Cecotti, H. Ooguri, C. Vafa, \textit{Kodaira-Spencer theory of gravity and exact results for quantum string amplitudes}, Comm. Math. Phys. \textbf{165}, (1994), 311--428; arXiv:hep-th/9309140.

        	\bibitem[BS14]{BS14}
	M. Bershtein, A. Shchechkin, \textit{Bilinear equations on Painlev\'e tau functions from CFT}, Comm. Math. Phys.~\textbf{339}, (2015), 1021--1061; arXiv:1406.3008 [math-ph].
	
	
	\bibitem[BS16]{BS16} 
	M. A. Bershtein, A. I. Shchechkin, \textit{$q$-deformed Painleve tau function and q-deformed conformal blocks}, J. Phys. \textbf{A50}, (2017), 085202;	arXiv:1608.02566 [math-ph].
   

		
	 \bibitem[Ber09]{Ber09} M. Bertola, \textit{The dependence on the 
			monodromy data of the
			isomonodromic tau function}, Comm. Math. Phys.~\textbf{294}, (2010),  539--579; arXiv: 0902.4716 [nlin.SI].

            	\bibitem[BDGT19a]{BDGT19a}
	G. Bonelli, F. Del Monte, P. Gavrylenko, A. Tanzini, \textit{$\mathcal{N}=2^*$ gauge theory, free fermions on the torus and Painlev \'e VI}, Comm. Math. Phys. \textbf{377}, (2020), 1381--1419; arXiv:1901.10497 [hep-th].
	
	
	\bibitem[BDGT19b]{BDGT19b}
	G. Bonelli, F. Del Monte, P. Gavrylenko, A. Tanzini, \textit{Circular quiver gauge theories, isomonodromic deformations and $W_N$ fermions on the torus}, Lett. Math. Phys. \textbf{111}, (2021);
		arXiv:1909.07990 [hep-th].


        \bibitem[BGMT24]{BGMT24}
		G. Bonelli, P. Gavrylenko, I. Majtara and A. Tanzini, 
		\textit{Surface observables in gauge theories, modular Painlev\'e tau functions and non-perturbative topological strings}, 
		arXiv:2410.17868 [hep-th]. 	


            
		\bibitem[BLMST16]{BLMST} G. Bonelli, O. Lisovyy, K. Maruyoshi,
		A. Sciarappa, A. Tanzini, {\it On Painlev\'e/gauge theory correspondence},
		Lett. Math. Phys. \textbf{107},  (2017), 2359--2413; arXiv:1612.06235 [hep-th].	
		
		  
		
		\bibitem[BMT11]{BMT11} G. Bonelli, K. Maruyoshi, A. Tanzini,
		{\it Wild quiver gauge theories},  
		J. High Energ. Phys. 2012:31, (2012);	arXiv:1112.1691 [hep-th].


        \bibitem[BST25]{BST25} 
		G. Bonelli, A. Shchechkin, A. Tanzini, 
		{\it Refined Painlev\'e/gauge theory correspondence and quantum tau functions}, 
		arXiv:2502.01499 [hep-th]. 
		

		\bibitem[BBCC24]{BBCC24}
		G. Borot, V. Bouchard, N. K. Chidambaram, T. Creutzig, 
		\textit{Whittaker vectors for ${\mathcal W}$-algebras from topological recursion}, 
		Sel. Math. New Ser. {\bf 30}, 33 (2024).

		\bibitem[BBCKS23]{BBCKS23}
		G. Borot, V. Bouchard, N. K. Chidambaram, R. Kramer, S. Shadrin, 
		\textit{Taking limits in topological recursion},
		arXiv:2309.01654 [math.AG].


        
		\bibitem[BCU24]{BCU24}
		G. Borot, N. K. Chidambaram, G. Umer, 
		\textit{Whittaker vectors at finite energy scale, topological recursion and Hurwitz numbers}, 
		arXiv:2403.16938 [math-ph]

		\bibitem[BCD18]{BCD18}
		V. Bouchard, N. K. Chidambaram, T. Dauphinee, \textit{Quantizing Weierstrass}, Comm. Num. Theor. Phys.~\textbf{12}, (2018), 253--303; arXiv:1610.00225 [math-ph].

        
		\bibitem[BE17]{BE16}
		V. Bouchard, B. Eynard, \textit{Reconstructing {WKB} from topological~recursion}, 
		Journal de l{\textquoteright}\'Ecole polytechnique {\textemdash} Math\'ematiques~\textbf{4}, (2017), 
		845--908.
		
		\bibitem[BHLMR13]{BHLMR2013}
		V. Bouchard, J. Hutchinson, P. Loliencar, M. Meiers, M. Rupert, \textit{A generalized topological recursion for arbitrary ramification}, Ann. Henri Poincar\'e \textbf{15}, (2013), 143--169.
		
		\bibitem[BKMP08]{BKMP08}
		V. Bouchard , A. Klemm, M. Mari\~no, S. Pasquetti, 
		\textit{Remodeling the B-model}, 
		Commun. Math. Phys. {\bf 287} (2008), 117--178.

		\bibitem[BS12]{BouSul12}
		V. Bouchard, P. Su\l{}kowski, 
		\textit{Topological recursion and mirror curves}, 
		Adv. Theor. Math. Phys. {\bf 16} (2012), 1443--1483.


        
		\bibitem[B1913]{Bou1913}
		P. Boutroux, \textit{Recherches sur les transcendantes de M. Painlev\'e et l'\'etude asymptotique des \'equations diff\'erentielles du second ordre},
		Ann. Sci. ENS \textbf{30}, (1913), 255--375.
		
		\bibitem[BK90]{BK1990}
		E. Br\'ezin, V. Kazakov,
		\textit{Exactly solvable field theories of closed strings}, Phys. Letts. \textbf{B236}, (1990), 144--150.

        	 \bibitem[CGL17]{CGL17}
	M. Cafasso, P. Gavrylenko, O. Lisovyy,
	\textit{Tau functions as Widom constants}, Comm. Math. Phys. \textbf{365}, (2019), 741; arXiv:1712.08546 [math-ph]. 


        \bibitem[CEM09]{CEM09}
        L. Chekhov, B. Eynard,  O. Marchal,     \textit{Topological expansion of the Bethe ansatz, and quantum algebraic geometry}, 0911.1664.

        \bibitem[CEM11]{CEM11}
        L. Chekhov, B. Eynard,  O. Marchal, 
        \textit{Topological expansion of the $\beta$-ensemble model and quantum algebraic geometry in the sectorwise approach}, 
        Theor. Math. Phys. {\bf 166} (2011) 141--185.

        
		\bibitem[CM16]{CM16}
		S. Codesido, M. Marino, \textit{Holomorphic anomaly and quantum mechanics}, J. Phys. \textbf{A51},
		(2018), 055402; arXiv:1612.07687 [hep-th].

		\bibitem[CLT20]{CLT20} I. Coman, P. Longhi, J. Teschner, 
		\textit{From quantum curves to topological string partition functions II}, 
		arXiv:2004.04585 [hep-th].

		\bibitem[CPT23]{CPT23} I. Coman, E. Pomoni, J. Teschner, 
		\textit{From Quantum Curves to Topological String Partition Functions}, 
		Commun. Math. Phys. 399, 1501--1548 (2023).


		\bibitem[CCH15]{CCH15}
        O.~Costin, R.~D. Costin,  M.~Huang, \textit{Tronqu{\'e}e Solutions of the
        Painlev{\'e}Equation PI}, Constructive Approximation \textbf{41} (2015),
        no.~3, 467--494.

       \bibitem[CSESV15]{cesv2}
       R. Couso-Santamar{\'\i}a, J. D. Edelstein, R. Schiappa,  M. Vonk, 
       \textit{Resurgent transseries and the holomorphic anomaly: Nonperturbative closed strings 
       in local ${\mathbb{C}\mathbb{P}^2}$}, 
       Commun. Math. Phys. \textbf{338} (2015), no.~1, 285--346.

       \bibitem[CSESV16]{cesv1} 
       R. Couso-Santamar{\'\i}a, J. D. Edelstein, R. Schiappa, M. Vonk,  
       \textit{Resurgent transseries and the holomorphic anomaly},  
       Annales Henri Poincar\'e \textbf{17} (2016), no.~2, 331--399.

		\bibitem[DGH11]{DGH11}  
		T. Dimofte, S. Gukov, L. Hollands, 
		\textit{Vortex Counting and Lagrangian 3-manifolds}, 
		{\it Lett. Math. Phys.}, {\bf 98} (2011), 225--287. arXiv:1006.0977 [hep-th].

        
		\bibitem[DS90]{DS1990}
		M. R. Douglas, S. H. Shenker, \textit{Strings in less than one-dimension}, Nucl. Phys. \textbf{B335}, (1990), 635.

        \bibitem[DMP11]{dmp-np} 
		N. Drukker, M. Mari\~no and P. Putrov, 
		\textit{Nonperturbative aspects of ABJM theory}, 
		{J. High Energy Phys.}, {\bf 2011} (2011), no. 11, 141, 29 pages; 
		arXiv:1103.4844 [hep-th].

        \bibitem[Dub08]{dub08}
        B. Dubrovin, 
        \textit{On universality of critical behaviour in Hamiltonian PDEs},
        arXiv: 0804.3790 [math].

        
		\bibitem[EGF19]{EGF}
		B. Eynard, E. Garcia-Failde, \textit{From topological recursion to wave functions and {PDE}s quantizing hyperelliptic curves}, Forum of Mathematics, Sigma \textbf{11}, (2023), e99; arXiv:1911.07795 [math-ph].
		
		
		\bibitem[EGFMO24]{EGFMO}
		B. Eynard, E. Garcia-Failde, O. Marchal, N. Orantin, \textit{Quantization of classical spectral curves via topological recursion}, Comm. Math. Phys. \textbf{405}, (2024), 116;
		arXiv:2106.04339 [math-ph].				
		
		\bibitem[EMO07]{emo}
		B. Eynard, M. Mari\~no, N. Orantin, \textit{Holomorphic anomaly and matrix models},
		JHEP \textbf{06}, (2007), 058.
		
		\bibitem[EO07]{EO07}
		B. Eynard, N. Orantin, \textit{Invariants of algebraic curves and topological expansion},
		Commun. Num. Theor. Phys. \textbf{1}, (2007), 347--452.

		\bibitem[EO15]{EO15}
		B. Eynard, N. Orantin,  
		\textit{Computation of open Gromov-Witten invariants for toric Calabi-Yau 3-folds 
		by topological recursion, a proof of the BKMP conjecture}, 
		Commun. Math. Phys. {\bf 337} (2015), 483--567. 

		\bibitem[FLZ20]{FLZ20} 
		B. Fang, Chiu-Chu Melissa Liu, Z. Zong, 
		\textit{All Genus Open-Closed Mirror Symmetry for Affine Toric Calabi-Yau 3-Orbifolds}, 
		Algebr. Geo. {\bf 7} (2020), no. 2, 192-239. 

        
		\bibitem[FRZZ19]{FRZZ19}
		B. Fang, Y. Ruan, Y. Zhang, J. Zhou, 
		\textit{Open Gromov--Witten Theory of $K_{{\mathbb P}^2}$, $K_{{\mathbb P}^1 \times {\mathbb P}^1}$, 
			$K_{W{\mathbb P}[1,1,2]}$, $K_{{\mathbb F}_1}$ and Jacobi Forms},
			Comm. Math. Phys. \textbf{369}, (2019), 675--719.

            	\bibitem[FIKN06]{FIKN} A. S. Fokas, A. R. Its, A. A. Kapaev, V. Yu. Novokshenov, \textit{Painlev\'e transcendents:
		the Riemann-Hilbert approach}, Mathematical Surveys and Monographs~\textbf{128}, AMS, Providence,
	RI, (2006). 
			
		\bibitem[FGMS23]{FGMS23}	
			F. Fucito, A. Grassi, J. F. Morales, R. Savelli, \textit{Partition functions of non-{L}agrangian theories from the holomorphic anomaly}, J. High Energy Phys. \textbf{7}, (2023), 195; arXiv:2306.05141 [hep-th].						
       \bibitem[FMP23]{FMP23}
       F. Fucito, J. F. Morales, R. Poghossian, \textit{On irregular states and Argyres-Douglas theories}, J. High Energ. Phys. \textbf{2023}, (2023), 123; arXiv:2306.05127v3 [hep-th].

        
		\bibitem[Gai09]{G09}
		D. Gaiotto, \textit{Asymptotically free $\mathcal N=2$ theories and irregular conformal blocks}, J. Phys. Conf. Ser. \textbf{462}, (2013) 1, 012014;	arXiv:0908.0307 [hep-th].
		
		\bibitem[GT12]{GT12}
		D. Gaiotto, J. Teschner, \textit{Irregular singularities in Liouville theory and Argyres-Douglas type gauge theories, I}, J. High Energy Phys. \textbf{1212}, (2012), 050; arXiv:1203.1052 [hep-th].  	
		
		 \bibitem[GIL12]{GIL12}
		O. Gamayun, N. Iorgov, O. Lisovyy,  \textit{Conformal field theory of Painlev\'e~VI},
		JHEP \textbf{10}, (2012), 038; arXiv:1207.0787 [hep-th].
		
		\bibitem[GIL13]{GIL13}
		O. Gamayun, N. Iorgov, O. Lisovyy,  \textit{How instanton combinatorics solves Painlev\'e~VI, V and III's},
		J.~Phys.~\textbf{A46}, (2013), 335203;
		arXiv:1302.1832 [hep-th].

        \bibitem[GIKMn12]{GIKM12}
		S. Garoufalidis, A. Its, A. Kapaev and M. Mari\~{n}o,
		\textit{Asymptotics of the instantons of Painlev\'e I}, 
		Int. Math. Res. Not. \textbf{2012} (2012), no.~3, 561--606.

        	\bibitem[GIL18]{GIL18}
	P. Gavrylenko, N. Iorgov, O. Lisovyy, \textit{Higher rank isomonodromic deformations and W-algebras}, Lett. Math. Phys. \textbf{110}, (2020), 327--364; arXiv:1801.09608 [math-ph].
	
	   \bibitem[GL16]{GL16}
	P. Gavrylenko, O. Lisovyy, \textit{Fredholm determinant and Nekrasov sum representations of isomonodromic tau functions}, Comm. Math. Phys. \textbf{363}, (2018), 1--58; arXiv:1608.00958 [math-ph].

     \bibitem[GMS20]{GMS20}
        P. Gavrylenko, A. Marshakov and A. Stoyan, 
        \textit{Irregular conformal blocks, Painlev\'e III and the blow-up equations}, 
        JHEP 12 (2020), p. 125; 
        arXiv:2006.15652 [math-ph].

        
		\bibitem[GV95]{gv-conifold}
		D. Ghoshal, C. Vafa, \textit{$c = 1$ string as the topological theory of the conifold}, 
		Nucl. Phys. \textbf{B453}, (1995), 121--128; hep-th/9506122.
		
		\bibitem[GM90]{GM1990}
		D. J. Gross, A. A. Migdal, \textit{Nonperturbative two-dimensional quantum gravity}, Phys. Rev. Lett.\textbf{64}, (1990), 127--130.

        \bibitem[GM23]{gm22}
		J. Gu, M. Mari\~no, 
		\textit{Exact multi-instantons in topological string theory}, {\it SciPost Phys.} 
		{\bf 15} (2023), 179, 36pages; arXiv:2211.01403 [hep-th].

        \bibitem[HKR08]{HKR08}
        B. Haghighat, A. Klemm, M. Rauch, 
        {\it Integrability of the holomorphic anomaly equations}, 
        JHEP 10 (2008) 097; arXiv:0809.1674 [hep-th].

        
		\bibitem[HNNT24]{HNNT24}
		R. Hamachika, T. Nakanishi, T. Nishinaka, S. Tanigawa,
		\textit{Liouville irregular states of half-integer ranks}, arXiv:2401.14662 [hep-th].

        \bibitem[HN24]{HN24}
		Q. Hao, A. Neitzke,
		\textit{A new construction of $c = 1$ Virasoro blocks}, arXiv:2407.04483 [hep-th].
	


	\bibitem[HKPK11]{HKPK11}
M.-x. Huang, A.-K. Kashani-Poor, A. Klemm, \textit{The $\Omega$-deformed B-model for rigid $\mathcal N = 2$
	theories}, Ann. Henri Poincare \textbf{14}, (2013), 425--497; arXiv:1109.5728.

			
		\bibitem[HK06]{HK06} M.-x. Huang, A. Klemm, \textit{Holomorphic anomaly in gauge theories and matrix models},
		J. High Energy Phys. \textbf{9}, (2007), 54; hep-th/0605195.
			
		\bibitem[HK09]{HK09} M.-X. Huang, A. Klemm, \textit{Holomorphicity and modularity in Seiberg-Witten theories with
		matter}, J. High Energy Phys. \textbf{7}, (2010), 83; arXiv:0902.1325 [hep-th].
		
		\bibitem[HK10]{HK10} M. x. Huang, A. Klemm, \textit{Direct integration for general $\Omega$-backgrounds}, Adv. Theor. Math.
		Phys. \textbf{16}, (2012), 805; arXiv:1009.1126 [hep-th].
		
	
			
	\bibitem[ILP16]{ILP16} A. R. Its, O. Lisovyy, A. Prokhorov,	\textit{Monodromy dependence and connection formulae for isomonodromic tau functions}, Duke Math. J. \textbf{167}, (2018),  1347--1432; arXiv:1604.03082 [math-ph].
			
	  \bibitem[IP15]{IP15} A. R. Its, A. Prokhorov, \textit{Connection problem for the tau-function of the sine-Gordon reduction of Painlev\'e-III equation via the Riemann-Hilbert approach}, Int. Math. Res. Not.
				\textbf{2016}, (2016), 6856--6883; arXiv:1506.07485 [math-ph].
	
	 \bibitem[ILT14]{ILT14}
	N. Iorgov, O. Lisovyy, J. Teschner,  \textit{Isomonodromic tau-functions from Liouville conformal blocks}, Comm. Math. Phys. \textbf{336}, (2015),  671--694; arXiv:1401.6104 [hep-th].  			
		
	 \bibitem[Iwa19]{Iwaki19} K. Iwaki, \textit{2-Parameter $\tau$-Function for the First Painlev\'e Equation: Topological Recursion and Direct Monodromy Problem via Exact WKB Analysis}, Comm. Math. Phys. \textbf{377}, (2020), 1047--1098;
	 	arXiv:1902.06439 [math-ph].	
	 	
	
	 \bibitem[IK20]{IK-I} 	
	 K. Iwaki, O. Kidwai, \textit{Topological recursion and uncoupled BPS structures I: BPS spectrum and free energies}, Adv. Math. \textbf{398}, (2022), 108191; arXiv:2010.05596 [math-ph].
	 
	 \bibitem[IK21]{IK-II} 	
	 K. Iwaki, O. Kidwai, \textit{Topological recursion and uncoupled BPS structures II: Voros symbols and the $\tau $-function}, Commun. Math. Phys. \textbf{399}, (2023), 519--572; arXiv:2108.06995 [math-ph].		

       \bibitem[IKT19]{IKoT2}
        K. Iwaki, T. Koike, and Y. Takei, 
        \textit{Voros Coefficients for the Hypergeometric Differential Equations and Eynard--Orantin's Topological Recursion 
        -- Part II : For the Confluent Family of Hypergeometric Equations}
        Journal of Integrable Systems, {\bf 4} (2019), xyz004.



        \bibitem[IKT18]{IKoT1}
        K. Iwaki, T. Koike, and Y. Takei, 
        \textit{Voros Coefficients for the Hypergeometric Differential Equations and Eynard--Orantin's Topological Recursion 
        -- Part I : For the Weber Equation}
        Annales Henri Poincar\'e. \textbf{24}, (2023), no. 4, 1305--1353.

        \bibitem[IM24]{IM24}
K. Iwaki, M. Mari\~no, 
\textit{Resurgent Structure of the Topological String and the First Painlev\'e Equation}, 
{\it SIGMA}, {\bf 20} (2024), 028, 21 pages;
arXiv:2307.02080 [hep-th].



                
	\bibitem[IS16]{IS16}
	K. Iwaki, A. Saenz, \textit{Quantum curve and the first {P}ainlev\'e equation}, SIGMA~\textbf{12}, (2016), 011; arXiv:1507.06557 [math-ph].  


 	\bibitem[JN20]{JN20} 
	S. Jeong, N. Nekrasov, 
	\textit{Riemann--Hilbert correspondence and blown up surface defects},  
	J. High Energ. Phys. {\bf 2020}, 6 (2020).
   
				
   \bibitem[JMU81]{JMU81} M. Jimbo, T. Miwa, K. Ueno,  {\it Monodromy preserving deformation of linear ordinary differential equations with rational coefficients. I}, 
   Physica {\bf D2}, (1981), 306--352.	

   	\bibitem[JNS17]{JNS17}
	M. Jimbo, H. Nagoya, H. Sakai, \textit{CFT approach to the q-Painlev\'e VI equation},
	J. Int. Systems \textbf{2}, (2017), xyx009; arXiv:1706.01940 [math-ph].	
	

   \bibitem[JK92]{JK1992}
 N. Joshi and M.~D. Kruskal, \textit{Connection Results for the First
  Painlev{\'e} Equation}, pp.~61--79, Springer US, Boston, MA, 1992.

                
	  \bibitem[Kap88]{Kapaev88}
	A. A. Kapaev, {\it Asymptotic behavior of the solutions of the Painlev\'e equation of the first kind}, \href{http://www.mathnet.ru/php/archive.phtml?wshow=paper&jrnid=de&paperid=6681&option_lang=rus}{Differentsial'nye
		Uravneniya~{\bf 24}, no. 10, (1988), 1684--1695}.
	
	\bibitem[Kap93]{Kapaev93}
	A. A. Kapaev, {\it Global asymptotics of the first Painlev\'e transcendent}, preprint INS $\sharp$225, Institute for Nonlinear Studies, Clarkson University, Potsdam, (1993).			

    \bibitem[Kap04]{Kapaev2004}
	A. A. Kapaev, {\it Quasi-linear Stokes phenomenon for the Painlev\'e first equation}, 
	\href{https://doi.org/10.1088/0305-4470/37/46/005}{J. Phys. A.: Math. Gen~{\bf 37}, (2004), 11149}.		

    \bibitem[KO23a]{KO23a}
     O. Kidwai, K. Osuga,
     \textit{Quantum curves from refined topological recursion: The genus 0 case},
     Adv. Math. {\bf 432} (2023) 109253.

    \bibitem[KO23b]{KO23b}
    O. Kidwai, K. Osuga,
    \textit{Refined BPS structures and topological recursion -- the Weber and Whittaker curves}, 
    arXiv:2311.17046 [math.AG].

	\bibitem[KS18]{KS18} 
	M. Kontsevich, Y. Soibelman,  
	\textit{Airy structures and symplectic geometry of topological recursion},  
	In: Proc. Sympos. Pure Math. {\bf 100} (2018), pp. 433--489. arXiv: 1701.09137.

	\bibitem[KPW10]{KPW10}  
	C. Koz\c{c}az, S. Pasquetti and N. Wyllard, 
	\textit{A \& B model approaches to surface operators and Toda theories}, 
	{\it J. High Energ. Phys.}, {\bf 2010}, 42 (2010). arXiv:1004.2025 [hep-th].

                
	\bibitem[KW10]{KW10}
	D. Krefl, J. Walcher, \textit{Extended holomorphic anomaly in gauge
		theory}, 	Lett. Math. Phys. \textbf{95}, (2011), 67--88; arXiv:1007.0263 [hep-th].			

    \bibitem[LNR18]{LNR18}
	O. Lisovyy, H. Nagoya, J. Roussillon, 
	\textit{Irregular conformal blocks and connection formulae for Painlev\'e V functions}, 
	J. Math. Phys. {\bf 59}, (2018), 091409; arXiv:1806.08344 [math-ph].


                
	 \bibitem[LR16]{LR16}		
	 O. Lisovyy, J. Roussillon, \textit{On the connection problem for Painlev\'e I}, J. Phys. \textbf{A50}, (2017), 255202; arXiv:1612.08382v1 [nlin.SI].	
	 
	 \bibitem[MO19]{MO19}
	 O. Marchal, N. Orantin, \textit{Quantization of hyper-elliptic curves from isomonodromic systems and topological recursion}, J. Geom. Phys. \textbf{171}, (2022), 104407; arXiv:1911.07739 [math-ph].	 
	 
	  \bibitem[Nag15]{Nagoya15}
	 H. Nagoya, \textit{Irregular conformal blocks, with an application to the fifth and fourth Painlev\'e equations}, J. Math. Phys.~\textbf{56}, (2015), 123505; arXiv:1505.02398v3 [math-ph]. 
	 
	 \bibitem[Nag18]{Nagoya18}
	 H. Nagoya, \textit{Remarks on irregular conformal blocks and Painlev\'e III and II tau functions}, in ``Proceedings of the Meeting for Study of Number Theory, Hopf Algebras and Related Topics''; arXiv:1804.04782v2  [math-ph].

     	\bibitem[Nek20]{Nek20}
	N. Nekrasov, \textit{Blowups in BPS/CFT correspondence, and Painlev\'e VI}, 
	Ann. Henri Poincar\'e \textbf{25}, (2024), 1123--1213; 
	arXiv:2007.03646 [hep-th].
	 
	 		\bibitem[NU19]{NU19}
	 T. Nishinaka, T. Uetoko, \textit{Argyres-Douglas theories and Liouville
	 	irregular states}, J. High Energy Phys. \textbf{1909}, (2019), 104; arXiv:1905.03795 [hep-th].

        \bibitem[N09]{Nor09}
		P. Norbury, \textit{String and dilaton equations for counting lattice points in the moduli space of curves},
		Transactions of the American Mathematical Society. \textbf{365}, (2013), no. 4, 1687--1709.

    \bibitem[O24]{O24}
     K. Osuga, 
     \textit{Refined Topological Recursion Revisited: Properties and Conjectures}, 
     Commun. Math. Phys. {\bf 405} (2024), 296.
     



	\bibitem[PP23]{PP23} 
	H. Poghosyan, R. Poghossian, \textit{A note on rank $5/2$ Liouville irregular block,
	Painlev\'e I and the $\mathcal H_0$ Argyres-Douglas theory},
     J. High Energ. Phys. {\bf 2023} (2023), 198; arXiv:2308.09623v2 [hep-th].

    \bibitem[Sau14]{Sauzin14}
	D. Sauzin, 
	\textit{Introduction to 1-summability and resurgence}, arXiv:1405.0356 [math.DS].
	(Part II in ``Divergent Series, Summability and Resurgence I : Monodromy and Resurgence", 
	by C. Mitschi and D. Sauzin, Springer, 2016.)


            
	\bibitem[Spr09]{Spr09}		
	M. Spreafico, \textit{On the Barnes double zeta and Gamma functions}, J. Numb. Theory \textbf{129}, (2009), 2035--2063.

    

            
	 \bibitem[Tak95]{Takei} 
	Y. Takei, {\it On the connection formula for the first Painlev\'e equation --- from the viewpoint of the exact WKB analysis}, \href{http://repository.kulib.kyoto-u.ac.jp/dspace/handle/2433/59957}{S\={u}rikaisekikenky\={u}sho K\={o}ky\={u}roku~\textbf{931}, (1995), 70--99}.		
	
    \bibitem[Tak00]{Takei2000}
	Y. Takei, 
	\textit{An explicit description of the connection formula for the first Painlev\'e equation}, 
	in: C. J. Howls, T. Kawai and Y. Takei eds., 
	Toward the Exact WKB Analysis of Differential Equations, Linear or Non-Linear, 
	Kyoto Univ. Press, (2000), 271--296.	
	
		
		
	\end{thebibliography}

\end{document}